
\documentclass[11pt]{amsart}

\usepackage[lite]{amsrefs}
\usepackage{amssymb}
\usepackage[all,cmtip]{xy}
\usepackage{amsthm}
\usepackage{amsmath}
\usepackage{amsfonts}
\usepackage[hidelinks]{hyperref}
\usepackage{enumerate}
\usepackage{physics}
\usepackage[titletoc]{appendix}
\usepackage{fancyhdr}
\usepackage{xcolor}
\usepackage{comment}
\usepackage{tikz-cd}
\usepackage{tikzpagenodes}
\usepackage{caption}
\usepackage{float}
\usetikzlibrary[patterns]
\fancyhf[ch]{}

\usepackage[margin = 1in]{geometry}
\usepackage{setspace}

\spacing{1.2}

\allowdisplaybreaks[3]

\numberwithin{equation}{section}
\newtheorem{definition}{Definition}[section]
\newtheorem{example}[definition]{Example}
\newtheorem{theorem}[definition]{Theorem}
\newtheorem{lemma}[definition]{Lemma}
\newtheorem{corollary}[definition]{Corollary}
\newtheorem{proposition}[definition]{Proposition}
\newtheorem{remark}[definition]{Remark}

\newtheorem{pdefinition}[definition]{Proposition-Definition}

\usepackage{textcomp}
\usepackage{graphicx}
\usepackage{amssymb,amsmath,amsthm,amscd,bm}

\usepackage{mathrsfs}
\usepackage{lscape}
\usepackage{enumerate}
\usepackage{tikz}
\usepackage[all]{xy}
\usetikzlibrary{matrix}


\usepackage{mathtools}


\usepackage{comment}
\usepackage{amsmath}
\newcommand{\Mod}[1]{\ (\mathrm{mod}\ #1)}

\usepackage{color}

\numberwithin{equation}{section}

\newcommand{\ber}{\begin{red}}
\newcommand{\er}{\end{red}}
\newcommand{\beb}{\begin{blue}}
\newcommand{\eb}{\end{blue}}

\newcommand\atopn[2]{\genfrac{}{}{0pt}{2}{#1}{#2}}

\theoremstyle{plain}

\newcommand{\n}{\mathfrak{n}}
\newcommand{\pp}{\mathfrak{p}}
\newcommand{\g}{\mathfrak{g}}
\newcommand{\RR}{\mathbb{R}}
\newcommand{\CC}{\mathbb{C}}

\newcommand{\ZZ}{\mathbb{Z}}
\newcommand{\NN}{\mathbb{N}}

\newcommand{\al}{\alpha}

\newcommand{\be}{\beta}

\newcommand{\sll}{\mathfrak{sl}}
\newcommand{\osp}{\mathfrak{osp}}
\newcommand{\ad}{\text{ad}}
\newcommand{\WW}{\mathcal{W}}

\newcommand{\w}{\omega}

\newlength{\mylength}
\setlength{\mylength}{\textwidth}
\addtolength{\mylength}{-20ex}

\def\fk{\mathfrak{h}}
\def\sn{\mathsf{n}}

\def\cI{\mathcal{I}}
\def\cJ{\mathcal{J}}

\def\fh{H}

\def\ri{\rm I}



\title[$N=2$ SUSY structures on
classical $W$-algebras]{$N=2$ supersymmetric structures\\ on
classical $W$-algebras}

\author
[E. Ragoucy]{Eric Ragoucy}
\address[E. Ragoucy]{Laboratoire de Physique Th\'{e}orique LAPTh,
CNRS, Universit\'{e} Savoie Mont Blanc,
BP 110, 74941 Annecy-le-Vieux Cedex, France}
\email{eric.ragoucy@lapth.cnrs.fr}
\author
[A. Song]{Arim Song$^{1,2}$}
\address[A. Song]{Department of Mathematical Sciences, Seoul National University, GwanAkRo 1, Gwanak-Gu, Seoul 08826, Korea}
\email{ireansong@snu.ac.kr}
\author[U.R. Suh]{Uhi Rinn Suh$^{1}$}
\address[U.R. Suh]
{ Department of Mathematical Sciences and Research institute of Mathematics, Seoul National University, GwanAkRo 1, Gwanak-Gu, Seoul 08826,
Korea}
\email{uhrisu1@snu.ac.kr}

\thanks{$^{1}$This work was supported by NRF Grant,  \#2022R1C1C1008698 and Creative-Pioneering Researchers Program through Seoul National University}
\thanks{$^{2}$This research was supported by Basic Science Research Program through the National Research Foundation of Korea(NRF) funded by the Ministry of Education(RS-2023-00272036)}

\date{\today}

\begin{document}

\begin{abstract}
We describe a $N=2$ supersymmetric Poisson vertex algebra structure of $N=1$ (resp. $N=0$) classical $W$-algebra associated with $\mathfrak{sl}(n+1|n)$ and the odd (resp. even) principal nilpotent element. This $N=2$ supersymmetric structure is connected to the 
principal $\mathfrak{sl}(2|1)$-embedding in $\mathfrak{sl}(n+1|n)$ superalgebras, which are the only basic Lie superalgebras that admit such a principal embedding.
\end{abstract}

\maketitle

\section{Introduction} \label{sec: intro}
Supersymmetric (SUSY) vertex algebras were introduced by Kac \cite{Kac98}, 
generalizing the original idea of Borcherds \cite{Borch} for (bosonic) vertex algebras. 
Afterwards, SUSY vertex algebras have been studied in various aspects, see e.g. \cites{Bar00,Hel07,BHS08}, and
their structure theory was developed by Kac and Heluani \cite{HK07}.
SUSY vertex algebras correspond to a mathematical 
definition of the chiral part of two-dimensional (super)conformal field theories studied by physicists, 
since the years 
80's  up  to nowadays, see e.g. \cites{BPZ,W84,KZ,GW,BS,Za,FMS,Pope,BS} and references therein for the historical physics literature on the subject. 
Their structure of SUSY vertex algebras includes a superfield formalism,  commonly used in physics, for instance in 
superstrings theories \cites{CHSW,HW} or super-Yang-Mills models \cites{gaiotto, BMN,DHKS, BDKRSVV}.

The affine Lie algebras and the $W$-algebras are at the core of the study of these SUSY theories, see e.g.  \cites{Ademollo, DRS92,MadR,IMY}. However, although supersymmetric vertex algebras 
may have as many supersymmetries as one wishes, 
it is mainly the case  $N=1$  that is studied in the context of affine Lie algebras \cite{KRW04} and $W$-algebras \cites{MRS21,Suh20}. 
Affine Lie algebras, and $W$-algebras are studied both at the quantum level (using vertex algebra and 
SUSY vertex algebra formalism) and at the classical level (with Poisson vertex algebras and SUSY Poisson 
vertex algebras). 

Quantum and classical SUSY W-algebras can be understood as Hamiltonian reductions of $N=1$ SUSY affine vertex algebras in the quantum case or SUSY Poisson affine vertex algebras in the classical case.
As a method of $N=1$ SUSY Hamiltonian reduction, BRST formalism was introduced  
and this formalism gives rise to $N=1$ SUSY quantum W-algebras \cites{MadR,MRS21}.
On the other hand, SUSY classical W-algebras are obtained as a quasi classical limit of SUSY quantum W-algebras \cite{Suh20}.
A $N=1$ SUSY classical (resp. quantum) W-algebra $\mathcal{W}^k(\bar{\mathfrak{g}},f)$ (resp. $W^k(\bar{\mathfrak{g}},f)$) of level $k$ is defined when a basic Lie superalgebra $\mathfrak{g}$ and an odd nilpotent element $f$ in an $\mathfrak{osp}(1|2)$-subalgebra of $\mathfrak{g}$ are given.
Especially, for classical cases, their Poisson structures are  well understood \cite{Suh20} and when the given odd nilpotent element $f$ is principal, the explicit forms of their generators can be found by a column determinant  of certain operator valued matrices \cite{RSS23}.

The $N=2$ SUSY $W$-algebras have been used since a long time in physics. They represent a significant extension of the already rich N=2 superconformal symmetry.  The algebraic structure of $N=2$ superconformal algebra is particularly fitted to the study of strings  and superstrings. Indeed, after a well-suited twist of the super conformal generators, it can be shown \cite{GS92} that the $N=2$ superconformal algebra encode the BRST algebra for the bosonic string. In that approach, the BRST algebra contains, in addition to the stress energy tensor, the BRST operator, together with a ghost and an anti-ghost operators. Extension of this method to the bosonic $W$-strings leads naturally to the $N=2$ SUSY $W$-algebras \cite{BLN93}, which represent a significant extension of the already rich $N=2$ superconformal symmetry. These algebras also appear as the Hamiltonian structure of the $N=2$ Korteweg--de Vries or of the $N=2$
Kadomtsev--Petriashvili hierarchies \cites{LM88, DG97}.  These various lines of research have ignited a multitude of studies on these algebras, with a particular focus on the physical perspective \cites{RSS96, FOFR,delius}. 

However, from the mathematical side, the (Poisson) vertex algebra counterpart of the $N=2$ SUSY $W$-algebras remains elusive, particularly within the framework of $N=2$ superfields. The objective of this paper is to bridge this gap concerning $N=2$ SUSY structures of classical W-algebras. In particular, we describe the $N=2$ SUSY structures of the non-SUSY classical W-algebra $\mathcal{W}^k(\mathfrak{sl}(n+1|n),F)$ and the $N=1$ SUSY classical W-algebra $\mathcal{W}^k(\overline{\mathfrak{sl}}(n+1|n),f)$ associated with the Lie superalgebra $\mathfrak{sl}(n+1|n)$. Here $F$ is the even principal nilpotent and $f$ is the odd principal nilpotent. The reason we consider $\mathfrak{sl}(n+1|n)$ is that it is the only basic Lie superalgebra which has principal $\mathfrak{sl}(2|1)$-embedding \cite{DRS92}, \cite{RSS96}. It is however expected that the even (resp. odd) nilpotent element in any subalgebra isomorphic to $\mathfrak{sl}(2|1)$  yields a $N=2$ SUSY structure of $\mathcal{W}^k(\mathfrak{sl}(n+1|n),F)$  (resp. $\mathcal{W}^k(\overline{\mathfrak{sl}}(n+1|n),f)$) \cite{RSS96}.

In order to find $N=2$ SUSY structure of a Poisson vertex algebra, it is enough to find a $N=2$ superconformal vector \cite{HK07}. Inspired from this fact, we find a $N=2$ superconformal vector of $\mathcal{W}^k(\overline{\mathfrak{sl}}(n+1|n), f)$ and $N=1$ and $N=2$ superconformal vector of $\mathcal{W}^k(\mathfrak{sl}(n+1|n), F)$.
More precisely, as main results of this paper, we prove the following statements (Theorem \ref{thm:N=1 to N=2, sl(n+1|n)} and Theorem \ref{thm:N=2 conformal,N=0 W-alg}):
\begin{itemize}
    \item  $\mathcal{W}^k(\mathfrak{sl}(n+1|n),F)$ has $N=1,2$ SUSY structures. 
    \item There is a $N=1$ (resp. $N=2$) superconformal vector $G$ (resp. $J$) of $\mathcal{W}^k(\mathfrak{sl}(n+1|n),F)$ and a generating set as a $N=1$ (resp. $N=1$) SUSY Poisson vertex algebra consisting of $G$-primary (resp. $J$-primary) elements.
     \item  $\mathcal{W}^k(\overline{\mathfrak{sl}}(n+1|n),f)$ has $N=1,2$ SUSY structures. 
    \item There is a $N=1$ (resp. $N=2$) superconformal vector $G$ (resp. $J$) of $\mathcal{W}^k(\overline{\mathfrak{sl}}(n+1|n),f)$ and a generating set as a $N=1$ (resp. $N=2$) SUSY Poisson vertex algebra consisting of $G$-primary (resp. $J$-primary) elements.
\end{itemize}
In the proof of the listed results, we crucially use explicit formulas for $\lambda$-brackets of the classical W-algebras which appear in \cite{Suh20}. Here the $\lambda$-bracket is a way to express 
the OPE relation between two elements of a Poisson vertex algebra.

The plan of the paper is the following. Some general properties of $\mathfrak{sl}(n+1|n)$ with respect to their 
$\osp(1|2)$ and $\mathfrak{sl}(2|1)$ principal embeddings are recalled in section \ref{sec: sl(2|1) repn}.
In section \ref{sec: PVA and SUSY PVA}, we review the definition and some properties of supersymmetric Poisson vertex algebras. 
Then, in section \ref{sec:W-algebra}, the structure of $N=0,1$ SUSY $W$-algebras is studied.
Section \ref{sec: additional SUSY structure on W-algebras} deals with $N=2$ SUSY $W$-algebras and contains
the main results of the paper. We conclude in section \ref{sec:conclu} on open problems. An appendix gathers the technical proofs of the results presented in section \ref{sec: additional SUSY structure on W-algebras}.

\section{\texorpdfstring{$\mathfrak{sl}(n+1|n)$}{}  as \texorpdfstring{$\osp(1|2)$}{}  and \texorpdfstring{$\mathfrak{sl}(2|1)$}{}-module\label{sec: sl(2|1) repn}}
It has been observed in \cite{RSS96} that we should consider basic Lie superalgebras with $\mathfrak{sl}(2|1)$ embedding to possibly get $N_K=2$ supersymmetry structure in $W$-algebras. Therefore, a Lie superalgebra with principal $\mathfrak{sl}(2|1)$-embedding would be the first thing to study for $N_K=2$ supersymmetry.

\begin{proposition}[\cite{DRS92}, \cite{RSS96}]
    The only basic Lie superalgebras that admit a principal $\mathfrak{sl}(2|1)$-embedding are $\mathfrak{sl}(n+1|n)$ for $n\in \NN$.
    
    We remind that the basic (or contragredient) Lie superalgebras include the $A(m,n)=\mathfrak{sl}(n+1|n)$ unitary series,
the $B(m,n)=\osp(2m+1|2n)$, $C(n+1)=\osp(2|2n)$, and $D(m,n)=\osp(2m|2n)$ orthosymplectic series and the exceptional
superalgebras $F(4)$ and $G(3)$ as well as $D(2,1;\alpha)$.
\end{proposition}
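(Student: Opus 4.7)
The plan is to turn the existence of a principal $\mathfrak{sl}(2|1)$-embedding into a combinatorial condition on the simple root system of $\mathfrak{g}$ and then read off the classification from the Kac tables.

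Recall that $\mathfrak{sl}(2|1)$ has even part $\mathfrak{sl}(2)\oplus\mathfrak{u}(1)$, and four odd generators forming two $\mathfrak{sl}(2)$-doublets of opposite $\mathfrak{u}(1)$-charge. A \emph{principal} embedding $\iota\colon\mathfrak{sl}(2|1)\hookrightarrow\mathfrak{g}$ should be one whose even image is a principal $\mathfrak{sl}(2)$-triple $(E,H,F)$ in $\mathfrak{g}_{\bar 0}$, together with an odd element $e\in\mathfrak{g}_{\bar 1}$ of $H$-weight $+1$ satisfying $[e,e]=2E$, and with a $\mathfrak{u}(1)$ generator commuting with $(E,H,F)$.

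First I would reduce to root data. Choose a Cartan subalgebra containing $H$ and a simple system so that each simple root has $H$-weight $+1$; then the principal even nilpotent takes the form $E=\sum_j d_j\,e_{\beta_j}$ over simple even roots $\beta_j$, and any odd element of $H$-weight $+1$ is a linear combination $e=\sum_i c_i\,e_{\alpha_i}$ over simple \emph{odd} roots $\alpha_i$. Expanding $[e,e]=2E$ in the basis of $H$-weight $+2$ root vectors and using the super-Jacobi identity forces every simple even root $\beta_j$ to appear as a sum $\alpha_i+\alpha_{i+1}$ of two adjacent odd simple roots, which amounts to saying that $\mathfrak{g}$ admits a simple root system all of whose roots are odd isotropic.

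The second step is to classify basic Lie superalgebras admitting such an all-odd simple system, using the Kac Dynkin-diagram classification together with the odd reflections that connect the different simple systems of the same superalgebra. For the orthosymplectic families $\osp(2m+1|2n)$, $\osp(2|2n)$, and $\osp(2m|2n)$, the long roots of type $\pm 2\delta$ or $\pm 2\epsilon$ guarantee the persistence of at least one even node in every Dynkin diagram reachable by odd reflections, ruling them out. A direct inspection of the finitely many Dynkin diagrams of the exceptionals $F(4)$, $G(3)$, and $D(2,1;\alpha)$ eliminates those as well. For the unitary series $A(m,n)=\mathfrak{sl}(m+1|n+1)$, interlacing the $\epsilon$'s and $\delta$'s yields an all-odd simple system exactly when $|m-n|\leq 1$; imposing further that the centraliser of the principal even $\mathfrak{sl}(2)$ carry the extra $\mathfrak{u}(1)$ required by the $R$-charge of $\mathfrak{sl}(2|1)$ selects $|m-n|=1$, giving the series $\mathfrak{sl}(n+1|n)$ and excluding the central-deficient $\mathfrak{psl}(n|n)$.

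The main obstacle I expect is the rigidity in the first step: showing that the relation $[e,e]=2E$ really \emph{forces} the simple system to be all odd, rather than only being compatible with one such choice. One must exclude mixed-parity configurations in which $e$ has components on both even and odd root vectors, and handle carefully the interplay with odd reflections, which can change the parity of a simple root without altering the underlying superalgebra. Once this rigidity is pinned down, the classification step is a finite combinatorial check against the known Dynkin diagrams of basic Lie superalgebras.
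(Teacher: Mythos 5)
Your proposal takes a genuinely different route from the paper (root systems and Dynkin diagrams rather than $\osp(1|2)$-module decompositions), but as written it has a decisive gap in both of its steps. In the first step, the relation $[e,e]=2E$ only encodes a principal $\osp(1|2)$, not a principal $\mathfrak{sl}(2|1)$, and it does \emph{not} force every even simple root to be a sum of two distinct adjacent odd simple roots: a non-isotropic odd simple root $\alpha$ contributes the diagonal term $[e_\alpha,e_\alpha]\propto e_{2\alpha}$, which is exactly what happens for $\osp(2n\pm1|2n)$ (whose completely odd simple systems contain one non-isotropic odd root, yet which do carry a principal $\osp(1|2)$). The structure that actually distinguishes $\mathfrak{sl}(2|1)$ from $\osp(1|2)$ --- the second doublet $\tilde e,\tilde f$ with $[\tilde e,\tilde e]=-2E$ and $[e,\tilde e]=0$, together with $U$ --- is never used in your derivation, so the isotropy conclusion is not established. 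In the second step, even granting the reduction, the criterion ``admits a completely odd isotropic simple system'' does not single out $\mathfrak{sl}(n+1|n)$: $D(2,1;\alpha)$ famously has an all-grey Dynkin diagram (three mutually linked isotropic odd roots), and $\osp(2n|2n)$ and $\osp(2n+2|2n)$ also admit completely odd simple systems all of whose roots are isotropic (all odd roots of $\osp(2m|2n)$ are isotropic) --- this is precisely why they carry a principal $\osp(1|2)$. Hence your claimed eliminations (``the long roots guarantee the persistence of an even node'', ``direct inspection of the Dynkin diagrams eliminates $D(2,1;\alpha)$'') are false, and the argument would wrongly admit these superalgebras.

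For comparison, the paper sidesteps both difficulties by working one level up: it starts from the known list of basic Lie superalgebras with a principal $\osp(1|2)$, decomposes each as a module over that $\osp(1|2)$, and observes that a principal $\mathfrak{sl}(2|1)$ must contribute a summand isomorphic to $R_2\oplus R_1$; the $3$-dimensional module $R_1$ (which houses $U,\tilde e,\tilde f$) occurs in the decomposition only for $\mathfrak{sl}(n\pm1|n)$. That module-theoretic test is exactly the piece of information your Dynkin-diagram criterion fails to capture. To repair your approach you would have to bring the second odd doublet into the root-space computation --- i.e.\ show that the weight-$\tfrac12$ odd space of the all-odd simple system contains two independent square roots $e,\tilde e$ of $\pm 2E$ with $[e,\tilde e]=0$ only in the $\mathfrak{sl}(n+1|n)$ case --- which is a nontrivial additional argument, not a finite inspection of diagrams.
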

\begin{proof}
  Assume that $\g$ is a basic Lie superalgebra that admits a principal $\mathfrak{sl}(2|1)$-embedding. Since $\g$ also has a principal $\mathfrak{osp}(1|2)$-embedding, $\g$ should be isomorphic to one of the followings:
  \begin{equation*}
    \mathfrak{sl}(n\!\pm\!1|n),\hskip 2mm \mathfrak{osp}(M|2n) \text{ with } M=2n\!\pm\! 1, 2n\!+\!2, 2n,\hskip 2mm
    D(2,1;\alpha) \text{ with }\alpha\!\in \CC\!\setminus\!\{0, \pm 1\}.
  \end{equation*}
   Each type of the above Lie superalgebras has an irreducible $\mathfrak{osp}(1|2)$-module decomposition
   \begin{align*}
    \mathfrak{sl}(n\!-\!1|n)&\simeq R_{2n-2}\oplus R_{2n-3}\oplus R_{2n-4}\oplus \cdots R_{2}\oplus R_{1},\\
    \mathfrak{sl}(n\!+\!1|n)&\simeq R_{2n}\oplus R_{2n-1}\oplus R_{2n-2}\oplus \cdots R_{2}\oplus R_{1},\\
    \mathfrak{osp}(2n|2n)&\simeq R_{4n-2} \oplus R_{4n-5} \oplus R_{4n-6} \oplus \cdots \oplus R_{3} \oplus R_{2} \oplus R_{2n-1},\\
    \mathfrak{osp}(2n\!-\!1|2n)&\simeq R_{4n-2} \oplus R_{4n-5} \oplus R_{4n-6} \oplus \cdots \oplus R_{3} \oplus R_{2},\\
    \mathfrak{osp}(2n\!+\!1|2n)&\simeq R_{4n-1} \oplus R_{4n-2} \oplus R_{4n-5} \oplus \cdots \oplus R_{3} \oplus R_{2},\\
    D(2,1;\alpha)&\simeq 2R_{2}\oplus R_{3},
   \end{align*}
   where $R_q$ is the finite irreducible $\mathfrak{osp}(1|2)$-module of dimension $2q+1$. Since $\mathfrak{sl}(2|1)\simeq R_{2}\oplus R_{1}$ as an $\mathfrak{osp}(1|2)$-module, $\g$ should contain $R_{2}\oplus R_{1}$, which is true only for $\mathfrak{sl}(n+1|n)$, $\mathfrak{sl}(n|n+1)\simeq \mathfrak{sl}(n+1|n)$ and $\mathfrak{osp}(2|2)\simeq \mathfrak{sl}(2|1)$. In these cases, one can check that the subspace $R_{2}\oplus R_{1}$ is isomorphic to $\mathfrak{sl}(2|1)$ as Lie superalgebras.
\end{proof}

Considering the above proposition, this paper only deals with $W$-algebras associated with $\mathfrak{sl}(n+1|n)$, that we will denote $\g$ throughout 
this section.
As a start, we remind the module structure of $\mathfrak{sl}(n+1|n)$ as an adjoint representation of the principal $\mathfrak{sl}(2|1)$ subalgebra.

\subsection{\texorpdfstring{$\mathfrak{sl}(n+1|n)$} as an \texorpdfstring{$\mathfrak{osp}(1|2)$}-module} \label{sec: osp(1|2)-repn}
We first review the structure of $\g$ as an $\mathfrak{osp}(1|2)$-module given by the adjoint representation of the principal $\mathfrak{osp}(1|2)$ subalgebra.

Recall that $\mathfrak{osp}(1|2)$ has a basis $\{\fh ,E,F,e,f\}$ satisfying the following properties:
\begin{enumerate}[]
  \item(osp-1) $\fh $ is a nontrivial element in the Cartan subalgebra of $\mathfrak{osp}(1|2)$,
  \item(osp-2) $E,F$ are even eigenvectors of $\textup{ad}\fh $ with eigenvalues $1,-1$,
  \item(osp-3) $e,f$ are odd eigenvectors of $\textup{ad}\fh $ with eigenvalues $\frac12,-\frac12$,
  \item(osp-4) $[e, e]=2 E,\hskip 2mm [f, f]=-2 F,\hskip 2mm [e, f]=-2\fh ,\hskip 2mm [F, e]=f,\hskip 2mm [E, f]=e$,
\end{enumerate}
where $[.,.]$ is the graded commutator, which is graded skewsymmetric and obeys the graded Jacobi identity.
We denote by $\fh ,E,F,e,f \in \g$ the elements in the principal $\osp(1|2)$ subalgebra of $\g$ satisfying (osp-1)--(osp-4).
With respect to the $\osp(1|2)$ embedding of $\g$, we define a gradation of $\g$:  
\begin{equation} \label{sl(n+1|n) eigenspace decomposition}
  \g=\bigoplus_{i\in \ZZ/2}\g(i), \text{ where  } \g(i)=\{x\in \g| [\fh,x]=i\, x\}.
\end{equation}

Recall that an irreducible $\osp(1|2)$-module can be identified by its dimension which is odd. Let $R_i$ for $i\in \ZZ_+$ be the irreducible module of $\osp(1|2)$ of dimension $2i+1$. Then the following proposition holds (see, for example \cite{Wak99}).

\begin{proposition} \label{prop:osp repn property}
  Consider $\g$ as an $\mathfrak{osp}(1|2)$-representation given by the adjoint action of the principal $\mathfrak{osp}(1|2)$ subalgebra. Then $\g$ decomposes into the direct sum of irreducible representations:
 \begin{equation} \label{eq: osp(1|2) irreducible decomposition}
    \g=\bigoplus_{i=1}^{2n} R_i.
 \end{equation}
\end{proposition}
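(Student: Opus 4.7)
The plan is to determine the multiplicities in a decomposition $\g \simeq \bigoplus_i m_i R_i$ by computing the character of $\g$ under $\mathrm{ad}\,\fh$ and matching it against the known weights of the $R_i$. By the representation theory of $\mathfrak{osp}(1|2)$, each $R_i$ has weights $\{-\tfrac{i}{2}, -\tfrac{i}{2}+\tfrac{1}{2}, \ldots, \tfrac{i}{2}\}$, each with multiplicity one; hence the multiplicities $m_i$ are uniquely determined by the weight multiplicities in $\g$.

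First I would make the principal embedding explicit. The defining condition of principality is that the standard module $V = \CC^{n+1|n}$ of $\g = \mathfrak{sl}(V)$ is irreducible as an $\mathfrak{osp}(1|2)$-module, which forces $V \simeq R_n$. Concretely, one can pick a basis of $V$ in which $\fh$ is diagonal with the $2n+1$ distinct weights $\tfrac{n}{2}, \tfrac{n-1}{2}, \ldots, -\tfrac{n}{2}$, with integer-weight vectors lying in the even part of $V$ and half-integer-weight vectors in the odd part.

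Next I would exploit the $\mathfrak{osp}(1|2)$-isomorphism $\mathfrak{gl}(V) \simeq V \otimes V^\ast$ to read off the weight multiplicities in $\mathfrak{gl}(V)$ as numbers of pairs of $V$-weights with a given difference. A direct count shows that every weight $w \in \tfrac{1}{2}\ZZ$ with $|w|\le n$ occurs with multiplicity $2n+1-2|w|$ in $\mathfrak{gl}(V)$. Peeling off the one-dimensional trivial summand $\CC\cdot\mathrm{Id}$ reduces the weight-$0$ multiplicity by one, yielding in $\g = \mathfrak{sl}(V)$:
\[
\dim \g(w) \;=\; \begin{cases} 2n+1-2|w|, & 0 < |w| \le n, \\ 2n, & w = 0. \end{cases}
\]

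Finally I would solve the triangular system $\sum_{i \ge 2|w|} m_i = \dim \g(w)$ by telescoping between consecutive weights $w = \tfrac{k}{2}$ and $w = \tfrac{k+1}{2}$, which produces $m_k = 1$ for $k = 1, \ldots, 2n$; the equation at $w = 0$ then forces $m_0 = 0$, giving the desired decomposition $\g \simeq \bigoplus_{i=1}^{2n} R_i$. The main subtlety I expect is the $\ZZ/2$-parity bookkeeping: one must verify that the $\mathfrak{osp}(1|2)$-equivariant $\ZZ/2$-grading on each $R_i$ is compatible with the one inherited from the superstructure on $\g$. This is automatic once $V\simeq R_n$ is realized with the parity alignment described above, since then the parity on $V\otimes V^\ast$ (and on its subspace $\mathfrak{sl}(V)$) coincides with the intrinsic parity on each irreducible summand $R_i$.
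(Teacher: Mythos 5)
The paper does not actually prove this proposition: it is imported from the literature (the reference to Wakimoto), and the decomposition is simply listed, together with the analogous ones for the orthosymplectic series, in the proof of the preceding proposition. Your character computation is therefore a genuinely independent and self-contained route, and its core is correct: realizing the principal embedding through the irreducibility of the defining module $V=\CC^{n+1|n}\simeq R_n$, counting $\textup{ad}\,\fh$-weight multiplicities in $\mathfrak{gl}(V)\simeq V\otimes V^\ast$ (giving $2n+1-2|w|$ for each weight $w$ with $|w|\le n$), removing the line $\CC\cdot\mathrm{Id}$, and telescoping the triangular system indeed forces $m_i=1$ for $1\le i\le 2n$ and $m_i=0$ otherwise. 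What your approach buys is an actual proof where the paper offers only a citation; what it requires is that you make explicit one standard input: the existence of a decomposition $\g\simeq\bigoplus_i m_iR_i$ presupposes complete reducibility of finite-dimensional $\osp(1|2)$-modules (a theorem of Djokovic--Hochschild), together with the classification of the irreducibles as the $R_i$ with one-dimensional weight spaces, so that characters determine multiplicities. Without citing this, the weight count only pins down composition factors.

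There is also a small but genuine error in your parity bookkeeping. It is not true for every $n$ that the integer-weight vectors of $V$ lie in its even part: already for $n=1$, i.e.\ $\g=\sll(2|1)$ and $V=\CC^{2|1}$, the principal $\sll_2$ acts on the two-dimensional even part of $V$ with weights $\pm\tfrac12$ and on the one-dimensional odd part with weight $0$, so the alignment is the opposite of the one you describe (it flips with the parity of $n$). The slip is harmless for the proposition, because the overall parity shift of $V$ cancels in $V\otimes V^\ast$: the parity of a weight-$w$ vector of $\mathfrak{gl}(V)$ is $2w \bmod 2$ regardless of which parity of $R_n$ the module $V$ realizes, and this is exactly the parity pattern carried by each $R_i$ inside $\g$. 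But your justification of the parity compatibility should be phrased this way, as a statement about $V\otimes V^\ast$ being insensitive to the global parity of $V$, rather than being anchored to a specific (and sometimes incorrect) parity assignment on $V$.
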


The irreducible component $R_i$ is decomposed into a sum of 1-dimensional subspaces $R_{i}(j):=R_i \cap \g(j)$ for $j\in \frac{1}{2}\ZZ$ such that $-\frac{i}{2} \leq j \leq \frac{i}{2}$. Hence, if we take a highest weight vector $v_i\in R_i$, then 
\begin{equation} \label{eq:fix element in V_ij}
v^{(m)}_{i}:=(\textup{ad}f)^{m}v_i
\end{equation}
spans $R_{i}(\frac{i-m}{2})$, $0\leq m\leq2i$, and $v^{(0)}_{i}=v_i$ is the highest weight vector. In addition, we have bases $\{v_i^{(0)}|i=1,2,\cdots, 2n\}$ and $\{v^{(2i)}_i|i=1,2,\cdots, 2n\}$
of $\g^e:=\text{ker}(\textup{ad}e)$ and $\g^f:=\text{ker}(\textup{ad}f)$, respectively.

\subsection{$\mathfrak{sl}(n+1|n)$ as an $\mathfrak{sl}(2|1)$-module} \label{sec: sl(2|1) in detail}
 In this section, we consider the principal $\mathfrak{sl}(2|1)$-embedding in $\g$. For the principal $\mathfrak{sl}(2|1)$ subalgebra, fix a basis $\{\fh ,E,F,e,f,\tilde{e}, \tilde{f}, U\}$ satisfying the following properties:

\begin{enumerate}[]
  \item (sl-1) $\fh ,E,F,e,f$ have the properties (osp-1)--(osp-4),
  \item (sl-2) $\fh ,-E,-F, \tilde{e}, \tilde{f}$ also obey (osp-1)--(osp-4),
  \item (sl-3) $[f, \tilde{f}]=[e, \tilde{e}]=0$,
  \item (sl-4) $[e, \tilde{f}]=[\tilde{e}, f]=U$, $[U, f]=-\tilde{f}$, $[U, \tilde{f}]=-f$, $[U, e]=\tilde{e}$,  $[U, \tilde{e}]=e$. 
\end{enumerate}
\vskip 2mm 

To describe the adjoint representation of the principal $\mathfrak{sl}(2|1)$ subalgebra, we use the gradation \eqref{sl(n+1|n) eigenspace decomposition} and the irreducible component $R_i$ in Proposition \ref{prop:osp repn property}.

\begin{lemma} \label{lemma:sl repn}
  Let $a\in \g$ be any homogeneous element in the irreducible component $R_i$ satisfying $[U, a]=0.$
  Then the following properties hold.
  \begin{enumerate}
    \item  Both $[E, a]$ and $[F,a]$ are killed by $\textup{ad}U$.
    \item  The following diagrams commute:
  \begin{center}
  \begin{tikzpicture}
    \node (ftildv) at (-2,1) {$-[\tilde{f},a]$};
    \node (ftildfv) at (-2,-1) {$-[f,[\tilde{f},a]]$};
    \node (v) at (1,3) {$a$};
    \node (fv) at (1,1) {$[f,a]$};
    \node (ffv) at (1,-1) {$-[F,a]$};
    \draw[->] (v) -- (fv) node[midway, right] {$\textup{ad}f$};
    \draw[->] (fv) -- (ffv) node[midway, right] {$\textup{ad}f$};
    \draw[->] (ftildv) -- (ftildfv) node[midway, left] {$\textup{ad}f$};
    \draw[->] (v) -- (ftildv) node[midway, sloped, above] {$-\textup{ad}\tilde{f}$};
    \draw[->] (fv) -- (ftildfv) node[midway, above] {$\textup{ad}\tilde{f}$};
    \draw[->] (ftildv) -- (ffv);
    \draw[<->] (ftildv) -- (fv) node[midway, above] {$\textup{ad}U$};
  \end{tikzpicture}
  \qquad\qquad
    \begin{tikzpicture}
    \node (etv) at (-2,1) {$-[\tilde{e},a]$};
    \node (eetv) at (-2,3) {$-[e,[\tilde{e},a]]$};
    \node (Ev) at (1,3) {$[E,a]$};
    \node (ev) at (1,1) {$[e,a]$};
    \node (v) at (1,-1) {$a$};
    \draw[->] (v) -- (ev) node[midway, right] {$\textup{ad}e$};
    \draw[->] (ev) -- (Ev) node[midway, right] {$\textup{ad}e$};
    \draw[->] (etv) -- (eetv) node[midway, left] {$\textup{ad}e$};
    \draw[->] (v) -- (etv) node[midway, sloped, above] {$-\textup{ad}\tilde{e}$};
    \draw[->] (ev) -- (eetv) node[pos=.65, above] {$\quad\textup{ad}\tilde{e}$};
    \draw[->] (etv) -- (Ev);
    \draw[<->] (etv) -- (ev) node[midway, above] {$-\textup{ad}U$};
  \end{tikzpicture}
\end{center}
\end{enumerate}
\end{lemma}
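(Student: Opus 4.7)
The plan is to verify both claims by direct computation, using only the graded Jacobi identity, the defining relations (osp-1)--(osp-4) together with (sl-1)--(sl-4), and the hypothesis $[U,a]=0$.

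For part (1), I would first establish the auxiliary fact that $\textup{ad}\,U$ annihilates $E$ and $F$. Writing $E=\tfrac{1}{2}[e,e]$ via (osp-4) and applying graded Jacobi gives $[U,E]=[\tilde e,e]+[e,\tilde e]=2[e,\tilde e]=0$, where I use that the graded bracket of two odd elements is symmetric and that $[e,\tilde e]=0$ by (sl-3); the parallel computation for $F=-\tfrac{1}{2}[f,f]$ uses $[\tilde f,f]=0$. Then $[U,[E,a]]=[[U,E],a]+[E,[U,a]]=0$, and likewise $[U,[F,a]]=0$.

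For part (2), I would go through the first diagram arrow by arrow. The three arrows labeled $\textup{ad}\,f$ and the arrow $a\to -[\tilde f,a]$ are tautological, and the composition down the right column produces $-[F,a]$ because $[f,[f,a]]=\tfrac{1}{2}[[f,f],a]=-[F,a]$ by (osp-4). The diagonal arrow $[f,a]\to -[f,[\tilde f,a]]$ labeled $\textup{ad}\,\tilde f$ reduces to graded Jacobi with $[\tilde f,f]=0$: since $f,\tilde f$ are both odd, $[\tilde f,[f,a]]=[[\tilde f,f],a]-[f,[\tilde f,a]]=-[f,[\tilde f,a]]$. The unlabeled arrow $-[\tilde f,a]\to -[F,a]$ is $\textup{ad}\,\tilde f$, and its value follows from $[\tilde f,\tilde f]=2F$, which is obtained by applying (osp-4) to the alternate $\mathfrak{osp}(1|2)$-triple singled out in (sl-2). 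Finally, the double arrow labeled $\textup{ad}\,U$ follows from (sl-4) and $[U,a]=0$: $[U,[f,a]]=[-\tilde f,a]=-[\tilde f,a]$ and $[U,-[\tilde f,a]]=-[-f,a]=[f,a]$. The second diagram is treated identically under the substitution $f\leftrightarrow e$, $\tilde f\leftrightarrow \tilde e$, $F\leftrightarrow E$; the only notable difference is that (sl-4) gives $[U,e]=+\tilde e$ whereas $[U,f]=-\tilde f$, which is precisely what produces the sign flip in the label $-\textup{ad}\,U$ on the double arrow there.

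The only real obstacle is careful bookkeeping of the graded signs. In particular I will keep in mind that any bracket of two of the odd generators $e,f,\tilde e,\tilde f$ is \emph{symmetric} rather than anti-symmetric, and that the squares $[f,f]=-2F$, $[\tilde f,\tilde f]=2F$, $[e,e]=2E$, $[\tilde e,\tilde e]=-2E$ come with opposite signs between each generator and its tilded partner. Once these signs are tracked consistently, each arrow in the two diagrams collapses to a one-line identity and both parts of the lemma follow immediately.
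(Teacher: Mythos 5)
Your proposal is correct and follows exactly the route the paper takes: the paper's proof is the one-line remark that both statements follow from the graded Jacobi identity, and your arrow-by-arrow verification (using (osp-1)--(osp-4), (sl-1)--(sl-4), the symmetry of the bracket on odd elements, and $[U,a]=0$) is precisely the computation being alluded to. The only blemish is a harmless dropped factor of $\tfrac12$ in the line $[U,E]=[\tilde e,e]+[e,\tilde e]$, which does not affect the conclusion since the bracket vanishes.
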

\begin{proof}
Both statements can be directly deduced from Jacobi identity of Lie bracket.
\end{proof}

Since $\g(0)$ is abelian, any nonzero element in $\g(0)$ satisfies the assumption of Lemma \ref{lemma:sl repn}. Therefore, the highest weight vector $v_{2i}=v_{2i}^{(0)}$ of $R_{2i}$ in \eqref{eq:fix element in V_ij} is also in the kernel of $\textup{ad}U$ by Lemma \ref{lemma:sl repn} (1). Now we apply (2) of Lemma \ref{lemma:sl repn} repeatedly to get the following commutative diagram:
\begin{equation} \label{pic:sl repn}
\begin{aligned}
  \begin{tikzpicture}
    \node (ftildv) at (-1,1) {$-[\tilde{f},v_{2i}]$};
    \node  at (-2.7,1) {$v_{2i-1}^{(0)}=$};
    \node (ftildfv) at (-1,-1) {$-[f,[\tilde{f},v_{2i}]]$};
    \node  at (-2.7,-1) {$v_{2i-1}^{(1)}=$ \quad \quad };
    \node (v) at (2,3) {$v_{2i}$};
    \node  at (2.8,3) {$=v_{2i}^{(0)}$};
    \node (fv) at (2,1) {$[f,v_{2i}]$};
    \node  at (3.3,1) {$=v_{2i}^{(1)}$};
    \node (ffv) at (2,-1) {$-[F,v_{2i}]$};
    \node  at (3.3,-1) {$=v_{2i}^{(2)}$};
    \node (ftildFv) at (-1,-3) {$[\tilde{f},[F,v_{2i}]]$};
    \node  at (-3,-3) {$v_{2i-1}^{(2)}=$};
    \node (fFv) at (2,-3) {$-[f,[F,v_{2i}]]$};
    \node  at (3.7,-3) {$=v_{2i}^{(3)}$};
    \node (dots) at (-1,-3.5) {$\vdots$};
    \node (dots2) at (2,-3.5) {$\vdots$};
    \draw[->] (v) -- (fv) node[midway, right] {$\text{ad}f$};
    \draw[->] (fv) -- (ffv) node[midway, right] {$\text{ad}f$};
    \draw[->] (ftildv) -- (ftildfv) node[midway, left] {$\text{ad}f$};
    \draw[->] (ffv) -- (fFv) node[midway, right] {$\text{ad}f$};
    \draw[->] (ftildfv) -- (ftildFv) node[midway, left] {$\text{ad}f$};
    \draw[->] (v) -- (ftildv) node[midway, sloped, above] {$-\text{ad}\tilde{f}$};
    \draw[->] (fv) -- (ftildfv) node[midway, above] {$\text{ad}\tilde{f}$};
    \draw[->] (ftildv) -- (ffv);
    \draw[<->] (ftildv) -- (fv) node[midway, above] {$\text{ad}U$};
    \draw[<->] (ftildFv) -- (fFv) node[midway, above] {$\text{ad}U$};
    \draw[->] (ffv) -- (ftildFv) node[midway, above] {-$\text{ad}\tilde{f}$};
    \draw[->] (ftildfv) -- (fFv);
  \end{tikzpicture}
\end{aligned}
\end{equation}
There is a similar picture for the lowest weight vector $v_{2i}^{(2i)}$ and ad$e$, ad$\tilde{e}$ operations with upward arrows.

Note that the elements in the diagram with no adjacent horizontal arrows are killed by the adjoint action of $U$. In addition, $[\tilde{f}, v_{2i}]\in \ker(\text{ad}e)\cap \g(i-\frac{1}{2})= \CC v_{2i-1}$ is a nonzero element in $R_{2i-1}$.
Finally, we get the following irreducible decomposition of $\g$ as an $\mathfrak{sl}(2|1)$-representation.

\begin{theorem}
  Consider the principal $\mathfrak{sl}(2|1)$-embedding in $\g$. Then 
  \begin{equation}
    \g=\bigoplus_{i=1}^n \widetilde{R}_i,\quad \widetilde{R}_i=R_{2i-1}\oplus R_{2i}
  \end{equation}
   gives the decomposition of $\g$ into the direct sum of irreducible representations with respect to the adjoint action of the principal $\mathfrak{sl}(2|1)$. 
\end{theorem}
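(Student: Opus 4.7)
The plan is to leverage the $\osp(1|2)$-decomposition $\g=\bigoplus_{i=1}^{2n}R_i$ from Proposition \ref{prop:osp repn property} and simply verify that grouping consecutive irreducibles $\widetilde{R}_i:=R_{2i-1}\oplus R_{2i}$ yields an irreducible $\mathfrak{sl}(2|1)$-submodule. This requires two checks: (i) each $\widetilde{R}_i$ is stable under the extra generators $\tilde{e},\tilde{f},U$ not already in $\osp(1|2)$; (ii) each $\widetilde{R}_i$ admits no proper nonzero $\mathfrak{sl}(2|1)$-submodule.

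Step (i) is largely encoded in the diagram \eqref{pic:sl repn}: starting from the $\osp(1|2)$-highest weight vector $v_{2i}\in R_{2i}$, successive applications of $\text{ad}\,f$, $\text{ad}\,\tilde{f}$, and $\text{ad}\,U$ are shown to stay inside $R_{2i}\oplus R_{2i-1}$. To lift this from the single vector $v_{2i}$ to all of $R_{2i}$, I would invoke (sl-3), namely $[f,\tilde{f}]=0$: via the graded Jacobi identity this gives $\text{ad}\,\tilde{f}\circ\text{ad}\,f=-\text{ad}\,f\circ\text{ad}\,\tilde{f}$, from which $[\tilde{f},v_{2i}^{(m)}]=(-1)^m v_{2i-1}^{(m)}\in R_{2i-1}$ for every $m$. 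Using also $[\tilde{f},\tilde{f}]=2F$ from (sl-2), one computes $[\tilde{f},v_{2i-1}]=v_{2i}^{(2)}$ and then, via the same anticommutation, $[\tilde{f},v_{2i-1}^{(m)}]\in R_{2i}$ throughout. The ``upward'' version of the diagram starting from the lowest weight vector $v_{2i}^{(2i)}$ gives the symmetric statement for $\tilde{e}$, and stability under $U=[e,\tilde{f}]$ then follows automatically.

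Step (ii) is a quick representation-theoretic argument: any $\mathfrak{sl}(2|1)$-submodule of $\widetilde{R}_i$ is in particular an $\osp(1|2)$-submodule, and since $R_{2i-1}$ and $R_{2i}$ are non-isomorphic $\osp(1|2)$-irreducibles of dimensions $4i-1$ and $4i+1$, the only $\osp(1|2)$-submodules of $\widetilde{R}_i$ are $0$, $R_{2i-1}$, $R_{2i}$, and $\widetilde{R}_i$. The explicit computations in step (i) show that $\text{ad}\,\tilde{f}$ sends $v_{2i}\in R_{2i}$ to a nonzero element of $R_{2i-1}$ and sends $v_{2i-1}\in R_{2i-1}$ to $v_{2i}^{(2)}\in R_{2i}$ (nonzero for $i\geq 1$), so neither summand alone is $\mathfrak{sl}(2|1)$-stable. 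Combining (i) and (ii) for $i=1,\dots,n$ with the $\osp(1|2)$-decomposition yields the claimed $\g=\bigoplus_{i=1}^n\widetilde{R}_i$.

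The main obstacle is step (i): weight grading alone does not prevent $\text{ad}\,\tilde{f}\,(R_{2i})$ from having components not only in $R_{2i-1}$ but also in $R_{2i+1}$ when it exists, since weight $i-\tfrac{1}{2}$ sits in both. Ruling out such leakage relies on the boundary identification $[\tilde{f},v_{2i}]=-v_{2i-1}^{(0)}$ (which itself uses that $v_{2i}$ is annihilated by both $\text{ad}\,e$ and $\text{ad}\,U$, as exploited in Lemma \ref{lemma:sl repn}) together with the super-commutation $[f,\tilde{f}]=0$ that propagates this boundary constraint through the entire $R_{2i}$-tower.
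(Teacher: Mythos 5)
Your proposal is correct in substance, and it diverges from the paper's proof precisely at the decisive step. The paper likewise starts from the $\osp(1|2)$-decomposition and uses Lemma \ref{lemma:sl repn} with the diagram \eqref{pic:sl repn} to get stability of $\widetilde{R}_i$ under everything except $\tilde{e}$; for $\tilde{e}$ it then proves the single identity $[\tilde{e},v_{2i}]=0$ by a \emph{downward induction on $i$ across the modules}: expand $[\tilde{e},v_{2n-2i}]$ in the basis $(\mathrm{ad}\,f)^{j}v_{2n-2i+1+j}$ of the relevant weight space, kill the $j\geq 1$ coefficients with $\mathrm{ad}\,e$, and kill $k_0$ using $[\tilde{e},\tilde{f}]=-2\fh$ and the inductive hypothesis. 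You instead run a \emph{local, dual} argument inside each $\widetilde{R}_i$ (the upward analogue of \eqref{pic:sl repn} from the lowest weight vector), which does work but requires two facts you compress into the word ``symmetric'': (a) the lowest weight vector of $R_{2i}$ --- which is $v_{2i}^{(4i)}$, not $v_{2i}^{(2i)}$ --- must be killed by $\mathrm{ad}\,U$ before you can identify $[\tilde{e},v_{2i}^{(4i)}]$ inside $\g^{f}\cap\g(-i+\tfrac12)=\CC v_{2i-1}^{(4i-2)}$; this follows from Lemma \ref{lemma:sl repn}(1) applied with $\mathrm{ad}\,F$ to the zero-weight vector; and (b) to propagate $\tilde{e}$-stability through the $R_{2i-1}$ column you need $[\tilde{e},v_{2i}^{(4i)}]\neq 0$, which follows from $[\tilde{e},\tilde{e}]=-2E$ exactly as $[\tilde{f},v_{2i}]\neq 0$ follows from $[\tilde{f},\tilde{f}]=2F$. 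In fact your own closing observation yields a shortcut that makes the upward diagram unnecessary: once $\tilde{f}$-stability is established, $U=[e,\tilde{f}]$ gives $U$-stability, and then $\tilde{e}=[U,e]$ gives $\tilde{e}$-stability for free, with $[\tilde{e},v_{2i}]=0$ following because $\widetilde{R}_i\cap\g(i+\tfrac12)=0$. Your step (ii) is a genuine addition: the paper's proof only shows each $\widetilde{R}_i$ is a submodule and leaves irreducibility implicit, whereas your submodule-lattice argument for a sum of two non-isomorphic $\osp(1|2)$-irreducibles settles it. One small slip: with the paper's normalization $[\tilde{f},v_{2i}]=-v_{2i-1}^{(0)}$, your formula should read $[\tilde{f},v_{2i}^{(m)}]=(-1)^{m+1}v_{2i-1}^{(m)}$; the sign does not affect the argument.
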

\begin{proof}
  Lemma \ref{lemma:sl repn} and the diagram \eqref{pic:sl repn} assure that each $\widetilde{R}_i$ is closed under the action of $\mathfrak{sl}(2|1)$ except for that of $\tilde{e}$. Hence it remains to show that $[\tilde{e}, v_{2i}]=0$, that we prove by a downward induction on $i$. 
  
  It is clear that $[\tilde{e}, v_{2n}]=0$. Considering the grading \eqref{sl(n+1|n) eigenspace decomposition} of $\g$, we have 
  \begin{equation*}
    [\tilde{e}, v_{2n-2i}]=\sum_{j=0}^{2i-1}k_j (\textup{ad}f)^j v_{2n-2i+1+j}
  \end{equation*}
  for some constants $k_j$'s. Apply $\text{ad}e$ on the both sides to get
  \begin{equation*}
    0=\sum_{j=1}^{2i-1}k_j (\textup{ad}e)(\textup{ad}f)^j v_{2n-2i+1+j}.
  \end{equation*}
Then, since $\{(\text{ad}e)(\text{ad}f)^j v_{2n-2i+1+j}|j=1,2,\cdots, 2i-1\}$ are linearly independent, all $k_j$ for $j\geq 1$ should be zero. In other words,
$ [\tilde{e},v_{2n-2i}]=k_0\, v_{2n-2i+1}.$
By the induction hypothesis, we have 
\begin{align*}
  0=k_0[\tilde{e},v_{2n-2i+1}]&=k_0 c_{n-i+1}^{-1}\big[\tilde{e},[\tilde{f},v_{2n-2i+2}]\big]=k_0 c_{n-i+1}^{-1}\big[[\tilde{e},\tilde{f}],v_{2n-2i+2}\big]\\
  &=-k_0 c_{n-i+1}^{-1}\big[2\fh ,v_{2n-2i+2}\big]=-(2n-2i+2)k_0 c_{n-i+1}^{-1}v_{2n-2i+2},
\end{align*}
where the nonzero constant $c_j$ is defined by
$[\tilde{f}, v_{2j}]=c_j v_{2j-1}$.
Thus, $k_0$ is also zero.
\end{proof}


\section{Poisson vertex algebras and Supersymmetric Poisson vertex algebras\label{sec: PVA and SUSY PVA}}

In this section, we review the definition of supersymmetric Poisson vertex algebras. 
For a detailed explanation or further relation between supersymmetric vertex algebras and supersymmetric Poisson vertex algebras, see the paper \cite{HK07} of Heluani and Kac.


\subsection{Basic notations} \label{sec: basic notations}
 In this section, we establish basic notations used throughout this paper. Let $P=P_{\bar{0}}\oplus P_{\bar{1}}$ be any vector superspace and $a\in P$ be a homogeneous element. If $a\in P_{\bar{0}}$ (resp. $a\in P_{\bar{1}}$), then the parity is defined by $p(a)=0$ (resp. $p(a)=1$).

 Let us consider endomorphisms on $P$, denoted as $D_i$ and $\partial$, where $D_i$ is odd and $\partial$ is even. The tuple of endomorphisms $D_i$ and $\partial$ is denoted by 
 \begin{equation} \label{eq: nabla}
  \nabla:=\left(\partial, D_1, \cdots D_N\right),
  \end{equation}
  provided that they satisfy the relations
  \begin{equation} \label{eq: nabla relation}
    [D_i, D_j]=2\delta_{i,j}\partial, \quad [\partial, D_i]=0.
  \end{equation}
  Additionally, we let $\CC[\nabla]$ be the unital superassociative algebra generated by $\nabla$. 
  
  We also consider unital superassociative algebra $\CC[\Lambda]$ generated by 
 \begin{equation} \label{eq:Lambda}
 \Lambda:=(\lambda, \chi_1, \cdots, \chi_N)
 \end{equation}
  subject to the relation
 \begin{equation} \label{eq: Lambda}
   [\lambda, \chi_i]=0, \quad [\chi_i, \chi_j]=-2\delta_{i,j}\lambda,
 \end{equation}
 where $\lambda$ is even and $\chi_i$'s are odd. One can view the vector superspace $\CC[\Lambda]\otimes P$ as a $\CC[\nabla]$-module with
 \begin{equation} \label{eq: D chi relation}
  D_i(\chi_j a)=-\chi_j D_i(a)+2\delta_{ij}\lambda a
 \end{equation}
 for $a\in P$, where the tensor product notation is omitted.
\subsection{Supersymmetric Poisson vertex algebras} \label{sec: N_K=N PVAs}

\begin{definition} \label{def: SUSY LCA}
  A \textit{$N_K=N$ supersymmetric (SUSY) Lie conformal algebra} is a $\ZZ / 2\ZZ$-graded $\CC[\nabla]$-module $R$ with a $\mathbb{C}$-linear map $\{\cdot \,{}_{\Lambda}\,\cdot\}: R \otimes R \rightarrow \mathbb{C}[\Lambda] \otimes_{\mathbb{C}} R$ of parity $N \Mod 2$ satisfying the following three axioms:
  \vskip 2mm
  \begin{enumerate}[]
   \item 
   (sesquilinearity) 
    $  \left\{D_i a{}_{\Lambda} b\right\} =-(-1)^N \chi_i\{a{}_{\Lambda} b\}, \{a_{\Lambda} D_i b\} =(-1)^{p(a)+N}\left(D_i+\chi_i\right)\{a_{\Lambda} b\},$
    \vskip 2mm
   \item (skewsymmetry)
       $\left\{b{}_{\Lambda} a\right\}=-(-1)^{p(a)p(b)+N}\left\{b{}_{-\Lambda-\nabla} a\right\},$
    \vskip 2mm
   \item (Jacobi identity) 
          \center{
        $ \left\{a_{\Lambda}\left\{b_{\Gamma} c\right\}\right\}=(-1)^{(p(a)+1)N}\left\{\left\{a_{\Lambda} b\right\}_{\Gamma+\Lambda} c\right\}+(-1)^{(p(a)+N)(p(b)+N)}\left\{b_{\Gamma}\left\{a_{\Lambda} c\right\}\right\}.$}
        \vskip 2mm
  \end{enumerate}
  for $a,b\in R$, $i=1,2,\cdots, N$ and $\Lambda,\Gamma$ two different tuples of the form \eqref{eq:Lambda}. 
  
  To be more precise, the axioms are understood in the following way. 
  \begin{enumerate}[(i)]
  \item The RHS of  the sesquilinearity and skewsymmetry relations can be computed by the $\CC[\nabla]$-module structure of $\CC[\Lambda]\otimes R$,
  \item 
The Jacobi identity is an equality in $\CC[\Lambda, \Gamma] \otimes R$ and an element in $\Lambda$ supercommutes with an element in $\Gamma$.
  \end{enumerate}
\end{definition}
  
  By performing the sesquilinearity twice, we get  
  \begin{equation} \label{eq:sesqui-PVA}
  \{\partial a{}_{\Lambda} b\}=-\lambda\{a{}_{\Lambda}b\}, \quad \{a{}_{\Lambda} \partial b\}=(\partial+\lambda)\{a{}_{\Lambda}b\},
  \end{equation}
 which are known as sesquilinearities of PVA. In particular, when $N=0$, an $\CC[\partial]$-module $R$ satisfying the skewsymmetry, Jacobi identity and \eqref{eq:sesqui-PVA} is simply called a Lie conformal algebra. 
In this case, we usually denote the $\Lambda$-bracket by $\{\cdot {}_{\lambda} \cdot\}$.

 If the superspace has in addition a unital supercommutative, superassociative $\CC$-algebra structure, which is compatible with the $\Lambda$-bracket, then we call it the $N_K=N$ supersymmetric Poisson vertex algebra.
\begin{definition} \label{def: N_K=N PVA}
  A \textit{$N_K=N$ supersymmetric (SUSY) Poisson vertex algebra (PVA)} is a tuple $\left(P,1, \nabla, \{\cdot {}_{\Lambda} \cdot\}, \cdot\right)$, where
  \begin{itemize}
    \item $\left(P, \nabla, \{\cdot {}_{\Lambda} \cdot\}\right)$  is a $N_K=N$ SUSY Lie conformal algebra,
    \item $\left(P,1, D_i, \cdot\right)$ is a unital supercommutative superassociative differential algebra with odd derivations $D_i$ for $i=1,2,\cdots, N$,
    \item (Leibniz rule) $\left\{a_{\Lambda} b c\right\}=\left\{a_{\Lambda} b\right\} c+(-1)^{(p(a)+N)p(b)} b\left\{a_{\Lambda} c\right\}$ for any $a,b,c\in P$.
  \end{itemize}
  In particular,  the $N_K=0$ SUSY PVA is simply called a PVA.
\end{definition}

\begin{definition} \label{def: superconformal vector}
  Let $P$ be a $N_K=N$  SUSY PVA. An element $G\in P$ of parity $N$ $(\text{mod}\ 2)$ is called a \textit{$N_K=N$  superconformal vector of central charge $c\in \CC$} if it satisfies the two following  conditions.
  \begin{enumerate}[(i)]
    \item The $\Lambda$-bracket between $G$ and itself is
    \begin{equation} \label{eq: conformal Lambda-bracket}
      \{G{}_{\Lambda}G\}=\left(2\partial+(4-N)\lambda+\sum_{i=1}^N \chi_i D_i\right)G+\frac{\lambda^{3-N}\chi_1 \chi_2\cdots \chi_N}{3}c,
    \end{equation} 
    where $c=0$ if $N\geq 4$.
    \item Denote by $P_\Delta$ for $\Delta\in \RR$ the subspace of $P$ consisting of elements $a\in P$ satisfying     
    \begin{equation} \label{eq: conformal weight Delta}
      \{G{}_{\Lambda}a\}=\left(2\partial+2\Delta \lambda +\sum_{i=1}^N\chi_i D_i\right)a+O(\Lambda^2).
    \end{equation}     
    Here $O(\Lambda^2)$ is a polynomial in $\Lambda$, whose constant and linear terms are vanishing. 
    Then 
    $P=\bigoplus_{\Delta\geq M} P_{\Delta}$ for some $M\in \RR$. 
  \end{enumerate}
If $P$ has a superconformal vector, $P=\bigoplus P_{\Delta}$ is called the \textit{conformal weight decomposition} of $P$. In addition, the value $\Delta$ in \eqref{eq: conformal weight Delta} is called the conformal weight of $a$ with respect to $G$. In the physics literature, an element $a$ satisfying \eqref{eq: conformal weight Delta} is called {\it quasi-primary}. Moreover, if $O(\Lambda^2)=0$ in \eqref{eq: conformal weight Delta},  we say 
that $a$ is {\it ($G-$)primary} of conformal weight $\Delta$.
 For later use, we note that $a \in P_\Delta$ is primary if and only if 
\begin{equation}\label{eq: primary conformal weight Delta-2}
  \{a{}_{\Lambda}G\} = (-1)^{p(a)N+N}\left( (2\Delta-2+N)\partial + 2\Delta \lambda + \sum_{i=1}^N \chi_i D_i \right) a.
\end{equation}
Additionally, when $N=0$, $G$ is commonly referred to as a \textit{conformal vector}. 

\begin{remark} \label{rem:conformal}
In many references (see e.g. \cite{Kac98}), when $N=0$, a conformal vector $L$ of a PVA is defined through the Virasoro relation:
\begin{equation} \label{eq:e-momentum}
 \{L{}_\lambda L\} =(\partial+ 2\lambda) L + \frac{c}{6}\lambda^3
\end{equation}
for some constant $c$. The RHS of \eqref{eq:e-momentum} is the half of our relation \eqref{eq: conformal Lambda-bracket} for the conformal vector. Then, an element $L$ in a PVA is a conformal vector with respect to the relation \eqref{eq:e-momentum} if and only if $2L$ is a conformal vector with respect to \eqref{eq: conformal Lambda-bracket}. However, the two definitions of conformal vectors  give rise to the same conformal weight decomposition.
\end{remark}

\end{definition}

\begin{remark}
  There is another kind of SUSY PVA called a $N_W$ SUSY PVA introduced in \cite{HK07}.
  However, in this paper, we only deal with $N_K=0,1,2$ SUSY PVAs and simply denote them by $N=0,1,2$ SUSY PVAs.
  \end{remark}

\subsection{Relation between SUSY PVAs} \label{sec: relation between PVAs}
To understand the structure of SUSY PVAs, we observe the relation between them. Especially, we see the relations between $N=0,1$ and $2$ Poisson structures. In the rest of this paper, we fix
\begin{equation} \label{eq: nabla lambda fixed notation}
  \begin{aligned}
    \bm{\nabla}=(\partial, D, \tilde{D}),\quad &\bm{\Lambda}=(\lambda, \chi, \tilde{\chi}),\\
    \nabla=(\partial, D),\quad \Lambda=(\lambda, \chi), \quad &\tilde{\nabla}=(\partial, \tilde{D}),\quad \tilde{\Lambda}=(\lambda, \tilde{\chi}),
  \end{aligned}
\end{equation}
all of which satisfy the relations introduced in section \ref{sec: basic notations}.
 We also adopt a uniform notation for the coefficients of each $N=0,1,2$ Poisson bracket as
\begin{equation} \label{eq: bracket fixed notation}
\begin{aligned}
  &\{a{}_{\lambda}b\}=\sum_{n\geq 0}\frac{\lambda^n}{n!}a_{(n)}b, \quad \{a{}_{\Lambda}b\}=\sum_{n\geq 0} \frac{\lambda^n}{n!}\left(a_{(n|0)}b+\chi a_{(n|1)}b\right),\\
  &\{a{}_{\bm{\Lambda}}b\}=\sum_{n\geq 0}\frac{\lambda^n}{n!}\left(a_{(n|00)}b-\chi a_{\left(n|10\right)}b+\tilde{\chi} a_{\left(n|01\right)}b-\chi \tilde{\chi}a_{\left(n|11\right)}b \right).
\end{aligned}
\end{equation}

\begin{pdefinition} \label{def: induced bracket}
Suppose $P$ is a $N=1$ or $2$ SUSY PVA. Then $P$ is a $N=0$ PVA 
    with the $\lambda$-bracket $\{\cdot {}_{\lambda}\cdot \}:P\otimes P \rightarrow \CC[\lambda] \otimes P$
    defined by 
    \begin{equation}\label{eq: induced lambda bracket from SUSY}
        \{a{}_{\lambda}b\}:=\sum_{n\geq 0}\frac{\lambda^n}{n!}a_{(n|\ri)}b, \quad \ri= \left\{\begin{array}{ll}
        1 &\text{ if } N=1,\\
        11 &\text{ if }N=2.  
        \end{array}\right.
    \end{equation}
    In the spirit of \eqref{eq: bracket fixed notation},  the $\lambda$-bracket in \eqref{eq: induced lambda bracket from SUSY} is equivalent to $a_{(n)}:=a_{(n|\ri)}$.\\
    Moreover, if $P$ is a $N=2$ SUSY PVA, each of the following $\Lambda$ or $\tilde{\Lambda}$-bracket
    \begin{equation} \label{eq: N=1 induced from N=2}
      \{a{}_{\Lambda}b\}=\sum_{n\geq 0} \frac{\lambda^n}{n!}\left(a_{(n|01)}b+\chi a_{(n|11)}b\right), \quad \{a{}_{\tilde{\Lambda}}b\}=\sum_{n\geq 0} \frac{\lambda^n}{n!}\left(a_{(n|10)}b+\tilde{\chi}a_{(n|11)}b\right)
    \end{equation}
    makes $P$ a $N=1$ SUSY PVA.
\end{pdefinition}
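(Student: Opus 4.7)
The plan is a direct verification via Grassmann-component extraction: each defining axiom of the induced PVA is recovered as the appropriate Grassmann coefficient of the SUSY PVA axiom.

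Consider first the $N=1\to N=0$ reduction. Using the expansion \eqref{eq: bracket fixed notation}, I write $\{a_\Lambda b\}=X_0(\lambda)+\chi X_1(\lambda)$ with $X_i(\lambda)=\sum_n\frac{\lambda^n}{n!}a_{(n|i)}b$, and set $\{a_\lambda b\}:=X_1(\lambda)$. Because the $N=1$ bracket has parity $1$, the component $a_{(n|1)}b$ has parity $p(a)+p(b)$, as required for a bosonic bracket. Sesquilinearity is immediate from \eqref{eq:sesqui-PVA}, which holds componentwise in $\chi$. For skewsymmetry, I would expand the $N=1$ skewsymmetry relation using $\chi\mapsto -\chi-D$ dictated by the $\CC[\nabla]$-action \eqref{eq: D chi relation}; collecting the $\chi$-coefficient and using sesquilinearity to absorb the stray $D$-terms yields the $N=0$ skewsymmetry for $X_1$, with the sign $(-1)^{p(a)p(b)+1}$ converted to $(-1)^{p(a)p(b)}$ by the parity cost of moving $\chi$ past the coefficient. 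The Jacobi identity is handled by the same recipe applied to the triple bracket, using the two odd variables of $\Lambda$ and $\Gamma$ and extracting the coefficient of their product; the signs $(-1)^{\cdots+N}$ of the $N=1$ Jacobi identity reduce to the $N=0$ signs once the parities of $X_1$ are accounted for. The Leibniz rule collapses directly: its $\chi$-coefficient is the $N=0$ Leibniz rule, with $(-1)^{(p(a)+1)p(b)}$ becoming $(-1)^{p(a)p(b)}$ upon moving $\chi$ past $b$.

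For the $N=2\to N=1$ reduction, using the anticommutativity $\chi\tilde\chi=-\tilde\chi\chi$ I rewrite
\[
\{a_{\bm{\Lambda}}b\}=\Bigl(\sum_n\tfrac{\lambda^n}{n!}(a_{(n|00)}b-\chi a_{(n|10)}b)\Bigr)+\tilde\chi\Bigl(\sum_n\tfrac{\lambda^n}{n!}(a_{(n|01)}b+\chi a_{(n|11)}b)\Bigr),
\]
so that the coefficient of $\tilde\chi$ in $\{a_{\bm{\Lambda}}b\}$ is precisely $\{a_\Lambda b\}$; symmetrically, the coefficient of $\chi$ after a dual decomposition is $\{a_{\tilde\Lambda}b\}$ (up to the sign recorded in \eqref{eq: bracket fixed notation}). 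Each $N=2$ axiom is an identity in $\CC[\bm{\Lambda}]\otimes P$ (or $\CC[\bm{\Lambda},\bm{\Gamma}]\otimes P$ for Jacobi); its $\tilde\chi$-coefficient (resp. $\chi$-coefficient), computed using \eqref{eq: D chi relation} to handle the substitutions in the $N=2$ sesquilinearity and skewsymmetry, produces the corresponding $N=1$ axiom for $\{a_\Lambda b\}$ (resp. $\{a_{\tilde\Lambda}b\}$). The $N=2\to N=0$ assertion is then the composition of this step with the $N=1\to N=0$ one, matching the definition $a_{(n)}:=a_{(n|11)}$.

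The main obstacle is sign-tracking in the skewsymmetry and Jacobi identity when the substitution $\Lambda\mapsto-\Lambda-\nabla$ (resp. $\bm{\Lambda}\mapsto-\bm{\Lambda}-\bm{\nabla}$) is expanded: additional $D_i$-terms appear, and one must verify that they do not contaminate the target Grassmann coefficient once the $\CC[\nabla]$-module structure is applied. This is a routine but tedious parity-bookkeeping exercise; no new idea is required beyond the component-extraction strategy above.
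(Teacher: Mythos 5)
Your proposal is correct and follows essentially the same route as the paper: the paper's own proof is precisely the observation that the induced brackets are the $\chi$ (resp.\ $\chi\tilde{\chi}$, resp.\ $\tilde{\chi}$) coefficients of the SUSY bracket, so that each axiom of the induced structure is obtained by extracting the corresponding Grassmann coefficient from the axioms in Definition \ref{def: SUSY LCA}. Your write-up merely carries out the sign bookkeeping in more detail than the paper chooses to.
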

\begin{proof}
  Note that \eqref{eq: induced lambda bracket from SUSY} is obtained by selecting only those terms from the $\Lambda$ (resp. $\bm{\Lambda}$)-bracket in \eqref{eq: bracket fixed notation} that contain the variable $\chi$ (resp. $\chi\tilde{\chi}$). Therefore, one can show that \eqref{eq: induced lambda bracket from SUSY} is indeed a Poisson bracket by comparing the coefficients of $\chi$ (resp. $\chi\tilde{\chi}$) in each of \eqref{eq:sesqui-PVA} and conditions in Definition \ref{def: SUSY LCA}. The last statement is also true for a similar reason.
\end{proof}
In the rest of this paper, even in the context of $N=1$ or $2$ SUSY PVA, we will use the expression $a_{(n)}$ to refer to $a_{(n|\ri)}$ for $\ri$ in \eqref{eq: induced lambda bracket from SUSY}. In addition, focusing on the interrelation between $N=0,1,$ and $2$ SUSY structures, one can check by sesquilinearity that \eqref{eq: bracket fixed notation}, \eqref{eq: induced lambda bracket from SUSY} and \eqref{eq: N=1 induced from N=2} are correlated as
\begin{align}
  \{a{}_{\Lambda}b\}=&\{Da{}_{\lambda}b\}+\chi\{a{}_{\lambda}b\},\quad \{a{}_{\tilde{\Lambda}}b\}=\{\tilde{D}a{}_{\lambda}b\}+\tilde{\chi}\{a{}_{\lambda}b\}, \label{eq: N=1 induced from nonsusy}\\
  \{a{}_{\bm{\Lambda}}b\}=&\{ \tilde{D} D a{}_{\lambda}b\}-\chi\{\tilde{D} a{}_{\lambda}b\}+\tilde{\chi}\{D a{}_{\lambda}b\}-\chi \tilde{\chi} \{a{}_{\lambda}b\} \label{eq: N=2 induced from nonsusy}\\
      =&-\{\tilde{D}a{}_{\Lambda}b\}+\tilde{\chi} \{a{}_{\Lambda}b\}=\{Da{}_{\tilde{\Lambda}}b\}-\chi\{a{}_{\tilde{\Lambda}}b\}, \nonumber
\end{align}
when $P$ is a $N=1$ or $2$ SUSY PVA. For example, when $P$ is a $N=1$ SUSY PVA, one can obtain $(Da)_{(n|1)}=a_{(n|0)}$ by comparing the coefficients of $\chi$ in $\{Da{}_{\Lambda}b\}=\chi\{a{}_{\Lambda}b\}$. Then this result demonstrates the first equality in \eqref{eq: N=1 induced from nonsusy}.

Now, it is natural to ask whether the existence of an additional odd derivation on $N=0$ or $1$ SUSY PVA would lead to an additional SUSY structure if we use \eqref{eq: N=1 induced from nonsusy} or \eqref{eq: N=2 induced from nonsusy}. To investigate this possibility, we define the induced brackets in an obvious way, when a $N=0$ or $1$ SUSY PVA has an odd derivation $D$ or $\tilde{D}$. Each bracket induced by $\Lambda$, $\tilde{\Lambda}$, $\bm{\Lambda}$ will be denoted by
\begin{equation*}
  \{\cdot{}_{\Lambda}\cdot \}^-, \quad \{\cdot{}_{\tilde{\Lambda}}\cdot\}^-,\quad \{\cdot{}_{\bm{\Lambda}}\cdot\}^-,
\end{equation*}
respectively. The induced bracket $\{\cdot{}_{\Lambda}\cdot\}^-$ can be defined on any PVA when it is equipped with any odd derivation $D$. However, these induced brackets may not be Poisson brackets. If the induced brackets are indeed Poisson brackets, then they satisfy the relations \eqref{eq: N=1 induced from nonsusy} and \eqref{eq: N=2 induced from nonsusy} again. The following propositions provide criteria for induced brackets  to be Poisson brackets.

\begin{proposition} \label{prop: pva, N=1 pva relation}
  Let $P$ be a PVA. The following two statements are equivalent:
  \begin{enumerate}
    \item  $(P, 1, \nabla, \{\cdot {}_{\Lambda}\cdot\}^-, \cdot)$ is a $N=1$ SUSY PVA,
    \item  There exists an odd derivation $D:P\rightarrow P$ satisfying 
    \begin{equation} \label{eq: supersymmetry condition}
      D^2=\partial, \quad D\{a{}_{\lambda}b\}=\{Da{}_{\lambda}b\}+(-1)^{p(a)}\{a{}_{\lambda}Db\}
    \end{equation}    
    for any $a,b\in P$.
  \end{enumerate}
  Here, we assume that $\{\cdot {}_{\Lambda}\cdot\}^-$ is induced by $\{\cdot {}_{\lambda}\cdot\}$ as in \eqref{eq: N=1 induced from nonsusy}.
\end{proposition}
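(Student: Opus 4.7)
The plan is to verify each axiom of a $N=1$ SUSY PVA for the induced bracket
$\{a{}_{\Lambda} b\}^- = \{Da{}_{\lambda} b\} + \chi\{a{}_{\lambda} b\}$
by expanding in powers of $\chi$ (with $\chi^2=-\lambda$) and matching against the corresponding axioms of the underlying PVA together with the two hypotheses on $D$. Since each SUSY axiom decomposes into two identities in the coefficients of $\chi^0$ and $\chi^1$, the correspondence is an honest ``if and only if'' at the level of individual axioms, and both directions of the proposition will follow simultaneously.

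First I would dispense with the routine axioms. The left-sesquilinearities
$\{\partial a{}_{\Lambda} b\}^- = -\lambda\{a{}_{\Lambda} b\}^-$ and $\{Da{}_{\Lambda} b\}^- = \chi\{a{}_{\Lambda} b\}^-$
are formal consequences of $D^2=\partial$ combined with $\{\partial a{}_{\lambda} b\} = -\lambda\{a{}_{\lambda} b\}$ and $\chi^2=-\lambda$, and hold automatically. The Leibniz rule for $\{\cdot{}_{\Lambda}\cdot\}^-$ splits, after collecting $\chi^0$ and $\chi^1$ parts, into the ordinary PVA Leibniz rule for $\{\cdot{}_{\lambda}\cdot\}$ and the statement that $D$ is an odd derivation of the product, both of which are granted by the hypotheses. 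For the right-sesquilinearity
$\{a{}_{\Lambda} Db\}^- = (-1)^{p(a)+1}(D+\chi)\{a{}_{\Lambda} b\}^-$,
expanding both sides using the commutation rule $D(\chi\,x) = -\chi Dx + 2\lambda x$ and comparing coefficients shows that the identity is equivalent, on the nose, to the derivation identity
$D\{a{}_{\lambda} b\} = \{Da{}_{\lambda} b\} + (-1)^{p(a)}\{a{}_{\lambda} Db\}$.
Skewsymmetry unwinds along the same lines: the substitutions $\chi\mapsto -\chi-D$, $\lambda\mapsto -\lambda-\partial$, combined with PVA skewsymmetry and the derivation property, reduce both the $\chi^0$ and $\chi^1$ coefficients to tautologies.

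The main obstacle is the $N=1$ SUSY Jacobi identity. Writing $\Gamma = (\gamma,\tilde{\chi})$ with $\chi\tilde{\chi} = -\tilde{\chi}\chi$, I would expand both sides of the $N=1$ SUSY Jacobi relation for $\{\cdot{}_{\Lambda}\cdot\}^-$ into four components indexed by $(\chi^i\tilde{\chi}^j)$ with $i,j\in\{0,1\}$. Using $D^2=\partial$, the derivation identity applied repeatedly, the already-established sesquilinearity, and careful sign bookkeeping, each component should collapse to the ordinary PVA Jacobi identity
$\{a{}_{\lambda}\{b{}_{\gamma} c\}\} = \{\{a{}_{\lambda} b\}{}_{\lambda+\gamma} c\} + (-1)^{p(a)p(b)}\{b{}_{\gamma}\{a{}_{\lambda} c\}\}$,
possibly with one or more arguments replaced by $Da$, $Db$, or $Dc$. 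The hardest piece is the $\chi^0\tilde{\chi}^0$ component, where three $D$'s pile up inside nested brackets and must be commuted past the inner and outer brackets in turn; this is the only place where all of the hypotheses on $\{\cdot{}_{\lambda}\cdot\}$ and $D$ are used simultaneously, and the bookkeeping can be streamlined by working throughout with the compact formula \eqref{eq: N=1 induced from nonsusy}.
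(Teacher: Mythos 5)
Your proposal is correct and follows essentially the same route as the paper: both expand the induced $\Lambda$-bracket in powers of $\chi$ (using $\chi^2=-\lambda$ and $D(\chi\,x)=-\chi Dx+2\lambda x$), observe that the coefficient of $\chi$ in the right-sesquilinearity is exactly the derivation identity $D\{a{}_{\lambda}b\}=\{Da{}_{\lambda}b\}+(-1)^{p(a)}\{a{}_{\lambda}Db\}$ (which gives the forward direction), and then reduce skewsymmetry and the Jacobi identity to their PVA counterparts via that same identity. The paper is equally terse about the SUSY Jacobi identity, so your level of detail there matches its standard.
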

\begin{proof}
  Let $P$ be a $N=1$ SUSY PVA. Comparing the coefficients of $\chi$ in the sesquilinearity $D\{a{}_{\Lambda}b\}^--(-1)^{p(a)+1}\{a{}_{\Lambda}Db\}^-=-\chi\{a{}_{\Lambda}b\}^-$, one gets
  \begin{equation} \label{eq: [D, a_(n|1)]}
    Da_{(n|1)}b+(-1)^{p(a)+1}a_{(n|1)}Db=a_{(n|0)}b.
  \end{equation}
  In terms of the non-SUSY structure of $P$, \eqref{eq: [D, a_(n|1)]} is equivalent to \eqref{eq: supersymmetry condition}.
  Now, assume (2) and check if the induced $\Lambda$-bracket $\{\cdot{}_{\Lambda}\cdot\}^-$ satisfies sesquilinearity, skewsymmetry and Jacobi identity. By computation, we have 
  \begin{equation} \label{eq:N=0,1 lambda bracket, sesqui-1}
    \begin{aligned}
      \{ Da{}_\Lambda b \}^-& =\{\partial a{}_\lambda b\}+\chi \{ Da{}_\lambda b\} = \chi(\chi \{ a{}_\lambda b\} +\{ Da{}_\lambda b\})= \chi \{a{}_\Lambda b\}^-
    \end{aligned}
  \end{equation}
 and 
  \begin{equation} \label{eq: lambda bracket, D compatibility}
    \begin{aligned}
      \{a{}_{\lambda} Db\}&=\sum_{n\geq 0} \frac{\lambda^n}{n!} a_{(n)}(Db)=\sum_{n\geq 0} \frac{\lambda^n}{n!}\left([a_{(n)}, D]+(-1)^{p(a)} D\,  a_{(n)}\right)b\\
      &=\sum_{n\geq 0} \frac{\lambda^n}{n!}\left((-1)^{p(a)+1} (Da)_{(n)}b + (-1)^{p(a)} D(a_{(n)}b)\right)\\
      &=(-1)^{p(a)+1}\{Da{}_{\lambda}b\}+(-1)^{p(a)} D\{a{}_{\lambda}b\}.
    \end{aligned}
  \end{equation}
  By \eqref{eq: lambda bracket, D compatibility}, one can deduce
  \begin{equation}\label{eq:N=0,1 lambda bracket, sesqui-2}
    \begin{aligned}
    & \{ a{}_\Lambda Db \}^-= \chi\big((-1)^{p(a)+1}\{Da{}_\lambda b\}+(-1)^{p(a)}D\{a{}_\lambda b\}\big)\\
    & \hskip 10mm + \big((-1)^{p(a)}\{\partial a{}_\lambda b\} + (-1)^{p(a)+1}D\{Da{}_\lambda b\}\big)\\
    & \hskip 10mm = (-1)^{p(a)+1}(\chi+D) \big( \{Da{}_\lambda b\}+ \chi \{ a{}_\lambda b\} \big)= (-1)^{p(a)+1}(\chi+D) \{ a{}_\Lambda b \}^-.
    \end{aligned}
  \end{equation}
  Hence we get the sesquilinearity from \eqref{eq:N=0,1 lambda bracket, sesqui-1} and \eqref{eq:N=0,1 lambda bracket, sesqui-2}. The skewsymmetry of the $\Lambda$-bracket follows from the skewsymmetry of PVA $P$. More precisely, we have 
  \begin{equation*}
    \begin{aligned}
     \{a{}_{-\Lambda-\nabla}b\}^- & =\{Da{}_{-\lambda-\partial} b\} -(\chi+D) \{a{}_{-\lambda-\partial} b\}\\
     & = (-1)^{p(b)(p(a)+1)+1}\{b{}_\lambda Da\}-(\chi+D)(-1)^{p(a)p(b)+1}\{b{}_\lambda a\}\\
     & = (-1)^{p(a)p(b)} \big( \{Db{}_\lambda a\} - D\{b{}_\lambda a\}+(\chi+D)\{b{}_\lambda a\}\big) \\
     & = (-1)^{p(a)p(b)}\{b{}_\Lambda a\}^-.
    \end{aligned}
  \end{equation*}
  Similarly, the Jacobi identity of $\Lambda$-bracket follows from that of $\lambda$-bracket and \eqref{eq: lambda bracket, D compatibility}.

\end{proof}

\begin{proposition} \label{prop: pva, N=1, N=2 pva relation}
  Let $P$ be a PVA. The following three statements are equivalent.
  \begin{enumerate}
    \item $(P, 1, \bm{\nabla}, \{\cdot {}_{\bm{\Lambda}} \cdot\}^-, \cdot)$ is a $N=2$ SUSY PVA,
    \item  $(P, 1, \nabla, \{\cdot {}_{\Lambda} \cdot \}^-, \cdot)$ is a $N=1$ SUSY PVA with an odd derivation $\tilde{D}$ on $P$ satisfying $\tilde{D}^2=\partial$, $[D, \tilde{D}]=0$ and $\tilde{D}\{a{}_{\Lambda}b\}^-=-\{\tilde{D}a{}_{\Lambda}b\}^-+(-1)^{p(a)+1}\{a{}_{\Lambda}\tilde{D}b\}^-$ for any $a,b\in P$,
    \item  $P$ has two supercommuting odd derivations $D$ and  $\tilde{D}$ satisfying $D^2=\tilde{D}^2=\partial$ and $D\{a{}_{\lambda}b\}=\{Da{}_{\lambda}b\}+(-1)^{p(a)}\{a{}_{\lambda}Db\}$, $\tilde{D}\{a{}_{\lambda}b\}=\{\tilde{D}a{}_{\lambda}b\}+(-1)^{p(a)}\{a{}_{\lambda}\tilde{D}b\}$ for any $a,b\in P$.
  \end{enumerate}
\end{proposition}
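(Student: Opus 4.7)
The plan is to establish the equivalence by reducing to two applications of Proposition \ref{prop: pva, N=1 pva relation}, which handles the analogous $N=0 \leftrightarrow N=1$ situation: once at the level $\lambda \leftrightarrow \Lambda$ to obtain $(3)\Leftrightarrow(2)$, and once at the level $\Lambda \leftrightarrow \bm{\Lambda}$ to obtain $(2)\Leftrightarrow(1)$. The guiding observation is that the passage from $(2)$ to $(1)$ is formally identical to the passage from a PVA to an $N=1$ SUSY PVA in Proposition \ref{prop: pva, N=1 pva relation}, with $\tilde{D}$, $\tilde{\chi}$, and $\{\cdot{}_\Lambda\cdot\}^-$ playing the roles previously played by $D$, $\chi$, and $\{\cdot{}_\lambda\cdot\}$.

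For $(3)\Leftrightarrow(2)$, Proposition \ref{prop: pva, N=1 pva relation} applied to $D$ directly converts the $D$-related data in the two statements into one another, leaving only the conditions on $\tilde{D}$ to be translated. For $(3)\Rightarrow(2)$, I will expand $\tilde{D}\{a{}_\Lambda b\}^- = \tilde{D}\{Da{}_\lambda b\} - \chi\tilde{D}\{a{}_\lambda b\}$ (the minus sign arising from $\tilde{D}\chi = -\chi\tilde{D}$), apply the $\tilde{D}$-Leibniz rule on each $\lambda$-bracket, use $[D,\tilde{D}]=0$, and match the result term by term against $-\{\tilde{D}a{}_\Lambda b\}^- + (-1)^{p(a)+1}\{a{}_\Lambda \tilde{D}b\}^-$. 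Conversely, the identities $\tilde{D}^2 = \partial$, $[D,\tilde{D}]=0$, and the $\tilde{D}$-Leibniz rule for $\{\cdot{}_\lambda\cdot\}$ are recovered from $(2)$ by extracting the coefficients of $1$ and $\chi$ in the $\Lambda$-bracket hypothesis on $\tilde{D}$.

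For $(1)\Rightarrow(2)$, Proposition-Definition \ref{def: induced bracket} already yields an $N=1$ SUSY PVA structure with $\Lambda$-bracket given by \eqref{eq: N=1 induced from N=2}, which coincides with the induced bracket $\{\cdot{}_\Lambda\cdot\}^-$ via \eqref{eq: N=2 induced from nonsusy}. The required properties of $\tilde{D}$ in $(2)$ are read off by extracting the coefficients of $1$ and $\chi$ in the $N=2$ sesquilinearity relations $\{\tilde{D}a{}_{\bm{\Lambda}}b\} = -\tilde{\chi}\{a{}_{\bm{\Lambda}}b\}$ and $\{a{}_{\bm{\Lambda}}\tilde{D}b\} = (-1)^{p(a)}(\tilde{D}+\tilde{\chi})\{a{}_{\bm{\Lambda}}b\}$, after rewriting $\{\cdot{}_{\bm{\Lambda}}\cdot\}$ via \eqref{eq: N=2 induced from nonsusy}.

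For $(2)\Rightarrow(1)$, I will define $\{a{}_{\bm{\Lambda}}b\}^- := -\{\tilde{D}a{}_\Lambda b\}^- + \tilde{\chi}\{a{}_\Lambda b\}^-$ as in \eqref{eq: N=2 induced from nonsusy} and verify the $N=2$ SUSY PVA axioms along the lines of the proof of Proposition \ref{prop: pva, N=1 pva relation}. Sesquilinearity in $\partial$ and $D$ follows from that of $\{\cdot{}_\Lambda\cdot\}^-$, while sesquilinearity in $\tilde{D}$ is exactly the $\tilde{D}$-compatibility hypothesis in $(2)$, checked in the same spirit as \eqref{eq:N=0,1 lambda bracket, sesqui-1}--\eqref{eq:N=0,1 lambda bracket, sesqui-2}; skewsymmetry then follows by combining $N=1$ skewsymmetry with the $\tilde{D}$-sesquilinearity. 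The main obstacle will be the Jacobi identity, which requires rewriting each nested $\bm{\Lambda}$-bracket as a sum of four $\{\cdot{}_\Lambda\cdot\}^-$-terms, applying the $N=1$ Jacobi identity, and carefully tracking the signs arising from anticommuting $\tilde{D}$, $\chi$, and $\tilde{\chi}$ past each other and past odd elements of $P$. This is entirely analogous to the $N=0\to N=1$ step of Proposition \ref{prop: pva, N=1 pva relation}, but with noticeably heavier bookkeeping.
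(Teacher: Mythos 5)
Your proposal is correct and follows essentially the same route as the paper: the paper also obtains $(2)\Leftrightarrow(3)$ as an immediate consequence of Proposition \ref{prop: pva, N=1 pva relation}, gets $(1)\Rightarrow(2)$ from Proposition-Definition \ref{def: induced bracket} together with coefficient extraction from the $N=2$ sesquilinearity, and proves $(2)\Rightarrow(1)$ by a computation analogous to \eqref{eq:N=0,1 lambda bracket, sesqui-2}. Your sign bookkeeping (e.g.\ $\tilde{D}(\chi\,\cdot)=-\chi\tilde{D}(\cdot)$ and the use of $\tilde{D}D=-D\tilde{D}$) checks out, and you simply spell out steps the paper leaves implicit.
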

\begin{proof}
  The equivalence of (2) and (3) is immediate from Proposition \ref{prop: pva, N=1 pva relation}, so we only need to show that (1) and (2) are equivalent. If $P$ is a $N=2$ SUSY PVA, then the $N=1$ SUSY PVA structure of $P$ is guaranteed by Proposition-Definition \ref{def: induced bracket}. Additionally, the properties of $\tilde{D}$ can be checked using sesquilinearity, as illustrated in the proof of Proposition \ref{prop: pva, N=1 pva relation}. One can prove the opposite by the computation similar to \eqref{eq:N=0,1 lambda bracket, sesqui-2}.
\end{proof}
In the following, when induced brackets are Poisson brackets, we shall omit the ``$-$" symbol in the induced brackets.

Thus, to find an additional SUSY structure, one has first to identify an odd derivation $D$ that satisfies \eqref{eq: supersymmetry condition}. For any given element $\tau$ in a PVA, $\tau_{(0)}$ is a derivation that meets the second condition of \eqref{eq: supersymmetry condition}, as it can be deduced from Jacobi identity and Leibniz rule. Therefore, we can take $D:=\tau_{(0)}$ whenever $\tau$ is an odd element satisfying $[\tau_{(0)}, \tau_{(0)}]=(\tau_{(0)}\tau)_{(0)}=2\partial$.

At this point, conformal vectors come into play. Note that the last condition for $\tau$ is achieved when $\tau_{(0)}\tau$ is a conformal vector. Hence, we assume that
\begin{enumerate}
\item $\tau_{(0)}\tau$ is indeed a conformal vector, allowing us to put $D:=\tau_{(0)}$,
\item $\tau$ is primary of conformal weight $\frac{3}{2}$ with respect to $\tau_{(0)}\tau$. 
\end{enumerate}
Then, in terms of the induced $\Lambda$-bracket, we get an equation
\begin{equation} \label{eq: remind us of superconformal}
  \{\tau{}_{\Lambda}\tau\}^-=(2\partial+3 \lambda+\chi D)\tau +O(\Lambda^2),
\end{equation}
which has to be compared with the Definition \ref{def: superconformal vector} for a $N=1$ superconformal vector. The following theorem shows how to get a superconformal vector along with an additional SUSY structure and determines the conformal weight decomposition of $P$.

\begin{theorem} \label{thm: pva, N=1 pva relation with superconformal}
Let $P$ be a PVA with a conformal vector $L$ of central charge $c\in \CC$.  Suppose that there are primary elements $G$ and $a_i,\,i\in {\cI}$, such that $G$ is odd,
\begin{equation} \label{eq: nonsusy condition for N=1 in Thm}
  \{G{}_{\lambda} G\}=L+\frac{\lambda^2}{3}c, \quad \{G{}_{\lambda}a_i\}=G_{(0)}a_i,\quad  i\in {\cI},
\end{equation}
and the set $\{G,L\}\cup \{a_i, G_{(0)}a_i\}_{i\in {\cI}}$ freely generates $P$ as a $\CC[\partial]$-algebra. Then, for $D:=G_{(0)}$, $P$ is a $N=1$ SUSY PVA with a $N=1$ superconformal vector $G$. Moreover, $P$ is freely generated by the set $\{G\}\cup\{a_i\}_{i\in {\cI}}$ as a $\CC[\nabla]$-algebra, where $\{a_i\}_{i\in {\cI}}$ consists of $G$-primary elements.
\end{theorem}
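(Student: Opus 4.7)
The strategy is to apply Proposition \ref{prop: pva, N=1 pva relation}, which reduces producing a $N=1$ SUSY structure on $P$ to exhibiting an odd derivation $D$ satisfying the two conditions of \eqref{eq: supersymmetry condition}. I would set $D:=G_{(0)}$. That $G_{(0)}$ is an odd derivation of the commutative product is the $\lambda^{0}$-part of the Leibniz rule $\{G_{\lambda}bc\}=\{G_{\lambda}b\}c+(-1)^{p(b)}b\{G_{\lambda}c\}$, while its derivation property with respect to the $\lambda$-bracket is the $\lambda^{0}$-part of the Jacobi identity with outer spectral parameter set to zero. The only non-trivial step is $G_{(0)}^{2}=\partial$: applying this same Jacobi computation to $\{G_{\lambda}(G_{(0)}c)\}$ yields $2G_{(0)}^{2}c=(G_{(0)}G)_{(0)}c=L_{(0)}c$, and $L_{(0)}=2\partial$ from Definition \ref{def: superconformal vector} with $N=0$ gives the claim.

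Having applied Proposition \ref{prop: pva, N=1 pva relation}, the induced $\Lambda$-bracket satisfies $\{a_{\Lambda}b\}=\{Da_{\lambda}b\}+\chi\{a_{\lambda}b\}$ by \eqref{eq: N=1 induced from nonsusy}, so
\[
\{G_{\Lambda}G\}=\{L_{\lambda}G\}+\chi\Bigl(L+\tfrac{\lambda^{2}}{3}c\Bigr).
\]
Since $G$ is primary with respect to $L$, one has $\{L_{\lambda}G\}=(2\partial+2\Delta\lambda)G$ for some $\Delta$. Combining the skew-symmetry relation $\{G_{\lambda}L\}=-\{L_{-\lambda-\partial}G\}$ with the Jacobi-identity relation $\{G_{\lambda}L\}=\{L_{\lambda}G\}-G_{(0)}L$, together with $G_{(0)}L=G_{(0)}^{2}G=\partial G$, forces $\Delta=\tfrac{3}{2}$. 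Substituting gives $\{G_{\Lambda}G\}=(2\partial+3\lambda+\chi D)G+\tfrac{\lambda^{2}\chi}{3}c$, which is \eqref{eq: conformal Lambda-bracket} for $N=1$. The $G$-weight and the $L$-weight of any element coincide because both are read off from the action of $L_{(1)}$, so the weight decomposition $P=\bigoplus P_{\Delta}$ for $G$ is that for $L$, completing the verification that $G$ is a $N=1$ superconformal vector.

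For each $a_{i}$ of $L$-weight $\Delta_{i}$ the hypothesis $\{G_{\lambda}a_{i}\}=G_{(0)}a_{i}$ gives $\{G_{\Lambda}a_{i}\}=(2\partial+2\Delta_{i}\lambda+\chi D)a_{i}$ with no $O(\Lambda^{2})$ remainder, i.e.\ $G$-primarity in the sense of \eqref{eq: conformal weight Delta}. Finally, since $D^{2}=\partial$, the $\CC[\nabla]$-orbit of any element $x$ is precisely $\{\partial^{n}x,\partial^{n}Dx:n\geq 0\}$; because $DG=L$ and $Da_{i}=G_{(0)}a_{i}$, the $\CC[\nabla]$-algebra generated by $\{G\}\cup\{a_{i}\}_{i\in\cI}$ coincides with the $\CC[\partial]$-algebra generated by $\{G,L\}\cup\{a_{i},G_{(0)}a_{i}\}_{i\in\cI}$, so the hypothesized free generation transfers verbatim. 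The principal obstacle is pinning down $\Delta=\tfrac{3}{2}$ from the minimal hypotheses via the skewsymmetry-plus-Jacobi argument above; the remainder is formal manipulation of the identities in Section \ref{sec: relation between PVAs}.
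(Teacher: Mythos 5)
Your proposal is correct and follows essentially the same route as the paper: set $D=G_{(0)}$, invoke Proposition \ref{prop: pva, N=1 pva relation}, and read off \eqref{eq: conformal Lambda-bracket} from the induced bracket \eqref{eq: N=1 induced from nonsusy}. The only (minor) divergence is how you pin down $\Delta(G)=\tfrac32$ — the paper deduces it in one line from $\Delta(G_{(0)}G)=2\Delta(G)-1=2$, whereas you derive it from skewsymmetry combined with the Jacobi identity applied to $\{G{}_{\lambda}L\}$, which is a slightly longer but more self-contained verification of the same fact.
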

\begin{proof}
  Given that the conformal weight of $G_{(0)}G=L$ is $2$, $G$ has conformal weight $\frac{3}{2}$. Therefore, by the preceding argument of this theorem, one can take $D:=G_{(0)}$ to get a $N=1$ SUSY structure. From the $L$-primarity of $G$ and the first condition in \eqref{eq: nonsusy condition for N=1 in Thm}, it follows that
  \begin{equation} \label{eq: nonsusy condition for N=1 superconformal}
    \{DG{}_{\lambda}G\}=(2\partial+3\lambda)G, \quad \{G{}_{\lambda}G\}=DG+\frac{\lambda^2}{3}c.
  \end{equation}
  This is equivalent to \eqref{eq: conformal Lambda-bracket} if we use \eqref{eq: N=1 induced from nonsusy}. The remaining conditions prove the $G$-primarity of $a_i$ in a similar way.
\end{proof}

Following a similar argument as in the proof of Theorem \ref{thm: pva, N=1 pva relation with superconformal}, one can also prove the following theorem.
\begin{theorem} \label{thm: N=1, N=2 pva relation with superconformal}
 Let $P$ be a $N=1$ SUSY PVA with a $N=1$ superconformal vector $G$ of central charge $c\in \CC$ . Suppose that there are primary elements $J$ and $a_i$, $i\in {\cI}$, such that $J$ is even,
 \begin{equation} \label{eq: N=1 condition for N=2 in Thm}
  \{J{}_{\Lambda}J\}=-G-\frac{\lambda\chi}{3}c,\quad \{J{}_{\Lambda}a_i\}=J_{(0|0)}a_i,
 \end{equation}
 and the set $\{J, G\}\cup \{a_i, J_{(0|0)}a_i\}$ freely generates $P$ as a $\CC[\nabla]$-algebra. Then, for $\tilde{D}:=J_{(0|0)}$, $P$ is a $N=2$ SUSY PVA with a $N=2$ superconformal vector $J$. Moreover, $P$ is freely generated by the set $\{J\}\cup \{a_i\}_{i\in {\cI}}$ as a $\CC[\bm{\nabla}]$-algebra, where $\{a_i\}_{i\in {\cI}}$ consists of $J$-primary elements.
\end{theorem}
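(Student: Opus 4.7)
The strategy is to apply Proposition \ref{prop: pva, N=1, N=2 pva relation}, specifically the equivalence $(1)\Leftrightarrow(2)$, using $\tilde D := J_{(0|0)}$ as the additional odd derivation. My plan is therefore to verify the three conditions on $\tilde D$ required in part (2) of that proposition, and then to directly check that the resulting induced $\bm\Lambda$-bracket satisfies the $N=2$ superconformal identity \eqref{eq: conformal Lambda-bracket} and the $J$-primarity formula for each $a_i$.

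The oddness and derivation property of $\tilde D$ follow immediately by reading off the $\chi^0\lambda^0$-coefficient of the Leibniz rule $\{J{}_\Lambda bc\} = \{J{}_\Lambda b\}c + (-1)^{p(b)}b\{J{}_\Lambda c\}$ (using $p(J)=0$). The graded commutator $[D,\tilde D] = D\tilde D + \tilde D D$ vanishes by taking the same coefficient of the sesquilinearity identities $\{DJ{}_\Lambda a\} = \chi\{J{}_\Lambda a\}$ and $\{J{}_\Lambda Da\} = -(D+\chi)\{J{}_\Lambda a\}$. The $\Lambda$-bracket compatibility $\tilde D\{a{}_\Lambda b\} = -\{\tilde D a{}_\Lambda b\} + (-1)^{p(a)+1}\{a{}_\Lambda \tilde D b\}$ is exactly the $\chi'^0\gamma^0$-coefficient of the Jacobi identity applied to $\{J{}_\Gamma\{a{}_\Lambda b\}\}$.

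The principal computation is $\tilde D^2 = \partial$. I would first pin down the $G$-conformal weight of $J$ by applying Jacobi to $\{J{}_\Lambda\{J{}_\Gamma J\}\}$, using $\{J{}_\Lambda J\} = -G - \frac{\lambda\chi}{3}c$ and equation \eqref{eq: primary conformal weight Delta-2} for the primary $J$; matching the $\partial J$-coefficients on both sides forces $\Delta_J = 1$, and hence $\{J{}_\Lambda G\} = -(\partial + 2\lambda + \chi D)J$. In particular $\tilde D G = -\partial J$, so $\tilde D^2 J = -\tilde D G = \partial J$. For each $a_i$, a parallel Jacobi computation on $\{J{}_\Lambda\{J{}_\Gamma a_i\}\}$, using $\{J{}_\Gamma a_i\} = \tilde D a_i$ and the $G$-primarity of $a_i$, produces an identity relating $\{J{}_\Lambda \tilde D a_i\}$ to $(2\partial + 2\Delta_i\lambda + \chi D)a_i - \tilde D^2 a_i$; specializing at $\Lambda = \Gamma = 0$ gives $2\tilde D^2 a_i = 2\partial a_i$. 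Since $\tilde D^2$ is an even derivation on $P$ and agrees with $\partial$ on the $\CC[\nabla]$-generating set $\{J, G, a_i, \tilde D a_i\}$ (the cases $G=-\tilde D J$ and $\tilde D a_i$ reduce to the two just established, using $[D,\tilde D]=0$), the equality propagates to all of $P$.

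Once Proposition \ref{prop: pva, N=1, N=2 pva relation} yields the $N=2$ SUSY PVA structure, checking the superconformal identity for $J$ is direct: by \eqref{eq: N=2 induced from nonsusy} one has $\{J{}_{\bm\Lambda}J\} = -\{\tilde D J{}_\Lambda J\} + \tilde\chi\{J{}_\Lambda J\} = \{G{}_\Lambda J\} - \tilde\chi G - \frac{\tilde\chi\lambda\chi}{3}c$; substituting $\{G{}_\Lambda J\} = (2\partial + 2\lambda + \chi D)J$, using $-\tilde\chi G = \tilde\chi \tilde D J$, and tracking the sign $\tilde\chi\lambda\chi = -\lambda\chi\tilde\chi$, one obtains exactly $(2\partial + 2\lambda + \chi D + \tilde\chi \tilde D)J + \frac{\lambda\chi\tilde\chi}{3}c$, which is \eqref{eq: conformal Lambda-bracket} in the $N=2$ case. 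The analogous one-line computation $\{J{}_{\bm\Lambda}a_i\} = \{G{}_\Lambda a_i\} + \tilde\chi \tilde D a_i$ exhibits each $a_i$ as $J$-primary of conformal weight $\Delta_i$. Free generation is then bookkeeping: $\CC[\bm\nabla]\langle J, a_i\rangle = \CC[\nabla]\langle J, \tilde D J, a_i, \tilde D a_i\rangle = \CC[\nabla]\langle J, G, a_i, \tilde D a_i\rangle$, so freeness transfers from the hypothesized $\CC[\nabla]$-generation. The main obstacle I foresee is the conformal-weight determination together with the careful sign tracking in the $\bm\Lambda$-bracket computation; everything else is a routine unpacking of definitions.
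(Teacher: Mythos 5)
Your proposal is correct and follows essentially the same route as the paper's (very terse) proof: take $\tilde D=J_{(0|0)}$, invoke Proposition \ref{prop: pva, N=1, N=2 pva relation}, and reduce the $N=2$ superconformal condition to the two $\Lambda$-bracket identities $\{\tilde DJ{}_\Lambda J\}=-(2\partial+2\lambda+\chi D)J$ and $\{J{}_\Lambda J\}=\tilde DJ-\frac{\lambda\chi}{3}c$, which is exactly what your computation of $\{J{}_{\bm\Lambda}J\}$ via \eqref{eq: N=2 induced from nonsusy} amounts to. The only (harmless) variation is that you establish $\tilde D^2=\partial$ generator by generator and extend by the derivation property, and derive $\Delta_J=1$ from the Jacobi identity rather than from mode-weight counting; both are correct fillings-in of details the paper leaves implicit.
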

\begin{proof}
  The proof is similar to that of Theorem \ref{thm: pva, N=1 pva relation with superconformal}. Note that the condition \eqref{eq: conformal Lambda-bracket} for $J$ to be a $N=2$ superconformal vector is equivalent to
  \begin{equation}
    \{\tilde{D}J{}_{\Lambda}J\}=-(2\partial+2\lambda+\chi D)J,\quad \{J{}_{\Lambda}J\}=\tilde{D}J-\frac{\lambda \chi}{3}c.
  \end{equation}
\end{proof}

\begin{remark} \ 
  \begin{enumerate} 
    \item   One can also show that each of the converse statements of Theorem \ref{thm: pva, N=1 pva relation with superconformal} and \ref{thm: N=1, N=2 pva relation with superconformal} is also true. For example, if $P$ is a $N=1$ SUSY PVA with a $N=1$ superconformal vector $G$, and $P$ is freely generated by $G$-primary elements and $G$ itself, then (i) $D=G_{(0)}$, (ii) $L:=DG$ is a conformal vector, (iii) \eqref{eq: nonsusy condition for N=1 in Thm} holds and (iv) $P$ is freely generated by $L$-primary elements and $L$ itself as $\CC[\partial]$-algebra. This converse statement can be shown using \eqref{eq: nonsusy condition for N=1 superconformal} and \eqref{eq: lambda bracket, D compatibility}.
    \item 
    Taking into account the converse of Theorem \ref{thm: N=1, N=2 pva relation with superconformal}, $-\tilde{D}J$ is a $N=1$ superconformal vector with the induced $\Lambda$-bracket, if $J$ is a $N=2$ superconformal vector. Changing the role of $(D, \chi)$ and $(\tilde{D}, \tilde{\chi})$, one can also say that $DJ$ is a $N=1$ superconformal vector, but with the $\tilde{\Lambda}$-bracket introduced in \eqref{eq: N=1 induced from N=2}.
  \end{enumerate}
\end{remark}

For later use, we emphasize here the equalities to check for an existence of a $N=2$ superconformal vector in terms of the PVA language.
\begin{theorem} \label{thm:N=2 superconformal in N=0 PVA}
  Let $P$ be a $N=2$ SUSY PVA. An even element $J$ of $P$ satisfies \eqref{eq: conformal Lambda-bracket} if and only if
  \begin{gather*}
    \{D\tilde{D}J{}_{\lambda}J\}=-(2\partial+2\lambda)J, \quad \{\tilde{D}J{}_{\lambda}J\}=-DJ, \quad \{DJ{}_{\lambda}J\}=\tilde{D}J, \quad \{J{}_{\lambda}J\}=-\frac{\lambda}{3}c.
  \end{gather*}
  Furthermore, for a $N=2$ superconformal vector $J$ of $P$, an element $a\in P$ is $J$-primary if and only if 
  \begin{gather*}
    \{D\tilde{D}J{}_{\lambda}a\}=-(2\partial+2\Delta\lambda)a, \quad \{\tilde{D}J{}_{\lambda}a\}=-Da, \quad \{DJ{}_{\lambda}a\}=\tilde{D}a, \quad \{J{}_{\lambda}a\}=0.
  \end{gather*}
\end{theorem}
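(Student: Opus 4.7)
The plan is to reduce everything to comparing coefficients of $1,\chi,\tilde{\chi},\chi\tilde{\chi}$ in the formula \eqref{eq: N=2 induced from nonsusy}, which expresses $\{a{}_{\bm{\Lambda}}b\}$ in terms of non-SUSY $\lambda$-brackets involving $D$ and $\tilde{D}$. First I would expand the $N=2$ superconformal relation \eqref{eq: conformal Lambda-bracket} explicitly: for an even element $J$ with $N=2$, this reads
\begin{equation*}
\{J{}_{\bm{\Lambda}}J\} = \bigl(2\partial + 2\lambda + \chi D + \tilde{\chi}\tilde{D}\bigr)J + \frac{\lambda\,\chi\tilde{\chi}}{3}c.
\end{equation*}
On the other hand, applying \eqref{eq: N=2 induced from nonsusy} gives
\begin{equation*}
\{J{}_{\bm{\Lambda}}J\} = \{\tilde{D}DJ{}_{\lambda}J\} - \chi\{\tilde{D}J{}_{\lambda}J\} + \tilde{\chi}\{DJ{}_{\lambda}J\} - \chi\tilde{\chi}\{J{}_{\lambda}J\}.
\end{equation*}

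Next I would match the two expressions term by term in the $\chi,\tilde{\chi}$ expansion. The coefficients of $\chi$, $\tilde{\chi}$, and $\chi\tilde{\chi}$ immediately yield
\begin{equation*}
\{\tilde{D}J{}_{\lambda}J\} = -DJ, \qquad \{DJ{}_{\lambda}J\} = \tilde{D}J, \qquad \{J{}_{\lambda}J\} = -\tfrac{\lambda}{3}c.
\end{equation*}
For the scalar coefficient I get $\{\tilde{D}DJ{}_{\lambda}J\} = (2\partial+2\lambda)J$. To pass to the form $\{D\tilde{D}J{}_{\lambda}J\} = -(2\partial+2\lambda)J$ stated in the theorem, I would invoke the fact that in a $N=2$ SUSY PVA one has $[D,\tilde{D}]=0$ as graded commutator, i.e.\ $D\tilde{D}+\tilde{D}D=0$ since $D,\tilde{D}$ are odd (see \eqref{eq: nabla relation} with $\delta_{1,2}=0$), hence $D\tilde{D}J=-\tilde{D}DJ$. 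Since every step is an equivalence (each coefficient comparison is if and only if), this proves the first assertion.

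For the second part about $J$-primary elements, the argument is entirely parallel: expand \eqref{eq: conformal weight Delta} with $N=2$ as
\begin{equation*}
\{J{}_{\bm{\Lambda}}a\} = \bigl(2\partial + 2\Delta\lambda + \chi D + \tilde{\chi}\tilde{D}\bigr)a,
\end{equation*}
compare to the same expansion via \eqref{eq: N=2 induced from nonsusy} with $a$ in the second slot, and match coefficients of $1,\chi,\tilde{\chi},\chi\tilde{\chi}$. The four resulting identities are exactly those stated, after the same $D\tilde{D}\leftrightarrow -\tilde{D}D$ flip for the constant term.

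The main obstacle is purely bookkeeping: keeping track of signs when moving $\chi$ and $\tilde{\chi}$ past one another and past odd elements, and correctly using the anticommutation $D\tilde{D}=-\tilde{D}D$ to convert between the two orderings. There is no hard analytic content — everything follows from applying \eqref{eq: N=2 induced from nonsusy} and reading off coefficients.
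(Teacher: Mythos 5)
Your proposal is correct and is precisely the argument the paper intends: the theorem is stated without proof as an immediate consequence of expanding $\{J{}_{\bm{\Lambda}}\cdot\}$ via \eqref{eq: N=2 induced from nonsusy} and matching the coefficients of $1,\chi,\tilde{\chi},\chi\tilde{\chi}$ against \eqref{eq: conformal Lambda-bracket} and \eqref{eq: conformal weight Delta}, with the sign flip $D\tilde{D}J=-\tilde{D}DJ$ coming from \eqref{eq: nabla relation}. No gaps.
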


\subsection{Reconstruction Theorem and Examples} \label{sec: Reconstruction Theorem and Examples}
In this section, we explain a way to get a freely generated SUSY PVA from $N_K=N$ a SUSY Lie conformal algebra and see some examples. We fix $\bm{\nabla}, \bm{\Lambda}, \nabla$ and $\Lambda$ as in \eqref{eq: nabla lambda fixed notation}. First, we have the well-known reconstruction theorem for $N=0$.
\begin{theorem}[Reconstruction Theorem \cite{Kac98}] \label{thm: S(R) pva}
  Let $R$ be a Lie conformal algebra. Then the supersymmetric algebra $S(R):=S(R_{\bar{0}})\otimes \bigwedge(R_{\bar{1}})$ is a PVA, whose $\lambda$-bracket is induced from the bracket of $R$ by Leibniz rule.
\end{theorem}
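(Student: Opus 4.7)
The plan is to construct the PVA structure on $S(R)$ in three stages and then verify the axioms of an $N_K=0$ SUSY PVA from Definition \ref{def: N_K=N PVA}. First, extend $\partial:R\to R$ to an even derivation on $S(R)$ by the Leibniz rule; this is unique and well-defined because $S(R)$ is the free supercommutative superalgebra on $R$. Second, for each fixed $a\in R$, inductively define $\{a{}_\lambda X\}\in\CC[\lambda]\otimes S(R)$ for $X\in S(R)$ by setting $\{a{}_\lambda 1\}=0$, taking $\{a{}_\lambda b\}$ from $R$ when $b\in R$, and imposing the Leibniz rule $\{a{}_\lambda bY\}=\{a{}_\lambda b\}Y+(-1)^{p(a)p(b)}b\{a{}_\lambda Y\}$. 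Supercommutativity of $S(R)$ guarantees that this prescription is independent of the factorization, so each $\{a{}_\lambda\cdot\}$ becomes a well-defined derivation of parity $p(a)$. Third, extend to an arbitrary first argument by the right Leibniz rule forced by skew-symmetry, applied inductively in the word length of the first factor.

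Once the bracket is in place, the axioms are checked by nested inductions on word length. Right sesquilinearity $\{a{}_\lambda \partial X\}=(\partial+\lambda)\{a{}_\lambda X\}$ for $a\in R$ follows by induction on the length of $X$, using that both $\partial$ and $\{a{}_\lambda\cdot\}$ are derivations. Left sesquilinearity and the right Leibniz rule are built into the Stage~3 definition. Skew-symmetry is verified first for $a\in R$ and $X\in S(R)$ by induction on the length of $X$, then pushed to both arguments in $S(R)$ via compatibility with the right Leibniz rule.

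The main obstacle is the Jacobi identity, and the strategy is a triple induction. First I would prove
\begin{equation*}
\{a{}_\lambda\{b{}_\mu X\}\}=\{\{a{}_\lambda b\}_{\lambda+\mu}X\}+(-1)^{p(a)p(b)}\{b{}_\mu\{a{}_\lambda X\}\}
\end{equation*}
for $a,b\in R$ and $X\in S(R)$ by induction on the length of $X$: the base case $X\in R$ is precisely the Jacobi identity of $R$, while the inductive step exploits the fact that $\{a{}_\lambda\cdot\}$, $\{b{}_\mu\cdot\}$, and $\{\{a{}_\lambda b\}_{\lambda+\mu}\cdot\}$ are all derivations, so the Leibniz expansion of each term around $X=X_1X_2$ matches piece by piece on the two sides. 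Next, extend the middle argument to general elements of $S(R)$ by the same Leibniz expansion in the second slot. Finally, extend the first argument via the right Leibniz rule together with the already-established skew-symmetry. The calculation is mechanical once the derivation property of every $\{a{}_\lambda\cdot\}$ is in hand, but the sign bookkeeping and the powers of $\partial$ that appear through the right Leibniz rule make this the technically heaviest step.
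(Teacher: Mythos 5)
The paper does not actually prove this theorem: it is quoted as a classical result with a citation to \cite{Kac98} (the detailed argument appears there and in \cite{BDK09} as the ``extension'' or ``master formula'' theorem for $\lambda$-brackets). So there is no in-paper proof to compare against; what you have written is the standard extension-by-Leibniz argument, and its overall architecture --- extend $\partial$ as a derivation, extend the bracket in the second slot as a superderivation for each fixed $a\in R$, extend in the first slot by the right Leibniz rule dictated by skewsymmetry, then verify sesquilinearity, skewsymmetry and Jacobi by nested inductions on word length --- is correct and is essentially the proof in the cited references.

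Two small imprecisions are worth fixing. First, well-definedness of the derivation $\{a{}_\lambda\cdot\}$ on $S(R)$ is not a consequence of supercommutativity alone but of the fact that $S(R)=S(R_{\bar 0})\otimes\bigwedge(R_{\bar 1})$ is the \emph{free} supercommutative algebra on the superspace $R$; one only needs to check that the Leibniz prescription respects the defining relations $bc=(-1)^{p(b)p(c)}cb$, which it does. Second, left sesquilinearity $\{\partial X{}_\lambda Y\}=-\lambda\{X{}_\lambda Y\}$ is not literally ``built into'' the Stage~3 definition: for $X$ a product, $\partial X$ is a sum of products, so this identity requires its own short induction using the right Leibniz rule, exactly parallel to your argument for right sesquilinearity. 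Neither point affects the validity of the approach; they only need to be stated as (easy) verifications rather than as automatic consequences.
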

Using Proposition \ref{prop: pva, N=1 pva relation} and \ref{prop: pva, N=1, N=2 pva relation}, we get the supersymmetric version of the reconstruction theorem.
\begin{theorem}
  Let $R$ be a $N=1$ (resp. $N=2$) SUSY Lie conformal algebra. Then the supersymmetric algebra $S(R)$ has a $N=1$ (resp. $N=2$) SUSY PVA structure, whose $\Lambda$ (resp. $\bm{\Lambda}$)-bracket is induced from that of $R$ by Leibniz rule, and $D$ (resp. $D$ and $\tilde{D}$) is extended to the derivation on $S(R)$.
\end{theorem}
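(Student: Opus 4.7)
The plan is to bootstrap from the ordinary (non-SUSY) Reconstruction Theorem (Theorem 3.11) together with the characterizations of $N=1$ and $N=2$ SUSY PVA structure given in Propositions \ref{prop: pva, N=1 pva relation} and \ref{prop: pva, N=1, N=2 pva relation}. Concretely, I would first use Proposition-Definition \ref{def: induced bracket} at the level of Lie conformal algebras to view an $N=1$ (resp.\ $N=2$) SUSY Lie conformal algebra $R$ as an ordinary Lie conformal algebra by setting $a_{(n)} := a_{(n|1)}$ (resp.\ $a_{(n|11)}$), and then apply Theorem \ref{thm: S(R) pva} to get a PVA structure on $S(R)$.

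Next I would extend the odd derivation $D$ (and, in the $N=2$ case, $\tilde{D}$) from $R$ to $S(R)$ by the Leibniz rule. Since $D^2 = \partial$ holds on $R$ and both $D^2$ and $\partial$ are derivations of $S(R)$ that agree on the generators, they agree on all of $S(R)$; the same argument gives $\tilde{D}^2 = \partial$ and $[D,\tilde{D}]=0$ in the $N=2$ case. The heart of the argument is then to verify the Leibniz-type supersymmetry compatibility
\begin{equation*}
D\{a_{\lambda} b\} = \{Da_{\lambda} b\} + (-1)^{p(a)}\{a_{\lambda} Db\}
\end{equation*}
on all of $S(R)$ (and analogously for $\tilde{D}$). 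On the generators $a,b\in R$ this follows by unpacking the sesquilinearity axiom of Definition \ref{def: SUSY LCA} for the $\Lambda$-bracket of $R$ and extracting the coefficient of $\chi$ (cf.\ equation \eqref{eq: [D, a_(n|1)]} in the proof of Proposition \ref{prop: pva, N=1 pva relation}). From there, a standard induction on monomial length using the Leibniz rule for $D$ and the Leibniz rule for the $\lambda$-bracket propagates the identity to arbitrary products.

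Once these compatibilities are established on $S(R)$, Proposition \ref{prop: pva, N=1 pva relation} immediately promotes the PVA structure on $S(R)$ to an $N=1$ SUSY PVA structure, whose induced $\Lambda$-bracket restricted to $R\otimes R$ recovers the original SUSY bracket via \eqref{eq: N=1 induced from nonsusy}; by the Leibniz rule for the $\Lambda$-bracket this matches the $\Lambda$-bracket induced from $R$ on all of $S(R)$. In the $N=2$ case one similarly invokes Proposition \ref{prop: pva, N=1, N=2 pva relation} using both $D$ and $\tilde{D}$, and the identifications \eqref{eq: N=2 induced from nonsusy} show that the resulting $\bm{\Lambda}$-bracket coincides with the one induced from $R$ via Leibniz rule.

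The main technical obstacle will be the inductive verification of the derivation compatibility of $D$ (and $\tilde{D}$) with the $\lambda$-bracket on products, together with checking that the two a priori different descriptions of the $\Lambda$-bracket on $S(R)$ — namely the one obtained by extending the bracket of $R$ by Leibniz, versus the one reconstructed from the ordinary $\lambda$-bracket via \eqref{eq: N=1 induced from nonsusy} — actually agree. Both are essentially bookkeeping arguments using sesquilinearity and the Leibniz rule, but they must be carried out carefully, since the parity signs in the $N=2$ case are sensitive to the order in which one applies $D$ and $\tilde{D}$.
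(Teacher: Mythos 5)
Your proposal is correct and follows essentially the same route as the paper: pass to the induced ordinary $\lambda$-bracket via Proposition-Definition \ref{def: induced bracket}, apply the non-SUSY Reconstruction Theorem \ref{thm: S(R) pva} to get a PVA on $S(R)$, and then upgrade to a SUSY PVA via Proposition \ref{prop: pva, N=1 pva relation} (resp.\ Proposition \ref{prop: pva, N=1, N=2 pva relation}) by checking that the Leibniz-extended odd derivation(s) satisfy $[D,a_{(n)}]=(Da)_{(n)}$. The paper states this last compatibility without elaboration, so your inductive verification on monomial length and your check that the two descriptions of the $\Lambda$-bracket agree simply fill in details the paper leaves implicit.
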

\begin{proof}
  For $N=1$ case, we use Proposition \ref{prop: pva, N=1 pva relation} and Theorem \ref{thm: S(R) pva}. Since $R$ is a Lie conformal algebra via the $\lambda$-bracket \eqref{eq: induced lambda bracket from SUSY}, the supersymmetric algebra $S(R)$ is a PVA by Theorem \ref{thm: S(R) pva}. Since the odd derivation $D$ on $S(R)$ 
  satisfies $[D, a_{(n)}]=(Da)_{(n)}$,  Proposition \ref{prop: pva, N=1 pva relation} tells that $S(R)$ is a $N=1$ SUSY PVA. For $N=2$ case, we use Proposition \ref{prop: pva, N=1, N=2 pva relation} and Theorem \ref{thm: S(R) pva}.
\end{proof}
Furthermore, the following proposition saves us some effort to check axioms of SUSY Lie conformal algebra.
\begin{proposition}[\cites{BDK09,CS,HK07}]
  Let $V$ be a vector superspace and $R= \CC[\partial]\otimes V$. 
  Suppose $V$ is 
  endowed with a linear map $\{\cdot {}_{\lambda}\cdot \}: V\otimes V \rightarrow \CC[\lambda]\otimes R$ of even parity. 
  \begin{enumerate}
  \item If $\{\cdot {}_{\lambda} \cdot\}$ on $V \otimes V$ satisfies skewsymmetry and Jacobi identity, then the bracket $\{\cdot {}_{\lambda}\cdot\}: R\otimes R \rightarrow \CC[\lambda]\otimes R$ extended from the bracket of $V$ by sesquilinearity gives $R$ a Lie conformal algebra structure. 
  \item Analogous statements are also true when we replace $(\lambda, \partial)$ by $(\Lambda, \nabla)$ or $(\bm{\Lambda}, \bm{\nabla})$, even parity map $\{\cdot {}_{\lambda} \cdot\}$ by an odd parity map $\{\cdot {}_{\Lambda} \cdot\}$ or an even parity map $\{\cdot {}_{\bm{\Lambda}} \cdot\}$, and Lie conformal algebra by a $N=1$ or $N=2$ SUSY Lie conformal algebra.
  \end{enumerate}
\end{proposition}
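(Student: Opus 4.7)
The plan is to extend the bracket from $V\otimes V$ to $R\otimes R$ by forcing sesquilinearity, then show that the three defining axioms of a Lie conformal algebra (sesquilinearity, skewsymmetry, Jacobi) all descend to being checkable on generators in $V$.

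\textbf{Extension and well-definedness.} For $P(\partial),Q(\partial)\in\CC[\partial]$ and $a,b\in V$, I would define
\begin{equation*}
\{P(\partial)a{}_\lambda Q(\partial)b\} \;:=\; P(-\lambda)\,Q(\partial+\lambda)\,\{a{}_\lambda b\},
\end{equation*}
which is the unique $\CC$-bilinear extension forced by the sesquilinearity rules $\{\partial x{}_\lambda y\}=-\lambda\{x{}_\lambda y\}$ and $\{x{}_\lambda \partial y\}=(\partial+\lambda)\{x{}_\lambda y\}$. Since $R=\CC[\partial]\otimes V$ is free as a $\CC[\partial]$-module on $V$, this formula is unambiguous, and sesquilinearity on $R\otimes R$ holds on the nose by construction.

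\textbf{Skewsymmetry.} I would verify that if the skewsymmetry $\{b{}_\lambda a\}=-(-1)^{p(a)p(b)}\{a{}_{-\lambda-\partial}b\}$ holds for $a,b\in V$, then it holds for $P(\partial)a$ and $Q(\partial)b$. This is a direct calculation: on the left side, sesquilinearity produces the factor $Q(-\lambda)P(\partial+\lambda)\{a{}_\lambda b\}$, while on the right side, applying the extended bracket formula with $-\lambda-\partial$ in place of $\lambda$, followed by the skewsymmetry on $V$, yields the same polynomial in $\partial$ and $\lambda$ acting on $\{a{}_\lambda b\}$. The arrow $-\lambda-\partial$ in the extended skewsymmetry axiom means applying $\partial$ in $\{a{}_\lambda b\}$ to its target before evaluating, which matches the Leibniz-type reorganization correctly.

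\textbf{Jacobi identity.} This is the main obstacle, though it is essentially a bookkeeping exercise. I would show that if Jacobi holds for all triples $(a,b,c)\in V^{\otimes 3}$, then it holds for triples with polynomial coefficients. The argument proceeds by induction on the total $\partial$-degree: using sesquilinearity one reduces $\{P(\partial)a{}_\lambda\{Q(\partial)b{}_\gamma R(\partial)c\}\}$ to an expression in which $P,Q,R$ act as polynomials in $-\lambda$, $\partial+\lambda+\gamma$, and appropriate combinations on $\{a{}_\lambda\{b{}_\gamma c\}\}$, and similarly for the other two terms of the Jacobi identity. The crucial point is that the polynomial in $\partial,\lambda,\gamma$ decorating each of the three terms is the same (once one remembers that in the middle term $\{a{}_\lambda b\}_{\lambda+\gamma}c$ the variable of the inner bracket is shifted to $\lambda+\gamma$), so the $V$-level Jacobi identity implies the $R$-level one. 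I expect the sign tracking in the graded/super setting to be the only subtle point, but all signs follow routinely from $p(\partial)=0$.

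\textbf{The SUSY cases.} For part (2), I would repeat the same three-step argument with $\nabla=(\partial,D_1,\dots,D_N)$ in place of $\partial$ and $\Lambda=(\lambda,\chi_1,\dots,\chi_N)$ in place of $\lambda$. The extension is forced by the sesquilinearity axioms of Definition \ref{def: SUSY LCA}; well-definedness holds because $R=\CC[\nabla]\otimes V$ is free as a $\CC[\nabla]$-module on $V$ (here one uses the PBW-type freeness implied by $D_iD_j+D_jD_i=2\delta_{ij}\partial$). Skewsymmetry and Jacobi then propagate from $V$ to $R$ by the same polynomial argument, with the extra sign factors $(-1)^N$ and the anticommutation $[D_i,\chi_j]_+=2\delta_{ij}\lambda$ handled uniformly by the $\CC[\nabla]$-module structure on $\CC[\Lambda]\otimes R$ recalled in \eqref{eq: D chi relation}. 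The $N=2$ case is identical with $\bm\nabla,\bm\Lambda$ in place of $\nabla,\Lambda$, and no new ideas are needed.
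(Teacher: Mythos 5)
The paper does not prove this proposition itself; it is quoted from \cite{BDK09}, \cite{CS} and \cite{HK07}, so there is no in-paper argument to compare against. Your proof is the standard extension-by-sesquilinearity argument from those references and is essentially correct: the formula $\{P(\partial)a{}_\lambda Q(\partial)b\}=P(-\lambda)Q(\partial+\lambda)\{a{}_\lambda b\}$ is the unique well-defined extension by freeness of $R$ over $\CC[\partial]$, and skewsymmetry and Jacobi propagate because the polynomial decorations produced on each side (respectively on each of the three Jacobi terms) agree after the substitutions $\lambda\mapsto-\lambda-\partial$ and $\lambda\mapsto\lambda+\gamma$. The only place deserving slightly more care than you give it is the SUSY extension formula, where the $D_i$ do not commute with the $\chi_j$ (cf.\ \eqref{eq: D chi relation}) and the parity of the bracket contributes the extra signs $(-1)^N$ and $(-1)^{p(a)+N}$ in the sesquilinearity axioms; you acknowledge this and it is indeed absorbed by the $\CC[\nabla]$-module structure on $\CC[\Lambda]\otimes R$, so no genuine gap remains.
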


Now we end this section by introducing affine PVAs and $N=1$ SUSY affine PVAs and their (super)conformal vectors, which are the main ingredients of classical $W$-algebras and SUSY classical $W$-algebras.

\begin{example} [Affine PVA] \label{ex:affine}
  Let $\g$ be a finite simple Lie superalgebra endowed with an even supersymmetric invariant  bilinear form $(\cdot \,|\,\cdot)$.
 Suppose  $\CC[\partial]$-module $R(\g)=\CC[\partial]\otimes \g \oplus \CC K$ is a current Lie conformal algebra, i.e.  the $\lambda$-bracket $\{\cdot {}_{\lambda}\cdot \}: R(\g)\otimes R(\g) \rightarrow \CC[\lambda]\otimes R(\g)$ is defined by  
    \begin{equation*}
      \{a{}_{\lambda}b\}=[a,b]+(a|b)\lambda K, \quad \{K{}_\lambda a\}=\{K{}_\lambda a\}=0
    \end{equation*}
    for any $a,b\in \g$. The {\it affine Poisson vertex algebra of level $k\in \CC$ associated with $\g$} is a PVA defined by
    \begin{equation}
      V^k(\g)=S(R(\g))/\langle K-k \rangle,
    \end{equation}
    where $\langle K-k \rangle$ denotes the differential algebra ideal of $S(R)$ generated by $K-k$. Assuming $k\neq 0$, the affine PVA of level $k$ has a conformal vector
    \begin{equation} \label{eq:sugawara}
      L=\frac{1}{k}\sum_{i\in {\cI}}v^i v_i,
    \end{equation}
    where $\{v^i\}_{i\in {\cI}}$ and $\{v_i\}_{i\in {\cI}}$ are dual bases of $\g$ with respect to $(\cdot|\cdot)$ given by $(v^i|v_j)=\delta_{ij}$.
    The half of \eqref{eq:sugawara} is known as a \textit{Sugawara construction} (see Remark \ref{rem:conformal}).
    With respect to the conformal vector $L$, every element $a\in \g$ is primary of conformal weight $1$.
\end{example}
\begin{example}[$N=1$ SUSY Affine PVA] \label{ex: susy affine pva}
 Let $\g$ be the Lie superalgebra in Example \ref{ex:affine} and let $R(\bar{\g})=\CC[\nabla]\otimes \bar{\g} \oplus \CC K$ be the SUSY current Lie conformal algebra, where $\bar{\g}$ denotes $\g$ with reversed parity; that is the $\Lambda$-bracket on $\bar{\g}$ is  defined by 
    \begin{equation*}
      \{\bar{a}{}_{\Lambda}\bar{b}\}=(-1)^{p(a)}\left(\overline{[a,b]}+(a|b)\chi K\right)
    \end{equation*}
    for $a,b\in \g$ and $K$ is central. The $N=1$ affine PVA of level $k\in \CC$ associated with $\g$ is
    \begin{equation}
      V^k(\bar{\g}):=S(R(\bar{\g}))/\langle K-k \rangle,
    \end{equation} 
    where $\langle K-k \rangle$ is the differential algebra ideal of $S(R(\bar{\g}))$. If $k\neq 0$, it also has a $N=1$ superconformal vector
    \begin{equation} \label{eq:Kac-Todorov}
      \tau=\frac{1}{k}\left(\sum_{i\in {\cI}} (-1)^{i}D(\overline{u}^i)\overline{u}_i+\sum_{i,j,t\in {\cI}}\frac{1}{3k}(-1)^{p(i)p(t)}(u^i|[u^j,u^t])\overline{u}_i\overline{u}_j\overline{u}_t\right),
    \end{equation} 
    where $\{u^i\}_{i\in {\cI}}$ and $\{u_i\}_{i\in {\cI}}$ are dual bases of $\g$ with respect to $(\cdot|\cdot)$ given by $(u^i|u_j)=\delta_{ij}$ and $p(i):= p(u^i)$ for $i\in {\cI}$. This is called a \textit{Kac-Todorov construction}. In terms of $\tau$, every element $\bar{a}\in \bar{\g}$ is primary of conformal weight $\frac{1}{2}$.
\end{example}

\section{Poisson structures on (SUSY) \texorpdfstring{$W$}-algebras\label{sec:W-algebra}}

Let $\g$ be a simple finite dimensional Lie superalgebra with a nondegenerate even supersymmetric invariant bilinear form $( \cdot | \cdot )$. Suppose $\g$ has a subalgebra $\mathfrak{s}$ which is isomorphic to $\osp(1|2)$. We fix a $\CC$-basis $\{E,F,\fh ,e,f\}$ of $\mathfrak{s}$ satisfying the properties (osp-1)--(osp-4) in section \ref{sec: osp(1|2)-repn}, and assume that the bilinear form is normalized as $(E|F)=1$. Consider the $\ad \fh $-grading on $\g$:
\begin{equation} \label{grading}
\g= \bigoplus_{i\in \frac{1}{2}\ZZ} \g(i) 
\end{equation}
and the subspaces
\begin{equation} \label{parabol}
 \n = \bigoplus_{i>0}\g(i), \quad \pp = \bigoplus_{i\leq 0}\g(i), \quad \mathfrak{q}=\bigoplus_{i\leq \frac{1}{2}} \g(i).\end{equation}
Note that whenever we take an odd nilpotent element $f$ in a subalgebra isomorphic to $\osp(1|2)$,
normalized as in (osp-1)--(osp-4), it completely determines the grading \eqref{grading}.

\subsection{$N=1$ SUSY $W$-algebra $\WW^k(\bar{\g},f)$} \label{subsec:SUSY W-generator}

Recall $\Lambda$ and $\nabla$ in \eqref{eq: nabla lambda fixed notation}, and the $N=1$ SUSY affine PVA $V^k(\bar{\g})$ in Example \ref{ex: susy affine pva}.
Define the differential algebra homomorphism 
\begin{equation} \label{barho}
 \bar{\rho}: \CC[\Lambda]\otimes V^k(\bar{\g}) \to \CC[\Lambda]\otimes  S(\CC[\nabla]\otimes \bar{\pp})
\end{equation}
such that $\bar{\rho}(\lambda)=\lambda$, $\bar{\rho}(\chi)=\chi$, $\bar{\rho}(\bar{n})=(f|\bar n)$ and $\bar{\rho}(\bar{p})=\bar{p}$ for $\bar n\in \bar\n$ and $\bar{p}\in \bar{\pp}$. 

The $N=1$ SUSY classical $W$-algebra associated with $\g$ and $f$ is 
\begin{equation} \label{def:SUSY W}
 \mathcal{W}^k(\bar{\g},f)= \{ w\in S(\CC[\nabla]\otimes \bar{\pp}) \, | \, \bar{\rho}(\ad \bar{n} (w))=0 \text{ for } \bar n\in \bar\n\},
\end{equation}
where $\ad \bar{n} (w) = \{\bar{n}{}_\Lambda w\}$ for the $\Lambda$-bracket of $V^k(\bar{\g})$. 
One can check that $\mathcal{W}^k(\bar{\g},f)$ has a $N=1$ SUSY PVA structure with respect to the bracket induced from that of $V^k(\bar{\g})$. We will loosely write,
\[ \{w_1 \, {}_\Lambda \, w_2\} := \bar{\rho}(\{w_1 {}_\Lambda \, w_2\})\]
for $w_1, w_2 \in \mathcal{W}^k(\bar{\g},f)$, where the $\Lambda$-bracket on the RHS is the bracket of $V^k(\bar{\g})$.

Now, we describe $\frac{1}{2}\ZZ_+$-grading $\Delta$ on $S(\CC[\nabla]\otimes \bar{\pp})$ which also induces a grading on $\WW^k(\bar{\g},f)$, called a {\it conformal weight} decomposition. The $\Delta$-grading on $S(\CC[\nabla]\otimes \bar{\pp})$ is  defined by 
\begin{equation} \label{eq:conformal weight-SUSY W}
  \Delta(\bar{a})=\frac{1}{2}-j_a, \quad \Delta(AB)= \Delta(A)+ \Delta(B), \quad \Delta(D(A))= \frac{1}{2}+\Delta(A)
\end{equation}
for $a\in \g(j_a)$ and $A,B \in S(\CC[\nabla]\otimes \bar{\pp})$.
Note that one can consider the $\Delta$-grading of $\bar{a}$ for any $a\in\g^f$, since $\g^f$ is a subspace of $\pp$. This follows from the fact that $\g^f\subset \g^F$, and that \eqref{grading} is a good grading, which implies that $\g^F:= \ker (\ad F) \subset \pp$. For the defintion and the classification of good grading on basic Lie superalgebras, we refer to \cite{Hoyt12}.

\begin{remark}  
  For $k\neq 0$, the conformal weight decomposition of $\WW^k(\bar{\g},f)$ is indeed the decomposition obtained as in (2) of Definition \ref{def: superconformal vector}. Recall the superconformal vector $\tau$ of $V^k(\bar{\g})$ in \eqref{eq:Kac-Todorov}. The element $G_{\g,f}= \bar{\rho}(\tau)-2\partial \fh $ in $\WW^k(\bar{\g},f)$ is then a $N=1$ superconformal vector, whose conformal weight decomposition coincides with \eqref{eq:conformal weight-SUSY W}. However, at that point, it is still unknown whether the $W$-algebra is generated by $G_{\g, f}$ and $G_{\g,f}$-primary elements.
\end{remark}

\begin{proposition}  \label{prop:W-generator} \cite{Suh20}
Fix a basis $\{\tilde{v}^i_{(0)}|i\in {\cI}\}$ of $\g^f$ whose elements are homogeneous with respect to the grading \eqref{grading} and the parity. 
Then for each $i\in {\cI}$, there is an element $\w_i\in \WW^k(\bar{\g},f)$ whose
 linear part without total derivatives is $\bar{\tilde{v}}^i_{(0)}$ and the conformal weight is $\Delta(\w_i)= \Delta(\bar{\tilde{v}}^i_{(0)})$.
Moreover, such $\w_i$'s for $i\in {\cI}$ generate $\WW^k(\bar{\g},f)$ as a differential algebra.
\end{proposition}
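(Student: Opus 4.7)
The approach I would take is an inductive construction analogous to the non-SUSY Kac--Roan--Wakimoto and De Sole--Kac--Valeri treatments of classical $W$-algebra generators, adapted to the $N{=}1$ setting. The starting input is the $\osp(1|2)$-module decomposition $\g = \g^f \oplus [e,\g]$, which restricts to $\pp = \g^f \oplus V^e$ with $V^e := [e,\g]\cap\pp$, and hence furnishes the splitting
\[
\bar{\pp} \;=\; \bar{\g^f} \;\oplus\; \bar{V^e}.
\]
Each conformal weight component of $S(\CC[\nabla]\otimes\bar{\pp})$ is finite-dimensional, so I can order its monomial basis lexicographically by total degree (number of factors from $\bar{\pp}$) and, within that, by $\ad\fh$-depth, so that the induction has something concrete to decrease.

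To construct $\w_i$, I would set $\w_i^{[0]} := \bar{\tilde v}^i_{(0)}$ and build $\w_i = \w_i^{[0]} + \w_i^{[1]} + \w_i^{[2]} + \cdots$, where $\w_i^{[r]}$ has monomial degree $\geq r+1$ and conformal weight $\Delta(\bar{\tilde v}^i_{(0)})$. At each step I would compute the defect $\bar\rho(\{\bar n\,{}_\Lambda \sum_{s\leq r}\w_i^{[s]}\})$ for $\bar n$ running through a homogeneous basis of $\bar{\n}$, and isolate its lowest-degree contribution. The key point is that this leading contribution is the image of some finite-dimensional linear map built out of $\ad\bar e$; the $\g^f$-part of the defect is forced to vanish because $\tilde v^i_{(0)}\in \g^f$ and $\bar\rho$ specializes $\bar n$ to $(f|n)$, while the $V^e$-part sits in the image of $\ad \bar e$ by the decomposition above. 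This produces a counterterm $\w_i^{[r+1]}$ that cancels the defect at that degree. Finite-dimensionality of each conformal weight component forces termination.

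For the generating statement, I would argue by induction on the conformal weight of $w\in \WW^k(\bar\g,f)$ and, within fixed weight, on the monomial/lexicographic ordering above. Given $w$, extract its linear part without total derivatives $\ell(w)\in\bar{\pp}$ and decompose $\ell(w)=\ell_f+\ell_e$ along $\bar{\g^f}\oplus \bar{V^e}$. The degree-one part of the vanishing condition $\bar\rho(\{\bar n\,{}_\Lambda w\})=0$ amounts, after chasing through the SUSY affine bracket and the specialization $\bar\rho$, to $\ad\bar e(\bar{\ell_e})=0$, and the injectivity of $\ad \bar e$ on $\bar{V^e}$ (again from the $\osp(1|2)$-decomposition) gives $\ell_e=0$. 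Writing $\ell_f = \sum c_i \bar{\tilde v}^i_{(0)}$, the element $w - \sum c_i \w_i$ has a strictly smaller leading term in the chosen ordering, so the induction concludes.

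The main obstacle will be carrying out the inductive cancellation in the construction of $\w_i$ cleanly in the SUSY setting: one has to track the contributions of the central term $\chi k$, the action of $D$ on lower-weight generators (which shifts $\Delta$ by $\tfrac{1}{2}$ and so enlarges the pool of admissible counterterms beyond the purely bosonic $\partial$-derivatives), and the fact that counterterms added at one stage feed back into defects at the next stage. The structural input, namely surjectivity of $\ad\bar e$ onto $\bar{[e,\g]}$ and the consequent complementarity of $\bar{\g^f}$ inside $\bar{\pp}$, is unchanged from the bosonic picture; the extra work is the additional bookkeeping with $D$-derivatives alongside $\partial$-derivatives.
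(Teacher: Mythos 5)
First, a caveat on the comparison: the paper does not actually prove this proposition --- it is quoted verbatim from \cite{Suh20} --- so the relevant benchmark is the argument in that reference, which is indeed the filtration-by-conformal-weight and polynomial-degree induction you outline. Your structural inputs are the right ones: the decomposition $\g=\g^f\oplus[e,\g]$ restricted to $\pp$, the fact that every element of $\bar{\pp}$ has conformal weight at least $\tfrac12$ (so that each weight component is finite-dimensional and the recursion terminates), and the leading-term induction for the generation statement.

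There is, however, one genuine gap, and it sits at the heart of the construction. You assert that at each stage the defect $\bar{\rho}(\{\bar n{}_{\Lambda}\sum_{s\le r}\w_i^{[s]}\})$ ``is the image of some finite-dimensional linear map built out of $\ad\bar e$'' and hence can be cancelled, but you give no reason why the defect lies in the image of that map. The leading symbol of $\bar{\rho}\circ\ad\bar n$ on a monomial is contraction against $[f,n]$ (not an $\ad\bar e$ operation), so what is actually needed is (i) nondegeneracy of the pairing $\n\times([e,\g]\cap\pp)\to\CC$, $(n,v)\mapsto(f|[n,v])$, which follows from $([f,\g])^{\perp}=\g^f$ together with the goodness of the grading, and (ii) a compatibility (cocycle) condition on the defect, inherited from the Jacobi identity of the $\Lambda$-bracket of $V^k(\bar{\g})$, which guarantees --- via the acyclicity in positive degree of the associated Koszul-type contraction complex --- that the defect is exact and a counterterm $\w_i^{[r+1]}$ exists. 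Without (ii) the induction could in principle get stuck. The same imprecision appears in the generation step: the degree-one constraint says $(f|[n,\ell(w)])=0$ for all $n\in\n$, i.e. $[f,\ell(w)]=0$ and $\ell(w)\in\g^f$; the conclusion $\ell_e=0$ then follows from the injectivity of $\ad f$ on $[e,\g]$, not of $\ad e$ on $[e,\g]\cap\pp$. These points are fixable and are exactly what is carried out in \cite{Suh20}, but as written your proposal does not establish them.
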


The generators $\w_i$'s in Proposition \ref{prop:W-generator} can be determined using the method of Lax operator \cite{CS} or row determinant formula \cite{MRS21,RSS23} when $\g$ is of basic type and $f$ is principal. Note that the generators of the $W$-algebra satisfying Proposition \ref{prop:W-generator} may not be unique. For example, if $\Delta(\w_i)=\Delta(\w_j)-\frac{1}{2}$ for some $i,j\in {\cI}$, then 
$\widetilde{\w}_j= \w_j  + D(\w_i) \in \WW^k(\bar{\g},f)$ can replace $\w_j$.  

However, we can fix uniquely the elements $\w_i$ by imposing the following property. By the $\osp(1|2)$-representation theory, we have 
\begin{equation} \label{eq: decomp g^f}
  \g^f \oplus[e,\g]= \g.
\end{equation}
Hence any element in $S(\CC[\nabla]\otimes \bar{\g})$ can be uniquely decomposed in the following way:
\begin{equation}
  A= \sum_{m\in \ZZ_+}\gamma_m (A),
\end{equation}
where $\gamma_m(A)\in S(\CC[\nabla]\otimes \bar{\g}^f)\otimes (\CC[\nabla]\otimes \overline{[e,\g]})^{\otimes m}$. If we require that each $\w_i \in \WW^k(\g,f)$ satisfies the additional relation $\gamma_0(w_i)=\bar{\tilde{v}}^i_{(0)}$, then such $w_i$ is unique. This uniqueness of generators allows us to consider the differential superalgebra isomorphism
\begin{equation} \label{eq:omega}
  \w : S(\CC[\nabla]\otimes \bar{\g}^f) \to \WW^k(\bar{\g}, f), \quad \bar{\tilde{v}}^i_{(0)} \mapsto \w(\bar{\tilde{v}}^i_{(0)})= \w_i,
\end{equation}
where $\w_i$'s are the unique generators of the $W$-algebra, determined by Proposition \ref{prop:W-generator} and the extra property $\gamma_0(w_i)=\bar{\tilde{v}}^i_{(0)}$.

\subsection{SUSY Poisson structure of $\WW^k(\bar{\g},f)$} \label{subsec:W-bracket}

Consider $\g$ as an $\osp(1|2)$-module via the adjoint action of $\mathfrak{s}$. Then $\g$ can be decomposed into a sum of irreducible components $\g=\bigoplus_{i\in {\cI} }R_{\epsilon_i}$, where $R_{\epsilon_i}$ is the irreducible $\osp(1|2)$-module of dimension $2\epsilon_i+1$.
Denote $d_i:= 2 \epsilon_i$ and take a basis $ \{v_i^{(m)} | m=0,1, \cdots, d_i\}$  of $R_{\epsilon_i}$ which is homogeneous with respect to both the grading \eqref{grading} and the parity,
and satisfies 
\begin{equation} \label{eq:properties of basis}
  [f, v_i^{(m)}]=v_i^{(m+1)}
\end{equation}
for $m=0,1,2,\cdots, d_i-1$. 
Then 
\begin{enumerate}[(i)]
\item $\{v_i^{(d_i)}\}_{i\in {\cI}}$ and $\{v_i^{(0)}\}_{i\in {\cI}}$ are homogeneous bases of $\g^f$ and $\g^e:= \ker (\ad e)$, respectively,
\item $\mathcal{B}=\{ v_i^{(m)}| (i,m)\in {\cJ}\}$ is a basis of $\g$, where  
 $$  {\cJ}= \{ (i,m)| i\in {\cI}, \ m=0,1,\cdots, d_i\}.$$
\end{enumerate}

\noindent 
Now, we consider two other  bases $\{\tilde{v}^i_{(d_i)}|i\in {\cI}\}$ and $\mathcal{B}^\perp=\{ \tilde{v}^i_{(m)}| (i,m)\in {\cJ}\}$ of $\g^e$ and $\g$,
given by  
\begin{equation}\label{eq:dual bases}
(v_i^{(m)}|\tilde{v}^j_{(n)})=\delta_{i,j}\delta_{m,n}.
\end{equation}

\begin{remark}
  One can find such bases $\mathcal{B}$ and $\mathcal{B}^\perp$ of  $\g$ with all the desired properties by following section VI in \cite{Suh20}. More precisely, our $v_i^{(0)}$ and $\tilde{v}^i_{(0)}$ can be identified with $r^i$ and $r_i$ in \cite{Suh20}.
\end{remark}

Recall the map $\w$ in \eqref{eq:omega} and consider the projection map on lowest weight vectors:
\begin{equation}
\sharp: \g \to \g^f\,,\quad v_i^{(m)}\mapsto \delta_{m,d_i}\,v_i^{(d_i)},\text{ for }(i,m)\in {\cJ}. 
\end{equation}
We simply denote $g^\sharp:= \sharp(g)$ for $g\in \g$.

\begin{theorem} \cite{Suh20}\label{thm:w bracket}
Let $a\in \g^f_{-t_1}$ and $b\in \g^f_{-t_2}$, where $\g^f_N:= \g^f \cap \g(N)$ for $N\in \frac{1}{2}\ZZ$. Then 
\begin{equation} \label{eq:theorem, SUSY bracket}
  \begin{aligned}
     \{ \w(\bar{a}){}_\Lambda \w(\bar{b})\}& =(-1)^{p(a)}   (\w(\overline{[a,b]})+k(a|b)\chi) \\
     -(-1)^{p(a)p(b)+p(a)}& \sum_{p\in \ZZ_+} \sum_{\substack{-t_2-\frac{1}{2}<(i_0,m_0)< \cdots \\ \cdots <(i_p, m_p)< t_1}}
     \Big( \w(\overline{[b,v_{i_0}^{(m_0)}]}^\sharp)-k(b|v_{i_0}^{(m_0)})(D+\chi) \Big)\\
    & \times\bigg[ \prod_{t=1}^{\substack{\longrightarrow \\ p}} \Big( \w(\overline{[\tilde{v}^{i_{t-1}}_{(m_{t-1}+1)},v_{i_t}^{(m_t)}]}^\sharp) -k(\tilde{v}^{i_{t-1}}_{(m_{t-1}+1)}|v_{i_t}^{(m_t)})(D+\chi)\Big) \bigg]\\
    & \times \Big( \w(\overline{[\tilde{v}^{i_{p}}_{(m_{p}+1)},a]}^\sharp) -k(\tilde{v}^{i_{p}}_{(m_{p}+1)}|a)\chi\Big),
  \end{aligned}
\end{equation}
where $<$ is defined as follows: 
\begin{equation}
\begin{aligned}
& {\rm(i)}\ (i_{t-1}, m_{t-1})<(i_t, m_t) \text{ if and only if } 
\left\{
\begin{array}{l}
  \tilde{v}_{(m_{t-1})}^{i_{t-1}}\in \g(j_{t-1}),  \ \tilde{v}_{(m_{t})}^{i_{t}}\in \g(j_{t}),\\
  j_{t-1}<j_t,
\end{array}
\right.\\
& {\rm(ii)}\ -t_2-\frac{1}{2}<(i_0,m_0)
\text{ if and only if } 
  \tilde{v}_{(m_{0})}^{i_{0}}\in \g(j_{0}) \text{ and }
  -t_2\leq j_0,\\
& {\rm(iii)}\ (i_p,m_p)< t_1
\text{ if and only if } 
  \tilde{v}_{(m_{p})}^{i_{p}}\in \g(j_{p}) \text{ and }
  j_p<t_1,
\end{aligned} 
\end{equation}
and the ordered product $\displaystyle \prod_{t=1}^{\atopn{\longrightarrow}{p}} $  is defined by 
$\displaystyle \prod_{t=1}^{\atopn{\longrightarrow}{p}} \Omega_t:= \Omega_1 \Omega_2 \cdots \Omega_p$.
\end{theorem}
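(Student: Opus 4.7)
The plan is to compute $\{\w(\bar{a}){}_\Lambda \w(\bar{b})\}$ inside the ambient $N=1$ SUSY PVA $S(\CC[\nabla]\otimes \bar{\g})$ and then push it forward via $\bar{\rho}$. The inputs are the affine $\Lambda$-bracket $\{\bar{a}{}_\Lambda \bar{b}\}=(-1)^{p(a)}(\overline{[a,b]}+k(a|b)\chi)$, sesquilinearity \eqref{eq:sesqui-PVA}, and the Leibniz rule. The key preliminary step is to make the structure of the generators $\w(\bar{a})$ explicit, which I would do by controlling the $\gamma_m$-filtration introduced just before \eqref{eq:omega}.

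First, I would solve the defining constraints $\bar{\rho}(\ad \bar{n}(\w(\bar{a})))=0$ for every $\bar{n}\in\bar{\n}$, together with the gauge-fixing $\gamma_0(\w(\bar{a}))=\bar{a}$, by induction on the $\gamma_m$-filtration. Using the decomposition $\g=\g^f\oplus[e,\g]$, the dual pair $(\mathcal{B},\mathcal{B}^\perp)$, and the grading \eqref{grading}, the solution appears as a signed sum over chains $(i_0,m_0)<(i_1,m_1)<\cdots$ of factors of the form $\w(\overline{[\tilde{v}^{i_{t-1}}_{(m_{t-1}+1)},v^{(m_t)}_{i_t}]}^\sharp) - k(\cdot)(D+\chi)$ acting on the leading term; these are exactly the building blocks appearing in \eqref{eq:theorem, SUSY bracket}. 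The projection $\sharp$ records the part of each commutator that survives the gauge-fixing into $\g^f$, and the $D+\chi$ operator is the unique combination compatible with the sesquilinearity of the induced $\Lambda$-bracket.

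With this chain expansion of $\w(\bar{a})$ and $\w(\bar{b})$ in hand, the bracket is computed by repeated application of the Leibniz rule. The contribution pairing leading $\bar{a}$ with leading $\bar{b}$ gives the first line of \eqref{eq:theorem, SUSY bracket}, while pairings involving the correction factors telescope into the ordered products $\prod_{t=1}^p(\cdots)$: the prefactor $\w(\overline{[b,v^{(m_0)}_{i_0}]}^\sharp)-k(b|v^{(m_0)}_{i_0})(D+\chi)$ comes from pairing the correction layer adjacent to $\bar{b}$, the terminal factor $\w(\overline{[\tilde{v}^{i_p}_{(m_p+1)},a]}^\sharp)-k(\tilde{v}^{i_p}_{(m_p+1)}|a)\chi$ from pairing the correction layer adjacent to $\bar{a}$, and the interior factors from successive contractions between adjacent correction layers. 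The ordering conditions $(i_{t-1},m_{t-1})<(i_t,m_t)$ and the boundary inequalities $-t_2-\frac{1}{2}<(i_0,m_0)<t_1$ then simply record the gradings admissible at each contraction, given that $a\in\g^f_{-t_1}$ and $b\in\g^f_{-t_2}$.

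The main obstacle will be bookkeeping: precisely matching the combinatorics of ordered chains to the iterative gauge-fixing, tracking the super-signs carefully through Leibniz and sesquilinearity, and confirming that after applying $\bar{\rho}$ all contributions reassemble into the chain form of \eqref{eq:theorem, SUSY bracket}. A useful consistency check is that the right-hand side must automatically lie in $\mathcal{W}^k(\bar{\g},f)$, since the induced bracket closes on the $W$-algebra; this provides an invariant against which each chain contribution can be tested.
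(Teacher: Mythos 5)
The paper gives no proof of Theorem \ref{thm:w bracket}: it is imported verbatim from \cite{Suh20}, so there is no in-paper argument to compare yours against; the derivation you would need to reproduce lives in that reference.

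Judged on its own, your proposal has a genuine gap: it is an outline in which every step that actually makes the formula true is deferred to ``bookkeeping.'' The one concrete structural claim you do make --- that the generator $\w(\bar a)$ itself admits a chain expansion whose factors are $\w(\overline{[\tilde{v}^{i_{t-1}}_{(m_{t-1}+1)},v^{(m_t)}_{i_t}]}^\sharp)-k(\cdot)(D+\chi)$, ``exactly the building blocks appearing in \eqref{eq:theorem, SUSY bracket}'' --- cannot be right as stated. The element $\w(\bar a)$ lives in $S(\CC[\nabla]\otimes\bar{\pp})$ and is a differential polynomial in the $\bar p$'s, whereas the factors of \eqref{eq:theorem, SUSY bracket} are themselves $W$-algebra generators $\w(\overline{\,\cdot\,}^\sharp)$; the whole point of the theorem is that the bracket, once computed in the ambient algebra and pushed through $\bar\rho$, can be \emph{re-expressed} through the isomorphism $\w$ as a polynomial in those generators with this particular chain combinatorics. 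That re-expression --- identifying, order by order in the $\gamma_m$-filtration, which parts of the raw Leibniz expansion reassemble into $\w$ of something, where the projection $\sharp$ enters, why the interior operators are $D+\chi$ while the terminal one carries only $\chi$, and why the chains obey a strict inequality at the $a$-end but a non-strict one at the $b$-end --- is the entire content of the proof, and your proposal asserts rather than establishes it. The closing ``consistency check'' (that the right-hand side lies in $\WW^k(\bar\g,f)$) is not an independent test either, since each factor of the proposed answer is by construction already an element of the $W$-algebra.
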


\begin{example}
Let $\g= \sll(2|1)$ with the basis elements $E,\fh ,F,e,f, \tilde{e}, \tilde{f}, U$ of $\g$ as in section \ref{sec: sl(2|1) in detail}. Then 
$\g=R_1 \oplus R_2$, where $R_1=\text{Span}_{\CC}\{\tilde{e}, \tilde{f}, U\}$ and $R_2=\text{Span}_{\CC}\{E,\fh ,F,e,f\}$. Consider the bases $\{v_1^{(0)},v_1^{(1)},v_1^{(2)}\}$ and $\{v_{2}^{(0)},v_{2}^{(1)},v_{2}^{(2)},v_{2}^{(3)},v_{2}^{(4)}\}$ of $R_1$ and $R_2$:
\begin{equation} \label{eq: sl(2|1) B basis}
\begin{aligned}
& v_1^{(2)}= -\tilde{f}, \quad v_1^{(1)}= -U, \quad v_1^{(0)}=-\tilde{e},\\
& v_2^{(4)}=-2F, \quad v_2^{(3)}= f, \quad v_2^{(2)}=2\fh , \quad v_2^{(1)}=-e, \quad v_2^{(0)}=E.
\end{aligned}
\end{equation}
The dual bases of $R_1$ and $R_2$ are:
\begin{equation} \label{eq: sl(2|1) B perp}
  \begin{aligned}
  & \textstyle  \tilde{v}_{(2)}^1= -\frac{1}{2}\tilde{e}, \quad \tilde{v}_{(1)}^1= \frac{U}{2}, \quad \tilde{v}_{(0)}^1=\frac{1}{2}\tilde{f},\\
  &  \textstyle \tilde{v}_{(4)}^{2}=-\frac{1}{2}E, \quad \tilde{v}_{(3)}^{2}= \frac{1}{2}e, \quad \tilde{v}_{(2)}^{2}=\fh , \quad \tilde{v}_{(1)}^{2}=\frac{1}{2}f, \quad \tilde{v}_{(0)}^{2}=F.
  \end{aligned}
  \end{equation}
The $N=1$ SUSY $W$-algebra $\WW^k(\overline{\sll}(2|1),f)$ is generated by $\w(\bar{v}_1^{(2)})$ and $\w(\bar{v}_{2}^{(4)})$ and $\Lambda$-brackets between them are as follows:
\begin{equation} \label{eq:W(sl(2|1))bracket}
\begin{aligned}
& \{\w(\bar{v}_1^{(2)}){}_\Lambda w(\bar{v}_1^{(2)})\}= \w(\bar{v}_{2}^{(4)})+2k^3 \lambda \chi, \\
& \{\w(\bar{v}_{2}^{(4)}){}_\Lambda w(\bar{v}_1^{(2)})\}= k^2(2\partial+\chi D + 2\lambda)\w(\bar{v}_{1}^{(2)}), \\
& \{\w(\bar{v}_{2}^{(4)}){}_\Lambda w(\bar{v}_{2}^{(4)})\}=k^2 (2\partial+\chi D + 3\lambda)\w(\bar{v}_{2}^{(4)}) + 2k^5\lambda^2 \chi.
\end{aligned}
\end{equation}
For example, we list below the nontrivial terms in formula of Theorem \ref{thm:w bracket} when $a=v_{2}^{(4)}\in\g^f_{-1}$ and $b=v_1^{(2)}\in\g^f_{-\frac12}$ to show how to get the second equality in \eqref{eq:W(sl(2|1))bracket}:
\begin{equation} \label{eq: N=2 bracket of tilde{f} and F}
\begin{aligned}
  \{\w(\bar{v}_{2}^{(4)}){}_\Lambda w(\bar{v}_1^{(2)})\}
  & = \w(\overline{[v_1^{(2)}, v_{2}^{(2)}]})\big(-k(\chi+D)(\tilde{v}_{(3)}^{2}|v_{2}^{(3)})\big)(-k\chi(\tilde{v}_{(4)}^{2}|v_{2}^{(4)}))\\
  & +\big(- k(\chi+D)(v_1^{(2)}| v_1^{(0)})\big)(\w(\overline{[\tilde{v}_{(1)}^1, v_{2}^{(3)}]}))(-k\chi(\tilde{v}^{2}_{(4)}|v_{2}^{(4)})) \\
  &+\big(-k(\chi+D)(v_1^{(2)}|v_1^{(0)})\big)\big(-k(\chi+D)(\tilde{v}_{(1)}^1|v_1^{(1)})\big)(\w(\overline{[\tilde{v}_{(2)}^1,v_2^{(4)}]})).
\end{aligned}
\end{equation}
\end{example}

\subsection{Poisson structure on $\WW^k(\g, F)$}

In this section, we briefly review on a non-SUSY $W$-algebra and the corresponding non-SUSY version of Theorem \ref{thm:w bracket}. (\cite{Suh20})

Recall that $V^k(\g)$ is a classical affine PVA of level $k$ introduced in Example \ref{ex:affine}. Let 
\begin{equation} \label{rho}
 \rho: \CC[\lambda]\otimes V^k(\g) \to \CC[\lambda]\otimes S(\CC[\partial]\otimes \mathfrak{q})
\end{equation}
be a differential algebra homomorphism 
such that $\rho(\lambda)=\lambda$,  $\rho(n)=(F|n)$ and $\rho(q)=q$ for $n\in \n$ and $q\in \mathfrak{q}$, where $\mathfrak{n}$ and $\mathfrak{q}$ are defined in \eqref{parabol}.  
The classical $W$-algebra associated with $\g$ and $f$ of level $k$ is 
\begin{equation} \label{def:SUSY W2}
 \textstyle  \mathcal{W}^k(\g,F)= \{ w\in S(\CC[\partial]\otimes \mathfrak{q}) \, | \, \rho(\ad \, n (w))=0 \text{ for } n\in \n\}.
\end{equation}
Here $\ad\, n (w) = \{\, n{}_\lambda w\}$ for the $\lambda$-bracket on $V^k(\g)$
and $\WW^k(\g,F)$ is the PVA via the bracket inherited from $V^k(\g)$.

The {\it conformal weight} $\Delta$ on $\WW^k(\g,F)$  
is induced from the gradation on $S(\CC[\partial]\otimes\mathfrak{q})$ such that 
\begin{equation} \label{eq:conformal weight}
  \Delta(a)=1-j_a, \quad \Delta(AB)= \Delta(A)+ \Delta(B), \quad \Delta(\partial(A))= 1+\Delta(A)
\end{equation}
for $a\in \g(j_a)$ and $A,B \in S(\CC[\partial]\otimes \mathfrak{q})$.

\begin{proposition}  \label{prop:nonSUSY W-generator} 
  Fix a basis $\{q_i^{(r_i)}|i\in {\cI}\}\subset \g$ of $\g^F$ whose elements are homogeneous with respect to the grading \eqref{grading} and the parity. 
Then for each $i\in {\cI}$, there is an element $\nu_i\in \WW^k(\g,F)$ whose linear part without total derivatives is $q_i^{(r_i)}$ and  conformal weight is
$\Delta(\nu_i)= \Delta(q_i^{(r_i)})$.
Moreover, such $\nu_i$'s for $i\in {\cI}$ generate $\WW^k(\g,F)$ as a differential algebra.
\end{proposition}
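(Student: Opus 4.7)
The strategy parallels the proof of Proposition \ref{prop:W-generator}, with the $N=1$ SUSY framework replaced by the ordinary one; the key structural input is the $\osp(1|2)$-decomposition $\g = \g^F \oplus [E, \g]$ combined with the good-grading property $\g^F \subset \mathfrak{q}$.

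First, I would verify that the conformal weight $\Delta$ in \eqref{eq:conformal weight} makes $S(\CC[\partial]\otimes \mathfrak{q})$ into a graded differential algebra, and that the kernel characterization \eqref{def:SUSY W2} yields a graded subalgebra $\WW^k(\g,F) = \bigoplus_{\Delta \geq 0} \WW^k(\g,F)_\Delta$. Under this grading, the operator $\rho \circ \ad n$ for $n \in \g(s)$ with $s > 0$ lowers $\Delta$ by exactly $s$, so it suffices to seek $\nu_i$ inside a single homogeneous component of $\Delta = \Delta(q_i^{(r_i)}) = 1 + r_i$.

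Second, I would construct each $\nu_i$ by induction on the polynomial degree within that conformal-weight component. Starting from the linear term $\nu_i^{(1)} = q_i^{(r_i)}$, I would iteratively add higher-degree homogeneous corrections so that $\rho(\ad n(\nu_i))$ vanishes for all $n \in \n$. Given a partial candidate for which this vanishing has already been achieved for all $n \in \g(s')$ with $s' > s$, the remaining residue for $n \in \g(s)$ lives in a specific weight component of $S(\CC[\partial]\otimes \mathfrak{q})$ which, thanks to the decomposition $\g = \g^F \oplus [E, \g]$, lies in the image of a suitable $\ad F$-type map. This allows a correction that kills the current residue without reintroducing residues at higher $s'$. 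The procedure terminates because within a fixed conformal weight only finitely many polynomial degrees occur.

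Third, for the generation and free-generation claim, I would use an associated-graded argument with respect to the polynomial-degree filtration on $S(\CC[\partial]\otimes \mathfrak{q})$. Modulo higher-degree terms, $\nu_i \equiv q_i^{(r_i)}$, and the symbols $\{q_i^{(r_i)}\}_{i\in \cI}$ form a basis of $\g^F$. A Hilbert-series comparison, showing that the graded character of $\WW^k(\g, F)$ at each conformal weight equals that of $S(\CC[\partial]\otimes \g^F)$ (via the standard Poisson BRST Euler-characteristic computation in good grading), then implies that $\{\nu_i\}_{i\in \cI}$ generates $\WW^k(\g, F)$ freely as a differential algebra. The main obstacle is the inductive step of the second part: one must verify that the obstruction at each stage can indeed be killed by a correction of the correct conformal weight lying in $\mathfrak{q}$, and that this correction does not spoil the conditions already achieved for $n$'s of higher $\g$-grading. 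This is precisely where the interplay between the good grading, the $\osp(1|2)$-structure, and the $\Delta$-homogeneity of the $\lambda$-bracket is essential.
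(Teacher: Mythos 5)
First, a point of reference: the paper itself does not prove this proposition. Like Proposition \ref{prop:W-generator}, it is stated as part of a review of \cite{Suh20}, so there is no in-paper argument to compare against and your proposal has to stand on its own.

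As written, it has a genuine gap in the central inductive step. You propose to build $\nu_i$ by induction on polynomial degree, starting from the linear seed $q_i^{(r_i)}$ and ``iteratively adding higher-degree homogeneous corrections.'' That is not the right induction: within the fixed conformal weight $\Delta(q_i^{(r_i)})$ the required corrections include terms of the \emph{same} polynomial degree, namely linear terms $\partial^k a$ with $a\in\g(j)$ for $j$ strictly larger than the grading of $q_i^{(r_i)}$. Already for $\g=\sll_2$ with principal $F$ one finds $\nu(F)=F+\tfrac14 h^2+\tfrac{k}{2}\partial h$: the correction $\partial h$ is linear, so a scheme that only ever adds strictly higher polynomial degree cannot close. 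The well-founded parameter for the recursion is instead the $\ad \fh$-grading of the leading symbols of the corrections (their distance from $\g^F$ inside $\mathfrak{q}$), and the solvability of each step rests on the precise operator-level statement that, under $\rho\circ\ad n$, the obstruction at grading $j$ is killed by $\ad F$ acting injectively on $[E,\g]\cap\g(j)$ with image complementary to $\g^F\cap\g(j-1)$ --- this is how \eqref{eq: decomp g^F} actually enters, and it is exactly the step you label ``the main obstacle'' and defer rather than prove. Finally, the free-generation argument via ``the standard Poisson BRST Euler-characteristic computation'' is circular as stated: an Euler characteristic pins down the character of a cohomology only once one knows the cohomology is concentrated in a single degree, which is essentially the vanishing statement being sought; moreover $\WW^k(\g,F)$ is defined here as a kernel inside $S(\CC[\partial]\otimes\mathfrak{q})$, not as a BRST cohomology, so an additional identification of the two would be needed before any such character count applies.
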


By the $\sll_2$-representation theory, we have a decomposition
\begin{equation} \label{eq: decomp g^F}
  \g^F \oplus[E,\g]= \g,
\end{equation}
which induces a natural projection map $\pi_0:S(\CC[\partial]\otimes \g) \to S(\CC[\partial]\otimes \g^F)$. As in $N=1$ SUSY case, one can show the uniqueness of the elements $\nu_i$'s in  $\WW^k(\g,F)$ satisfying Proposition \ref{prop:nonSUSY W-generator} and $\pi_0(\nu_i)=q_i^{(d_i)}$, which in turn gives a well-defined
 differential superalgebra isomorphism
\begin{equation} \label{eq:nu}
  \nu : S(\CC[\partial]\otimes \g^F) \to \WW^k(\g, F), \quad q_i^{(r_i)} \mapsto \nu_i.
\end{equation}

Take homogeneous bases $\{q_i^{(r_i)}|i\in {\cI}\}$ and $\{q_i^{(0)}|i\in {\cI}\}$ of $\g^F$ and $\g^E:= \ker (\ad E)$, respectively, such that the set $\mathcal{C}=\{ q_i^{(m)}| (i,m)\in {\cJ}\}$ forms a basis of $\g$, where ${\cJ}= \{ (i,m)| i\in {\cI}, \ m=0,1,\cdots, r_i\}$, and $q_i^{(m)}:=[F, q_i^{(m-1)}]$ for $m=1, \cdots , r_i$, where $2r_i+1$ is the dimension of the irreducible $\mathfrak{sl}(2)$-submodule of $\g$ containing $q_i ^{(r_i)}$. In addition, we consider its dual basis $\mathcal{C}^\perp=\{ \tilde{q}^i_{(m)}| (i,m)\in {\cJ}\}$ of $\g$ given by 
\begin{equation}\label{eq:dual bases-q}
(q_i^{(m)}|\tilde{q}^j_{(n)})=\delta_{i,j}\delta_{m,n}.
\end{equation}

\begin{theorem}\label{thm:nonSUSY w bracket} \cite{Suh20}
Let $\natural: \g \mapsto \g^F$ be the projection map and denote $g^\natural:= \natural(g)$ for $g\in \g$.
Consider $a\in \g^F_{-t_1}$ and $b\in \g^F_{-t_2}$, where $\g^F_N:=\g^F\cap \g(N)$. Then we have
\begin{equation} \label{eq:theorem, nonSUSY bracket}
  \begin{aligned}
     \{ \nu(a){}_\lambda \nu(b)\}& =  \nu([a,b])+k(a|b)\lambda \\
      -(-1)^{p(a)p(b)} & \sum_{p\in \ZZ_+} 
     \sum_{\substack{-t_2-1\prec(i_0,m_0)\prec \cdots \\ \cdots \prec(i_p, m_p)\prec t_1}}(-1)^{i_0}\Big( \nu([b,q_{i_0}^{(m_0)}]^\natural)-k(b|q_{i_0}^{(m_0)})(\partial+\lambda) \Big)\\
    & \times\bigg[ \prod_{t=1}^{\atopn{\longrightarrow}{p}} (-1)^{p(q_{i_t}^{(m_t)})}
    \Big( \nu([\tilde{q}^{i_{t-1}}_{(m_{t-1}+1)},q_{i_t}^{(m_t)}]^\natural) -k(\tilde{q}^{i_{t-1}}_{(m_{t-1}+1)}|q_{i_t}^{(m_t)})(\partial+\lambda)\Big) \bigg]\\
    & \times  \Big( \nu([\tilde{q}^{i_{p}}_{(m_{p}+1)},a]^\natural) -k(\tilde{q}^{i_{p}}_{(m_{p}+1)}|a)\lambda\Big),
  \end{aligned}
\end{equation}
where the partial order $\prec$ is defined as follows: 
\begin{equation}
\begin{aligned}
& {\rm(i)}\ (i_{t-1}, m_{t-1})\prec(i_t, m_t) \text{ if and only if } 
\left\{
\begin{array}{l}
  \tilde{q}_{(m_{t-1})}^{i_{t-1}}\in \g(j_{t-1}),  \ \tilde{q}_{(m_{t})}^{i_{t}}\in \g(j_{t}),\\
  j_{t-1}+1\leq j_t,
\end{array}
\right.\\
& {\rm(ii)}\ -t_2-1\prec(i_0,m_0)
\text{ if and only if } 
  \tilde{q}_{(m_{0})}^{i_{0}}\in \g(j_{0}) \text{ and }
  -t_2\leq j_0,\\
& {\rm(iii)}\ (i_p,m_p)\prec t_1
\text{ if and only if } 
  \tilde{q}_{m_{p}}^{i_{p}}\in \g(j_{p}) \text{ and }
  j_p+1\leq t_1.
\end{aligned} 
\end{equation}
\end{theorem}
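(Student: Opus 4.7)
The plan is to prove the formula by an iterative computation parallel to the $N=1$ SUSY case (Theorem \ref{thm:w bracket}), exploiting the structure of $\mathcal{W}^k(\g,F)$ as a sub-PVA of $V^k(\g)$ obtained through the projection $\rho$. The formula has the form of an infinite sum (finite after truncation by the grading) indexed by strictly increasing chains with respect to $\prec$, which strongly suggests that it arises from iterating a single ``correction'' operator.

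First, I would give an explicit recursive construction of the generators $\nu(a)$ for $a \in \g^F$. The conditions $\pi_0(\nu(a)) = a$ together with $\rho(\ad n (\nu(a))) = 0$ for every $n \in \n$ can be solved inductively by descending in the gradation \eqref{grading}, yielding $\nu(a) = a + (\text{corrections})$, where the corrections lie in products of elements of $\mathfrak{q}$ of strictly lower $\ad\fh$-eigenvalue. Existence and uniqueness follow from the decomposition \eqref{eq: decomp g^F}, which provides a direct-sum complement $[E,\g]$ to $\g^F$ in each graded piece of $\g$; this is what makes the recursion well-defined. This step is the bosonic analog of the construction of $\w$ in \eqref{eq:omega}, and it guarantees that the map $\nu$ in \eqref{eq:nu} is indeed a differential superalgebra isomorphism.

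To compute $\{\nu(a)_\lambda \nu(b)\}$, I would apply the affine $\lambda$-bracket $\{x_\lambda y\}_{V^k(\g)} = [x,y] + k(x|y)\lambda$ to the expanded forms of $\nu(a)$ and $\nu(b)$ and project via $\rho$. The leading contribution is $\nu([a,b]^\natural) + k(a|b)\lambda$, where the projection $\natural$ appears because $[a,b]$ need not lie in $\g^F$. The remaining contributions arise from cross brackets between leading and correction terms, and among correction terms themselves. To bring each such term back into $\nu(\g^F)$, one iteratively brackets with dual elements $\tilde{q}^{i}_{(m+1)}$ and projects via $\natural$; each such step corresponds to one link $(i_{t-1}, m_{t-1}) \prec (i_t, m_t)$ in a chain, and the strict inequality in the ordering $\prec$ guarantees that the iteration terminates. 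The derivative factors $-k(\cdots)(\partial+\lambda)$ arise from the level term $k(x|y)\lambda$ in the affine bracket combined with sesquilinearity, explaining the characteristic structure of the intermediate factors.

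The main obstacle is the precise bookkeeping of signs and of the combinatorics of the chains. The prefactor $(-1)^{p(a)p(b)}$ is the Koszul sign from the iterated skewsymmetry used to reorganize the bracket, while $(-1)^{i_0}$ and the factors $(-1)^{p(q_{i_t}^{(m_t)})}$ arise from moving odd basis vectors past each other when expanding in terms of $\mathcal{C}$ and $\mathcal{C}^\perp$. Proving that the sum over chains with the precise inequalities (i)--(iii) reproduces \emph{exactly} the full bracket, with no missing or residual terms, amounts to an induction on the chain length $p$ combined with the inductive definition of $\nu(a)$: each non-$\g^F$ residue produced at one step can be uniquely rewritten via \eqref{eq: decomp g^F} as $[E,\cdot]$, which dualizes to a bracket against some $\tilde{q}^i_{(m+1)}$, matching the next link. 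Once the recursive structure and all the Koszul signs are aligned, the formula \eqref{eq:theorem, nonSUSY bracket} follows by comparison with the tested ``trivial'' case and a direct sign-tracking argument paralleling the derivation of \eqref{eq:theorem, SUSY bracket}.
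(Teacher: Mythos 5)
First, note that the paper itself does not prove Theorem \ref{thm:nonSUSY w bracket}: it is quoted from \cite{Suh20} (as is its SUSY counterpart, Theorem \ref{thm:w bracket}), so there is no in-paper argument to compare against. Judged on its own terms, your proposal correctly identifies the main ingredients --- the recursive construction of $\nu$ from the splitting $\g=\g^F\oplus[E,\g]$ in \eqref{eq: decomp g^F}, the expansion of $\{\nu(a){}_\lambda\nu(b)\}$ via the affine bracket and the projection $\rho$, and the idea that each $\prec$-link records one pass through that splitting --- but it has a genuine gap at its center. The step ``each non-$\g^F$ residue produced at one step can be uniquely rewritten via \eqref{eq: decomp g^F} as $[E,\cdot]$, which dualizes to a bracket against some $\tilde q^i_{(m+1)}$, matching the next link'' is precisely the content of the theorem and is asserted rather than derived. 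Concretely: (a) bracketing the linear leading terms gives the Lie algebra element $[a,b]$, not the generator $\nu([a,b])$, so one must show that the cross-brackets among correction terms simultaneously complete $[a,b]$ into $\nu([a,b])$ \emph{and} produce the chain sum, and nothing in the sketch separates these two roles; (b) the asymmetry between the interior factors $-k(\cdots)(\partial+\lambda)$ and the bare $-k(\cdots)\lambda$ in the final factor is not explained by an appeal to ``sesquilinearity''; (c) the window conditions $-t_2\le j_0$ and $j_p+1\le t_1$, which depend on the gradings of $a$ and $b$ and control which basis vectors $q_{i_0}^{(m_0)},\dots,q_{i_p}^{(m_p)}$ may occur, are never derived; and (d) the signs $(-1)^{i_0}$ and $(-1)^{p(q_{i_t}^{(m_t)})}$ are attributed generically to ``Koszul signs'' without any computation that could distinguish the stated formula from one with the signs placed differently. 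A small but telling slip: you write the leading term as $\nu([a,b]^\natural)$, but $\g^F=\ker(\ad F)$ is a subalgebra, so $[a,b]\in\g^F$ already and no projection is needed; this suggests the precise role of $\natural$ (it only acts on brackets that genuinely leave $\g^F$, namely those against elements of $\mathcal{C}$ and $\mathcal{C}^\perp$) has not been pinned down.

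To close the gap one needs the actual inductive mechanism of \cite{Suh20}: rather than expanding both generators and matching all cross-terms, one computes $\rho(\{x\,{}_\lambda\,\nu(b)\})$ for $x$ ranging over graded pieces of $\g$, uses the defining property $\rho(\ad n(\nu(b)))=0$ for $n\in\n$ to trade unwanted terms for terms of strictly lower grading, and unwinds the resulting recursion; the chains, the window conditions, the $(\partial+\lambda)$ versus $\lambda$ asymmetry and the sign pattern all fall out of that recursion. Your outline is compatible with this but does not contain it, so as written it is a plan for a proof rather than a proof.
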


Before ending this section, we show that, as in $N=1$ SUSY case, the conformal weight $\Delta$ on $\WW^k(\g,F)$ is indeed the weight decomposition (2) of Definition \ref{def: superconformal vector}, when $k\neq 0$. However, in the this case, we also have the primary generators.

\begin{theorem}[\cite{work2}] \label{thm:conformal in N=0 W}
  Let $\{q_i^{(0)}|i\in {\cI}_0\}$ and $\{\tilde{q}_{(0)}^i|i\in {\cI}_0\}$ be the bases of $\g^F\cap \g(0)$ given by $(q_i^{(0)}|\tilde{q}^j_{(0)})=\delta_{ij}$. If $k\neq 0$, then 
  \[\frac{2}{k}\bigg(\nu(F) +\frac{1}{2}\sum_{i\in {\cI}_0} \nu(q_i^{(0)})\nu(\tilde{q}^i_{(0)})\bigg)\]
  is a conformal vector of $\WW^k(\g,F)$, giving the conformal weight decomposition as in \eqref{eq:conformal weight}. Furthermore, each $\nu(q_i^{(r_i)})$ for $i\in {\cI}$ such that $(E|q_i^{(r_i)})=0$ is primary of conformal weight $\Delta(q_i^{(r_i)})$.
\end{theorem}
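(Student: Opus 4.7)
The plan is to verify both assertions by direct computation using the explicit bracket formula of Theorem \ref{thm:nonSUSY w bracket}. Write $L := \tfrac{2}{k}\bigl(\nu(F) + \tfrac{1}{2}\sum_{i\in \cI_0} \nu(q_i^{(0)})\nu(\tilde q^i_{(0)})\bigr)$. First I would observe that $L$ lies in the weight-$2$ subspace: $\Delta(\nu(F)) = 1 - (-1) = 2$ by \eqref{eq:conformal weight}, and each factor $\nu(q_i^{(0)}), \nu(\tilde q^i_{(0)})$ has weight $1$ since $q_i^{(0)}, \tilde q^i_{(0)} \in \g(0)$, so the quadratic piece also has weight $2$. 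This is necessary for $L$ to give the weight decomposition in \eqref{eq:conformal weight}.

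To prove $\{L{}_\lambda L\} = (2\partial + 4\lambda)L + \tfrac{c}{3}\lambda^3$ for some $c\in\CC$, I would expand by the Leibniz rule into four pieces: $\{\nu(F){}_\lambda \nu(F)\}$, the two mixed brackets $\{\nu(F){}_\lambda \nu(q_j^{(0)})\nu(\tilde q^j_{(0)})\}$ (together with its skewsymmetric partner), and the quartic-looking bracket $\{\nu(q_i^{(0)})\nu(\tilde q^i_{(0)}){}_\lambda \nu(q_j^{(0)})\nu(\tilde q^j_{(0)})\}$. The key computation is $\{\nu(F){}_\lambda\nu(F)\}$ via Theorem \ref{thm:nonSUSY w bracket} with $a=b=F\in\g^F_{-1}$; the chain-range condition $-2\prec(i_0,m_0)\prec\cdots\prec(i_p,m_p)\prec 1$ allows intermediate grades only in $\{0,\tfrac{1}{2}\}$, so the sum is short and tractable. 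It splits into a linear-in-$\nu$ contribution producing the $(2\partial + 4\lambda)\nu(F)$ term, a genuinely quadratic contribution in the $\g(0)$-generators, and a $\lambda^3$ piece from the $k(\cdot|\cdot)$ factors supplying the central charge $c$. The normalisation $\tfrac{2}{k}$ and the coefficient $\tfrac{1}{2}$ are precisely tuned so that the quadratic output is cancelled, modulo total derivatives, by the $(2\partial+4\lambda)$-action on the correction $\tfrac{1}{2}\sum \nu(q_i^{(0)})\nu(\tilde q^i_{(0)})$, which is supplied by the mixed and quartic brackets.

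For the primarity claim let $q := q_i^{(r_i)}\in\g^F$ with $(E|q)=0$, so $q$ is a lowest weight vector of an irreducible $\mathfrak{sl}_2$-submodule distinct from the one containing $F$. Applying Theorem \ref{thm:nonSUSY w bracket} to $\{\nu(F){}_\lambda\nu(q)\}$, the leading term $\nu([F,q]^\natural)+k(F|q)(\partial+\lambda)$ vanishes ($q\in\g^F$ and grade considerations give $(F|q)=0$), leaving only chain corrections that generate derivative and $\lambda$-terms. The quadratic part of $L$ contributes $\{\nu(q_i^{(0)})\nu(\tilde q^i_{(0)}){}_\lambda\nu(q)\}$, which by completeness of the dual pair restricted to $\g^F\cap\g(0)$ supplies the remaining portion of $(2\partial + 2\Delta\lambda)\nu(q)$. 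The only potentially obstructing higher-$\lambda$ contribution is an $O(\lambda^2)$ piece proportional to $k(E|q)$, coming from the central term associated with the chain reaching the $(0)$-highest weight vector paired with $F$; this is exactly killed by the hypothesis $(E|q)=0$. Combined with Proposition \ref{prop:nonSUSY W-generator}, this also identifies the conformal weight decomposition with \eqref{eq:conformal weight}.

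The main obstacle is the combinatorial bookkeeping in the chain sums of Theorem \ref{thm:nonSUSY w bracket}, particularly for $\{\nu(F){}_\lambda\nu(F)\}$: one must carefully track graded signs, parity factors, and the projection $\natural$ onto $\g^F$, and then verify that the quadratic output pairs term-by-term with what $(2\partial+4\lambda)$ produces on the correction $\tfrac{1}{2}\sum_{i\in\cI_0}\nu(q_i^{(0)})\nu(\tilde q^i_{(0)})$. Once this cancellation is established, the remaining checks of Virasoro-type coefficients and of primarity are essentially bookkeeping with the same formula.
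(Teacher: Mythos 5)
You should first be aware that the paper does not prove this statement at all: Theorem \ref{thm:conformal in N=0 W} is imported from the Master's thesis \cite{work2}, so there is no in-paper argument to compare against. Your overall strategy --- direct verification of the Virasoro relation and of primarity via the explicit chain formula of Theorem \ref{thm:nonSUSY w bracket}, with the quadratic correction $\tfrac12\sum_{i\in\cI_0}\nu(q_i^{(0)})\nu(\tilde q^i_{(0)})$ tuned to absorb the $\g(0)$-output of $\{\nu(F){}_\lambda\nu(F)\}$ --- is the natural (and essentially the only available) route, and is presumably what \cite{work2} does. However, as written your proposal defers precisely the decisive computations to ``bookkeeping,'' and in doing so it misdescribes what that bookkeeping contains.

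Two concrete problems. First, your claim that the chain condition for $a=b=F$ ``allows intermediate grades only in $\{0,\tfrac12\}$'' is wrong: conditions (ii)--(iii) give $-1\le j_t\le 0$ for the grades of the $\tilde q^{i_t}_{(m_t)}$, so the $q_{i_t}^{(m_t)}$ range over $\g(0)\oplus\g(\tfrac12)\oplus\g(1)$; it is exactly the grade-$1$ (i.e.\ $E$-direction) intermediates, for which $(F|q_{i_0}^{(m_0)})\neq0$, that supply the $k(\partial+\lambda)$ factors producing the $(2\partial+4\lambda)\nu(F)$ term and the $\lambda^3$ central charge, so the range you state would lose the Virasoro relation entirely. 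Second, and more seriously, because the $\ad\fh$-grading is half-integral here, the grade-$\tfrac12$ intermediates produce quadratic terms $\nu(x)\nu(y)$ with $x,y\in\g^F\cap\g(-\tfrac12)$ --- products of two weight-$\tfrac32$ generators such as $\nu(f),\nu(\tilde f)$. Nothing in $(2\partial+4\lambda)L$ can match these (the correction term only involves $\g(0)$-currents), so the theorem forces them to cancel among themselves, e.g.\ by the antisymmetry of the summed bilinear expression over the purely odd space $\g(\tfrac12)$; your plan accounts only for ``a genuinely quadratic contribution in the $\g(0)$-generators'' and is silent on this cancellation, which is the subtle part of the proof. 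Finally, for primarity your assertion that the only $O(\lambda^2)$ obstruction is a constant proportional to $k(E|q)$ is only meaningful for weight-$2$ generators ($q\in\g(-1)$, the only case where $(E|q)$ can be nonzero); for generators of other conformal weights the coefficients of $\lambda^{\ge2}$ in $\{L{}_\lambda\nu(q)\}$ are fields of positive conformal weight, and showing they vanish requires a separate chain-by-chain argument that your proposal does not supply.
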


\section{Additional SUSY structures on $\WW^k(\overline{\sll}(n+1|n),f)$ and $\WW^k(\sll(n+1|n), F)$}
\label{sec: additional SUSY structure on W-algebras}

In this section, let $\g= \sll(n+1|n)$ and $f$ be the odd principal nilpotent. Note that the conditions (sl-1)--(sl-4) in section \ref{sec: sl(2|1) in detail} completely determines the principal $\sll(2|1)$ subalgebra $R_1 \oplus R_2 \subset \g$ and its basis elements $\fh ,E,F,e,f,\tilde{e},\tilde{f},U$. Additionally, assume that $k$ is a nonzero constant in $\CC$.

\subsection{N=2 SUSY structure on $\WW^k(\overline{\sll}(n+1|n), f)$}

Recall that in Theorem \ref{thm: N=1, N=2 pva relation with superconformal}, we found a way to find a $N=2$ superconformal vector along with an additional SUSY structure in $N=1$ SUSY PVA. In this section, we use this theorem to get Theorem \ref{thm:N=1 to N=2, sl(n+1|n)}, one of the main results of this paper. We show how to extend a $N=0$ or $N=1$ SUSY $W$-algebra to a $N=1$ or $N=2$ structure. The set $W_{\fk}$ of primary elements of the starting $N=0$ (resp. $N=1$) $W$-algebra is related to the set of highest weight vectors of $\fk=\sll_2$ (resp. $\fk=\osp(1|2)$) in $\g$. The primary elements of the enlarged SUSY structure will be a subset of $W_{\fk}$.

\begin{proposition} \label{prop:N=1 to N=2, sl(n+1|n)}
  Recall the isomorphism $\w$ in \eqref{eq:omega}. We have 
  \begin{enumerate}
    \item  $\{\w(\bar{\tilde{f}}){}_{\Lambda}\w(\bar{\tilde{f}})\}=-2\w(\bar{F})+2k^3 \lambda \chi$
    \item  $G:=-\frac{2}{k^2} \w(\bar{F})=k^2 \w(\bar{v}_2^{(4)})$ is a superconformal vector of $\WW^k(\bar{\g},f)$.
  \end{enumerate}
  Moreover, we have a set of $G$-primary elements $W_{\osp(1|2)}^{N=1}=\{\w(\bar{v}_j^{(2j)})\,|\,j=1,3,4, \cdots, 2n\}$  such that $W_{\osp(1|2)}^{N=1}\cup \{G\}$ freely generates $\WW^k(\bar{\g}, f)$ as a $\CC[\nabla]$-algebra.
\end{proposition}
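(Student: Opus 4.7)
The plan is to establish the three claims in sequence, all by leveraging the explicit $\Lambda$-bracket formula of Theorem~\ref{thm:w bracket} together with the $\sll(2|1)$-module structure of Section~\ref{sec: sl(2|1) in detail}.

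For part (1), I would apply Theorem~\ref{thm:w bracket} with $a=b=\tilde f$. By (sl-3) we have $[f,\tilde f]=0$, so $\tilde f\in \g^f_{-1/2}$ and $t_1=t_2=\tfrac12$. The leading term $(-1)^{p(\tilde f)}\bigl(\w(\overline{[\tilde f,\tilde f]})+k(\tilde f|\tilde f)\chi\bigr)$ reduces to $-2\w(\bar F)$, using $[\tilde f,\tilde f]=2F$ from (sl-2) and $(\tilde f|\tilde f) = -([U,f]|\tilde f) = -(U|[f,\tilde f])=0$ by invariance. The higher-order sum in Theorem~\ref{thm:w bracket} is constrained by $-\tfrac12\le j_0<\cdots<j_p<\tfrac12$, which forces $p\le 1$ and the intermediate weights $j_i\in\{-\tfrac12,0\}$; evaluating these few remaining terms via the explicit dual pairings \eqref{eq:dual bases} and the principal $\sll(2|1)$ bracket relations contributes exactly $2k^3\lambda\chi$.

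For part (2), set $J:=\w(\bar{\tilde f})$, so part (1) reads $\{J_\Lambda J\}=k^2 G+2k^3\lambda\chi$. I would then obtain the $N=1$ superconformal relation for $G$ in one of two ways: (a) directly compute $\{\w(\bar F)_\Lambda\w(\bar F)\}$ by applying Theorem~\ref{thm:w bracket} to $a=b=F\in\g^f_{-1}$ and match the output against
\[
\{G_\Lambda G\} = (2\partial + 3\lambda + \chi D)G + \tfrac{\lambda^2\chi}{3}c,
\]
reading off $c$ from the surviving highest-order coefficient; or (b) propagate part (1) by the Jacobi identity, first deducing $\{J_\Lambda G\}$ from $\{J_\Lambda\{J_\Gamma J\}\}$ via sesquilinearity and skewsymmetry, and then extracting $\{G_\Lambda G\}$ from $\{\{J_\Lambda J\}_{\Gamma+\Lambda}J\}$. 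The conformal weight decomposition required by Definition~\ref{def: superconformal vector} is the one already established in \eqref{eq:conformal weight-SUSY W}.

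For part (3), the free generation statement is immediate from Proposition~\ref{prop:W-generator} applied to the homogeneous basis $\{v_j^{(2j)}\}_{j=1}^{2n}$ of $\g^f$: the algebra $\WW^k(\bar\g,f)$ is freely generated over $\CC[\nabla]$ by $\{\w(\bar v_j^{(2j)})\}_{j=1}^{2n}$, and since $G$ is a nonzero scalar multiple of $\w(\bar v_2^{(4)})$ we may replace the $j=2$ generator by $G$ to obtain $W_{\osp(1|2)}^{N=1}\cup\{G\}$. To verify $G$-primarity of each $\w(\bar v_j^{(2j)})$ for $j\ne 2$, I apply Theorem~\ref{thm:w bracket} to $\{\w(\bar F)_\Lambda\w(\bar v_j^{(2j)})\}$ and match it against \eqref{eq: primary conformal weight Delta-2} with $\Delta=\tfrac12+j$ coming from \eqref{eq:conformal weight-SUSY W}; the vanishing of all higher-$\Lambda$ contributions reduces to the fact that $v_j^{(2j)}$ is a lowest weight vector, so the brackets $[\tilde v^{i_p}_{(m_p+1)},v_j^{(2j)}]^\sharp$ have sharply restricted support in $\g^f$.

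The main obstacle will be the bookkeeping of the chain sums appearing in Theorem~\ref{thm:w bracket}. In part (1) one must track the exact signs and the pairings from \eqref{eq:dual bases} so that the $p=0$ and $p=1$ contributions combine to produce $2k^3\lambda\chi$ rather than a polynomial in $\bar{F}$; in part (3) the same sums, now with the additional parameter $j$, must be shown to vanish at order $\Lambda^{\ge 2}$. Both reductions rely essentially on the $\sll(2|1)$-module identities established in Section~\ref{sec: sl(2|1) in detail}, in particular $[\tilde f,v_{2i}]\in\CC v_{2i-1}$ and $[\tilde e,v_{2i}]=0$.
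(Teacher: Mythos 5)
Your overall strategy — feed the relevant pairs into the explicit bracket formula \eqref{eq:theorem, SUSY bracket} of Theorem \ref{thm:w bracket} and match the output against the superconformal and primary conditions — is the same as the paper's, and your treatment of part (1) and of the free-generation claim is sound. The genuine gap is in your verification that the elements $\w(\bar{v}_j^{(2j)})$, $j\neq 2$, are $G$-primary, which is also what condition (ii) of Definition \ref{def: superconformal vector} requires before $G$ can be called a superconformal vector at all. You claim the unwanted contributions vanish because $v_j^{(2j)}$ is a lowest weight vector, so that the brackets $[\tilde{v}^{i_p}_{(m_p+1)},v_j^{(2j)}]^\sharp$ have ``sharply restricted support.'' That is not what happens. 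As the paper's Lemmas \ref{lem:N=1 bracket for F in general} and \ref{lem:N=1 bracket for F in general-2} show, for $b=F$ and $a=v_{2i-1}^{(4i-2)}$ or $v_{2i}^{(4i)}$ with $i>2$ there are genuinely nonzero terms of the form $\w(\bar{v}_{2j}^{(4j)})\,\w(\overline{[\tilde{v}^{2j}_{(4j-1)},a]}^\sharp)$ and $\w(\bar{v}_{2j-1}^{(4j-2)})\,\w(\overline{[\tilde{v}^{2j-1}_{(4j-3)},a]}^\sharp)$ for every $1<j<i$. These are quadratic in the generators and carry no power of $\Lambda$, so a single survivor would already destroy quasi-primarity (the $\Lambda$-constant coefficient must be exactly $2\partial a$). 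They disappear only after summing over $j$: invariance of the bilinear form yields the reflection identity $-\w(\bar{v}_{2j}^{(4j)})\,\w(\overline{[\tilde{v}^{2j}_{(4j-1)},a]}^\sharp)=\w(\bar{v}_{2i-2j}^{(4i-4j)})\,\w(\overline{[\tilde{v}^{2i-2j}_{(4i-4j-1)},a]}^\sharp)$, after which the sum over $2\le j\le i-1$ collapses to boundary terms that vanish (Lemmas \ref{lem:final lemma for Prop N=1--1} and \ref{lem:final lemma for Prop N=1--2}). Without this pairwise cancellation your argument does not close; the same phenomenon already occurs for $\{\w(\bar F){}_\Lambda\w(\bar F)\}$ itself, where Lemma \ref{lem:N=1 bracket F and tilde f} produces several terms quadratic in $\w(\bar{\tilde f})$ that cancel only in aggregate.

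Two smaller points. First, the conformal weight of $\w(\bar{v}_j^{(2j)})$ is $\tfrac12+\tfrac{j}{2}$, not $\tfrac12+j$, since $v_j^{(2j)}\in\g(-j/2)$ and $\Delta(\bar a)=\tfrac12-j_a$ by \eqref{eq:conformal weight-SUSY W}. Second, your alternative route (b) for $\{G{}_\Lambda G\}$ — applying the Jacobi identity to $\{J{}_\Lambda\{J{}_\Gamma J\}\}$ — only produces a functional relation among the brackets $\{J{}_\Lambda G\}$ and $\{G{}_\Lambda J\}$; it does not output $\{G{}_\Lambda G\}$, so the direct computation of route (a) is in fact required.
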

\begin{proof}
We give the proof in section \ref{Proof of proposition}.
\end{proof} 
Proposition \ref{prop:N=1 to N=2, sl(n+1|n)} shows that the conditions of Theorem \ref{thm: N=1, N=2 pva relation with superconformal} are verified.
Let us denote
\begin{equation} \label{eq:K and G}
  J=\frac{\sqrt{-1}}{k}\w(\bar{\tilde{f}}) \text{ and } G= -\frac{2}{ k^2} \w(\bar{F}).
\end{equation}
Then, $J$ and $G$ will satisfy the conditions in Theorem \ref{thm: N=1, N=2 pva relation with superconformal}, 
if we further find $G$-primary generators of $\WW^k(\overline{\sll}(n+1|n),f)$ satisfying the second equality of \eqref{eq: N=1 condition for N=2 in Thm}. This is done in the following theorem.

\begin{theorem}  \label{thm:N=1 to N=2, sl(n+1|n)}
  Let $\tilde{D}:=J_{(0|0)}$ for $J$ in \eqref{eq:K and G}. Then $\WW^k(\bar{\g},f)$ endowed with $\bm{\nabla}=(\partial, D, \tilde{D})$ is a $N=2$ SUSY PVA with a $N=2$ superconformal vector $J$. Furthermore, the set $W_{\osp(1|2)}^{N=2}=\{\w(\bar{v}_{2i-1}^{(4i-2)})\,|\, i=2,3, \cdots n\}$ consists of $J$-primary elements, and $W_{\osp(1|2)}^{N=2} \cup \{J\}$ freely generates $\WW^k(\bar{\g},f)$ as a $\CC[\bm{\nabla}]$-algebra.
\end{theorem}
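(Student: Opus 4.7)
My plan is to apply Theorem~\ref{thm: N=1, N=2 pva relation with superconformal}, which reduces the statement to three checks: (i) the element $J$ is an even $G$-primary vector with $\{J{}_\Lambda J\}=-G-\frac{\lambda\chi}{3}c$ for some $c\in\CC$; (ii) each $a_i:=\w(\bar v_{2i-1}^{(4i-2)})$, $i=2,\dots,n$, is $G$-primary and satisfies $\{J{}_\Lambda a_i\}=J_{(0|0)}a_i$; (iii) the set $\{J,G\}\cup\{a_i,J_{(0|0)}a_i\}_{i=2}^n$ freely generates $\WW^k(\bar\g,f)$ as a $\CC[\nabla]$-algebra. Both conclusions of the theorem (the $N=2$ SUSY PVA structure with superconformal vector $J$, and the free generation by $W_{\osp(1|2)}^{N=2}\cup\{J\}$) then follow from Theorem~\ref{thm: N=1, N=2 pva relation with superconformal}.

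For step~(i), I will invoke Proposition~\ref{prop:N=1 to N=2, sl(n+1|n)} directly. Since $v_1^{(2)}=-\tilde f$ in the normalization of the $\sll(2|1)$ basis, $J$ is a nonzero scalar multiple of $\w(\bar v_1^{(2)})\in W_{\osp(1|2)}^{N=1}$, hence $G$-primary, and is even because $\tilde f$ is odd in $\g$. Rescaling part~(1) of the proposition gives
\[
\{J{}_\Lambda J\}=-\tfrac{1}{k^2}\bigl(-2\w(\bar F)+2k^3\lambda\chi\bigr)=-G-2k\lambda\chi,
\]
so the central charge is $c=6k$.

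Step~(ii) is the main computational work. Each $a_i$ is automatically $G$-primary because $2i-1\in\{3,5,\dots,2n-1\}$ places $a_i$ in $W_{\osp(1|2)}^{N=1}$. To prove $\{J{}_\Lambda a_i\}=J_{(0|0)}a_i$, I will translate via \eqref{eq: N=1 induced from nonsusy} into the two requirements $\{J{}_\lambda a_i\}=0$ and that $\{DJ{}_\lambda a_i\}=J_{(0|0)}a_i$ is independent of $\lambda$. Since $\tilde f\in\g^f$ by (sl-3) and $v_{2i-1}^{(4i-2)}\in\g^f$, I will then apply Theorem~\ref{thm:w bracket} to $\{\w(\bar{\tilde f}){}_\Lambda \w(\bar v_{2i-1}^{(4i-2)})\}$. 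The chains $(i_0,m_0)<\cdots<(i_p,m_p)$ allowed in the summation are severely restricted because $\tilde f$ sits at grade $-\tfrac12$; the intermediate commutators $[\tilde v^{i_t}_{(m_t+1)},\,\cdot\,]^\sharp$ that would produce higher powers of $\lambda$ or $\chi$-linear contributions should vanish thanks to the $\sll(2|1)$-commutation relations encoded in Lemma~\ref{lemma:sl repn} and the diagram~\eqref{pic:sl repn}. Combining this with a conformal-weight bookkeeping (the summand $J_{(n|\epsilon)}a_i$ lives in weight $i+\tfrac12-n$ for $\epsilon=0$ and weight $i-n$ for $\epsilon=1$) will rule out every contribution other than $J_{(0|0)}a_i$.

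For step~(iii), I will show that $\tilde D a_i = J_{(0|0)}a_i$ is a valid replacement for $\w(\bar v_{2i}^{(4i)})$ in the free generating set of Proposition~\ref{prop:N=1 to N=2, sl(n+1|n)}. Using $[\tilde f,f]=0$ from (sl-3) and $[\tilde f,\tilde f]=2F$ from (sl-2), a direct computation yields
\[
[\tilde f,v_{2i-1}^{(4i-2)}]=(\ad f)^{4i-2}[\tilde f,v_{2i-1}^{(0)}]=(\ad f)^{4i-2}v_{2i}^{(2)}=v_{2i}^{(4i)}\in\g^f,
\]
so the leading (constant in $\Lambda$) contribution of Theorem~\ref{thm:w bracket} to $\{J{}_\Lambda a_i\}$ produces a nonzero scalar multiple of $\w(\bar v_{2i}^{(4i)})$, with the remaining terms being $\nabla$-derivatives of strictly lower-weight generators. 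A weight-triangular inversion then converts the old generating set into $\{J,G\}\cup\{a_i,\tilde D a_i\}_{i=2}^n$ while preserving freeness. The main obstacle of the whole proof is step~(ii): the combinatorial analysis in Theorem~\ref{thm:w bracket}, guided by the $\sll(2|1)$-structure of $\g$, that is needed to eliminate every higher-$\Lambda$ contribution to $\{J{}_\Lambda a_i\}$.
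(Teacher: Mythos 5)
Your overall strategy coincides with the paper's: reduce the theorem to the hypotheses of Theorem \ref{thm: N=1, N=2 pva relation with superconformal}, get condition (i) from Proposition \ref{prop:N=1 to N=2, sl(n+1|n)} (your rescaling $\{J{}_\Lambda J\}=-G-2k\lambda\chi$, hence $c=6k$, is consistent with Proposition \ref{prop:N=1 to N=2, sl(n+1|n)}(1)), and obtain the new free generating set by a triangular substitution replacing each $\w(\bar v_{2i}^{(4i)})$ by $\tilde D a_i$. Steps (i) and (iii) are essentially right, up to one inaccuracy in (iii): the correction terms in $\tilde D a_i$ are not $\nabla$-derivatives of lower-weight generators but quadratic products $\w(\bar v_{2j}^{(4j)})\,\w(\bar v_{2i-2j}^{(4i-4j)})$ of even-indexed generators of strictly smaller index; the inductive triangular inversion still goes through, but for that reason, not the one you give.

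The genuine gap is step (ii), which you correctly identify as the crux but do not carry out, and the tools you propose would not close it. Conformal-weight bookkeeping cannot eliminate the unwanted terms: $J_{(0|1)}a_i$ would live in weight $i$ and $J_{(1|0)}a_i$ in weight $i-\tfrac12$, and $\WW^k(\bar\g,f)$ has plenty of nonzero elements of those weights. Likewise, Lemma \ref{lemma:sl repn} and diagram \eqref{pic:sl repn} only describe the $\sll(2|1)$-module structure of $\g$; what is actually needed to kill the intermediate factors in \eqref{eq:theorem, SUSY bracket} is the vanishing of certain $\sharp$-projections, namely $[\tilde v^i_{(2i)},v_j^{(t)}]^\sharp=0$ for $t\le 2j-1$ and $[\tilde v^i_{(2i-1)},v_j^{(t)}]^\sharp=0$ for $t\le 2j-2$ (the paper's Lemma \ref{eq:key lemma for brackets}). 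Feeding these into Theorem \ref{thm:w bracket} with $b=\tilde f\in\g(-\tfrac12)$ shows that only length-zero chains with $v_{i_0}^{(m_0)}=v_{2j}^{(4j-2)}$ survive, each contributing a term free of $\lambda$ and $\chi$, which yields
\[
\{\w(\bar{\tilde f}){}_\Lambda\w(\bar v_{2i-1}^{(4i-2)})\}=-\w(\bar v_{2i}^{(4i)})+\sum_{j=2}^{i-1}c_{j,i}\,\w(\bar v_{2j}^{(4j)})\,\w(\bar v_{2i-2j}^{(4i-4j)})
\]
(the paper's Lemma \ref{lem:N=1 bracket tilde(f)}). Until a statement of this kind is proved, the identity $\{J{}_\Lambda a_i\}=J_{(0|0)}a_i$ --- and with it both conclusions of the theorem --- remains unestablished.
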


\begin{proof}
See section \ref{sec:proof of N=2 primary decomp}.
\end{proof}

\subsection{SUSY structures of $\WW^k(\sll(n+1|n), F)$}
In this section, we describe SUSY PVA structures of $\WW^k(\g, F)$ for $\g=\sll(n+1|n)$.

\begin{proposition} \label{prop:N=0 to N=1,2, sl(n+1|n)}
  Let $\nu$ be the isomorphism in \eqref{eq:nu}. Then 
  \begin{enumerate}
    \item   $\nu(f)_{(0)} \nu(f)= -2\nu(F)+\frac{1}{2}(\nu(U))^2$, $\nu(\tilde{f})_{(0)}\nu(\tilde{f})= 2\nu(F)-\frac{1}{2}(\nu(U))^2$, and \\ $\nu(f)_{(0)}\nu(\tilde{f})=0;$
    \item  $L=\frac{2}{k}(\nu(F)-\frac{1}{4}(\nu(U))^2)$ is a conformal vector of $\WW^k(\g,F)$ and all the elements of the set $W_{\sll_2}^{N=0}:=\{\nu(v_{m}^{(2m-s)})\,|\, m=1,2, \cdots, 2n,\,  s=0,1\}\setminus \{\nu(v_2^{(4)})\}$ are $L$-primary.
  \end{enumerate}
\end{proposition}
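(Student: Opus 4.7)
My plan is to treat the two parts separately: for (1), I apply the explicit bracket formula of Theorem \ref{thm:nonSUSY w bracket} directly to pairs $a, b \in \{f, \tilde{f}\} \subset \g^F_{-1/2}$; for (2), I combine Theorem \ref{thm:conformal in N=0 W} with a direct analysis of $\g^F \cap \g(0)$ and the invariant bilinear form.

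For part (1), since $t_1 = t_2 = 1/2$, the direct term in Theorem \ref{thm:nonSUSY w bracket} reduces to $\nu([a,b])$: the pairings $(f|f), (\tilde{f}|\tilde{f})$ vanish by graded skew-symmetry on odd elements, and $(f|\tilde{f})$ vanishes because the invariant form is block-diagonal on the decomposition $\g = \bigoplus R_i$ while $f \in R_2$ and $\tilde{f} \in R_1$. Using $[f,f] = -2F$ from (osp-4), $[\tilde{f}, \tilde{f}] = 2F$ from (sl-2) applied to (osp-4), and $[f, \tilde{f}] = 0$ from (sl-3), the direct terms reproduce the $\nu(F)$-parts of the three stated formulas. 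For the nonlocal term, the partial-order constraint plus the observation $\g^F \cap \g(1/2) = 0$ (lowest $\sll_2$-weights are non-positive) collapses the sum to $p = 0$ and $j_0 = -1/2$: the $j_0 = 0$ contribution vanishes because the projection $[\tilde{q}^{i_0}_{(m_0+1)}, a]^\natural \in \g^F \cap \g(1/2) = 0$. The surviving terms, indexed by the $2n$ weight-$1/2$ vectors of $\g$, produce purely quadratic $\nu(U)$-expressions because $[f, \g(1/2)]^\natural, [\g(1/2), f]^\natural \subseteq \g^F \cap \g(0) = \CC U$.

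The main obstacle in (1) is extracting the precise $U$-components across all $2n$ weight-$1/2$ vectors. For this, I would exploit the $\sll(2|1)$-module structure on the enlarged irreducibles $\widetilde R_i = R_{2i-1} \oplus R_{2i}$---in particular the commutative diagram in \eqref{pic:sl repn}---to compute the projections of $[f, q]$ and $[\tilde{q}', f]$ onto $\CC U$ modulo $[E, \g]$, then combine the resulting contributions with the sign factors $(-1)^{i_0}$ to recover the coefficients $\pm \tfrac{1}{2}$ (for $(a,b) = (f,f), (\tilde f, \tilde f)$) and $0$ (for $(a,b) = (f, \tilde f)$).

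For part (2), I first establish $\g^F \cap \g(0) = \CC U$. The inclusion $\CC U \subseteq \g^F \cap \g(0)$ follows from $[F, U] = [f, \tilde{f}] + [e, [F, \tilde{f}]] = 0$ via graded Jacobi applied to $U = [e, \tilde{f}]$, using (sl-3) and the fact that $[F, \tilde{f}] = 0$ (its weight $-3/2$ lies outside the principal $\sll(2|1)$-subalgebra $R_1 \oplus R_2$). For the reverse inclusion, $\dim \g(0) = 2n$ and $\dim \g(-1) = 2n-1$ (weight $-1$ occurs in $R_i$ exactly for $i \geq 2$), and $\ad F : \g(0) \to \g(-1)$ is surjective since each weight-$0$ element of $R_i$ with $i \geq 2$ maps to a nonzero weight-$(-1)$ element inside $R_i$; hence the kernel is one-dimensional. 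Next, invariance yields $(U|U) = ([e, \tilde{f}]|U) = (e|[\tilde{f}, U]) = (e|f) = -2$, where $(e|f) = -2$ follows from $(E|F) = 1$, $[E, F] = 2\fh$, and $(\fh|\fh) = 1/2$. Hence $\tilde{U} = -\tfrac{1}{2}U$, and Theorem \ref{thm:conformal in N=0 W} produces $L = \tfrac{2}{k}(\nu(F) + \tfrac{1}{2}\nu(U)\nu(-\tfrac{1}{2}U)) = \tfrac{2}{k}(\nu(F) - \tfrac{1}{4}\nu(U)^2)$ as claimed. For the primary generators, Theorem \ref{thm:conformal in N=0 W} asserts $L$-primarity of $\nu(q_i^{(r_i)})$ precisely when $(E|q_i^{(r_i)}) = 0$; by block-diagonality of the form and the weight structure of $R_2$, the unique generator with nonzero pairing against $E$ is $\nu(v_2^{(4)}) \propto \nu(F)$, so removing it leaves exactly $W_{\sll_2}^{N=0}$.
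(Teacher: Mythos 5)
Your overall route coincides with the paper's: part (1) via the explicit bracket formula of Theorem \ref{thm:nonSUSY w bracket}, part (2) via Theorem \ref{thm:conformal in N=0 W}. Part (2) of your argument is complete and in fact supplies details the paper leaves implicit (the identification $\g^F\cap\g(0)=\CC U$, the value $(U|U)=-2$, and the selection of $v_2^{(4)}$ as the unique non-primary generator), and your reduction of the nonlocal sum in part (1) to the $p=0$, $j_0=-\tfrac12$ terms is correct.

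The gap is in the final step of part (1), which you yourself label ``the main obstacle'' and then only outline: you never determine which of the $2n$ candidate quadratic terms actually survive, nor their coefficients, and that is precisely the content of the statement. The paper's proof records the answer: for $a=b=f$ the whole sum collapses to the single term $\tfrac12\nu([f,\tilde e])\nu([\tilde e,f])=\tfrac12\nu(U)^2$, and for $(a,b)=(f,\tilde f)$ every quadratic term vanishes. A clean way to finish along your lines (more direct than invoking the diagram \eqref{pic:sl repn}) is to observe that the projection of $x\in\g(0)$ onto $\g^F\cap\g(0)=\CC U$ along $[E,\g]$ equals $-\tfrac12(U|x)\,U$, since $(U|[E,\g])=([U,E]|\g)=0$ and $(U|U)=-2$; by invariance $[f,q]^\natural=-\tfrac12([U,f]|q)\,U=\tfrac12(\tilde f|q)\,U$, and block-diagonality of the form on $\bigoplus_i R_i$ then kills every $q\in\g(1/2)$ except $q\propto\tilde e\in R_1$, while the analogous identity with $[U,\tilde f]=-f$ shows that for $(a,b)=(f,\tilde f)$ the two factors of each product lie in different blocks and can never be simultaneously nonzero. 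You also leave unexamined the $\lambda$-free contribution $-k(b|q_{i_0}^{(m_0)})\,\partial\,\nu([\tilde q^{i_0}_{(m_0+1)},a]^\natural)$ coming from the factor $(\partial+\lambda)$; it vanishes only because the unique $q_{i_0}^{(m_0)}$ pairing nontrivially with $f$ is proportional to $e$, for which $[\tilde q^{i_0}_{(m_0+1)},f]^\natural\propto[e,f]^\natural=-2\fh^\natural=0$. Without these verifications the coefficients $\pm\tfrac12$ and the vanishing of $\nu(f)_{(0)}\nu(\tilde f)$ are asserted rather than proved.
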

\begin{proof}
By Theorem \ref{thm:conformal in N=0 W}, it is known that $(\nu(F)-\frac{1}{4}(\nu(U))^2)_{(0)}=k\partial$, (2) directly follows. For (1), we use 
Theorem \ref{thm:nonSUSY w bracket}. 
More precisely, if $a=b= f$ (resp.$a=b=\tilde{f}$) then the only nonzero terms without $\lambda$ in \eqref{eq:theorem, nonSUSY bracket} are $\nu([a,b])$ and 
\begin{equation}
\begin{aligned}
& \frac{1}{2}\nu([f,\tilde{e}])\nu([\tilde{e},f]) \quad \Big(\text{ resp.}-\frac{1}{2}\nu([\tilde{f},e])\nu([e,\tilde{f}]) \ \Big).
\end{aligned}
\end{equation}
If $a=f$ and $b= \tilde{f}$ then every term without $\lambda$ in \eqref{eq:theorem, nonSUSY bracket} is zero.
\end{proof}

Proposition \ref{prop:N=0 to N=1,2, sl(n+1|n)}  implies the second main theorem of the article.
\begin{theorem} \label{thm:N=0 to N=1,2, sl(n+1|n)}
Consider the two elements
\[ G=\sqrt{-\frac{1}{k}}\nu(f)\  \text{ and } \ \tilde{G}=\sqrt{\frac{1}{k}}\nu(\tilde{f})\]
in $\WW^k(\g, F)$ and let 
$D:= G_{(0)}$ and $\tilde{D}:=\tilde{G}_{(0)}$. Then $\WW^k(\sll(n+1|n),F)$ endowed with $\bm{\nabla}=(\partial, D, \tilde{D})$ is a $N=2$ SUSY PVA.
\end{theorem}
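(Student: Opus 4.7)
The strategy is to invoke Proposition \ref{prop: pva, N=1, N=2 pva relation} in the direction $(3) \Rightarrow (1)$, applied to the (non-SUSY) PVA $\WW^k(\g, F)$. Concretely, I would verify that $D$ and $\tilde{D}$ are supercommuting odd derivations of the commutative product with $D^2 = \tilde{D}^2 = \partial$, and that each is compatible with the $\lambda$-bracket in the Leibniz sense
\begin{equation*}
D\{a{}_\lambda b\} = \{Da{}_\lambda b\} + (-1)^{p(a)}\{a{}_\lambda Db\}, \qquad \tilde{D}\{a{}_\lambda b\} = \{\tilde{D}a{}_\lambda b\} + (-1)^{p(a)}\{a{}_\lambda \tilde{D}b\}.
\end{equation*}
Once these are in place, the induced $\bm{\Lambda}$-bracket of \eqref{eq: N=2 induced from nonsusy} promotes $\WW^k(\g, F)$ to a $N=2$ SUSY PVA.

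The derivation and compatibility clauses come essentially for free from general PVA theory. Since $f$ and $\tilde{f}$ are odd in $\g$, the elements $G$ and $\tilde{G}$ are odd and hence $D=G_{(0)}$ and $\tilde{D}=\tilde{G}_{(0)}$ are odd operators. For any element $v$ in a PVA, $v_{(0)}$ is automatically a derivation of the associative product (Leibniz rule at $\lambda=0$) and of the $\lambda$-bracket (Jacobi identity at $\Gamma=0$); Jacobi also yields the operator identity $[v_{(0)}, w_{(0)}] = (v_{(0)}w)_{(0)}$ (graded commutator). Applying this to $v,w\in\{G,\tilde{G}\}$ reduces the remaining axioms $D^2=\partial$, $\tilde{D}^2=\partial$, and $[D,\tilde{D}]=0$ to the three zeroth-product identities
\begin{equation*}
G_{(0)}G = L, \qquad \tilde{G}_{(0)}\tilde{G} = L, \qquad G_{(0)}\tilde{G} = 0,
\end{equation*}
where $L$ is the conformal vector from Proposition \ref{prop:N=0 to N=1,2, sl(n+1|n)}(2). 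In this paper's convention \eqref{eq: conformal weight Delta}, a conformal vector acts as $L_{(0)} = 2\partial$ on the entire PVA, so the above identities imply $2D^2 = (G_{(0)}G)_{(0)} = L_{(0)} = 2\partial$, together with its analogue for $\tilde{D}$, and $[D,\tilde{D}] = (G_{(0)}\tilde{G})_{(0)} = 0$.

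The three required identities follow immediately from Proposition \ref{prop:N=0 to N=1,2, sl(n+1|n)}(1)--(2): using $G = \sqrt{-1/k}\,\nu(f)$ and $\tilde{G} = \sqrt{1/k}\,\nu(\tilde{f})$, one computes $G_{(0)}G = -\tfrac{1}{k}\nu(f)_{(0)}\nu(f) = \tfrac{2}{k}\nu(F) - \tfrac{1}{2k}(\nu(U))^2 = L$, analogously $\tilde{G}_{(0)}\tilde{G} = \tfrac{1}{k}\nu(\tilde{f})_{(0)}\nu(\tilde{f}) = L$, and $G_{(0)}\tilde{G}$ is a nonzero scalar multiple of $\nu(f)_{(0)}\nu(\tilde{f}) = 0$.

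There is no genuine obstacle: all substantive computation is absorbed into Proposition \ref{prop:N=0 to N=1,2, sl(n+1|n)}, which in turn rests on Theorem \ref{thm:nonSUSY w bracket}. The only point demanding care is the sign bookkeeping, i.e., the choice of the imaginary prefactor $\sqrt{-1}$ in $G$. Without it, Proposition \ref{prop:N=0 to N=1,2, sl(n+1|n)}(1) would yield $G_{(0)}G = -L$ and hence $D^2 = -\partial$, violating the $N=2$ SUSY axioms; the imaginary unit is precisely what aligns the sign of $\nu(f)_{(0)}\nu(f) = -2\nu(F) + \tfrac{1}{2}(\nu(U))^2$ with the positive definition of $L$.
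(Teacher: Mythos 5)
Your proposal is correct and follows exactly the paper's route: the paper's proof is the one-line citation of Proposition \ref{prop: pva, N=1, N=2 pva relation} together with Proposition \ref{prop:N=0 to N=1,2, sl(n+1|n)}, and you have simply (and accurately) filled in the bookkeeping — the operator identity $[v_{(0)},w_{(0)}]=(v_{(0)}w)_{(0)}$, the fact that $L_{(0)}=2\partial$ in this paper's normalization, and the sign role of the $\sqrt{-1}$ prefactor.
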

\begin{proof}
The theorem follows from Proposition \ref{prop: pva, N=1, N=2 pva relation} and Proposition \ref{prop:N=0 to N=1,2, sl(n+1|n)}.
\end{proof}

\begin{theorem} \label{thm:N=2 conformal,N=0 W-alg}
    Recall  $G$, $D=G_{(0)}$, $\tilde{G}$ and $\tilde{D}=\tilde{G}_{(0)}$  in Theorem \ref{thm:N=0 to N=1,2, sl(n+1|n)} and let 
    \[ J:=-\sqrt{-1}\, \nu(U)\]
     in $\WW^k(\g,F)$. Then $G$ and $\tilde{G}$ are $N=1$ superconformal vectors with respect to the induced $\Lambda$ and $\tilde{\Lambda}$ brackets, respectively, and 
      $J$ is a $N=2$ superconformal vector of $\WW^k(\g,F)$ with respect to the induced $\bm{\Lambda}$-bracket.
     Moreover, the following statements hold:
  \begin{enumerate}
    \item  The set $W_{\sll_2}^{N=1}=\{\nu(v_{j}^{(2j-1)})|j=1,3,4,\cdots, 2n\}$ consists of $G$-primary elements, and $W_{\sll_2}^{N=1}\cup \{G\}$ freely generates $\WW^k(\g,F)$ as a $\CC[\nabla]$-algebra,
    \item  The set $\widetilde{W}_{\sll_2}^{N=1}=\{ \nu(v_{2i'-1}^{(4i'-2)}), \nu(v_{2i-1}^{(4i-3)})|i'=2,3, \cdots ,n,\,  i=1,2,\cdots, n\}$ consists of $\tilde{G}$-primary elements and $\widetilde{W}_{\sll_2}^{N=1}\cup \tilde{G}$ freely generates $\WW^k(\g,F)$ as a $\CC[\tilde{\nabla}]$-algebra,
    \item  The set $W_{\sll_2}^{N=2}=\{\nu(v_{2i-1}^{(4i-3)})|i=2,\cdots, n\}$ consists of $J$-primary elements and $W_{\sll_2}^{N=2}\cup \{J\}$ freely generates $\WW^k(\g,F)$ as a $\CC[\bm{\nabla}]$-algebra.
  \end{enumerate}
\end{theorem}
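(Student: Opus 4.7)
The plan is to reduce each of the three claimed decompositions to the abstract criteria from Section \ref{sec: relation between PVAs}: Theorem \ref{thm: pva, N=1 pva relation with superconformal} for $G$ and $\tilde{G}$, and Theorem \ref{thm: N=1, N=2 pva relation with superconformal} (or equivalently Theorem \ref{thm:N=2 superconformal in N=0 PVA}) for $J$. Proposition \ref{prop:N=0 to N=1,2, sl(n+1|n)} already supplies $G_{(0)}G = \tilde{G}_{(0)}\tilde{G} = L$ and the $L$-primarity of the candidate generators, so what remains is (a) to control the non-leading $\lambda$-coefficients of $\{G_\lambda G\}$, $\{\tilde{G}_\lambda \tilde{G}\}$, $\{J_\lambda J\}$ and of the mixed brackets involving $J$; (b) to prove $G$-, $\tilde{G}$-, or $J$-primarity of the sets $W_{\sll_2}^{N=1}$, $\widetilde{W}_{\sll_2}^{N=1}$, $W_{\sll_2}^{N=2}$; and (c) to establish the free generation claims.

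For (a), a conformal weight count reduces $\{G_\lambda G\} - L$ to $\lambda\, G_{(1)}G + \tfrac{\lambda^2}{2}G_{(2)}G$. The weight-$0$ term $G_{(2)}G$ is a scalar giving the central charge, while $G_{(1)}G$ is even of weight $1$ and hence lies in $\CC\,\nu(U)$ (the unique such subspace of $\WW^k(\g,F)$); the super-skew-symmetry relation $\{G_\lambda G\} = \{G_{-\lambda-\partial}G\}$ then forces $G_{(1)}G = 0$. The argument for $\tilde{G}$ is parallel. For $J$, we verify the four non-SUSY conditions of Theorem \ref{thm:N=2 superconformal in N=0 PVA} directly: applying Theorem \ref{thm:nonSUSY w bracket} with $a, b \in \{f, \tilde{f}, U\}$, the gradation constraints $-t_2 \le j_0 < j_1 - 1 < \cdots < j_p \le t_1 - 1$ admit no valid index sequence for these small values $t_1, t_2 \in \{0, \tfrac{1}{2}\}$, so the sum over $p$ is empty. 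This yields the clean identities $\{\nu(f)_\lambda \nu(U)\} = \nu(\tilde{f})$, $\{\nu(\tilde{f})_\lambda \nu(U)\} = \nu(f)$, and $\{\nu(U)_\lambda \nu(U)\} = -2k\lambda$, which translate to $DJ = \tilde{G}$, $\tilde{D}J = -G$, $D\tilde{D}J = -L$, and $\{J_\lambda J\} = -\tfrac{\lambda}{3}c$ with $c = -6k$, exactly matching the hypotheses of Theorem \ref{thm:N=2 superconformal in N=0 PVA}.

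For (b), each element of $W_{\sll_2}^{N=1}$, $\widetilde{W}_{\sll_2}^{N=1}$, $W_{\sll_2}^{N=2}$ corresponds via $\nu^{-1}$ to the lowest-weight vector of a distinguished $\sll_2$-submodule of $\g$, and those in $W_{\sll_2}^{N=2}$ additionally lie in $\ker \ad U$ inside an $\sll(2|1)$-irreducible $\widetilde{R}_i$. Inspection of Theorem \ref{thm:nonSUSY w bracket} with $(a, b) = (f, \cdot)$, $(\tilde{f}, \cdot)$, or $(U, \cdot)$ against these candidates again eliminates all $\lambda^n$-coefficients for $n \ge 1$, yielding $\{G_\lambda a\} = G_{(0)}a$ and its analogues. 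For (c), the leading linear part of $G_{(0)}\nu(v_j^{(2j-1)})$ is $\nu([f, v_j^{(2j-1)}]) = \nu(v_j^{(2j)})$, and similarly $\tilde{D}$ and $D\tilde{D}$ produce the expected neighbors in the $\sll(2|1)$-representation diagram \eqref{pic:sl repn}. Together with the free generation of $\WW^k(\g, F)$ over $\CC[\partial]$ by $\nu$ applied to a $\g^F$-basis (Proposition \ref{prop:nonSUSY W-generator}), induction on conformal weight recovers every such $\CC[\partial]$-generator as a $\CC[\nabla]$-, $\CC[\tilde{\nabla}]$-, or $\CC[\bm{\nabla}]$-derivative of the proposed sets; a dimension count ($4n$ in each case) completes the freeness argument.

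The principal obstacle is the bookkeeping underlying (b): ruling out positive-$\lambda$-power contributions to $\{G_\lambda a\}$, $\{\tilde{G}_\lambda a\}$, $\{J_\lambda a\}$ when the crude conformal weight bound is insufficient. Here the fine $\sll(2|1)$-module structure from Section \ref{sec: sl(2|1) in detail} --- especially the $\ad U$-invariance characterizing $W_{\sll_2}^{N=2}$ --- is what ensures the index-sequence sums in Theorem \ref{thm:nonSUSY w bracket} produce no unwanted terms, and this case-by-case verification is the most technical ingredient.
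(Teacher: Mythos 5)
Your proposal is correct and follows essentially the same route as the paper's appendix proof: Proposition \ref{prop:N=0 to N=1,2, sl(n+1|n)} and Theorem \ref{thm:conformal in N=0 W} supply $L$ and the $L$-primary generators, the brackets $\{\nu(f){}_\lambda\nu(U)\}$, $\{\nu(\tilde{f}){}_\lambda\nu(U)\}$, $\{\nu(U){}_\lambda\nu(U)\}$ and $\{\nu(f){}_\lambda\nu(v_i^{(2i-1)})\}$, $\{\nu(\tilde{f}){}_\lambda\nu(v_{2i-1}^{(4i-2)})\}$, $\{\nu(\tilde{f}){}_\lambda\nu(v_{2i-1}^{(4i-3)})\}$ are extracted from Theorem \ref{thm:nonSUSY w bracket}, and the abstract criteria of Theorems \ref{thm: pva, N=1 pva relation with superconformal}, \ref{thm: N=1, N=2 pva relation with superconformal} and \ref{thm:N=2 superconformal in N=0 PVA} are then invoked exactly as the paper does (your derivation of $G_{(1)}G=0$ from weight counting and skewsymmetry, in place of the paper's direct computation in its Lemma on $\{\nu(f){}_\lambda\nu(f)\}$, is only a cosmetic variation). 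One caution: your blanket claim that the sum over $p$ in Theorem \ref{thm:nonSUSY w bracket} is empty for all $a,b\in\{f,\tilde{f},U\}$ fails when $t_1=t_2=\tfrac12$ --- e.g.\ $\{\nu(f){}_\lambda\nu(f)\}$ acquires the quadratic term $\tfrac12\nu([f,\tilde{e}])\nu([\tilde{e},f])$, which is precisely why $L$ contains $(\nu(U))^2$ --- so the emptiness argument is valid only for the three brackets you actually list, where one argument is $\nu(U)$ and hence $\min(t_1,t_2)=0$.
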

\begin{proof}
See section \ref{sec:Proof of theorems N=0}.
\end{proof}

\begin{remark}
We remind that the highest weight vectors of $\sll_2$ (resp. $\osp(1|2)$; resp. $\sll(1|2)$) are given by $v_i^{(2i)},\ v_i^{(2i-1)}$, $i=1,2,\dots, 2n$ (resp. $v_i^{(2i)}$, $i=1,2,\dots, 2n$; resp.  $v_{2i}^{(4i)}$, $i=1,2,\dots, n$). For the second $\osp(1|2)$ superalgebra generated by $\tilde{e}$ and $\tilde{f}$, the highest weights are   $v_{2i}^{(4i)}$, $v_{2i}^{(4i-1)}$, $i=1,2,\dots, n$. Then, the correspondence between the sets $W_{\sll_2}^{N=0}$, $W_{\osp(1|2)}^{N=1}$ and $W_{\osp(1|2)}^{N=2}$ and  the highest weight vectors of $\sll_2$, $\osp(1|2)$ and $\sll(1|2)$ respectively is immediate. However, one may wonder how the sets $W_{\sll_2}^{N=\sn}$, $\sn>0$ are related to these  highest weight vectors. It is done in the following way.
For simplicity's sake, we note $\propto$ for equalities up to multiplication by non-zero numbers. We first consider the set  
$W_{\sll_2}^{N=1}$. Since $D=G_{(0)}\propto \nu(f)$, then $D(\nu(v_i^{(2i-1)}))\propto\nu(v_i^{(2i)})$ and we recover the highest weight vectors of $\osp(1|2)$. In the same way, for $\widetilde W_{\sll_2}^{N=1}$, $\tilde D=\tilde G_{(0)}\propto \nu(\tilde f)$, so that 
$\tilde D(\nu(v_{2i-1}^{(4i-2)}))\propto\nu(v_{2i}^{(4i)})$ and $\tilde D(\nu(v_{2i-1}^{(4i-3)}))\propto\nu(v_{2i}^{(4i-1)})$. 
The case $W_{\sll_2}^{N=2}$ is dealt similarly using both $D$ and $\tilde D$.
\end{remark}

\section{Open questions\label{sec:conclu}}
The results presented in the core of the paper concern mainly $N=2$ SUSY, but the propositions 
presented in section \ref{sec: PVA and SUSY PVA} deal with the chain  
$N=0\, \hookrightarrow\, N=1\, \hookrightarrow\, N=2$ of SUSY PVAs, which  
can be easily generalized to a $N=\sn-1\, \hookrightarrow\, N=\sn$ chain of SUSY PVAs. The proofs are essentially the same as for $N=2$, so that we don't reproduce them, and just present the propositions. 

In this section, $\Lambda$ denotes the tuple $(\lambda, \chi_1, \cdots, \chi_{\sn-1})$ and $\mathbf\Lambda$
the tuple $(\lambda, \chi_1, \cdots, \chi_{\sn})$. In the same way, we note 
${\nabla}=(\partial, D_1, \cdots, D_{\sn-1})$ and $\bm{\nabla}=(\partial, D_1, \cdots, {D_\sn})$.
We set
\begin{equation}
  \{a{}_{\bm{\Lambda}}b\}^-=(-1)^{\sn-1}\{D_\sn a{}_{\Lambda}b\}+\chi_\sn\{ a{}_{\Lambda}b\}\,.
\end{equation}

 \begin{proposition} \label{prop:N-pva, N-1 pva relation}
  Let $P$ be a $N=\sn-1$ SUSY PVA. The following two statements are equivalent:
  \begin{enumerate}
    \item  $(P, 1, \nabla, \{\cdot {}_{\bm\Lambda}\cdot\}^-, \cdot)$ is a $N=\sn$ SUSY PVA,
    \item  There exists an odd derivation $D_\sn:P\rightarrow P$ satisfying 
    \begin{equation} 
      D_\sn^2=\partial, \quad \ D_\sn\{a{}_\Lambda b\}=(-1)^{\sn-1}\{D_\sn a{}_\Lambda b\}+ (-1)^{p(a)+\sn-1}\{a{}_\Lambda D_\sn b\}
    \end{equation}    
    for any $a,b\in P$.
  \end{enumerate}
\end{proposition}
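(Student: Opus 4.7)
The plan is to mimic the proofs of Propositions \ref{prop: pva, N=1 pva relation} and \ref{prop: pva, N=1, N=2 pva relation}, now carried out at the level of the ``top'' bracket variable $\chi_\sn$ added to the existing $\Lambda=(\lambda, \chi_1,\dots,\chi_{\sn-1})$. The key observation is that the defining formula
\[
  \{a{}_{\bm{\Lambda}}b\}^- = (-1)^{\sn-1}\{D_\sn a{}_\Lambda b\} + \chi_\sn \{a{}_\Lambda b\}
\]
splits the $N=\sn$ bracket into its ``$\chi_\sn$-free'' and ``$\chi_\sn$-linear'' parts; so each axiom to be checked at level $\sn$ will be equivalent to a pair of conditions at level $\sn-1$, one of which already holds by hypothesis and the other of which is exactly the derivation property in (2).

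For the forward direction (1)$\Rightarrow$(2), the existence of $D_\sn$ with $D_\sn^2=\partial$ is part of the defining relations \eqref{eq: nabla relation} for $\bm{\nabla}$. To recover the compatibility relation, I would apply the $N=\sn$ sesquilinearity axiom to $D_\sn$, namely $\{D_\sn a{}_{\bm{\Lambda}}b\}= -(-1)^\sn\chi_\sn\{a{}_{\bm\Lambda}b\}$ and $\{a{}_{\bm\Lambda}D_\sn b\}= (-1)^{p(a)+\sn}(D_\sn+\chi_\sn)\{a{}_{\bm\Lambda}b\}$, then extract the $\chi_\sn$-free coefficient on both sides to obtain, for the induced $N=\sn-1$ bracket,
\[
  D_\sn\{a{}_\Lambda b\} = (-1)^{\sn-1}\{D_\sn a{}_\Lambda b\} + (-1)^{p(a)+\sn-1}\{a{}_\Lambda D_\sn b\}.
\]
This is precisely the derivation relation required in (2).

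For (2)$\Rightarrow$(1), I would define $\{\cdot {}_{\bm\Lambda}\cdot\}^-$ as above and verify: (i) sesquilinearity in $D_\sn$ and $\chi_\sn$ via the relations $D_\sn^2=\partial$ and \eqref{eq: nabla relation}; (ii) sesquilinearity in the variables $D_1,\dots,D_{\sn-1}$ and $\chi_1,\dots,\chi_{\sn-1}$, which follows by applying the $N=\sn-1$ sesquilinearity to each of $\{D_\sn a{}_\Lambda b\}$ and $\{a{}_\Lambda b\}$ and tracking a single sign coming from the parity of $D_\sn$; (iii) skewsymmetry, derived from the skewsymmetry of the $N=\sn-1$ bracket together with the derivation property, in a computation that generalizes the one done for $N=1$ in the proof of Proposition \ref{prop: pva, N=1 pva relation}; (iv) Jacobi identity, by expanding both sides in $\chi_\sn,\tilde\chi_\sn$ and reducing each of the four components to the $N=\sn-1$ Jacobi identity using the derivation property; and (v) the Leibniz rule for the new bracket, which reduces to the Leibniz rule for the $N=\sn-1$ bracket plus the fact that $D_\sn$ is an odd derivation.

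The main obstacle will be step (iv), the Jacobi identity, where the signs become intricate: the $N=\sn$ bracket has parity $\sn \pmod 2$, so the sign exponents $(p(a)+1)N$ and $(p(a)+N)(p(b)+N)$ in Definition \ref{def: SUSY LCA} shift by one relative to the $N=\sn-1$ case, and one must carefully cancel these against signs produced by pushing $D_\sn$ past $\chi_\sn$, past the outer bracket factor, and past the parity of $a$ or $b$. A clean bookkeeping device is to verify the identity separately on the four sectors (constant in $\chi_\sn,\tilde\chi_\sn$; linear in $\chi_\sn$ only; linear in $\tilde\chi_\sn$ only; bilinear) and use the $N=\sn-1$ Jacobi identity on each, exactly as the $\chi$-grading trick works in Proposition-Definition \ref{def: induced bracket}; the inductive nature then makes the argument essentially identical to the $N=0\!\to\!N=1$ case of Proposition \ref{prop: pva, N=1 pva relation}, only with an extra $(-1)^{\sn-1}$ appearing at each step.
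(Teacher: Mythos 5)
Your proposal is correct and follows essentially the same route as the paper, which in fact omits the proof of this proposition and states that it is ``essentially the same as for $N=2$,'' i.e.\ precisely the adaptation of Propositions \ref{prop: pva, N=1 pva relation} and \ref{prop: pva, N=1, N=2 pva relation} that you carry out. One small correction in the forward direction: writing $\{a{}_{\bm\Lambda}D_\sn b\}^-=(-1)^{p(a)+\sn}(D_\sn+\chi_\sn)\{a{}_{\bm\Lambda}b\}^-$ as $A+\chi_\sn B$ with $A,B\in\CC[\Lambda]\otimes P$, the derivation identity is read off from the coefficient $B$ of $\chi_\sn$ (exactly as in the paper's $N=1$ argument, ``comparing the coefficients of $\chi$''), not from the $\chi_\sn$-free part $A$, which instead yields a different, second-order consequence involving $\{D_\sn a{}_\Lambda D_\sn b\}$.
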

This proposition allows to consider a chain $N=0\, \hookrightarrow\, N=1\, \hookrightarrow \cdots \hookrightarrow\, N=\sn$ of SUSY PVAs, in the spirit of 
Proposition \ref{prop: pva, N=1, N=2 pva relation}.
\begin{theorem} \label{thm:N-superconformal}
 Let $P$ be a $N=\sn-1$ SUSY PVA with a $N=\sn-1$ superconformal vector $G$ of central charge $c\in \CC$. Suppose that there are primary elements $J$ and $a_i$, $i\in {\cI}$, such that $J$ is of parity $(-1)^{\sn}$,
 \begin{equation} 
  \{J{}_{\Lambda}J\}=(-1)^{\sn-1}\Big(G+\frac{\lambda^{3-\sn}\chi_1\chi_2\cdots \chi_{\sn-1}}{3}c\Big),\quad \{J{}_{\Lambda}a_i\}=J_{(0|0\cdots0)}a_i,
 \end{equation}
 where $c=0$ when $\sn\geq 4$,
 and the set $\{J, G\}\cup \{a_i, J_{(0|0\cdots0)}a_i\}$ freely generates $P$ as a $\CC[\nabla]$-algebra. Then, for $D_{\sn}:=J_{(0|0\cdots0)}$, $P$ is a $N=\sn$ SUSY PVA with a $N=\sn$ superconformal vector $J$. Moreover, $P$ is freely generated by the set $\{J\}\cup \{a_i\}_{i\in {\cI}}$ as a $\CC[\bm{\nabla}]$-algebra, where $\{a_i\}_{i\in {\cI}}$ consists of $J$-primary elements.
\end{theorem}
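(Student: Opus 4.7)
The strategy is to mimic the proof of Theorem~\ref{thm: N=1, N=2 pva relation with superconformal} and invoke Proposition~\ref{prop:N-pva, N-1 pva relation} to enlarge the SUSY structure by the single odd derivation $D_\sn := J_{(0|0\cdots 0)}$.

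First, I would translate the $N=\sn$ superconformal condition \eqref{eq: conformal Lambda-bracket} for $J$ into equivalent identities phrased in the $N=\sn-1$ $\Lambda$-bracket, using the relation
\[
\{a{}_{\bm\Lambda}b\}^- = (-1)^{\sn-1}\{D_\sn a{}_\Lambda b\} + \chi_\sn\{a{}_\Lambda b\}
\]
introduced above Proposition~\ref{prop:N-pva, N-1 pva relation}. Specializing $a=b=J$ and reading off the coefficients of $\chi_\sn^0$ and $\chi_\sn^1$, \eqref{eq: conformal Lambda-bracket} becomes a pair of identities: the $\chi_\sn^0$-component reproduces the hypothesis $\{J{}_\Lambda J\} = (-1)^{\sn-1}\bigl(G + \tfrac{\lambda^{3-\sn}\chi_1\cdots\chi_{\sn-1}}{3}c\bigr)$, and the $\chi_\sn^1$-component, expressing $(-1)^{\sn-1}\{D_\sn J{}_\Lambda J\}$, is precisely the formula \eqref{eq: primary conformal weight Delta-2} evaluated at $a=J$ with conformal weight $1$ in the $N=\sn-1$ sense, i.e.\ the $G$-primarity of $J$ of weight $1$.

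Next, I would verify the hypotheses of Proposition~\ref{prop:N-pva, N-1 pva relation} for $D_\sn = J_{(0|0\cdots 0)}$. Oddness and the derivation property are immediate from the $N=\sn-1$ Leibniz rule restricted to the constant-in-$\Lambda$ coefficient. The compatibility $D_\sn\{a{}_\Lambda b\} = (-1)^{\sn-1}\{D_\sn a{}_\Lambda b\} + (-1)^{p(a)+\sn-1}\{a{}_\Lambda D_\sn b\}$ follows from the $N=\sn-1$ Jacobi identity applied to $\{J{}_\Gamma\{a{}_\Lambda b\}\}$ after extracting the $(0|0\cdots 0)$-coefficient in $\Gamma$. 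To obtain $D_\sn^2 = \partial$, apply Jacobi to $\{J{}_\Gamma\{J{}_\Lambda c\}\}$ and extract the $(0|0\cdots 0)$-coefficient in both $\Gamma$ and $\Lambda$; the resulting term $\{\{J{}_\Gamma J\}_{\Gamma+\Lambda}c\}$ contributes $(-1)^{\sn-1}G_{(0|0\cdots 0)}c = (-1)^{\sn-1}\cdot 2\partial c$ by the $N=\sn-1$ superconformal relation for $G$, and the Jacobi signs combine to give $D_\sn^2 = \partial$.

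Finally, Proposition~\ref{prop:N-pva, N-1 pva relation} promotes $P$ to an $N=\sn$ SUSY PVA, and the reformulation of step one then shows $J$ is an $N=\sn$ superconformal vector. For primality of $a_i$, it suffices to rewrite $\{J{}_{\bm\Lambda}a_i\}^-$ via the $\bm\Lambda$--$\Lambda$ relation: the hypothesis $\{J{}_\Lambda a_i\} = D_\sn a_i$ supplies a $\chi_\sn D_\sn a_i$ contribution and no higher terms, while the remaining quasi-primary data come from the $G$-primality of $a_i$ in the $N=\sn-1$ structure. Free generation of $P$ by $\{J\}\cup\{a_i\}_{i\in \cI}$ over $\CC[\bm\nabla]$ follows from the hypothesised $\CC[\nabla]$-freeness of $\{J,G\}\cup\{a_i,J_{(0|0\cdots 0)}a_i\}_{i\in\cI}$ together with the identifications $G = (-1)^{\sn-1} D_\sn J$ (the constant-in-$\Lambda$ coefficient of $\{J{}_\Lambda J\}$) and $J_{(0|0\cdots 0)}a_i = D_\sn a_i$, which absorb both $G$ and the $J_{(0|0\cdots 0)}a_i$ into the $\CC[\bm\nabla]$-closures of $J$ and of the $a_i$. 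The hardest step is essentially bookkeeping: tracking the $(-1)^{(p(a)+1)(\sn-1)}$ and $(-1)^{(p(a)+\sn-1)(p(b)+\sn-1)}$ signs through the Jacobi extractions in tandem with the parity $p(J)\equiv \sn \pmod 2$ of $J$ so that $D_\sn^2 = \partial$ and the derivation-compatibility hold with exactly the signs demanded by Proposition~\ref{prop:N-pva, N-1 pva relation}.
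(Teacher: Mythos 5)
Your proposal is correct and follows exactly the route the paper intends: the paper explicitly omits the proof of this theorem, stating that "the proofs are essentially the same as for $N=2$," and your argument is precisely the $N=\sn$ analogue of the proofs of Theorems \ref{thm: pva, N=1 pva relation with superconformal} and \ref{thm: N=1, N=2 pva relation with superconformal}, using Proposition \ref{prop:N-pva, N-1 pva relation} to adjoin $D_\sn=J_{(0|0\cdots 0)}$, the Jacobi identity to verify $D_\sn^2=\partial$ and the bracket compatibility, and the splitting of the $N=\sn$ superconformal relation into the hypothesis on $\{J{}_\Lambda J\}$ plus the $G$-primarity of $J$. Two cosmetic slips only: in your first step the roles of the $\chi_\sn^0$- and $\chi_\sn^1$-components are interchanged (the $\chi_\sn$-coefficient of $\{J{}_{\bm{\Lambda}}J\}^-$ is $\{J{}_\Lambda J\}$, while the constant part is $(-1)^{\sn-1}\{D_\sn J{}_\Lambda J\}$), and the conformal weight of $J$ with respect to $G$ is $(4-\sn)/2$ rather than $1$ for general $\sn$.
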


Remark however that the application to $N=\sn$ SUSY $W$-algebras remains to be done.
We note that superconformal algebras with $\sn$ supersymmetry have being defined in physics in the 70's \cite{Ademollo}, in relation with the $\mathfrak{o}(\sn)$ algebras. It has to be related with the principal embeddings of $\mathfrak{osp}(0|2)\simeq \mathfrak{sl}(2)$, $\mathfrak{osp}(1|2)$ and $\mathfrak{osp}(2|2)\simeq \mathfrak{sl}(1|2)$ in Lie superalgebras for $N=0,1,2$ SUSY respectively. 
In view of these results, it is tempting to conjecture that $N=\sn$ SUSY $W$-algebras should be related to $\mathfrak{osp}(\sn|2)$ embeddings into superalgebras. However, the notion of principal embeddings of $\mathfrak{osp}(\sn|2)$, $n>2$ has to be precised. As a first step, we remark that only the superalgebras 
$\mathfrak{osp}(\sn|2)$ with $\sn=1,2,3,4$ admit a principal $\mathfrak{osp}(1|2)$, so that these later cases appear to be more natural. Looking for $\mathfrak{sl}(1|2)$ principal embeddings, one is restricted to $\mathfrak{sl}(n+1|n)$. 
Moreover we note that the $\mathfrak{osp}(3|2)$ superalgebra can be viewed as the folding of a 
$\mathfrak{sl}(3|2)$ superalgebra, in the same way $\mathfrak{osp}(1|2)$ can be viewed as the folding of $\mathfrak{sl}(1|2)$ \cite{dico}. Then, embeddings of $\mathfrak{sl}(3|2)$ superalgebras in $\mathfrak{sl}(n+1|n)$ looks like the more promising direction to study $N=3$ SUSY $W$-algebras. Yet, a
detailed analysis remains to be done.

We also observe that the Sugawara and Kac-Todorov constructions recalled in Examples \ref{ex:affine} and \ref{ex: susy affine pva} are not fully understood at the level of $N=2$ SUSY affine VA or PVA. There was some attempts in the physics literature \cite{RS90,FOFR,delius}, based on the so-called coset construction, or using constrained $N=2$ superfields, but a comprehensive study is still needed.

Finally, we note that the $N=2$ SUSY $W$-algebras presented in this paper 
correspond to principal embeddings of 
$\mathfrak{sl}(1|2)$ into Lie superalgebras. In fact, any embedding of $\mathfrak{sl}(1|2)$ should 
lead to $N=2$ SUSY $W$-algebras, see \cite{RSS96}. A detailed analysis of this more general case 
will be the subject of a future work \cite{work}.

\medskip

\textbf{Acknowledgements}

E. Ragoucy warmly thanks the Seoul National University for partial support, and for the kind hospitality when part of this work was done.

\appendix
\section{Proofs of Propositions and Theorems}

We consider the decomposition  $\g= \bigoplus_{i=1}^{2n} R_i$ of $\g$ into  irreducible $\osp(1|2)$-module $R_i$ of dimension $2i+1$ together
with the dual bases $\mathcal{B}= \{v_i^{(m)}|(i,m)\in {\cJ} \}$ and  $\mathcal{B}^\perp= \{\tilde{v}^i_{(m)}|(i,m)\in {\cJ}\}$ of $\g$ introduced in section \ref{subsec:W-bracket}, where ${\cJ}= \{(i,m)|i=1,2,\cdots, 2n, \, m=0,1,\cdots, 2i\}$. Suppose $E=v_2^{(0)}\in\mathcal{B}$ and  $E,F,\fh ,f, e, \tilde{f}, \tilde{e}, U \in R_1 \oplus R_2 \subset \g$ satisfy (sl-1)--(sl-4) in section \ref{sec: sl(2|1) in detail}. We can assume
\[ [\tilde{f}, v_{2i}^{(0)}]=-v_{2i-1}^{(0)}\]
for any $i=1,2,\cdots, n$. 
Then, from $[\tilde{f}, f]=0$ and $(\text{ad}\tilde{f})^2=-(\text{ad}f)^2$, we get 
\begin{equation}\label{eq:action-f-tilde}
[\tilde{f}, v_{2i-1}^{(t)}]= (-1)^t\,v_{2i}^{(t+2)}, \quad [\tilde{f}, v_{2i}^{(s)}]=(-1)^{s+1}\,v_{2i-1}^{(s)}\,,
\end{equation}
for $t=0,1,\cdots, 2i-2$ and $s= 2,3,\cdots, 2i$. In addition, since $ \tilde{v}^\al_{(2\al)}$ is of parity $\alpha$ (mod 2) and belongs to  the one dimensional space  $\g^e \cap \g(\alpha)= \CC\cdot \left(\sum_{\be=1}^{2n-\al+1}e_{\be, \al+\be}\right)$, we can show $[\tilde{v}^i_{(2i)}, \tilde{v}^j_{(2j)}]= 0$
if $i$ or $j$ is even. Finally, we remind that
\begin{equation}
v_i^{(m)}\in\g \Big(\frac{i-m}2 \Big)\ \text{ and }\ \tilde{v}^i_{(m)}\in\g \Big(\frac{m-i}2 \Big)\,.
\end{equation}

\pagebreak

\subsection{Proof of Proposition \ref{prop:N=1 to N=2, sl(n+1|n)}} \label{Proof of proposition} 
\subsubsection{Proof of (1) in Proposition \ref{prop:N=1 to N=2, sl(n+1|n)}.} 
For such a purpose, let us compute $[\w(\bar{\tilde{f}}){}_\Lambda \w(\bar{\tilde{f}})]$.  

\begin{lemma} \label{lem:N=2 tilde{f} and tilde{f}}
For $a=b= \tilde{f}\in\g^{f}_{-1/2}$, the only nontrivial terms in \eqref{eq:theorem, SUSY bracket} are $ -\w(\overline{[a,b]})$ and 
  \begin{equation} \label{eq:N=2 tilde{f} and tilde{f}}
  - \big(k(b|v_{1}^{(0)})(D+\chi) \big)\big(k(\tilde{v}^{1}_{(1)}|v_{1}^{(1)})(D+\chi) \big)\big(k(\tilde{v}^{1}_{(2)}|a)(D+\chi) \big)=2k^3\lambda\chi.
  \end{equation}
\end{lemma}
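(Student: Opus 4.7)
My plan is to plug $a = b = \tilde{f}$ into Theorem \ref{thm:w bracket} and whittle the triple sum down to a single surviving summand. Since $\tilde f = -v_1^{(2)} \in \g^f_{-1/2}$, we have $t_1 = t_2 = \tfrac12$, so the chain condition $-\tfrac12 \le j_0 < \cdots < j_p < \tfrac12$ restricts every $\tilde v^{i_t}_{(m_t)}$ to grade $-\tfrac12$ or $0$, leaving only $p = 0$ and $p = 1$ to analyze.

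Two structural facts about $\g = \sll(n+1|n)$ with the odd principal nilpotent drive the elimination. First, every basis element $v_i^{(2i)}$ of $\g^f$ has grade $-i/2$ with $i \ge 1$, so $\g^f \cap \g(0) = 0$ and $\g^f_{-1/2} = \CC\, v_1^{(2)}$ is one-dimensional; in particular, any $\sharp$-projection of an element of $\g(0)$ vanishes. Second, bi-orthogonality $(v_i^{(m)} | \tilde v^j_{(n)}) = \delta_{ij}\delta_{mn}$, together with $\tilde f = 2\tilde v^1_{(0)} = -v_1^{(2)}$, forces $(\tilde f | v_{i_0}^{(m_0)})$ to be nonzero only at $(i_0, m_0) = (1, 0)$ and $(\tilde v^{i_p}_{(m_p+1)} | \tilde f)$ only at $(i_p, m_p) = (1, 1)$.

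For $p = 0$, I would check in both subcases $\tilde v^{i_0}_{(m_0)} \in \g(-\tfrac12)$ and $\tilde v^{i_0}_{(m_0)} \in \g(0)$ that either the first or the second factor of the summand vanishes, by applying the two observations above. For $p = 1$, the chain condition forces $j_0 = -\tfrac12, j_1 = 0$ (hence $i_0 = m_0 + 1$ and $i_1 = m_1$), and the bi-orthogonality observation then uniquely selects the chain $(i_0, m_0) = (1, 0)$, $(i_1, m_1) = (1, 1)$. For this surviving chain one computes that the three brackets $[\tilde f, v_1^{(0)}]$, $[\tilde v^1_{(1)}, v_1^{(1)}]$, $[\tilde v^1_{(2)}, \tilde f]$ are multiples of $\fh$, zero, and $\fh$ respectively, all lying in $\g(0)$, so every $\sharp$-projection vanishes, leaving only the product of scalars and operators.

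The final numerical step uses $(\tilde f | v_1^{(0)}) = -2$, $(\tilde v^1_{(1)} | v_1^{(1)}) = 1$, and $(\tilde v^1_{(2)} | \tilde f) = 1$, all immediate from \eqref{eq: sl(2|1) B basis} and \eqref{eq: sl(2|1) B perp}, together with $(D+\chi)^2 = \partial + \lambda$, which follows from $D^2 = \partial$, $\chi^2 = -\lambda$, and the commutation $D\chi + \chi D = 2\lambda$ coming from \eqref{eq: D chi relation}. Applied to the constant $1$, this yields the claimed value $2k^3 \lambda \chi$. The main obstacle in the argument is the combinatorial bookkeeping needed to rule out all chains with $i_0 > 1$ or $i_1 > 1$: although this ultimately reduces to $\g^f \cap \g(0) = 0$ and $\dim \g^f_{-1/2} = 1$, it requires a careful case analysis of the scalar pairings versus the $\sharp$-projected commutators within the triple product.
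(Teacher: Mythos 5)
Your proposal is correct and follows essentially the same route as the paper's proof: substitute $a=b=\tilde f$ into Theorem \ref{thm:w bracket}, use the grading constraints $-\tfrac12\le j_0<\cdots<j_p<\tfrac12$ to reduce to short chains, and eliminate everything except the chain $(i_0,m_0)=(1,0)$, $(i_1,m_1)=(1,1)$ via bi-orthogonality of $\mathcal{B}$ and $\mathcal{B}^\perp$ together with the vanishing of the relevant $\sharp$-projections, the paper merely organizing the elimination by the grade of $v_{i_0}^{(m_0)}$ (including two grades, $-1$ and $-\tfrac12$, that the chain conditions already exclude) rather than by $p$. The one spot where your two stated observations do not literally suffice is the $p=0$, $(i_0,m_0)=(1,0)$ summand, whose second factor involves $[\tilde v^1_{(1)},\tilde f]\propto f$; there you also need $f^\sharp=0$, which holds because $f=-[e,F]\in[e,\g]$ (equivalently $f=v_2^{(3)}$ with $3<d_2$).
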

\begin{proof}
  Let us show the only nonzero terms in $[\w(\bar{a}){}_\Lambda \w(\bar{b})]+\w(\overline{[a,b]})$ are given by  \eqref{eq:N=2 tilde{f} and tilde{f}}. 
  By relation \eqref{eq:theorem, SUSY bracket}, we have $v_{i_0}^{(m_0)}\in \g(j)$ for $j=-1, -\frac{1}{2}, 0, \frac{1}{2}.$
  \begin{enumerate}[(i)]
    \item Let $v_{i_0}^{(m_0)}\in \g(-1)$. Then $(b|v_{i_0}^{(m_0)})=0$. Since $[b,v_{i_0}^{(m_0)}]\in\g(-\frac32)$, then $[b,v_{i_0}^{(m_0)}]^\sharp\neq 0$ implies  $[\tilde{f},v_{i_0}^{(m_0)}] \in \CC v_3^0$ (see Lemma \ref{lemma:sl repn}). This in turn shows that $\tilde{v}^{i_t}_{(m_t+1)} \in \bigoplus_{j\geq \frac{3}{2}}\g(j)$ for $t\in \ZZ_{+}$, so that $(\tilde{v}^{i_t}_{(m_t+1)}| a)=0$. From $\g^{f}\cap \n=\emptyset$, we also have $[\tilde{v}^{i_t}_{(m_t+1)}, a]^\sharp=0$. Hence $v_{i_0}^{(m_0)}\in \g(-1)$ cannot give rise to nonzero terms.
    \item Let $v_{i_0}^{(m_0)}\in \g(-\frac{1}{2})$. Then $(b|v_{i_0}^{(m_0)})=0$ and $[b,v_{i_0}^{(m_0)}]^\sharp\neq 0$ implies $v_{i_0}^{(m_0)} \in \CC \tilde{f}$. However, if  $v_{i_0}^{(m_0)} \in \CC \tilde{f}$ then $\tilde{v}^{i_0}_{(m_0+1)}=0$. Hence we cannot get any nonzero terms.
    \item Let $v_{i_0}^{(m_0)}\in \g(0)$. Then $(b|v_{i_0}^{(m_0)})=0$ and $[b,v_{i_0}^{(m_0)}]^\sharp\neq 0$ implies $v_{i_0}^{(m_0)} \in \CC \fh $ and $v^{i_0}_{(m_0+1)} \in \CC e$. In this case, $[\tilde{v}^{i_0}_{(m_0+1)}, g]^\sharp=0$ for any $g\in \g$,  since $[e,\g] \oplus \g^f=0$.
    \item Let $v_{i_0}^{(m_0)}\in \g(\frac{1}{2})$. The only nonzero term arises only when $v_{i_0}^{(m_0)}\in \CC\tilde{e}$ and $v_{i_1}^{(m_1)}\in \CC U$. In this case we get relation \eqref{eq:N=2 tilde{f} and tilde{f}}.
  \end{enumerate}
\end{proof}

By Lemma \ref{lem:N=2 tilde{f} and tilde{f}}, we have 
$ \{\w(\bar{\tilde{f}}{})_\Lambda \w(\bar{\tilde{f}})\}=-2\w(\bar{F})+2k^3 \lambda \chi$.
Hence we proved Proposition \ref{prop:N=1 to N=2, sl(n+1|n)} (1). 

\subsubsection{Additionnal lemmas.}
To prove Proposition \ref{prop:N=1 to N=2, sl(n+1|n)} (2), let us introduce some more lemmas.
We remind that ${\cI}= \{ 1,2,\cdots, 2n+1\}$ and ${\cJ}= \{ (i,m)| i\in {\cI}, \ m=0,1,\cdots, d_i\}$.

\begin{lemma}\label{eq:key lemma for brackets}
  Let $i, j \in {\cI}$ and $(j,t)\in {\cJ}$. Then 
  \begin{enumerate}
    \item  $[\tilde{v}^i_{(2i)}, v_j^{(t)}]^\sharp=0$ for $t \leq 2j-1$,
    \item  $[\tilde{v}^i_{(2i-1)}, v_j^{(t)}]^\sharp=0$ for $t \leq 2j-2$.
  \end{enumerate}
\end{lemma}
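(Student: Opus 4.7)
The strategy is to convert the $\sharp$-projection into a pairing against $\tilde v^k_{(2k)}$, and to reduce (2) to (1) by a single Jacobi move.

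The preparatory observation, which I would verify first, is that $\tilde v^k_{(2k)}\in\g^e$. Indeed, invariance of the form forces the $R_i$'s to be pairwise orthogonal (they are pairwise non-isomorphic irreducible $\osp(1|2)$-modules), so $\tilde v^k_{(2k)}$ lies inside $R_k$, and within $R_k$ the subspace $\g(k/2)\cap R_k$ is the one-dimensional line $\CC v_k^{(0)}\subset\g^e$. Consequently, by invariance, $(\tilde v^k_{(2k)}\mid[e,z])=0$ for every $z\in\g$, and writing $x=x^\sharp+x''\in\g^f\oplus[e,\g]$ with $x^\sharp=\sum_k a_k v_k^{(2k)}$ yields
\begin{equation*}
  a_k=\pm(\tilde v^k_{(2k)}\mid x),\qquad x^\sharp=0\ \Longleftrightarrow\ (\tilde v^k_{(2k)}\mid x)=0\text{ for all }k\in\cI.
\end{equation*}

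For part (1), I would apply this criterion with $x=[\tilde v^i_{(2i)},v_j^{(t)}]$ and use invariance to move the bracket: $(\tilde v^k_{(2k)}\mid[\tilde v^i_{(2i)},v_j^{(t)}])=\pm([\tilde v^k_{(2k)},\tilde v^i_{(2i)}]\mid v_j^{(t)})$. Since both $\tilde v^k_{(2k)}$ and $\tilde v^i_{(2i)}$ lie in $\g^e$, which is closed under the bracket (immediate Jacobi computation using $[e,\tilde v^k_{(2k)}]=[e,\tilde v^i_{(2i)}]=0$), the commutator sits in $\g^e\cap\g(\tfrac{i+k}{2})$. This space is at most one-dimensional, spanned by $\tilde v^{i+k}_{(2(i+k))}$ when $i+k\le 2n$, so the pairing reduces up to sign to $\delta_{i+k,j}\delta_{2(i+k),t}$. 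This is nonzero only if $t=2j$, a case excluded by $t\le 2j-1$.

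Part (2) will follow from (1) by rewriting $v_j^{(t)}\propto[e,v_j^{(t+1)}]$, valid for $t\le 2j-1$ because $\ad e$ acts by a nonzero scalar on every non-highest weight vector of $R_j$. A single Jacobi step then gives
\begin{equation*}
  [\tilde v^i_{(2i-1)},v_j^{(t)}]\ \propto\ [[\tilde v^i_{(2i-1)},e],v_j^{(t+1)}]\,\pm\,[e,[\tilde v^i_{(2i-1)},v_j^{(t+1)}]].
\end{equation*}
The second summand lies in $[e,\g]$, so its $\sharp$-part is zero. For the first, $[\tilde v^i_{(2i-1)},e]$ is proportional to $\tilde v^i_{(2i)}$ by the same $\ad e$-shift on the dual basis used above. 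Therefore $[\tilde v^i_{(2i-1)},v_j^{(t)}]^\sharp$ is a scalar multiple of $[\tilde v^i_{(2i)},v_j^{(t+1)}]^\sharp$, which by part (1) vanishes whenever $t+1\le 2j-1$, i.e.\ precisely on the range $t\le 2j-2$.

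The main technical point will be the preparatory identification of $\tilde v^k_{(2k)}$ with a scalar multiple of $v_k^{(0)}$ and, dually, tracking the nonvanishing of the $\ad e$-scalars in the range $1\le m\le 2j$; everything downstream is a mechanical combination of invariance of the form, one Jacobi identity, and $\osp(1|2)$-representation theory. Sign bookkeeping coming from the parities of $e$, $f$ and the $\tilde v^i_{(m)}$'s is routine but must be done carefully to secure the displayed proportionalities.
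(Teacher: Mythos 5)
Your proof is correct, but it follows a different route from the paper's on both parts, so a comparison is worth recording. For (1), the paper argues directly inside the Lie superalgebra: since $t\le 2j-1$, one writes $v_j^{(t)}=[e,u]$ and a single Jacobi identity together with $\tilde v^i_{(2i)}\in\g^e$ shows $[\tilde v^i_{(2i)},v_j^{(t)}]=[e,[u,\tilde v^i_{(2i)}]]\in[e,\g]$, whence the $\sharp$-projection vanishes because $\g=\g^f\oplus[e,\g]$. You instead dualize: you characterize $x^\sharp=0$ by orthogonality to all $\tilde v^k_{(2k)}$, move the bracket across the invariant form, and use that $\g^e$ is a subalgebra meeting each $\g(m/2)$ in the line $\CC v_m^{(0)}$. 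Both are sound; the paper's version is slightly leaner and does not invoke the multiplicity-freeness of the $\osp(1|2)$-decomposition (your one-dimensionality step does, though that hypothesis holds here), while your version makes transparent exactly which weight $(j,t)=(i+k,2(i+k))$ could contribute. In fact your computation can be shortened: since $[\tilde v^k_{(2k)},\tilde v^i_{(2i)}]\in\g^e$ and $\g^e\perp[e,\g]$ by invariance, the pairing with $v_j^{(t)}\in[e,\g]$ vanishes without identifying the bracket explicitly. For (2), the paper gives a standalone Jacobi argument writing $v_j^{(t)}=[E,u']$ and $\tilde v^i_{(2i-1)}=[f,\tilde u]$ with $\tilde u\in\g^e$, using $[E,\g]\subset[e,\g]$ and $[[f,\tilde u],E]=0$; you instead shift by $\ad e$ to reduce (2) to (1), using $v_j^{(t)}\propto[e,v_j^{(t+1)}]$ (nonzero scalar for $t\le 2j-1$) and $[e,\tilde v^i_{(2i-1)}]\propto\tilde v^i_{(2i)}$. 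Your reduction is valid — the range $t\le 2j-2$ maps exactly onto the range $t+1\le 2j-1$ where (1) applies, and the discarded summand lies in $[e,\g]$ — and it has the small advantage of reusing (1) rather than repeating the argument; the nonvanishing of the $\ad e$-scalars you flag is guaranteed by irreducibility of $R_j$ (cf.\ the paper's Lemma \ref{lem:N=1 bracket for F in general--}).
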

\begin{proof}
For (1), observe that $v_j^{(t)}$ for $t \leq 2j-1$ is in $[e,\g]$. Let $u_j^{(t+1)} \in \g$ satisfy $[e,u_j^{(t+1)}]= v_j^{(t)}$. Since $\tilde{v}_{(2i)}^i \in \g^e$, we have 
$ [e,[u_j^{(t+1)}, \tilde{v}_{(2i)}^i]]= [v_j^{(t)},\tilde{v}_{(2i)}^i ] \in [e,\g]. $
Hence, from $[e,\g]\cap \g^f=\emptyset$,  we proved (1).\\
For (2), let $t \leq 2j-2$. Observe there are ${u'}_j^{(t+2)} \in \g$ and $\tilde{u}_{(2i)}^i \in \g^e$ such that $v_j^{(t)}=[E, {u'}_j^{(t+2)}]$ and  $\tilde{v}^i_{(2i-1)}=[f, \tilde{u}_{(2i)}^i]$. 
Now we have 
$ [\tilde{v}^i_{(2i-1)}, v_j^{(t)}]^\sharp= [E,[[f,\tilde{u}_{(2i)}^i],{u'}_j^{(t+2)}]]^\sharp + [[[f,\tilde{u}_{(2i)}^i], E],{u'}_j^{(t+2)}]^\sharp.$
Since $[[f,\tilde{u}_{(2i)}^i],E]=0$ and $[E,[[f,\tilde{u}_{(2i)}^i],{u'}_j^{(t+2)}]]^\sharp=0$, we conclude $[\tilde{v}^i_{(2i-1)}, v_j^{(t)}]^\sharp=0$.
\end{proof}

Now, we can list nonzero terms in relation \eqref{eq:theorem, SUSY bracket} when $b=\tilde{f}$ or $F$. To lighten the notations, let us denote
\begin{equation} \label{eq: tilde of omega}
\widetilde{\w}(\overline{[x,y]}):= \w(\overline{[x,y]}^\sharp)-k(x|y)(\chi+D).
\end{equation}

\begin{lemma} \label{lem:N=1 bracket F and tilde f}
  Let $b=F$. For $a= \tilde{f}$, the bracket $[\w(b){}_{\Lambda}\w(a)]$ can be computed similarly to what has been done for \eqref{eq: N=2 bracket of tilde{f} and F}.
  If $a=F$, the only nonzero terms in \eqref{eq:theorem, SUSY bracket} can be  listed as follows:
  \begin{equation}\label{eq:A5}
    \begin{aligned}
    &\widetilde{\w}(\overline{[b,v_1^{(0)}]})\widetilde{\w}(\overline{[\tilde{v}^1_{(1)},v^{(1)}_1]})\widetilde{\w}(\overline{[\tilde{v}^1_{(2)},a]})= \frac{1}{2}k \,  \w(\bar{\tilde{f}}) (\chi+D)\w(\bar{\tilde{f}}),\\
    &\widetilde{\w}(\overline{[b,v_1^{(0)}]})\widetilde{\w}(\overline{[\tilde{v}^1_{(1)},v^{(3)}_2]})\widetilde{\w}(\overline{[\tilde{v}^2_{(4)},a]})= -\frac{1}{4} k \, \w(\bar{\tilde{f}})^2 \chi,\\
    &\widetilde{\w}(\overline{[b,v_2^{(0)}]})\widetilde{\w}(\overline{[\tilde{v}^2_{(1)},v^{(1)}_1]})\widetilde{\w}(\overline{[\tilde{v}^1_{(2)},a]})=-\frac{1}{4}k(\chi+D)\w(\bar{\tilde{f}})^2 ,\\
    &\widetilde{\w}(\overline{[b,v_2^{(0)}]})\widetilde{\w}(\overline{[\tilde{v}^2_{(1)},v^{(1)}_2]})\widetilde{\w}(\overline{[\tilde{v}^2_{(2)},v^{(2)}_2]})\widetilde{\w}(\overline{[\tilde{v}^2_{(3)},v^{(3)}_2]})\widetilde{\w}(\overline{[\tilde{v}^2_{(4)},a]})= \frac{1}{2}k^5\, (\lambda+\partial)^2(\chi+D),\\
    &\widetilde{\w}(\overline{[b,v_2^{(0)}]})\widetilde{\w}(\overline{[\tilde{v}^2_{(1)},v^{(1)}_2]})\widetilde{\w}(\overline{[\tilde{v}^2_{(2)},a]})= k^2(\lambda+\partial)\w(\bar{F}),\\
    &\widetilde{\w}(\overline{[b,v^{(2)}_2]})\widetilde{\w}(\overline{[\tilde{v}^2_{(3)},v^{(3)}_2]})\widetilde{\w}(\overline{[\tilde{v}^2_{(4)},a]})= k^2 \w(\bar{F})\lambda,\\
    &\widetilde{\w}(\overline{[b,v_2^{(0)}]})\widetilde{\w}(\overline{[\tilde{v}^2_{(1)},v^{(3)}_2]})\widetilde{\w}(\overline{[\tilde{v}^2_{(4)},a]})=\frac{1}{2}k^2 \, (\chi+D)\w(\bar{F})\chi.\\
    \end{aligned}
  \end{equation}
\end{lemma}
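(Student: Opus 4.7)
The plan is to apply Theorem \ref{thm:w bracket} directly with $a = b = F = v_2^{(4)} \in \g^f_{-1}$, so that $t_1 = t_2 = 1$ in \eqref{eq:theorem, SUSY bracket}. The $\Lambda$-bracket $\{\w(\bar F){}_\Lambda \w(\bar F)\}$ is then a sum over strictly increasing chains $(i_0,m_0) < \cdots < (i_p,m_p)$ in $\cJ$ whose associated grades $j_s$ (defined by $\tilde{v}^{i_s}_{(m_s)} \in \g(j_s)$) lie in $\{-1,-\frac{1}{2},0,\frac{1}{2}\}$ with $-1 \leq j_0 < \cdots < j_p < 1$. In particular, each chain has at most four elements, regardless of $n$.

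The first step is a vanishing argument that reduces the problem to a finite enumeration. I would invoke Lemma \ref{eq:key lemma for brackets}, which forces the $\sharp$-projections $[\tilde{v}^i_{(2i)}, v_j^{(t)}]^\sharp$ and $[\tilde{v}^i_{(2i-1)}, v_j^{(t)}]^\sharp$ to vanish outside narrow ranges of $t$, together with the identity $[e,\g] \cap \g^f = \emptyset$ already exploited in Lemma \ref{lem:N=2 tilde{f} and tilde{f}}. Combined with strict monotonicity of the grades, these constraints restrict all surviving chains to those supported on the principal $\sll(2|1)$-subalgebra $R_1 \oplus R_2$, collapsing an a priori infinite case analysis to a manageable finite one.

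The second step is to enumerate the surviving chains and evaluate each contribution. Organizing by the initial grade $j_0$ and using the explicit bases \eqref{eq: sl(2|1) B basis}--\eqref{eq: sl(2|1) B perp}, the $\sll(2|1)$-brackets (sl-1)--(sl-4), and the relations \eqref{eq:action-f-tilde}, I would check at each link whether the pairing $(\tilde{v}^{i_{s-1}}_{(m_{s-1}+1)} | v_{i_s}^{(m_s)})$ or the projected bracket $[\tilde{v}^{i_{s-1}}_{(m_{s-1}+1)}, v_{i_s}^{(m_s)}]^\sharp$ is nonzero. A direct inspection will yield exactly six chains of length two, one for each ordered pair of grades in $\{-1,-\frac{1}{2},0,\frac{1}{2}\}$, together with a unique chain of length four passing through all four grades of $R_2$; these correspond respectively to the seven terms listed in \eqref{eq:A5}.

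The main obstacle I anticipate is sign and operator-ordering bookkeeping. The global prefactor $-(-1)^{p(a)p(b)+p(a)}$ in \eqref{eq:theorem, SUSY bracket}, combined with the graded commutation of $\chi$ past $D$ and of the $(\chi+D)$-factors through the $\w$-symbols inside the ordered product, must be tracked with care, especially for the length-four chain whose five pairing factors simplify using $(\chi+D)^2 = \lambda + \partial$ into the characteristic $\frac{1}{2}k^5(\lambda+\partial)^2(\chi+D)$ expression. Once signs are pinned down and the duality $(v_i^{(m)} | \tilde{v}^j_{(n)}) = \delta_{ij}\delta_{mn}$ is inserted, the identification with the seven right-hand sides of \eqref{eq:A5} reduces to a direct verification, and the bracket for $a = \tilde f$ follows by the same method illustrated in \eqref{eq: N=2 bracket of tilde{f} and F}.
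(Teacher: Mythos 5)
Your proposal is correct and follows essentially the paper's own route: the paper proves this lemma by the same direct application of Theorem \ref{thm:w bracket}, using Lemma \ref{eq:key lemma for brackets} together with the grading constraints to kill all chains outside $R_1\oplus R_2$ (exactly the term-tracking done in the proof of Lemma \ref{lem:N=2 tilde{f} and tilde{f}}), and then evaluating the seven surviving chains by the same computations that produced \eqref{eq:W(sl(2|1))bracket}, including the simplification $(\chi+D)^2=\lambda+\partial$. The only slip is cosmetic: with the normalization \eqref{eq: sl(2|1) B basis} one has $F=-\frac{1}{2}v_2^{(4)}$, not $F=v_2^{(4)}$.
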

\begin{proof}
Use Lemma \ref{eq:key lemma for brackets} and track all the nonzero terms as in the proof of Lemma \ref{lem:N=2 tilde{f} and tilde{f}}. Essentially, 
the computations done for relation \eqref{eq:W(sl(2|1))bracket} work.
\end{proof}

By equation \eqref{eq: N=2 bracket of tilde{f} and F}, and adding the contributions in relation \eqref{eq:A5}, we get 
\begin{equation} \label{eq:bracket of G and f,G}
  \begin{aligned}
    & \{ G{}_\Lambda \w(\tilde{f}) \}= (2\partial + \chi D + 2\lambda) \w(\tilde{f})\\
    & \{ G{}_\Lambda G\} = (2\partial + \chi D + 3\lambda) G -2k \lambda^2\chi
  \end{aligned}
\end{equation}
for $G:= -\frac{2}{k^2}\w(\bar{F})$. Hence $\w(\tilde{f})$ is $G$-primary of conformal weight $1$.

\begin{lemma} \label{lem:F and v_3 and v_4}
  Let $i\geq 3$ be an integer.
 Then 
  \begin{enumerate}
    \item  $  [ \tilde{e}, v_{2j}^{(4j)}]=2j \,  v^{(4j-2)}_{2j-1}, $
    \item  $  [ \tilde{e}, v_{2j-1}^{(4j-2)}]=-(2j-1) v^{(4j-2)}_{2j}. $
  \end{enumerate}
\end{lemma}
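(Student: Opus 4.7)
The plan is to prove both equalities by induction on the lowering-index $m$, exploiting that every $v_{2j}^{(m)}$ is obtained from the highest weight vector by repeated application of $\operatorname{ad}f$ (and likewise for $v_{2j-1}^{(m)}$). Since $\tilde{e}$ and $f$ are both odd and $[\tilde{e},f]=U$ by (sl-4), the graded Jacobi identity yields the recursions
\begin{equation*}
[\tilde{e},v_{2j}^{(m+1)}]=-[f,[\tilde{e},v_{2j}^{(m)}]]+[U,v_{2j}^{(m)}],\qquad [\tilde{e},v_{2j-1}^{(m+1)}]=-[f,[\tilde{e},v_{2j-1}^{(m)}]]+[U,v_{2j-1}^{(m)}].
\end{equation*}
Consequently the proof splits into two subtasks: identify the base cases $[\tilde{e},v_{2j}^{(0)}]$ and $[\tilde{e},v_{2j-1}^{(0)}]$, and compute the auxiliary brackets $[U,v_{2j}^{(m)}]$ and $[U,v_{2j-1}^{(m)}]$ in closed form.

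For the base cases, the theorem at the end of Section \ref{sec: sl(2|1) in detail} already supplies $[\tilde{e},v_{2j}^{(0)}]=0$, giving the starting point for (1). For (2), I would combine $v_{2j-1}^{(0)}=-[\tilde{f},v_{2j}^{(0)}]$ with one application of Jacobi, using $[\tilde{e},\tilde{f}]=-2\fh$ from (sl-2) together with the $\fh$-weight $j$ of $v_{2j}^{(0)}$, to express $[\tilde{e},v_{2j-1}^{(0)}]$ as an explicit scalar multiple of $v_{2j}^{(0)}$.

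The auxiliary $U$-brackets admit a parallel recursion: the relation $[U,f]=-\tilde{f}$ from (sl-4) combined with Jacobi gives $[U,v_{\ell}^{(m+1)}]=-[\tilde{f},v_{\ell}^{(m)}]+[f,[U,v_{\ell}^{(m)}]]$ for $\ell\in\{2j-1,2j\}$. Substituting the identities $[\tilde{f},v_{2j}^{(s)}]=(-1)^{s+1}v_{2j-1}^{(s)}$ and $[\tilde{f},v_{2j-1}^{(t)}]=(-1)^{t}v_{2j}^{(t+2)}$ from \eqref{eq:action-f-tilde}, a short induction produces a checkerboard pattern: $[U,v_{\ell}^{(m)}]$ vanishes for every other value of $m$ and equals a single basis vector of the sibling $R$-module on the remaining indices. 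Plugging this pattern back into the main recursion and iterating from the base case up to the extremal index $m=4j$ (resp.\ $m=4j-2$) telescopes the non-vanishing contributions into the coefficients stated in (1) and (2).

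The main obstacle is purely sign bookkeeping: the alternating factors $(-1)^{s+1}$ and $(-1)^{t}$ in the $\tilde{f}$-action must be tracked through both interleaved recursions so that the partial sums accumulate to the correct scalar at the extremal index; a misplaced sign at any step propagates through the telescoping and changes the final coefficient. No representation-theoretic input beyond graded Jacobi and the defining relations (sl-1)--(sl-4) of the principal $\sll(2|1)$-subalgebra is needed.
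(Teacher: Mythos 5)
Your strategy is genuinely different from the paper's. The paper's proof is a two-line argument: it first observes that $[\tilde e, v_{2j}^{(4j)}]$ must be a single scalar multiple $c\,v_{2j-1}^{(4j-2)}$, and then pins down $c$ by applying $\ad\tilde f$ to both sides, using $[\tilde f,\tilde e]=-2\fh$, $[\tilde f, v_{2j}^{(4j)}]=0$ and $[\tilde f, v_{2j-1}^{(4j-2)}]=v_{2j}^{(4j)}$ from \eqref{eq:action-f-tilde}; this gives $c=2j$ at once, and (2) is treated analogously. Your plan instead computes $[\tilde e, v_\ell^{(m)}]$ for \emph{every} $m$ by a double induction interleaved with the auxiliary brackets $[U,v_\ell^{(m)}]$. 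The two recursions you write down are correct, the base cases are indeed available from Section \ref{sec: sl(2|1) in detail}, and the checkerboard pattern for $[U,v_\ell^{(m)}]$ does hold; so the skeleton is sound, at the cost of being much longer than necessary for a statement that only concerns the extremal index.

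The gap is that the entire content of the lemma --- the final scalars --- is deferred to ``sign bookkeeping'' that you never perform, and when one actually performs it, part (2) does not come out as you claim. With the paper's normalizations ($[\tilde f, v_{2j}^{(0)}]=-v_{2j-1}^{(0)}$, $v^{(m+1)}=[f,v^{(m)}]$), your recursions give $[U,v_{2j}^{(2r+1)}]=v_{2j-1}^{(2r)}$, $[U,v_{2j}^{(2r)}]=0$, and then $[\tilde e, v_{2j}^{(2r)}]=r\,v_{2j-1}^{(2r-2)}$, so at $m=4j$ one recovers $2j\,v_{2j-1}^{(4j-2)}$ and (1) is confirmed. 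For (2), however, the base case is $[\tilde e, v_{2j-1}^{(0)}]=2j\,v_{2j}^{(0)}$, the auxiliary bracket starts at $[U,v_{2j-1}^{(0)}]=v_{2j}^{(1)}$, and the interleaved recursion yields $[\tilde e, v_{2j-1}^{(2r)}]=(2j-r)\,v_{2j}^{(2r)}$, hence $[\tilde e, v_{2j-1}^{(4j-2)}]=+\,v_{2j}^{(4j-2)}$ rather than $-(2j-1)\,v_{2j}^{(4j-2)}$; the same value $\beta=1$ is forced by the paper's own $\ad\tilde f$ argument once (1) is known, and it is confirmed in $\sll(2|1)$ where $[\tilde e, v_1^{(2)}]=[\tilde e,-\tilde f]=2\fh=+v_2^{(2)}$ by \eqref{eq: sl(2|1) B basis} (this case lies outside the stated range $i\geq 3$, but the recursion coefficient is uniform in $j$). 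So your assertion that the telescoping ``accumulates to the coefficients stated in (1) and (2)'' is not something the method delivers: carried out honestly it proves (1) but produces a coefficient for (2) that conflicts with the displayed formula, a discrepancy you would have to confront and resolve against the conventions of the appendix. As written, the proposal does not establish part (2).
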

\begin{proof}
  We know $[ \tilde{e}, v_{2j}^{(4j)}]= c\, v_{2j-1}^{(4j-2)}$ for a constant $c\in \CC$. Since  
  $[\tilde{f},[\tilde{e}, v_{2j}^{(4j)}]]=-[2\fh , v_{2j}^{(4j)}]= 2j  \, v_{2j}^{(4j)}$ and 
  $[\tilde{f},[\tilde{e}, v_{2j}^{(4j)}]]= c\,[\tilde{f}, v_{2j-1}^{(4j-2)}]= c\,  v_{2j}^{(4j)}$, Hence  $c=2j$ and (1) follows.\\
  The proof for (2) is analogous to the one for (1) once one uses equation \eqref{eq:action-f-tilde}.
\end{proof}

\begin{lemma} \label{lem:key lemma for brackets-2}
  Suppose $(j,m)\in \mathcal{J}$. Then the following properties hold.
  \begin{enumerate}
  \item If $\widetilde{w}(\overline{[F,v_j^{(m)}]})\neq 0 $, then $(j,m)= (2,0)$ or $(j, 2j-2)$.
  \item If $\widetilde{w}(\overline{[\tilde{v}^2_{(1)},v_j^{(m)}]})\neq 0 $, then $(j,m)=(j,2j-1)$ or $(j,m)=(2,1)$.
  \item If $\widetilde{w}(\overline{[\tilde{v}^i_{(2i-1)},v_j^{(m)}]})\neq 0 $ for $i\in \mathcal{I}$, then 
  \[(j,m)\in \{ (i',2i'-1), (i'',2i'') \in \mathcal{J} | i'\geq i, i''>i\}.\]
  \item If $\widetilde{w}(\overline{[\tilde{v}^i_{(2i)},v_j^{(m)}]})\neq 0 $ for $i\in \mathcal{I}$, Then
  \[(j,m)\in \{  (i',2i') \in \mathcal{J} | i'\geq i\}.\]
  \item Let $(i_0, m_0), (i_1, m_2), \cdots (i_p,m_p) \in \mathcal{J}$ and $a,b\in \g^f$. If 
   \[ \widetilde{w}(\overline{[b,v_{i_0}^{(m_0)}]}) \widetilde{w}(\overline{[\tilde{v}^{i_0}_{(m_0+1)},v_{i_1}^{(m_1)}]}) \cdots \widetilde{w}(\overline{[\tilde{v}^{i_p}_{(m_p+1)},a]})\neq 0,\] 
   then $m_t <2 i_t$ for $t=0,1,\cdots, p$. 
  \end{enumerate}
  \end{lemma}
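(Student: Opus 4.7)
My plan is to split $\widetilde{\w}(\overline{[x,y]}) = \w(\overline{[x,y]}^\sharp) - k(x|y)(\chi+D)$ into its two summands and analyze when either can be nonzero. Two tools will do most of the work: the invariant bilinear form pairs distinct $\osp(1|2)$-irreducibles $R_i, R_j$ (with $i \neq j$) trivially, so the duality $(v_i^{(m)} | \tilde{v}^j_{(n)}) = \delta_{ij}\delta_{mn}$ pins down the pairing term; and Lemma~\ref{eq:key lemma for brackets} together with the $\fh$-grading restricts when the bracket piece can project nontrivially onto $\g^f = \bigoplus_k \CC v_k^{(2k)}$. Part (5) is immediate from the definitions: the factor $\tilde{v}^{i_t}_{(m_t+1)}$ is only defined for $m_t + 1 \leq 2 i_t$, so the product vanishes otherwise.

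For (1), the pairing forces $v_j^{(m)} \in R_2 \cap \g(1) = \CC E$, giving $(j,m) = (2,0)$; and since $F \in \osp(1|2)$ preserves each $R_j$ and acts there as a lowering operator, $[F, v_j^{(m)}]$ is a nonzero multiple of $v_j^{(m+2)}$ for $m \leq 2j-2$, landing in $\g^f$ exactly at $m = 2j-2$. Part (2) is analogous: $\tilde{v}^2_{(1)} \in R_2 \cap \g(-1/2) = \CC f$, so $[\tilde{v}^2_{(1)}, v_j^{(m)}] \propto v_j^{(m+1)}$, hitting $\g^f$ iff $m = 2j-1$, while the pairing picks out $(j,m) = (2,1)$.

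For (3) and (4), Lemma~\ref{eq:key lemma for brackets} first restricts $m$ to $\{2j-1, 2j\}$ (resp.\ $\{2j\}$). A direct grading calculation shows $[\tilde{v}^i_{(2i-1)}, v_j^{(m)}] \in \g\bigl(\tfrac{i+j-1-m}{2}\bigr)$, and nonvanishing $\sharp$-projection requires this grade to equal $-k/2$ for some $k \geq 1$: for $m = 2j-1$ this yields $j > i$ (contributing $(i',2i'-1)$ with $i' > i$); for $m = 2j$ it yields $j \geq i$. In (4), $[\tilde{v}^i_{(2i)}, v_j^{(2j)}]$ has grade $\tfrac{i-j}{2}$, forcing $j > i$. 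The pairing terms account for the diagonal entries $(i, 2i-1)$ in (3) and $(i, 2i)$ in (4), both in the allowed sets. All that remains is to exclude the boundary case $(j, m) = (i, 2i)$ in (3), which is where the main technical effort lies.

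For that boundary case, I would write $\tilde{v}^i_{(2i-1)} = \lambda_i v_i^{(1)}$ (since $R_i \cap \g(\tfrac{i-1}{2})$ is one-dimensional) and apply super-Jacobi together with $[f, v_i^{(2i)}] = 0$ to obtain
\begin{equation*}
    [\tilde{v}^i_{(2i-1)}, v_i^{(2i)}] = \lambda_i \, [f, [v_i^{(0)}, v_i^{(2i)}]].
\end{equation*}
The element $H := [v_i^{(0)}, v_i^{(2i)}]$ lies in the Cartan $\g(0)$, and $\ad(f) \colon \g(0) \to \g(-1/2)$ is a bijection: it is injective since $\g^f \cap \g(0) = 0$, and $\dim \g(0) = \dim \g(-1/2) = 2n$. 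Because $\ad(f)(U) \propto v_1^{(2)}$ and $\g^f \cap \g(-1/2) = \CC v_1^{(2)}$, the vanishing of $[f, H]^\sharp$ reduces to the vanishing of the $U$-component of $H$, which by invariance is proportional to $(U | H) = ([U, v_i^{(0)}] | v_i^{(2i)})$. Using $U = [e, \tilde{f}]$, super-Jacobi, and $v_i^{(0)} \in \g^e$, this simplifies to $[e, [\tilde{f}, v_i^{(0)}]]$, which vanishes for $i$ even (where $[\tilde{f}, v_{2j}^{(0)}] = -v_{2j-1}^{(0)} \in \g^e$) and lies in $R_{i+1}$ for $i$ odd (where $[\tilde{f}, v_{2j-1}^{(0)}] = v_{2j}^{(2)}$); in either case it is orthogonal to $v_i^{(2i)} \in R_i$. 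Identifying $U$ as the unique Cartan element responsible for any potential $v_1^{(2)}$-contribution, and then computing $[U, v_i^{(0)}]$ in the right irreducible component, is the crux of the whole argument.
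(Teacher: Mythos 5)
Your proof is correct and follows the same overall strategy as the paper's: split $\widetilde{\w}(\overline{[x,y]})$ into its $\sharp$-projected bracket summand and its pairing summand, use the duality $(v_i^{(m)}|\tilde{v}^j_{(n)})=\delta_{ij}\delta_{mn}$ to locate the pairing contributions, and combine Lemma \ref{eq:key lemma for brackets} with the $\ad\fh$-grading to constrain the $\sharp$-projection; your treatment of (1), (2), (4) and (5) coincides with the paper's (much terser) argument. The only genuine divergence is the exclusion of the boundary case $(j,m)=(i,2i)$ in part (3), which both you and the paper correctly identify as the crux. There the paper expands $[\tilde{v}^i_{(2i-1)},v_i^{(2i)}]^\sharp$ against the dual basis, moves $v_1^{(0)}\propto\tilde{e}$ across the bracket by invariance, and invokes Lemma \ref{lem:F and v_3 and v_4} to see that $[\tilde{e},v_i^{(2i)}]$ lies in $R_{i\pm1}$ and hence pairs to zero with $\tilde{v}^i_{(2i-1)}\in R_i$. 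You instead write $\tilde{v}^i_{(2i-1)}\propto[f,v_i^{(0)}]$, push the computation into the Cartan via $H=[v_i^{(0)},v_i^{(2i)}]\in\g(0)$, and reduce to $(U|H)=\pm\big([e,[\tilde{f},v_i^{(0)}]]\,\big|\,v_i^{(2i)}\big)=0$ by the same orthogonality of distinct $\osp(1|2)$-components. Both routes are valid and of comparable length; yours carries the small extra burden of justifying that the $v_1^{(2)}$-component of $[f,H]$ is detected by $(U|H)$, which is cleanest to see directly from the dual-basis form of $\sharp$, namely $([f,H]|\tilde{e})=-(H|[f,\tilde{e}])=-(H|U)$, rather than from the dimension count for $\ad f:\g(0)\to\g(-\tfrac12)$ that you sketch (though that argument also works).
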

\begin{proof}
 (1), (2) directly follow from the $\osp(1|2)$ representation theory and the fact that for $(i,m), (i',m')\in \mathcal{J}$ we have $(\tilde{v}^i_{(m)}|v_{i'}^{(m')})\neq 0$ iff $i=i'$ and $m=m'$. (5) holds since if $m_t\geq 2i_t$ then $\tilde{v}^{i_t}_{(m_t+1)}=0$. For (3), Lemma \ref{eq:key lemma for brackets} shows $\widetilde{w}(\overline{[\tilde{v}^i_{(2i-1)},v_j^{(m)}]})\neq 0 $ only when $m=2j-1$ or $2j$. Now, since $\g^f \subset \bigoplus_{t<0} \g(t)$, we need $j\geq i$. In addition, we have $\widetilde{w}(\overline{[\tilde{v}^i_{(2i-1)},v_i^{(2i)}]})=0$ since $[\tilde{v}^i_{(2i-1)},v_i^{(2i)}]^\sharp = (v_1^{(0)}|[\tilde{v}^i_{(2i-1)},v_i^{(2i)}])\tilde{v}^1_{(0)}$ and $(v_1^{(0)}|[\tilde{v}^i_{(2i-1)},v_i^{(2i)}])=(-1)^{i+1}(\tilde{v}^i_{(2i-1)}|[\tilde{e},v_i^{(2i)}])=0$ by Lemma \ref{lem:F and v_3 and v_4}.
 (4) is proved similarly.
\end{proof}

\begin{lemma} \label{lem:N=1 bracket for F in w(sl(3|2))}
  Suppose $ n,i\geq 2$ are integers and $i\leq n$. Take $a=v^{(4i-2)}_{2i-1}$ and $b=F=-\frac12\,v_2^{(4)}$ in $\g=\sll(n+1|n)$. 
  We restrict ourselves to the terms in \eqref{eq:theorem, SUSY bracket} such that
  \begin{equation} \label{eq:W(sl(3|2))assumption}
   i_0, i_1, \cdots, i_p\in\{1,2,2i-1, 2i\}\,.
  \end{equation}
  Then, the only possibly nonzero terms in  \eqref{eq:theorem, SUSY bracket} satisfying \eqref{eq:W(sl(3|2))assumption}
  are listed below:
      \begin{equation} \label{eq:N=1 bracket for F and v3 w(sl(3|2))}
        \begin{aligned}
            &\widetilde{\w}(\overline{[b,v^{(0)}_2]})\,\widetilde{\w}(\overline{[\tilde{v}^2_{(1)},v^{(1)}_2]})\,\widetilde{\w}(\overline{[\tilde{v}^2_{(2)},a]})=  \frac{2i-1}{2} k^2 \,(\lambda+\partial)\,  \w(\bar{v}_{2i-1}^{(4i-2)}),\\
            &\widetilde{\w}(\overline{[b,v^{(0)}_2]})\,\widetilde{\w}(\overline{[\tilde{v}^2_{(1)},v^{(4i-3)}_{2i-1}]})\,\widetilde{\w}(\overline{[\tilde{v}^{2i-1}_{(4i-2)},a]})=\frac{k^2}{2}(-\lambda+\chi D)\, \w(\bar{v}_{2i-1}^{(4i-2)}),\\
            &\widetilde{\w}(\overline{[b,v^{(4i-4)}_{2i-1}]})\,\widetilde{\w}(\overline{[\tilde{v}^{2i-1}_{(4i-3)},v^{(4i-3)}_{2i-1}]})\,\widetilde{\w}(\overline{[\tilde{v}^{2i-1}_{(4i-2)},a]})=k^2\, \w(\bar{v}_{2i-1}^{(4i-2)}) \lambda.
        \end{aligned}
      \end{equation}
\end{lemma}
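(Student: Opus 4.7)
The plan is to apply the bracket formula of Theorem \ref{thm:w bracket} to $a=v_{2i-1}^{(4i-2)}\in\g^f_{-(2i-1)/2}$ and $b=F\in\g^f_{-1}$, restrict the summation to sequences with $i_t\in\{1,2,2i-1,2i\}$, and show that only the three listed terms survive. The argument splits into three steps: enumerating admissible starting indices $(i_0,m_0)$, enumerating admissible ending indices $(i_p,m_p)$, and then enumerating the remaining full sequences and computing their contributions.

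For the starting indices, Lemma \ref{lem:key lemma for brackets-2}(1) forces $(i_0,m_0)\in\{(2,0)\}\cup\{(j,2j-2):j\in\mathcal{I}\}$, so intersection with the restriction on $i_0$ leaves only the five candidates $(1,0),(2,0),(2,2),(2i-1,4i-4),(2i,4i-2)$. For the ending indices, the pairing $(\tilde{v}^{i_p}_{(m_p+1)}|a)$ is nonzero only at $(2i-1,4i-3)$, and the $\sharp$-projection of $[\tilde{v}^{i_p}_{(m_p+1)},a]$ for each $i_p\in\{1,2,2i-1,2i\}$ lands, by Lemma \ref{lem:F and v_3 and v_4} and \eqref{eq:action-f-tilde}, at a single weight position in some $R_j$; only the case $\tilde{v}^2_{(2)}=\fh$ hits a lowest weight, giving $[\fh,a]=-\tfrac{2i-1}{2}a\in\g^f$, while the other candidates produce non-lowest weights such as $[\tilde{e},a]\propto v_{2i}^{(4i-2)}$ or $[E,a]\propto v_{2i-1}^{(4i-4)}$ that project trivially. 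Hence the admissible endings are $(2,1)$ and $(2i-1,4i-3)$.

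To enumerate sequences, I would compute the weights $j_t=(m_t-i_t)/2$ and enforce the strict ordering $j_{t-1}<j_t$. The starting weights (in the order listed) are $-1/2,-1,0,(2i-3)/2,i-1$, while the ending weights are $-1/2$ for $(2,1)$ and $i-1$ for $(2i-1,4i-3)$. A case check shows that direct transitions $(i_0,m_0)\to(i_p,m_p)$ with $p=1$ are combinatorially admissible for the three targeted pairs together with $(1,0)\to(2i-1,4i-3)$ and $(2,2)\to(2i-1,4i-3)$; the latter two are eliminated because $[U/2,v_{2i-1}^{(4i-3)}]\in R_{2i}$ and $[e/2,v_{2i-1}^{(4i-3)}]\propto v_{2i-1}^{(4i-4)}\in R_{2i-1}$ sit at non-lowest weight positions, so their $\sharp$-projections vanish while the corresponding pairings are trivially zero. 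Sequences with $p\geq 2$ are ruled out analogously: Lemma \ref{lem:key lemma for brackets-2}(2)--(5) combined with the strict ordering and the restriction $i_t\in\{1,2,2i-1,2i\}$ force every longer sequence to contain a vanishing intermediate factor.

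Finally, for each of the three surviving pairs I would compute the contribution explicitly using $[F,E]^\sharp=0$ with $(F|E)=1$, the middle commutators (e.g.\ $[\tilde{v}^2_{(1)},v_2^{(1)}]=\fh\notin\g^f$ with pairing $1$, and $[\tilde{v}^2_{(1)},v_{2i-1}^{(4i-3)}]\propto v_{2i-1}^{(4i-2)}\in\g^f$), and the last factor via either $[\fh,a]=-\tfrac{2i-1}{2}a$ or the pairing $(\tilde{v}^{2i-1}_{(4i-2)}|a)=1$; combined with the global sign $(-1)^{p(a)p(b)+p(a)}=1$, the $k^2$ prefactor coming from the two $k$-coupling contributions, and the definition \eqref{eq: tilde of omega} of $\widetilde{\w}$, these reproduce the three expressions in \eqref{eq:N=1 bracket for F and v3 w(sl(3|2))}. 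The main obstacle will be the case analysis in the third step: several sequences satisfy the weight ordering but vanish through the subtle non-lowest-weight behavior of the $\sharp$-projection, and one needs a clean tabulation of all relevant $(\tilde{v}^i_{(\cdot)},v_j^{(\cdot)})$-brackets with indices in $\{1,2,2i-1,2i\}$ to confirm non-vanishing systematically.
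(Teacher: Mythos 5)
Your proposal follows essentially the same route as the paper's proof, which likewise invokes Lemma \ref{lem:key lemma for brackets-2} together with the ordering constraints of Theorem \ref{thm:w bracket} to eliminate all but the three displayed terms (the paper only spells out the single case $(i_0,m_0)=(2,2)$) and then obtains the three equalities by direct computation. Your more explicit enumeration of admissible starting and ending indices is a faithful elaboration of that argument; the only points left implicit — the vanishing of the leading term $\w(\overline{[a,b]})+k(a|b)\chi$ and of the $p=0$ contributions — follow immediately from $[F,a]=0$, $(F|a)=0$ and the disjointness of your start and end index sets.
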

\begin{proof}
Using Lemma \ref{lem:key lemma for brackets-2}, one shows that the only non-zero terms in \eqref{eq:theorem, SUSY bracket} are 
given by  \eqref{eq:N=1 bracket for F and v3 w(sl(3|2))}. 
For instance, one can show that 
there is no nonzero term starting with $\widetilde{\w}(\overline{[b,v^{(2)}_2]})$ in the following way.
Since $[\tilde{v}^2_{(3)}, \g]\oplus \g^f=\g$, we have  $\widetilde{\w}(\overline{[\tilde{v}^2_{(3)},v_i^{(m)}]}) \neq 0$ iff $(i,m)=(2,3).$ Now, take $(i,m)=(2,3)$ and consider $(i',m')$ such that  $\widetilde{\w}(\overline{[\tilde{v}^i_{(m+1)},v_{i'}^{(m')}]})\neq 0$. Then $(i',m')$ should be $(2,4)$. Finally, Lemma \ref{lem:key lemma for brackets-2} (5) tells that there is no nonzero term starting with $\widetilde{\w}(\overline{[b,v^{(2)}_2]})$. 
Furthermore, the equalities in \eqref{eq:N=1 bracket for F and v3 w(sl(3|2))} are obtained by direct computations.
\end{proof}

\begin{lemma} \label{lem:N=1 bracket for F in w(sl(3|2))-2}
  Suppose $n,i\geq 2$ are integers and $i\leq n$. Take $a=v^{(4i)}_{2i}$ and $b=F=-\frac12\,v_2^{(4)}$ in $\g=\sll(n+1|n)$. The only possibly nonzero terms in \eqref{eq:theorem, SUSY bracket} satisfying \eqref{eq:W(sl(3|2))assumption}
  are listed below:
      \begin{equation}\label{eq:N=1 bracket for F and v4 w(sl(3|2))}
        \begin{aligned}
        &\widetilde{\w}(\overline{[b,v_1^{(0)}]})\widetilde{\w}(\overline{[\tilde{v}^1_{(1)},v^{(1)}_1]})\widetilde{\w}(\overline{[\tilde{v}^1_{(2)},a]})=-ik\, \w(\bar{v}_1^{(2)})(\chi+D) \w(\bar{v}_{2i-1}^{(4i-2)}) ,\\
        &\widetilde{\w}(\overline{[b,v_1^{(0)}]})\widetilde{\w}(\overline{[\tilde{v}^1_{(1)},v^{(4i-1)}_{2i}]})\widetilde{\w}(\overline{[\tilde{v}^{2i}_{(4i)},a]})= \frac{1}{2}k\,\w(\bar{v}_1^{(2)})\w(\bar{v}_{2i-1}^{(4i-2)}) \chi,\\
        &\widetilde{\w}(\overline{[b,v_2^{(0)}]})\widetilde{\w}(\overline{[\tilde{v}^2_{(1)},v^{(1)}_1]})\widetilde{\w}(\overline{[\tilde{v}^1_{(2)},a]})=\frac{i}{2}k\, (\chi+D)\w(\bar{v}_1^{(2)})\w(\bar{v}_{2i-1}^{(4i-2)}),\\
        &\widetilde{\w}(\overline{[b,v_2^{(0)}]})\widetilde{\w}(\overline{[\tilde{v}^2_{(1)},v^{(4i-3)}_{2i-1}]})\widetilde{\w}(\overline{[\tilde{v}^{2i-1}_{(4i-2)},a]})= \frac{i}{2} k\, (\chi+D)\w(\bar{v}_{2i-1}^{(4i-2)}) \w(\bar{v}_1^{(2)}),\\
        &\widetilde{\w}(\overline{[b,v^{(4i-4)}_{2i-1}]})\widetilde{\w}(\overline{[\tilde{v}^{2i-1}_{(4i-3)},v^{(4i-3)}_{2i-1}]})\widetilde{\w}(\overline{[\tilde{v}^{2i-1}_{(4i-2)},a]})=-ik\,  \w(\bar{v}_{2i-1}^{(4i-2)}) (\chi+D) \w(\bar{v}_1^{(2)}),\\
        &\widetilde{\w}(\overline{[b,v^{(4i-4)}_{2i-1}]})\widetilde{\w}(\overline{[\tilde{v}^{2i-1}_{(4i-3)},v^{(4i-1)}_{2i}]})\widetilde{\w}(\overline{[\tilde{v}^{2i}_{(4i)},a]})=\frac{2i-1}{2}k  \, \w(\bar{v}_{2i-1}^{(4i-2)}) \w(\bar{v}_1^{(2)})\chi,\\
        &\widetilde{\w}(\overline{[b,v_2^{(0)}]})\widetilde{\w}(\overline{[\tilde{v}^2_{(1)},v^{(1)}_2]})\widetilde{\w}(\overline{[\tilde{v}^2_{(2)},a]})=i k^2 \, (\lambda+\partial) \w(\bar{v}^{(4i)}_{2i}),\\
        &\widetilde{\w}(\overline{[b,v_2^{(0)}]})\widetilde{\w}(\overline{[\tilde{v}^2_{(1)},v^{(4i-1)}_{2i}]})\widetilde{\w}(\overline{[\tilde{v}^{2i}_{(4i)},a]})= -\frac{k^2}{2}(\lambda-\chi D) \w(\bar{v}_{2i}^{(4i)}),\\
        &\widetilde{\w}(\overline{[b,v^{(4i-2)}_{2i}]})\widetilde{\w}(\overline{[\tilde{v}^{2i}_{(4i-1)},v^{(4i-1)}_{2i}]})\widetilde{\w}(\overline{[\tilde{v}^{2i}_{(4i)},a]})=k^2 \, \w(\bar{v}^{(4i)}_{2i})\lambda.\\
        \end{aligned}
      \end{equation}
\end{lemma}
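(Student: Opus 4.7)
The plan is to follow exactly the strategy used for Lemma \ref{lem:N=1 bracket for F in w(sl(3|2))}: apply Lemma \ref{lem:key lemma for brackets-2} repeatedly to prune the summation in \eqref{eq:theorem, SUSY bracket} down to a short list of candidate paths $b\to(i_0,m_0)\to\cdots\to(i_p,m_p)\to a$, then evaluate each one. The only differences with the preceding lemma are that $a=v_{2i}^{(4i)}$ is now the even lowest weight vector of $R_{2i}$, and that the terminal cap $\widetilde\w(\overline{[\tilde v^{i_p}_{(m_p+1)},a]})$ has richer possibilities, which is precisely what produces nine surviving terms instead of three.

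I would first use Lemma \ref{lem:key lemma for brackets-2}(1) to restrict the starting index: under the hypothesis $i_0\in\{1,2,2i-1,2i\}$, the only possibilities are $(1,0)$, $(2,0)$, $(2,2)$, $(2i-1,4i-4)$ and $(2i,4i-2)$. Next, I would determine which $\tilde v^{i_p}_{(m_p+1)}$ have $\widetilde\w(\overline{[\tilde v^{i_p}_{(m_p+1)},v_{2i}^{(4i)}]})\neq 0$; using Lemma \ref{lem:key lemma for brackets-2}(3)--(4) together with Lemma \ref{lem:F and v_3 and v_4} and the duality \eqref{eq:dual bases}, the only surviving caps with $i_p\in\{1,2,2i-1,2i\}$ are $\tilde v^1_{(2)}$, $\tilde v^2_{(2)}$, $\tilde v^{2i-1}_{(4i-2)}$ and $\tilde v^{2i}_{(4i)}$. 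Connecting the allowed starts with the allowed caps through a single intermediate factor $\widetilde\w(\overline{[\tilde v^{i_0}_{(m_0+1)},v_{i_1}^{(m_1)}]})\neq 0$ and checking the partial order~$<$ then produces exactly the nine length-three paths listed in \eqref{eq:N=1 bracket for F and v4 w(sl(3|2))}.

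The main obstacle is ruling out longer paths ($p\geq 2$) under the restriction $i_t\in\{1,2,2i-1,2i\}$. The natural candidates are chains that thread through the principal $\sll(2|1)$ subalgebra $R_1\oplus R_2$ before jumping to $R_{2i-1}\oplus R_{2i}$, for instance $(2,0)\to(2,1)\to(2,2)\to\cdots\to a$ as appears in the fourth line of \eqref{eq:A5}. I would exclude these by arguing that once an intermediate $\tilde v^{2}_{(m)}$ with $m\geq 3$ enters the chain, applying Lemma \ref{lem:key lemma for brackets-2}(3)--(4) forces the next index to stay within $R_2$, and the only way to reach the terminal cap is then via $\tilde v^2_{(4)}$, whose bracket with $v_{2i}^{(4i)}$ lies inside $R_{2i}$ at weight $1-i$ and therefore projects to zero under $\sharp$ (its only $\g^f$ component could be $v_{2i}^{(4i)}$ itself, which has the wrong weight for $i\geq 2$). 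The same kind of weight-and-projection analysis kills every other would-be extension.

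Finally, for each of the nine surviving paths I would compute the explicit value using \eqref{eq:action-f-tilde} for the $\ad\tilde f$ action, Lemma \ref{lem:F and v_3 and v_4} for the $\ad\tilde e$ action (this is where the factors $2i$ and $2i-1$ in \eqref{eq:N=1 bracket for F and v4 w(sl(3|2))} originate), the duality \eqref{eq:dual bases} to evaluate the bilinear-form contributions, and the definition \eqref{eq: tilde of omega} of $\widetilde\w$ to split each factor into its $\w$- and $(\chi+D)$-parts. The remaining work is essentially bookkeeping: tracking the sign $(-1)^{p(a)p(b)+p(a)}$ and the supercommutations that arise when pushing the various $\chi$'s and $D$'s across superfields; the recurring half-integer factors in the right-hand side of \eqref{eq:N=1 bracket for F and v4 w(sl(3|2))} follow directly from the normalizations of $\tilde v^1_{(m)}$ and $\tilde v^2_{(m)}$ fixed in \eqref{eq: sl(2|1) B perp}.
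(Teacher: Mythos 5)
Your proposal is correct and follows essentially the same route as the paper, which proves this lemma by the same pruning argument as Lemma \ref{lem:N=1 bracket for F in w(sl(3|2))} — namely, using Lemma \ref{lem:key lemma for brackets-2} (together with the structural facts in Lemmas \ref{eq:key lemma for brackets} and \ref{lem:F and v_3 and v_4}) to isolate the surviving chains and then evaluating them by direct computation. Your additional detail on ruling out longer chains through $R_1\oplus R_2$ via the vanishing of $[\tilde v^2_{(4)},v_{2i}^{(4i)}]^\sharp$ is exactly the kind of check the paper leaves implicit.
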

\begin{proof}
The proof of the lemma is similar to the one of  Lemma \ref{lem:N=1 bracket for F in w(sl(3|2))}. We use Lemma \ref{lem:key lemma for brackets-2} and check the equalities by computations.
\end{proof}
\begin{corollary}
  We have 
\begin{equation} \label{eq:F and v_{2n-1} and v_{2n}}
  \begin{aligned}
    & [\w(\bar{v}^{(6)}_{3}){}_\Lambda \w(\bar{F})]= k^2 \bigg( 2\lambda + \frac{1}{2}\chi D + \frac{3}{2}\partial \bigg)\w(\bar{v}^{(6)}_{3}),\\
    & [\w(\bar{v}^{(8)}_{4}){}_\Lambda \w(\bar{F})]= -k^2 \bigg( \frac{5}{2}\lambda + \frac{1}{2}\chi D + 2\partial \bigg)\w(\bar{v}^{(8)}_{4}).
  \end{aligned}
\end{equation}
Hence for $n=2$, $G= -\frac{2}{k^2}\w(\bar{F})$ is a superconformal vector of $\WW^k(\overline{\sll}(3|2), f)$. Furthermore, when $n\geq 2$, 
the two elements $\w(\bar{v}^{(6)}_{3})$ and $\w(\bar{v}^{(8)}_{4})$ are $G$-primary of conformal weight $2$ and $\frac52$ respectively.
\end{corollary}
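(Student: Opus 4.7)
The plan is: (a) compute the two $\Lambda$-brackets in \eqref{eq:F and v_{2n-1} and v_{2n}} directly from Theorem~\ref{thm:w bracket}, (b) read off $G$-primarity by comparison with \eqref{eq: primary conformal weight Delta-2}, and (c) combine these with the self-bracket already obtained in \eqref{eq:bracket of G and f,G} to conclude that $G$ is a $N=1$ superconformal vector of $\WW^k(\overline{\sll}(3|2),f)$. For step (a), I set $b=F\in\g^f_{-1}$ together with $a=v_3^{(6)}\in\g^f_{-3/2}$ and $a=v_4^{(8)}\in\g^f_{-2}$, respectively. Since $v_j^{(2j)}$ is the lowest weight vector of $R_j$, the $\osp(1|2)$ representation theory yields $[F,v_3^{(6)}]=[F,v_4^{(8)}]=0$, so the leading commutator term $(-1)^{p(a)}\w(\overline{[a,b]})$ in \eqref{eq:theorem, SUSY bracket} vanishes and only the multiple sum remains.

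Lemmas~\ref{lem:N=1 bracket for F in w(sl(3|2))} and~\ref{lem:N=1 bracket for F in w(sl(3|2))-2} with $i=2$ list all contributions to this sum satisfying the index restriction \eqref{eq:W(sl(3|2))assumption}. For $n=2$ the restriction is vacuous and the lemmas capture the entire bracket. For $n\geq 3$ I must argue independently that no chain $(i_0,m_0)<\cdots<(i_p,m_p)$ involving an index $i_t\notin\{1,2,3,4\}$ contributes. This is where Lemma~\ref{lem:key lemma for brackets-2} is crucial: starting from $b=F=-\tfrac{1}{2}v_2^{(4)}$ forces $(i_0,m_0)\in\{(2,0)\}\cup\{(j,2j-2)\}$ by item~(1), while the partial ordering combined with items~(3)--(4) and the nontrivial endpoint constraint on $\widetilde{\w}(\overline{[\tilde{v}^{i_p}_{(m_p+1)},a]})$ with $a\in\{v_3^{(6)},v_4^{(8)}\}$ forces every $i_t\leq 4$. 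I expect this exclusion to be the main technical obstacle.

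Summing the three linear contributions from Lemma~\ref{lem:N=1 bracket for F in w(sl(3|2))} gives the first line of \eqref{eq:F and v_{2n-1} and v_{2n}} directly. For the second line, Lemma~\ref{lem:N=1 bracket for F in w(sl(3|2))-2} lists nine contributions: three linear in $\w(\bar{v}_4^{(8)})$ and six products of $\w(\bar{v}_1^{(2)})$ with $\w(\bar{v}_3^{(6)})$ carrying various placements of $\chi$ and $D$. The nonlinear six must cancel, which I expect to follow from the Leibniz rule $D(AB)=(DA)B+(-1)^{p(A)}A(DB)$ together with the (super)commutativity of the even elements $\w(\bar{v}_1^{(2)})$ and $\w(\bar{v}_3^{(6)})$; a careful pairing should produce coefficient-level cancellations, leaving only the three linear terms.

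Finally, primarity follows from \eqref{eq: primary conformal weight Delta-2}. Formula \eqref{eq:conformal weight-SUSY W} gives $\Delta(\w(\bar{v}_3^{(6)}))=2$ and $\Delta(\w(\bar{v}_4^{(8)}))=5/2$, and the parities are $p(\bar{v}_3^{(6)})=0$ and $p(\bar{v}_4^{(8)})=1$. Combined with $G=-(2/k^2)\w(\bar{F})$, substitution into \eqref{eq:F and v_{2n-1} and v_{2n}} reproduces exactly the right-hand side of \eqref{eq: primary conformal weight Delta-2} for the corresponding $\Delta$ and $p$, proving $G$-primarity for all $n\geq 2$. For $n=2$, condition~(i) of Definition~\ref{def: superconformal vector} is the second line of \eqref{eq:bracket of G and f,G}, and condition~(ii) follows because $\WW^k(\overline{\sll}(3|2),f)$ is freely generated as a $\CC[\nabla]$-algebra by $G$ together with the $G$-primary elements $\w(\bar{v}_1^{(2)})$ (primary by \eqref{eq:bracket of G and f,G}), $\w(\bar{v}_3^{(6)})$, and $\w(\bar{v}_4^{(8)})$, so each generator has a definite conformal weight and the weight grading extends compatibly to the whole algebra by the Leibniz rule and by $\Delta(Da)=\Delta(a)+\tfrac{1}{2}$.
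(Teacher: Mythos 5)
Your proposal is correct and follows essentially the same route as the paper: the paper's own proof of this corollary simply observes that for $n=2$ the index restriction \eqref{eq:W(sl(3|2))assumption} is vacuous, so Lemmas \ref{lem:N=1 bracket for F in w(sl(3|2))} and \ref{lem:N=1 bracket for F in w(sl(3|2))-2} with $i=2$ already list every nonzero term of \eqref{eq:theorem, SUSY bracket}, and one sums them. Your added details — the Leibniz-rule cancellation of the six quadratic terms in \eqref{eq:N=1 bracket for F and v4 w(sl(3|2))}, the primarity check against \eqref{eq: primary conformal weight Delta-2}, and the remark that for $n\geq 3$ the exclusion of indices outside $\{1,2,3,4\}$ must be argued via Lemma \ref{lem:key lemma for brackets-2} — are all consistent with (and fill in steps left implicit by) the paper's argument.
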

\begin{proof}
For the case  $\g= \sll(3|2)$, $i=2$ is the only possibility and \eqref{eq:W(sl(3|2))assumption} does not bring any further constrain.
 Hence it suffices to consider Lemma \ref{lem:N=1 bracket for F in w(sl(3|2))} and Lemma \ref{lem:N=1 bracket for F in w(sl(3|2))-2}.
\end{proof}

In order to compute $[\w(\bar{v}^{(2i)}_{i}){}_\Lambda \w(\bar{F})]$ for $i\geq 5,$ we have to find all nontrivial terms in relation \eqref{eq:theorem, SUSY bracket} when $b=F$ and $a= \bar{v}^{(2i)}_{i}$ (see Lemma \ref{lem:N=1 bracket for F in general} and \ref{lem:N=1 bracket for F in general-2}). We first use the following lemma and corollaries (see Lemma \ref{lem:N=1 bracket for F in general--}, Corollaries \ref{cor:key in lemma A.12-1} and \ref{cor:key in lemma A.12}).

\begin{lemma} \label{lem:N=1 bracket for F in general--}
  Let $j\in \ZZ_{+}$ be an integer such that $j\leq n$. We have the following properties:
  \begin{enumerate}
    \item  $v_j^{(2j-1)} = \frac{1}{j}[e,v_j^{(2j)}]$,
    \item  $\tilde{v}_{(2j-1)}^j = (-1)^j[f,\tilde{v}^j_{(2j)}]$,
    \item  $[\tilde{v}_{(2j-1)}^{j}, v_{k}^{(2k-1)}]^\sharp= \frac{-j}{k}[\tilde{v}^{j}_{(2j)},v_{k}^{(2k)}]^\sharp$.
  \end{enumerate}
\end{lemma}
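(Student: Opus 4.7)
For (1), I work inside the irreducible $\osp(1|2)$-submodule $R_j\subset\g$ of dimension $2j+1$, whose weight spaces under $\ad\fh$ are one-dimensional. Setting $[e, v_j^{(m)}] = c_m\, v_j^{(m-1)}$ for $m\geq 1$ (with $[e, v_j^{(0)}]=0$ by highest-weight-ness), the graded Jacobi identity
\begin{equation*}
[e, v_j^{(m+1)}] = [[e,f], v_j^{(m)}] - [f, [e, v_j^{(m)}]] = -(j-m)\,v_j^{(m)} - c_m\, v_j^{(m)},
\end{equation*}
combined with $[e,f]=-2\fh$ and $[\fh, v_j^{(m)}]=\tfrac{j-m}{2}v_j^{(m)}$, yields the recursion $c_{m+1} = -(j-m) - c_m$ with $c_1=-j$. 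Solving gives $c_{2k}=k$ and $c_{2k+1}=k-j$, so in particular $c_{2j}=j$, establishing (1).

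For (2), invariance of $(\cdot|\cdot)$ under the $\osp(1|2)$-action together with the pairwise non-isomorphism of distinct irreducible summands forces $R_i\perp R_j$ for $i\neq j$. Hence $\tilde{v}^j_{(m)}\in R_j$ lies in the one-dimensional weight space spanned by $v_j^{(2j-m)}$, so $\tilde{v}^j_{(m)} = \alpha_m\, v_j^{(2j-m)}$ for some scalars $\alpha_m$. Setting $\beta_m=(v_j^{(m)}\,|\,v_j^{(2j-m)})=1/\alpha_m$, the identity $([f, v_j^{(m)}]\,|\,v_j^{(2j-m-1)})=-(-1)^{p(v_j^{(m)})}(v_j^{(m)}\,|\,[f, v_j^{(2j-m-1)}])$ from invariance yields the sign recursion $\beta_{m+1}/\beta_m = (-1)^{p(v_j^{(m)})+1}$. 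Using the parity pattern $p(v_j^{(m)})\equiv j+m\Mod{2}$, forced by the principal $\osp(1|2)$-decomposition of $\sll(n+1|n)$, and evaluating between $m=2j-1$ and $m=2j$ gives $\alpha_{2j-1}/\alpha_{2j}=(-1)^j$; since $[f,\tilde{v}^j_{(2j)}]=\alpha_{2j}\,v_j^{(1)}$, this yields (2).

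For (3), substitute (1) for $v_k^{(2k-1)}$ and expand by graded Jacobi:
\begin{equation*}
[\tilde{v}^j_{(2j-1)}, [e, v_k^{(2k)}]] = [[\tilde{v}^j_{(2j-1)}, e], v_k^{(2k)}] + (-1)^{p(\tilde{v}^j_{(2j-1)})}[e, [\tilde{v}^j_{(2j-1)}, v_k^{(2k)}]].
\end{equation*}
The second term lies in $[e,\g]$ and is annihilated by $\sharp$ because of the decomposition $\g=\g^f\oplus[e,\g]$. Next, applying (2) to rewrite $\tilde{v}^j_{(2j-1)}=(-1)^j[f,\tilde{v}^j_{(2j)}]$, a further Jacobi expansion together with $[\tilde{v}^j_{(2j)}, e]=0$ (highest weight) and the action $[e,f]=-2\fh$ on the weight-$j/2$ vector $\tilde{v}^j_{(2j)}$ produces $[\tilde{v}^j_{(2j-1)}, e]= -j\,\tilde{v}^j_{(2j)}$; the extra $(-1)^j$ cancels against $(-1)^{p(\tilde{v}^j_{(2j)})}=(-1)^j$ from super-commuting $\tilde{v}^j_{(2j)}$ past $\fh$. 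Assembling the pieces delivers (3).

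The main obstacle is the careful bookkeeping of parity signs in each graded Jacobi identity, most delicate in (2). The clean numerical coefficients $(-1)^j$ in (2) and $-j/k$ in (3) emerge only after using the parity pattern $p(v_j^{(m)})\equiv j+m\Mod{2}$ specific to the principal $\osp(1|2)$-decomposition of $\sll(n+1|n)$, which forces the accumulated signs to collapse.
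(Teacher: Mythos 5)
Your proof is correct and rests on the same ingredients as the paper's: one-dimensionality of the $\ad\fh$-weight spaces of each $R_j$ forces the proportionalities, the graded Jacobi identity together with $\sharp([e,\g])=0$ handles (3), and invariance of the form plus the parity pattern $p(v_j^{(m)})\equiv j+m \Mod 2$ fixes the signs. The only difference is cosmetic: you pin down the constants in (1) and (2) by solving full recursions for $c_m$ and $\beta_m$, whereas the paper gets each constant from a single evaluation (applying $\ad f$ once in (1), and one invariance identity $(v_j^{(2j-1)}|\tilde v^j_{(2j-1)})=1$ in (2)).
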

\begin{proof}
(1) Let $x_j$ be the constant such that $v_j^{(2j-1)}=x_j\,[e,v_j^{(2j)}]$. Then $v_j^{(2j)}=[f,v_j^{(2j-1)}]= x_j\, [f,[e,v_j^{(2j)}]]=x_j\,[-2\fh , v_j^{(2j)}]= x_j\, j \, v_j^{(2j)}$. Hence $x_j= \frac{1}{j}.$\\
(2) Let $y_j$ be the constant such that $\tilde{v}_{(2j-1)}^j = y_j\, [f,\tilde{v}^j_{(2j)}]$. By (1), we have 
$ (v_j^{(2j-1)}|\tilde{v}^j_{(2j-1)})= \frac{y_j}{j}\big(  [e,v_j^{(2j)}]| [f,\tilde{v}^j_{(2j)}] \big)=1.$ Since 
\[ \frac{y_j}{j}\big(  [e,v_j^{(2j)}]| [f,\tilde{v}^j_{(2j)}] \big)= (-1)^j\frac{y_j}{j}([e,f]|[v_j^{(2j)}, \tilde{v}^j_{(2j)}])=(-1)^j  y_j (v_j^{(2j)}|\tilde{v}^j_{(2j)})=1,\]
we have $y_j=(-1)^j$.\\
(3) Using the above relations (1), (2) and the Jacobi identity, we have
\begin{equation} \label{eq:(lemma)N=1 bracket for F in general--}
\begin{aligned}
  & [\tilde{v}^{j}_{(2j-1)}, v_k^{(2k-1)}]^\sharp \\
  & = (-1)^j \frac1k[[f,\tilde{v}^j_{(2j)}],[e,v_k^{(2k)}]]^\sharp  = (-1)^j \frac1k[[[f,\tilde{v}^j_{(2j)}], e], v_k^{(2k)}]^\sharp \\
  & =(-1)^j \frac1k [[(-1)^{j+1}2H,\tilde{v}^j_{(2j)}], v_k^{(2k)}]^\sharp=-\frac{j}{k}[\tilde{v}^{j}_{(2j)}, v_k^{(2k)}]^\sharp.
\end{aligned}
\end{equation}
Here, the second equality in \eqref{eq:(lemma)N=1 bracket for F in general--} holds since $[e,\g]$ and  $\g^f$ intersect trivially.
\end{proof}

\begin{corollary}  \label{cor:key in lemma A.12-1}
   Let $i,j\in \mathcal{I}$ and $j>i$. Then 
  \[[\tilde{v}_{(2i-1)}^{i}, v_{j}^{(2j-1)}]^\sharp=-\frac{i}{j}[\tilde{v}_{(2i)}^{i}, v_{j}^{(2j)}]^\sharp=0\] 
  when $i$ and $j$ have the same parity or $j$ is odd.
\end{corollary}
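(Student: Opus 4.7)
The first equality $[\tilde v^i_{(2i-1)}, v_j^{(2j-1)}]^\sharp = -\frac{i}{j}[\tilde v^i_{(2i)}, v_j^{(2j)}]^\sharp$ follows immediately from Lemma \ref{lem:N=1 bracket for F in general--} (3) after relabeling $(j,k)\mapsto(i,j)$, so the plan is to simply invoke that identity.

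For the vanishing of the right-hand side, I would first identify $\tilde v^i_{(2i)}$ explicitly. Since the invariant form on $\g$ restricts nondegenerately to each irreducible $\osp(1|2)$-summand $R_\alpha$ of $\g=\bigoplus_\alpha R_\alpha$, and $\tilde v^i_{(2i)}$ is dual to $v_i^{(2i)}\in R_i\cap\g(-i/2)$, it must belong to the one-dimensional space $R_i\cap\g(i/2)=\CC\cdot v_i^{(0)}$; thus $\tilde v^i_{(2i)}=c_i\, v_i^{(0)}$ for some $c_i\in\CC^*$, and similarly $\tilde v^{j-i}_{(2(j-i))}=c_{j-i}\,v_{j-i}^{(0)}$. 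Since $[\tilde v^i_{(2i)},v_j^{(2j)}]\in\g(-(j-i)/2)$ and its $\g^f$-component is a multiple of $v_{j-i}^{(2(j-i))}$, the coefficient is extracted by pairing with $\tilde v^{j-i}_{(2(j-i))}$. Two applications of ad-invariance (and the super-symmetry) of the form reduce this coefficient to a nonzero scalar times $\bigl(v_j^{(2j)}\,\bigl|\,[v_{j-i}^{(0)},v_i^{(0)}]\bigr)$.

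The core of the proof is then the computation of $[v_{j-i}^{(0)},v_i^{(0)}]$ in the standard realization $\g=\sll(n+1|n)$ with principal odd nilpotent $f=\sum_{a=1}^{2n}e_{a+1,a}$ and super-parity $p(e_{\alpha\beta})\equiv\alpha+\beta\pmod 2$. In this realization $v_\alpha^{(0)}$ must be proportional to the full superdiagonal sum $\sum_{\beta=1}^{2n+1-\alpha}e_{\beta,\beta+\alpha}$, as it is the unique (up to scalar) element of $\g^e\cap\g(\alpha/2)$. Using the graded commutator $[e_{ab},e_{cd}]=\delta_{bc}e_{ad}-(-1)^{(a+b)(c+d)}\delta_{ad}e_{cb}$ and the parity rule above, a direct computation shows that the two terms produced by the supercommutator combine over a common range $\beta\in\{1,\ldots,2n+1-j\}$, with the only difference being a global super-sign, yielding
\[ [v_{j-i}^{(0)},v_i^{(0)}]=\bigl(1-(-1)^{i(j-i)}\bigr)\,v_j^{(0)}. \]
Under the hypothesis ``$i,j$ share parity, or $j$ is odd'', the product $i(j-i)$ is even (the only excluded configuration is $i$ odd and $j$ even, where both $i$ and $j-i$ are odd), so $(-1)^{i(j-i)}=1$ and the bracket vanishes. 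This forces $\bigl(v_j^{(2j)}\,\bigl|\,[v_{j-i}^{(0)},v_i^{(0)}]\bigr)=0$, hence the original projection is zero, completing the argument.

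The main obstacle is the bookkeeping in the matrix-level computation: one must verify that the two summands of the graded commutator genuinely run over the same range of $\beta$ after the index substitutions $\gamma=\beta+j-i$ and $\beta=\gamma+i$ respectively, so that they can be compared term-by-term with only $(-1)^{i(j-i)}$ as the global sign difference. Once that combinatorial check is done, the parity hypothesis exactly matches the condition for this sign to be trivial, making the vanishing automatic.
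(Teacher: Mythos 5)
Your proposal is correct and follows essentially the same route as the paper: the first equality is the relabelled Lemma \ref{lem:N=1 bracket for F in general--}(3), and the vanishing is obtained by pairing the $\sharp$-projection with $\tilde v^{j-i}_{(2j-2i)}\propto v_{j-i}^{(0)}$, moving the bracket by invariance of the form, and observing that the supercommutator of the two superdiagonal elements of $\g^e$ vanishes exactly when $i(j-i)$ is even. The only difference is that you carry out explicitly the matrix computation $[v_{j-i}^{(0)},v_i^{(0)}]=\bigl(1-(-1)^{i(j-i)}\bigr)v_j^{(0)}$, which the paper merely asserts in the preamble of the appendix.
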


\begin{proof}
We have $[\tilde{v}_{(2i)}^{i}, v_{j}^{(2j)}]^\sharp = (-1)^{j-i} (\tilde{v}^{j-i}_{(2j-2i)}|[\tilde{v}_{(2i)}^{i}, v_{j}^{(2j)}])v_{j-i}^{(2j-2i)}$. Since $[\tilde{v}^{j-i}_{(2j-2i)}, \tilde{v}_{(2i)}^{i}]=0$ if $i$ or $j-i$ is even, the second equality follows. The first equality follows from Lemma \ref{lem:N=1 bracket for F in general--}.
\end{proof}

Combining the result of Lemma \ref{lem:key lemma for brackets-2} and Corollary \ref{cor:key in lemma A.12-1}, we can narrow down the number of  terms in \eqref{eq:theorem, SUSY bracket}.
\begin{corollary} \label{cor:key in lemma A.12}
Let $b= F$ and $a= v_{i}^{(2i)}\in \g$ for $i\geq 5$. The longest nonzero terms in \eqref{eq:theorem, SUSY bracket} have one of the following forms:
\begin{align}
 &  \widetilde{w}(\overline{[b,v_j^{(2j-2)}]}) \widetilde{w}(\overline{[\tilde{v}^j_{(2j-1)},v_l^{(2l-1)}]})\widetilde{w}(\overline{[\tilde{v}^l_{(2l)},a]}) \label{eq: first in A.11},\\
 &  \widetilde{w}(\overline{[b,v_2^{(0)}]}) \widetilde{w}(\overline{[\tilde{v}_{(1)}^2,v_l^{(2l-1)}]})\widetilde{w}(\overline{[\tilde{v}^l_{(2l)},a]}).  \label{eq: second in A.11}
\end{align}
Moreover, $\eqref{eq: first in A.11} \neq 0$ only if $i,j,l  \in \mathcal{I}$ obey one of the three following constraints:
\begin{itemize}
\item $i=l$ is even and $j$ is odd such that $j\leq l$,
\item $i$ is even and $j=l$ is odd such that $j=l<i$,
\item $j=l=i$,
\end{itemize}
and $\eqref{eq: second in A.11}  = -\frac12 k(\chi+D)\w(\bar{v}_l^{(2l)}) \widetilde{\w}(\overline{[\tilde{v}_{(2l)}^l, v_i^{(2i)}]})\neq 0$ only if 
\begin{itemize}
  \item $i=l$,
  \item $i$ is even and $l$ is odd such that $l<i$.
\end{itemize}

\end{corollary}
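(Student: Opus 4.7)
The idea is to enumerate nonzero contributions to $\{\w(\bar v_i^{(2i)}){}_\Lambda\w(\bar F)\}$ from \eqref{eq:theorem, SUSY bracket} by analyzing admissible chains of indices $(i_0,m_0)<\cdots<(i_p,m_p)$, showing that all chains with $p\geq 2$ vanish, and then classifying the surviving $p=1$ chains.

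First I would invoke Lemma~\ref{lem:key lemma for brackets-2}(1) to restrict $(i_0,m_0)$ to $(2,0)$ or $(j,2j-2)$ for some $j$. Lemma~\ref{lem:key lemma for brackets-2}(2)-(3) then narrows $(i_1,m_1)$ to $(l,2l-1)$ or $(l,2l)$ (with an additional option $(2,1)$ in the special case $(i_0,m_0)=(2,0)$). The crucial termination argument is this: once a factor involves $\tilde{v}^l_{(2l)}$, Lemma~\ref{lem:key lemma for brackets-2}(4) forces the next $v_{i_{t+1}}^{(m_{t+1})}$ to be of the form $v_{l'}^{(2l')}$ with $l'\geq l$, so the subsequent element $\tilde{v}^{l'}_{(2l'+1)}=0$ kills either the next middle factor or the final factor. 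An analogous termination rules out any continuation past an index of the form $(l,2l)$. Hence every chain with $p\geq 2$ contributes zero, and the longest nonzero chains have exactly three factors.

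Case analysis of $(i_0,m_0)$ combined with $(i_1,m_1)=(l,2l-1)$ then produces precisely the two templates \eqref{eq: first in A.11} and \eqref{eq: second in A.11}. To determine the constraints for \eqref{eq: first in A.11}, I would split the middle factor $\widetilde{\w}(\overline{[\tilde{v}^j_{(2j-1)},v_l^{(2l-1)}]})$ into its $\sharp$-bracket and central components. The central pairing $-k(\tilde{v}^j_{(2j-1)}|v_l^{(2l-1)})(\chi+D)$ survives only when $j=l$, and Corollary~\ref{cor:key in lemma A.12-1} pins down exactly when the $\sharp$-part vanishes in terms of the parities of $j$ and $l$. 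Combining these with the requirement that the final factor $\widetilde{\w}(\overline{[\tilde{v}^l_{(2l)},v_i^{(2i)}]})$ be nonzero---which forces $l\leq i$ plus a parity compatibility read off from Lemma~\ref{lem:F and v_3 and v_4} and the $\osp(1|2)$ action---produces the three listed cases.

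For \eqref{eq: second in A.11}, the computation is direct. The first factor $\widetilde{\w}(\overline{[F,v_2^{(0)}]})$ reduces to $-k(\chi+D)$ since $[F,E]=-2\fh\notin\g^f$ while $(F|E)=1$. Using $\tilde{v}^2_{(1)}=\tfrac12 f$ and $[f,v_l^{(2l-1)}]=v_l^{(2l)}\in\g^f$ (with vanishing central pairing), the middle factor becomes $\tfrac12\w(\bar v_l^{(2l)})$, yielding the stated product $-\tfrac12 k(\chi+D)\w(\bar v_l^{(2l)})\widetilde{\w}(\overline{[\tilde{v}^l_{(2l)},v_i^{(2i)}]})$. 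The non-vanishing condition then reduces to when the final factor survives, giving the two listed cases. The main obstacle is the combined parity bookkeeping underlying \eqref{eq: first in A.11}, where both Corollary~\ref{cor:key in lemma A.12-1} and the admissibility of the final factor must be reconciled to isolate the three distinct subcases $(i=l\text{ even},\ j\text{ odd},\ j\leq l)$, $(i\text{ even},\ j=l\text{ odd},\ l<i)$, and $(j=l=i)$.
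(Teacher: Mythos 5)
Your approach is the same one the paper intends: the paper offers no written proof of this corollary beyond the sentence ``Combining the result of Lemma \ref{lem:key lemma for brackets-2} and Corollary \ref{cor:key in lemma A.12-1}, we can narrow down the number of terms,'' and your chain-enumeration plus termination-via-item-(5) argument, together with the parity analysis from Corollary \ref{cor:key in lemma A.12-1} and the bilinear-form pairing for the final factor, is exactly that combination made explicit. Your treatment of \eqref{eq: second in A.11} and of the constraints on \eqref{eq: first in A.11} is correct.

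There is, however, one branch you flag but never close, and it does not in fact collapse into either template. After $(i_0,m_0)=(2,0)$, Lemma \ref{lem:key lemma for brackets-2}(2) allows $(i_1,m_1)=(2,1)$ in addition to $(l,2l-1)$. The next factor then involves $\tilde{v}^2_{(2)}=\fh$, and since $[\fh,v_i^{(2i)}]=-\tfrac{i}{2}\,v_i^{(2i)}\in\g^f$, the chain $(2,0)\to(2,1)\to\text{(final)}$ produces the nonzero three-factor term
\begin{equation*}
\widetilde{\w}(\overline{[F,v_2^{(0)}]})\,\widetilde{\w}(\overline{[\tilde{v}^2_{(1)},v_2^{(1)}]})\,\w(\overline{[\tilde{v}^2_{(2)},v_i^{(2i)}]}^\sharp)\ \propto\ k^2(\lambda+\partial)\,\w(\bar{v}_i^{(2i)}),
\end{equation*}
which is of neither form \eqref{eq: first in A.11} nor \eqref{eq: second in A.11} (it is not obtained from \eqref{eq: second in A.11} for any $l$, since $(2,1)\neq(l,2l-1)$). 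So your assertion that the case analysis ``produces precisely the two templates'' is not literally true. To be fair, this is an imprecision inherited from the corollary itself: the paper absorbs this term into the separately computed lists \eqref{eq:N=1 bracket for F and v3 w(sl(3|2))} and \eqref{eq:N=1 bracket for F and v4 w(sl(3|2))} (it is the term $\widetilde{\w}(\overline{[b,v_2^{(0)}]})\widetilde{\w}(\overline{[\tilde{v}^2_{(1)},v^{(1)}_2]})\widetilde{\w}(\overline{[\tilde{v}^2_{(2)},a]})$ appearing there), and the corollary is only ever invoked to control the chains with intermediate indices outside $\{1,2,2i-1,2i\}$. A complete argument should either dispose of the $(2,1)$ branch explicitly (showing any continuation past $(2,1)$ through $(2,2),(2,3),\dots$ dies because $[e,\g]\cap\g^f=0$ and $(\tilde{v}^2_{(4)}|v_i^{(2i)})=0$ for $i\geq 5$) and record the surviving three-factor term as a third form, or restrict the statement to the chains actually governed by the corollary.
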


\begin{lemma} \label{lem:N=1 bracket for F in general}
 Suppose  $n\geq 3$ and $2<i\leq n$. For  $b=F$ and 
 $a=v^{(4i-2)}_{2i-1}$, the followings are all nontrivial terms in \eqref{eq:theorem, SUSY bracket}:
    \begin{enumerate}
      \item   terms in relation \eqref{eq:N=1 bracket for F and v3 w(sl(3|2))};
      \item  for $j$ such that $1<j< i$, 
        \begin{align}  
        \label{eq:N=1(F,(2n-1))-1(2)}
          &\widetilde{\w}(\overline{[b,v^{(4j-2)}_{2j}]})\,\widetilde{\w}(\overline{[\tilde{v}^{2j}_{(4j-1)},a]}) 
          =-w(\bar{v}_{2j}^{(4j)}) \w(\overline{[\tilde{v}_{(4j-1)}^{2j}, a]}^\sharp) =\w(\bar{v}_{2i-2j}^{(4i-4j)})\, \w(\overline{[\tilde{v}_{(4i-4j-1)}^{2i-2j}, a]}^\sharp);  \\
          & \widetilde{\w}(\overline{[b,v^{(4j-4)}_{2j-1}]})\,\widetilde{\w}(\overline{[\tilde{v}^{2j-1}_{(4j-3)},a]}) 
           =-\w(\bar{v}_{2j-1}^{(4j-2)})\, \w(\overline{[\tilde{v}^{2j-1}_{(4j-3)}, a]}^\sharp) =\w(\bar{v}_{2i-2j+1}^{(4i-4j+2)}) \,\w(\overline{[\tilde{v}^{2i-2j+1}_{(4i-4j+1)}, a]}^\sharp). \label{eq:N=1(F,(2n-1))-2(4)}  
        \end{align}
    \end{enumerate}
  \end{lemma}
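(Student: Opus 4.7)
The plan is to apply the general bracket formula of Theorem~\ref{thm:w bracket} with $b=F$ and $a=v_{2i-1}^{(4i-2)}$, and to enumerate every index chain $(i_0,m_0)<(i_1,m_1)<\cdots<(i_p,m_p)$ that produces a nonvanishing contribution. The argument parallels Corollary~\ref{cor:key in lemma A.12}, which treats the companion case $a=v_{2i}^{(4i)}$; the present case differs only in that the lowest--weight vector now sits inside the odd--dimensional component $R_{2i-1}$, so the parity signs and the terminal grade must be tracked afresh.

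First, Lemma~\ref{lem:key lemma for brackets-2}(1) forces the opening index to satisfy $(i_0,m_0)=(2,0)$ or $(i_0,m_0)=(j_0,2j_0-2)$ with $j_0\ge 2$. For every subsequent intermediate index, Lemma~\ref{lem:key lemma for brackets-2}(3)--(4) restricts $(i_t,m_t)$ to one of the short lists $\{(j,2j-1)\}\cup\{(2,1)\}$ or $\{(j,2j)\}$ according to the parity of $m_{t-1}$, while the ordering condition together with the grading \eqref{grading} bounds the chain length $p$ from above. Finally, the terminal factor $\widetilde{\w}(\overline{[\tilde{v}^{i_p}_{(m_p+1)},a]})$ is nonzero only when $\tilde{v}^{i_p}_{(m_p+1)}$ sits in a grade that lets either $[\tilde{v}^{i_p}_{(m_p+1)},a]^{\sharp}$ or $(\tilde{v}^{i_p}_{(m_p+1)}|a)$ be nonzero; since $a\in\g^f\cap\g(-\tfrac{2i-1}{2})$, this selects only a few indices at the end of the chain.

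In the branch $(i_0,m_0)=(2,0)$, iterated application of Lemma~\ref{lem:key lemma for brackets-2}(2)--(4) combined with the terminal constraint and the parity identity of Corollary~\ref{cor:key in lemma A.12-1} forces $p=1$ and reproduces exactly the three lines of \eqref{eq:N=1 bracket for F and v3 w(sl(3|2))}. In the branch $(i_0,m_0)=(j_0,2j_0-2)$ with $j_0\neq 2$, any chain with $p\ge 1$ would require two consecutive factors of the form $\widetilde{\w}(\overline{[\tilde{v}^{j}_{(2j-1)},v^{(2\ell-1)}_\ell]}^{\sharp})$, which by Corollary~\ref{cor:key in lemma A.12-1} together with Lemma~\ref{lem:N=1 bracket for F in general--}(3) produces a bracket with mismatched parity and therefore vanishes. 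Hence only the two--factor terms ($p=0$) remain, and specializing $j_0=2j$ or $j_0=2j-1$ yields respectively the first and the second identities of part~(2). The first equality in each of \eqref{eq:N=1(F,(2n-1))-1(2)} and \eqref{eq:N=1(F,(2n-1))-2(4)} follows from computing $[F,v_{2j}^{(4j-2)}]^{\sharp}$ and $[F,v_{2j-1}^{(4j-4)}]^{\sharp}$ via the $\osp(1|2)$--structure; the second equality in each line follows from the invariance of $(\cdot|\cdot)$, combined with Lemma~\ref{lem:N=1 bracket for F in general--}(3), to rewrite $\w(\overline{[\tilde{v}^{2j}_{(4j-1)},a]}^{\sharp})$ as $\w(\overline{[\tilde{v}^{2i-2j}_{(4i-4j-1)},a]}^{\sharp})$ up to sign.

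The main obstacle is the case check for $p\ge 2$ in the first branch, where several a priori admissible sequences $(i_0,m_0),(i_1,m_1),(i_2,m_2)$ must be ruled out one by one; these do not vanish factor by factor, but the full product collapses under iterated use of Corollary~\ref{cor:key in lemma A.12-1}, which converts every surviving long chain into an expression containing a bracket $[\tilde{v}^{j}_{(2j)},v^{(2\ell)}_\ell]^{\sharp}$ with $j$ and $\ell-j$ of incompatible parity. Once this combinatorial exhaustion is completed, direct computation of the displayed coefficients from the normalizations of $\mathcal{B}$ and $\mathcal{B}^{\perp}$ finishes the proof.
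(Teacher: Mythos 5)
Your overall strategy is the same as the paper's: restrict the admissible index chains in \eqref{eq:theorem, SUSY bracket} via Lemma \ref{lem:key lemma for brackets-2} (and the structure recorded in Corollary \ref{cor:key in lemma A.12}), then prove the displayed identities by combining the invariance of $(\cdot|\cdot)$ with Lemma \ref{lem:N=1 bracket for F in general--}. However, your case analysis of the two branches contains a genuine error. You claim that in the branch $(i_0,m_0)=(j_0,2j_0-2)$ with $j_0\neq 2$ every chain with $p\geq 1$ vanishes, so that only two-factor ($p=0$) terms survive there, and that the branch $(i_0,m_0)=(2,0)$ alone reproduces all three lines of \eqref{eq:N=1 bracket for F and v3 w(sl(3|2))}. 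Both assertions are false: the third line of \eqref{eq:N=1 bracket for F and v3 w(sl(3|2))} is the chain $(i_0,m_0)=(2i-1,4i-4)$, $(i_1,m_1)=(2i-1,4i-3)$, which lies in the second branch with $p=1$ and does \emph{not} vanish, because the middle factor $\widetilde{\w}(\overline{[\tilde{v}^{2i-1}_{(4i-3)},v^{(4i-3)}_{2i-1}]})$ contains the nonzero scalar piece $-k(\tilde{v}^{2i-1}_{(4i-3)}|v^{(4i-3)}_{2i-1})(D+\chi)$ even though the $\sharp$-projected bracket part may vanish. This chain produces the contribution $k^2\,\w(\bar{v}_{2i-1}^{(4i-2)})\lambda$, which is essential for the conformal-weight coefficient in Lemma \ref{lem:final lemma for Prop N=1--1}. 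Your vanishing argument only inspects the $\w(\overline{[\cdot,\cdot]}^\sharp)$ part of $\widetilde{\w}$ and forgets the $k(\cdot|\cdot)(D+\chi)$ part of \eqref{eq: tilde of omega}; as written, your enumeration would omit (or misattribute) this term, so the list of nontrivial terms is not correctly established.

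Two smaller points. First, for the second equalities in \eqref{eq:N=1(F,(2n-1))-1(2)} and \eqref{eq:N=1(F,(2n-1))-2(4)} the relevant input is Lemma \ref{lem:N=1 bracket for F in general--}(2) (writing $\tilde{v}^{j}_{(2j-1)}=(-1)^j[f,\tilde{v}^j_{(2j)}]$ and then moving $f$ across the invariant form), not part (3) as you cite. Second, your closing remark that surviving long chains in the first branch ``collapse'' under Corollary \ref{cor:key in lemma A.12-1} is plausible but is exactly the combinatorial exhaustion the proof must actually carry out; it needs to be done term by term, as the paper does via Corollary \ref{cor:key in lemma A.12}, rather than asserted.
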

 \begin{proof}
  By Lemma \ref{lem:key lemma for brackets-2} and Corollary \ref{cor:key in lemma A.12}, we can find all possible nonzero terms. For 
  equalities \eqref{eq:N=1(F,(2n-1))-1(2)}, we use Lemma \ref{lem:N=1 bracket for F in general--} (2). Since 
  \[ [\tilde{v}_{(4j-1)}^{2j}, a]^\sharp = ([[f,\tilde{v}^{2j}_{(4j)}], a]|\tilde{v}^{2i-2j}_{(4i-4j)})\,v_{2i-2j}^{(4i-4j)} =(\tilde{v}^{2j}_{(4j)}| [\tilde{v}^{2i-2j}_{(4i-4j-1)}, a])\,v_{2i-2j}^{(4i-4j)},\]
  we have 
  \[ 
 \begin{split}
\w(\bar{v}_{2j}^{(4j)})\, \w(\overline{[\tilde{v}_{(4j-1)}^{2j}, a]}^\sharp)&= (\tilde{v}^{2j}_{(4j)}| [\tilde{v}^{2i-2j}_{(4i-4j-1)}, a])\,\w(\bar{v}_{2j}^{(4j)})\, \w(\bar{v}_{2i-2j}^{(4i-4j)})\\
&= - \w(\bar{v}_{2i-2j}^{(4i-4j)})\, \w(\overline{[\tilde{v}_{(4i-4j-1)}^{2i-2j}, a]}^\sharp). 
\end{split} \]
  The same proof works for equality \eqref{eq:N=1(F,(2n-1))-2(4)}. Hence the lemma follows.
 \end{proof}
  \begin{lemma}  \label{lem:N=1 bracket for F in general-2}
    Suppose  $n\geq 4$ and $2<i\leq n$. For  $b=F$ and 
    $a=v^{(4i)}_{2i}$,  the followings are all nontrivial terms in \eqref{eq:theorem, SUSY bracket}:
  \item  \begin{enumerate}
    \item   terms in relation \eqref{eq:N=1 bracket for F and v4 w(sl(3|2))};
    \item  for $j$ such that $1<j<i$, we have the five following possibilities:
     \begin{equation} \label{eq:N=1(F,(2n))-1}
      \begin{aligned}
        &\widetilde{\w}(\overline{[b,v^{(4j-4)}_{2j-1}]})\,\widetilde{\w}(\overline{[\tilde{v}^{2j-1}_{(4j-3)},v^{(4i-1)}_{2i}]})\,\widetilde{\w}(\overline{[\tilde{v}^{2i}_{(4i)},a]})= - \frac{2j-1}{2i}k\, \w(\bar{v}_{2j-1}^{(4j-2)})\,\w(\overline{[\tilde{v}^{2j-1}_{(4j-2)}, v_{2i}^{(4i)}]}^{\sharp})\chi ,\\
        &\widetilde{\w}(\overline{[b,v^{(4j-4)}_{2j-1}]})\,\widetilde{\w}(\overline{[\tilde{v}^{2j-1}_{(4j-3)},v^{(4j-3)}_{2j-1}]})\,\widetilde{\w}(\overline{[\tilde{v}^{2j-1}_{(4j-2)},a]})=k\, \w(\bar{v}_{2j-1}^{(4j-2)})(\chi+D)\w(\overline{[\tilde{v}^{2j-1}_{(4j-2)},v_{2i}^{(4i)}]}^{\sharp}),\\
        &\widetilde{\w}(\overline{[b,v^{(1)}_{2}]})\,\widetilde{\w}(\overline{[\tilde{v}^{2}_{(1)},v^{(4j-3)}_{2j-1}]})\,\widetilde{\w}(\overline{[\tilde{v}^{2j-1}_{(4j-2)},a]})=-\frac{1}{2}k\, (\chi+D)\w(\bar{v}_{2j-1}^{(4j-2)})\w(\overline{[\tilde{v}^{2j-1}_{(4j-2)},v_{2i}^{(4i)}]}^{\sharp});
      \end{aligned}
    \end{equation}
  \begin{equation}
  \begin{split}
        \widetilde{\w}(\overline{[b,v^{(4j-4)}_{2j-1}]})\,\widetilde{\w}(\overline{[\tilde{v}^{2j-1}_{(4j-3)},a]}) 
        &=-\w(\bar{v}^{(4j-2)}_{2j-1})\,\w(\overline{[\tilde{v}^{2j-1}_{(4j-3)},a]}^\sharp)\\
        &=\w(\bar{v}^{(4i-4j+4)}_{2i-2j+2})\,\w(\overline{[\tilde{v}^{2i-2j+2}_{(4i-4j+3)},a]}^\sharp) , \label{eq:N=1(F,(2n))-2(2)}\\
 \end{split}
\end{equation}
  \begin{equation}
  \begin{split}
        \widetilde{\w}(\overline{[b,v^{(4j-2)}_{2j}]})\,\widetilde{\w}(\overline{[\tilde{v}^{2j}_{(4j-1)},a]}) 
        &=-\w(\bar{v}^{(4j)}_{2j})\,\w(\overline{[\tilde{v}^{2j}_{(4j-1)},a]}^\sharp)\\
       & = \w(\bar{v}^{(4i-4j+2)}_{2i-2j+1})\,\w(\overline{[\tilde{v}^{2i-2j+1}_{(4i-4j+1)},a]}^\sharp). \label{eq:N=1(F,(2n))-2(4)}
 \end{split}
\end{equation}
  \end{enumerate}
\end{lemma}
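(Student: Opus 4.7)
The plan is to follow the same enumeration strategy as in the proof of Lemma \ref{lem:N=1 bracket for F in general}: classify all potentially nonzero terms of the general expansion \eqref{eq:theorem, SUSY bracket} according to the length $p$ of the intermediate chain, apply Lemma \ref{lem:key lemma for brackets-2} and Corollary \ref{cor:key in lemma A.12} to restrict the admissible $(i_t,m_t)$, and then compute the surviving terms directly. Since $b=F$ is of weight $-1$ and $a=v_{2i}^{(4i)}$ is of weight $-i$ with parity opposite to the case of Lemma \ref{lem:N=1 bracket for F in general}, the structure of the enumeration will be essentially the same but the chains now terminate with a bosonic generator.

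First I would treat the length-$2$ case $p=0$, where a nonzero contribution has the form $\widetilde{\w}(\overline{[b,v_{i_0}^{(m_0)}]})\,\widetilde{\w}(\overline{[\tilde{v}^{i_0}_{(m_0+1)},a]})$. By Lemma \ref{lem:key lemma for brackets-2}(1), we need $(i_0,m_0)=(2,0)$ or $(i_0,m_0)=(j,2j-2)$ for some $j\in\cI$, and by Lemma \ref{lem:key lemma for brackets-2}(3)(4) the surviving pairs are exactly those of \eqref{eq:N=1(F,(2n))-2(2)} and \eqref{eq:N=1(F,(2n))-2(4)}, as well as the short terms already listed in \eqref{eq:N=1 bracket for F and v4 w(sl(3|2))}. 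The two alternative forms of each such term are obtained using Lemma \ref{lem:N=1 bracket for F in general--}(2) together with the invariance of the bilinear form, exactly as in the derivation of \eqref{eq:N=1(F,(2n-1))-1(2)} and \eqref{eq:N=1(F,(2n-1))-2(4)} in the proof of Lemma \ref{lem:N=1 bracket for F in general}.

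Next I would handle the length-$3$ case $p=1$. By Corollary \ref{cor:key in lemma A.12}, the only shapes giving nonzero contributions are \eqref{eq: first in A.11} and \eqref{eq: second in A.11}, with $(i,j,l)$ constrained by its parity rules. Matching these constraints against $a=v_{2i}^{(4i)}$ yields three families: the ones starting with $\widetilde{\w}(\overline{[b,v_{2j-1}^{(4j-4)}]})$ or $\widetilde{\w}(\overline{[b,v_2^{(0)}]})$ and ending with $a$ via the odd lowest-weight element $v_{2j-1}^{(4j-3)}$ or $v_{2i}^{(4i-1)}$, which are precisely the three lines of \eqref{eq:N=1(F,(2n))-1}. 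The extremal values $j=1$ and $j=i$ have already been absorbed into \eqref{eq:N=1 bracket for F and v4 w(sl(3|2))}, so the range $1<j<i$ appears exactly once. Each explicit coefficient is then obtained from Lemma \ref{lem:F and v_3 and v_4} and direct evaluation of the bilinear form on the dual basis, as was done in Lemma \ref{lem:N=1 bracket for F in general-2}'s $\sll(3|2)$ prototype.

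Finally I would rule out all chains of length $p\geq 2$: using Lemma \ref{lem:key lemma for brackets-2}(3)(4)(5) together with Corollary \ref{cor:key in lemma A.12-1}, any three consecutive nonzero links would force a grading contradiction between $\g(-1)$ at the start and $\g(-i)$ at $a$, since the intermediate factors already exhaust the available negative weights. The main obstacle I anticipate is bookkeeping rather than any genuine difficulty: one must check that each $(i_0,m_0,\dots,i_p,m_p)$ passing the combinatorial filter does give a nonzero $\sharp$-projection after the relevant graded commutators, and that the two equivalent presentations of the short terms in \eqref{eq:N=1(F,(2n))-2(2)}--\eqref{eq:N=1(F,(2n))-2(4)} match with correct signs under the parity reversal induced by $\g^f\ni v_{2i}^{(4i)}$ being odd when $i$ is odd; this is exactly the role played by the identity $\tilde{v}^j_{(2j-1)}=(-1)^j[f,\tilde{v}^j_{(2j)}]$ of Lemma \ref{lem:N=1 bracket for F in general--}(2).
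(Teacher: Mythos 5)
Your proposal is correct and follows essentially the same route as the paper: the enumeration of the nonzero terms rests on Lemma \ref{lem:key lemma for brackets-2} and Corollary \ref{cor:key in lemma A.12}, the two-factor identities \eqref{eq:N=1(F,(2n))-2(2)}--\eqref{eq:N=1(F,(2n))-2(4)} are obtained from Lemma \ref{lem:N=1 bracket for F in general--}(2) and invariance of the bilinear form exactly as for \eqref{eq:N=1(F,(2n-1))-1(2)}, and the three-factor coefficients in \eqref{eq:N=1(F,(2n))-1} come from direct evaluation (the paper gets the factor $\frac{2j-1}{2i}$ from Lemma \ref{lem:N=1 bracket for F in general--}(3)). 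The only slip is your parenthetical claim that $v_{2i}^{(4i)}$ is odd when $i$ is odd: it lies in $\g(-i)$ and is always even in $\g$ (so $\bar{v}_{2i}^{(4i)}$ is always odd in $\bar{\g}$), but this does not affect the argument since the sign bookkeeping is carried by the correctly cited identity $\tilde{v}^j_{(2j-1)}=(-1)^j[f,\tilde{v}^j_{(2j)}]$.
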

 \begin{proof}
  Again, by Lemma \ref{lem:key lemma for brackets-2} and Corollary \ref{cor:key in lemma A.12}, we can find all possible nonzero terms.
 Equalities  \eqref{eq:N=1(F,(2n))-2(2)} and \eqref{eq:N=1(F,(2n))-2(4)} can be proved similarly to the proof of \eqref{eq:N=1(F,(2n-1))-1(2)}.
     The equalities in \eqref{eq:N=1(F,(2n))-1} are obtained by direct computations using Lemma \ref{lem:N=1 bracket for F in general--} (3).
 \end{proof}

\subsubsection{Proof of (2) in Proposition \ref{prop:N=1 to N=2, sl(n+1|n)}}
When $n=2$, the proof is done in Corollary \ref{eq:F and v_{2n-1} and v_{2n}}. Let us assume $n\geq 3$. For the cases, Proposition \ref{prop:N=1 to N=2, sl(n+1|n)} (2) is proved by the following lemmas.

\begin{lemma} \label{lem:final lemma for Prop N=1--1}
   For any $i\in \cI$, we have
  \[ [\w(\bar{v}^{(4i-2)}_{2i-1}){}_\Lambda \w(\bar{F})]= k^2 \bigg( i\lambda + \frac{1}{2}\chi D + \frac{2i-1}{2}\partial \bigg)\w(\bar{v}^{(4i-2)}_{2i-1}).\]
\end{lemma}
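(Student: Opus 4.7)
The plan is to apply Theorem~\ref{thm:w bracket} with $a = v_{2i-1}^{(4i-2)}$ and $b = F$, and then invoke Lemma~\ref{lem:N=1 bracket for F in general} to enumerate the nonzero summands. First I would verify that the ``outside the sum'' contribution $(-1)^{p(a)}(\w(\overline{[a,b]}) + k(a|b)\chi)$ vanishes: using $F = -\tfrac12[f,f]$ one finds $[F, v_j^{(m)}] = -v_j^{(m+2)}$, which annihilates the lowest weight vector $a = v_{2i-1}^{(4i-2)}$, while the weights of $a$ and $b$ do not sum to zero so that $(a|b)=0$. Only the $p\geq 0$ summands from Theorem~\ref{thm:w bracket} thus contribute.

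Next, the core arithmetic step is to sum the three ``principal'' terms displayed in~\eqref{eq:N=1 bracket for F and v3 w(sl(3|2))}. Using the identity $(\chi+D)^2 = \lambda+\partial$ (which follows from $\chi^2=-\lambda$, $D^2=\partial$, and $\chi D + D\chi = 2\lambda$), the three contributions combine to
\[k^2\Big(\tfrac{2i-1}{2}(\lambda+\partial) + \tfrac{1}{2}(-\lambda+\chi D) + \lambda\Big)\w(\bar{v}^{(4i-2)}_{2i-1}) = k^2\Big(i\lambda + \tfrac{1}{2}\chi D + \tfrac{2i-1}{2}\partial\Big)\w(\bar{v}^{(4i-2)}_{2i-1}),\]
which is exactly the claimed right-hand side.

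The hard part will be showing that the ``extra'' contributions indexed by $j \in \{2,\dots,i-1\}$ in~\eqref{eq:N=1(F,(2n-1))-1(2)} and~\eqref{eq:N=1(F,(2n-1))-2(4)} sum to zero. The key observation is that the second displayed equality in each of these identities realizes an involution relating the $j$-th term to the $(i-j)$-th (respectively $(i-j+1)$-th) term with opposite sign. For~\eqref{eq:N=1(F,(2n-1))-2(4)} the involution $j\mapsto i-j+1$ preserves $\{2,\dots,i-1\}$, so pairs cancel, and any fixed point $j=(i+1)/2$ is forced to vanish by antisymmetry. For~\eqref{eq:N=1(F,(2n-1))-1(2)} the involution $j\mapsto i-j$ sends $j=i-1$ to $j=1$, which lies outside the summation range; I will handle this boundary term directly by noting that $[\tilde v^{2}_{(3)}, v_{2i-1}^{(4i-2)}] = \tfrac{2i-1}{2}\,v_{2i-1}^{(4i-3)}$ is not a lowest weight vector and hence has vanishing $\sharp$-projection to $\g^f$; the second equality of~\eqref{eq:N=1(F,(2n-1))-1(2)} then forces the $j=i-1$ term to vanish, completing the cancellation and thereby the proof.
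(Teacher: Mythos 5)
Your plan is correct and follows essentially the same route as the paper: enumerate the nonzero contributions via Theorem \ref{thm:w bracket} and Lemma \ref{lem:N=1 bracket for F in general}, sum the three principal terms of \eqref{eq:N=1 bracket for F and v3 w(sl(3|2))} to obtain $k^2(i\lambda+\tfrac12\chi D+\tfrac{2i-1}{2}\partial)\w(\bar{v}_{2i-1}^{(4i-2)})$, and cancel the $j$-indexed extra terms using the sign-reversing reindexings $j\mapsto i-j$ and $j\mapsto i-j+1$ built into \eqref{eq:N=1(F,(2n-1))-1(2)} and \eqref{eq:N=1(F,(2n-1))-2(4)}. Your involution/pairing formulation (with the boundary term $[\tilde{v}^2_{(3)},a]^\sharp=0$ handled directly) is just a repackaging of the paper's ``double the sum and telescope'' argument, so no further comparison is needed.
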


\begin{proof}
Recall Lemma \ref{lem:N=1 bracket F and tilde f},  Lemma \ref{lem:N=1 bracket for F in w(sl(3|2))} and Lemma \ref{lem:N=1 bracket for F in general}. Denote $b=F$ and $a=v_{2i-1}^{(4i-2)}$ as in Lemma \ref{lem:N=1 bracket for F in general}. It is enough to show that 
 $\sum_{j=2}^{i-1}\eqref{eq:N=1(F,(2n-1))-1(2)} + \sum_{j=2}^{i-1}\eqref{eq:N=1(F,(2n-1))-2(4)}=0$ for all $i\geq 5$. Indeed, we have 
 \begin{equation}
  \begin{aligned}
     \ 2 \ & \sum_{j=2}^{i-1}\widetilde{\w}(\overline{[b,v^{(4j-2)}_{2j}]})\widetilde{\w}(\overline{[\tilde{v}^{2j}_{(4j-1)}, a]}) \\
     = &  \sum_{j=2}^{i-1} \bigg( -w(\bar{v}_{2j}^{(4j)})\, \w(\overline{[\tilde{v}_{(4j-1)}^{2j}, a]}^\sharp)+ \w(\bar{v}_{2i-2j}^{(4i-4j)})\, \w(\overline{[\tilde{v}_{(4i-4j-1)}^{2i-2j}, a]}^\sharp) \bigg)\\
     = & - w(\bar{v}_{2i-2}^{(4i-4)})\, \w(\overline{[\tilde{v}_{(4i-5)}^{2i-2}, a]}^\sharp) +  \w(\bar{v}_{2}^{(4)})\, \w(\overline{[\tilde{v}_{(3)}^{2}, a]}^\sharp)
  \end{aligned}
 \end{equation}
 and $[\tilde{v}_{(4i-5)}^{2i-2}, a]^\sharp=[\tilde{v}_{(3)}^{2}, a]^\sharp=0$. Hence $\sum_{j=2}^{i-1}\eqref{eq:N=1(F,(2n-1))-1(2)}=0$. Similarly, we also have  $\sum_{j=2}^{i-1}\eqref{eq:N=1(F,(2n-1))-2(4)}=0$.
\end{proof}

\begin{lemma} \label{lem:final lemma for Prop N=1--2}
  For any $i\in \cI$, we have
 \[ [\w(\bar{v}^{(4i)}_{2i}){}_\Lambda \w(\bar{F})]= -k^2 \bigg( \frac{2i+1}{2}\lambda + \frac{1}{2}\chi D + i\partial \bigg)\w(\bar{v}^{(4i)}_{2i}).\]
\end{lemma}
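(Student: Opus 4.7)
The proof runs in exact parallel with that of Lemma~\ref{lem:final lemma for Prop N=1--1}, with the role of $v_{2i-1}^{(4i-2)}$ now played by $a = v_{2i}^{(4i)}$ (and $b = F$). The first move is to invoke Lemma~\ref{lem:N=1 bracket for F in general-2}: every nonzero term of \eqref{eq:theorem, SUSY bracket} contributing to $\{\w(\bar v_{2i}^{(4i)}){}_\Lambda \w(\bar F)\}$ is either one of the nine explicit lines of \eqref{eq:N=1 bracket for F and v4 w(sl(3|2))} (the extreme indices $j=1$ and $j=i$) or belongs to the three families \eqref{eq:N=1(F,(2n))-1}, \eqref{eq:N=1(F,(2n))-2(2)}, \eqref{eq:N=1(F,(2n))-2(4)} summed over $j=2,\dots,i-1$. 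A direct arithmetic check---unpacking the three last lines of \eqref{eq:N=1 bracket for F and v4 w(sl(3|2))} and taking into account the overall sign $-(-1)^{p(a)p(b)+p(a)}=-1$ from Theorem~\ref{thm:w bracket}---shows that these three already produce the claimed right-hand side $-k^2\bigl(\tfrac{2i+1}{2}\lambda + \tfrac12\chi D + i\partial\bigr)\w(\bar v_{2i}^{(4i)})$. What remains is to prove that every other term cancels.

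The quadratic cross terms come in two subgroups. The sums $\sum_{j=2}^{i-1}\eqref{eq:N=1(F,(2n))-2(2)}$ and $\sum_{j=2}^{i-1}\eqref{eq:N=1(F,(2n))-2(4)}$ I would dispatch with the same telescoping trick used in Lemma~\ref{lem:final lemma for Prop N=1--1}: each line provides two equivalent expressions for its term, so doubling the sum and reindexing $j \mapsto i-j+1$ turns it into a genuine telescope whose only survivors are the boundary values at $j = 2$ and $j = i-1$. These boundary terms vanish because $[\tilde v^{2}_{(3)}, a]^\sharp$, $[\tilde v^{2i-2}_{(4i-5)}, a]^\sharp$, $[\tilde v^{2}_{(1)}, a]^\sharp$ and $[\tilde v^{2i-2}_{(4i-3)}, a]^\sharp$ all lie in $\bigoplus_{t>0}\g(t)$, hence are killed by the projection $\sharp$.

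The second, and more delicate, group is formed by the six cross terms in the first six lines of \eqref{eq:N=1 bracket for F and v4 w(sl(3|2))}---all multiples (or $\chi+D$-derivatives) of $\w(\bar v_1^{(2)})\,\w(\bar v_{2i-1}^{(4i-2)})$---together with $\sum_{j=2}^{i-1}$ of the three lines \eqref{eq:N=1(F,(2n))-1}. I would rewrite $\w(\overline{[\tilde v^{2j-1}_{(4j-2)}, v_{2i}^{(4i)}]}^\sharp)$ as a rational multiple of $\w(\bar v_{2i-2j+1}^{(4i-4j+2)})$ via Lemma~\ref{lem:N=1 bracket for F in general--}(3), combine the three lines through the Leibniz rule for $\chi + D$ acting on the graded product $\w(\bar v_{2j-1}^{(4j-2)})\,\w(\bar v_{2i-2j+1}^{(4i-4j+2)})$, and telescope once more via $j \mapsto i-j+1$. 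The boundary contributions at $j=2$ and $j=i-1$ should match, with opposite sign, the six $j=1$ terms of \eqref{eq:N=1 bracket for F and v4 w(sl(3|2))}, yielding the desired cancellation. This last step is where I expect the main difficulty: coordinating the sign changes when $\chi$ crosses the odd factor $\w(\bar v_1^{(2)})$, the rational coefficients $-i/j$ from Lemma~\ref{lem:N=1 bracket for F in general--}(3), and the odd Leibniz rule for $D$ on a product of mixed parity all at once makes the book-keeping the principal obstacle, even though the structural idea mirrors exactly the proof of Lemma~\ref{lem:final lemma for Prop N=1--1}.
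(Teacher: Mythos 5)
Your skeleton is the paper's: reduce via Lemma \ref{lem:N=1 bracket for F in general-2} to the nine explicit lines of \eqref{eq:N=1 bracket for F and v4 w(sl(3|2))} plus the three $j$-indexed families, and note that the last three lines of \eqref{eq:N=1 bracket for F and v4 w(sl(3|2))}, after the overall sign from Theorem \ref{thm:w bracket}, already produce $-k^2\bigl(\tfrac{2i+1}{2}\lambda+\tfrac12\chi D+i\partial\bigr)\w(\bar v_{2i}^{(4i)})$. That much is right. But both cancellation mechanisms you propose for the remaining terms are not how the cancellation actually happens, and as described they would not close. For $\sum_j\eqref{eq:N=1(F,(2n))-2(2)}$ and $\sum_j\eqref{eq:N=1(F,(2n))-2(4)}$, doubling either family separately and reindexing $j\mapsto i-j+1$ does \emph{not} telescope: the first expression of \eqref{eq:N=1(F,(2n))-2(2)} involves odd-indexed terms $\w(\bar v_{2j-1}^{(4j-2)})\w(\overline{[\tilde v^{2j-1}_{(4j-3)},a]}^\sharp)$ while its second expression involves even-indexed ones, so the reindexed terms land in the \emph{other} family. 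The paper's argument (see \eqref{eq:eq:N=1(F,(2n-1))-sum2}) must combine the two families: with $B_l=\w(\bar v_l^{(2l)})\w(\overline{[\tilde v^l_{(2l-1)},a]}^\sharp)$ the combined sum equals $-\sum_{l=3}^{2i-2}B_l$ by the first expressions and $+\sum_{l=3}^{2i-2}B_l$ by the second, hence is zero, with no surviving boundary terms at all. Your stated reason for boundary vanishing is also false: for $a=v_{2i}^{(4i)}\in\g(-i)$ one has $[\tilde v^2_{(3)},a]\in\g(\tfrac12-i)$, which lies in negative degree, so it is not killed by $\sharp$ on degree grounds.

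The more serious problem is the last step. The six cross terms in \eqref{eq:N=1 bracket for F and v4 w(sl(3|2))} do not cancel against boundary contributions of a telescoped $\sum_j\eqref{eq:N=1(F,(2n))-1}$; the two groups vanish \emph{independently}. The six cross terms are all built from the two even, mutually commuting elements $X=\w(\bar v_1^{(2)})$ and $Y=\w(\bar v_{2i-1}^{(4i-2)})$, and once each $(\chi+D)$ is expanded by the Leibniz rule the coefficients of $\chi XY$, of $X\,DY$ and of $(DX)Y$ each sum to zero, so this group cancels identically on its own. Separately, $\sum_{j=2}^{i-1}\eqref{eq:N=1(F,(2n))-1}=0$ by the symmetry $\w(\bar v_{2j-1}^{(4j-2)})\,\w(\overline{[\tilde v^{2j-1}_{(4j-2)},v_{2i}^{(4i)}]}^{\sharp})=\w(\bar v_{2i-2j+1}^{(4i-4j+2)})\,\w(\overline{[\tilde v^{2i-2j+1}_{(4i-4j+2)},v_{2i}^{(4i)}]}^{\sharp})$: under $j\mapsto i-j+1$ the $\chi$-coefficient $\tfrac{i-2j+1}{2i}$ and the antisymmetric combination of $D$-terms both change sign, so the terms cancel in pairs (and vanish at the fixed point), again leaving no boundary remainder to match against anything. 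An attempt to pair the cross terms with telescope boundaries as you propose would therefore stall; the bookkeeping has to be reorganized along the lines above before the proof goes through.
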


\begin{proof}
We aim to show $\sum_{j=2}^{i-1}\eqref{eq:N=1(F,(2n))-1}=0$ and $\sum_{j=2}^{i-1}\eqref{eq:N=1(F,(2n))-2(2)}+\eqref{eq:N=1(F,(2n))-2(4)}=0$. 
The first assertion follows from the fact that 
\[ \w(\bar{v}_{2j-1}^{(4j-2)})\,\w(\overline{[\tilde{v}^{2j-1}_{(4j-2)},v_{2i}^{(4i)}]}^{\sharp})= \w(\bar{v}_{2i-2j+1}^{(4i-4j+2)})\,\w(\overline{[\tilde{v}^{2i-2j+1}_{(4i-4j+2)},v_{2i}^{(4i)}]}^{\sharp}),\]
which can be deduced by a  computation similar to the proof of \eqref{eq:N=1(F,(2n-1))-1(2)}.
For the second assertion,  we 
use equalities in \eqref{eq:N=1(F,(2n))-2(2)} and \eqref{eq:N=1(F,(2n))-2(4)}. More precisely, observe that 
\begin{equation} \label{eq:eq:N=1(F,(2n-1))-sum2}
\begin{aligned}
  & \sum_{j=2}^{i-1}\widetilde{\w}(\overline{[b,v^{(4j-2)}_{2j}]})\widetilde{\w}(\overline{[\tilde{v}^{2j}_{(4j-1)},v_{2i}^{(4i)}]})+ 
  \sum_{j=2}^{i-1}\widetilde{\w}(\overline{[b,v^{(4j-4)}_{2j-1}]})\widetilde{\w}(\overline{[\tilde{v}^{2j-1}_{(4j-3)},v_{2i}^{(4i)}]})\\
  & = -\sum_{l=3}^{2i-2}\w(\bar{v}^{(2l)}_{l})\,\w(\overline{[\tilde{v}^{l}_{(2l-1)},a]}^\sharp)= \sum_{l=3}^{2i-2}\w(\bar{v}^{(2l)}_{l})\,\w(\overline{[\tilde{v}^{l}_{(2l-1)},a]}^\sharp).
\end{aligned}
\end{equation}
Here, the first equality in \eqref{eq:eq:N=1(F,(2n-1))-sum2} corresponds to the first equality in the formulas  \eqref{eq:N=1(F,(2n))-2(2)} and  \eqref{eq:N=1(F,(2n))-2(4)}. The second equality in \eqref{eq:eq:N=1(F,(2n-1))-sum2} corresponds to the second equality of  formulas 
\eqref{eq:N=1(F,(2n))-2(2)} and  \eqref{eq:N=1(F,(2n))-2(4)}. Hence $\eqref{eq:eq:N=1(F,(2n-1))-sum2}=0$.
\end{proof}

\subsection{Proof of Theorem \ref{thm:N=1 to N=2, sl(n+1|n)}} \label{sec:proof of N=2 primary decomp}

We use Theorem \ref{thm: N=1, N=2 pva relation with superconformal} to prove Theorem \ref{thm:N=1 to N=2, sl(n+1|n)}. \\

Take $\mathcal{C}=\{v_i^{(2i)}, v_i^{(2i-1)}\,|\,i=1, \cdots 2n\}$ for $\{v_i^{(m)}\,|\,(i,m)\in {\cJ}\}$ as in section \ref{Proof of proposition}.
Let us denote $ J=\frac{\sqrt{-1}}{k}\w(\bar{\tilde{f}})$ and recall that $G= -\frac{2}{k^2} \w(\bar{F})$ is a $N=1$ superconformal vector. Then $\{ J {}_\Lambda J \}= -G+ 2k^2 \lambda \chi$ and hence $-J_{(0|0)}J=G$.  

In order to prove that $J$ is a $N=2$  superconformal vector and that all the elements of ${W}_{\osp(1|2)}^{N=2}$  are $J$-primary, 
it is enough to show that $\{J {}_\Lambda \w(\bar{v}^{(4i-2)}_{2i-1})\}= J_{(0|0)}\w(\bar{v}^{(4i-2)}_{2i-1})$ for $i=2,3,\cdots, n$ and 
that $\WW^k(\bar{\g},f)$  is freely generated by $\{\w(\bar{v}^{(4i-2)}_{2i-1}), J_{(0|0)}\w(\bar{v}^{(4i-2)}_{2i-1})|i=2,3,\cdots, n\}\cup \{J,G\}$ as a $\CC[\nabla]$-algebra.

\begin{lemma} \label{lem:N=1 bracket tilde(f)}
  For $i=2,  \cdots, n$, we have 
  \[  \{   \w(\bar{\tilde{f}}){}_\Lambda\w(\bar{v}^{(4i-2)}_{2i-1})\}= -\w(\bar{v}^{(4i)}_{2i})+ \sum_{j=2}^{i-1} c_{j,i} \w(\bar{v}^{(4j)}_{2j}) \w(\bar{v}^{(4i-4j)}_{2i-2j}),\]
  for some constants $c_{j,i} \in \CC$. 
\end{lemma}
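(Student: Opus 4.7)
The plan is to apply Theorem \ref{thm:w bracket} directly with $a=\tilde{f}\in\g^f_{-1/2}$ and $b=v^{(4i-2)}_{2i-1}\in\g^f_{-(2i-1)/2}$, noting that $\tilde{f}\in\g^f$ by (sl-3) and $v^{(4i-2)}_{2i-1}$ is the lowest weight vector of $R_{2i-1}$. The strategy is to isolate the leading term and then classify which sequences contribute non-trivially to the summation.

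First I will extract the ``initial'' contribution $(-1)^{p(\tilde{f})}\bigl(\w(\overline{[\tilde f,v^{(4i-2)}_{2i-1}]})+k(\tilde{f}|v^{(4i-2)}_{2i-1})\chi\bigr)$. From relation \eqref{eq:action-f-tilde} with $t=4i-2$, we have $[\tilde f,v^{(4i-2)}_{2i-1}]=v^{(4i)}_{2i}$, while $(\tilde f|v^{(4i-2)}_{2i-1})=0$ by the $\osp(1|2)$-grading (weights $-\tfrac12$ and $-\tfrac{2i-1}{2}$ do not cancel). Since $p(\tilde f)=1$, this piece contributes exactly the claimed leading term $-\w(\bar v^{(4i)}_{2i})$.

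Next I will analyze the summation over ordered sequences $(i_0,m_0)<\cdots<(i_p,m_p)$. Two observations sharply restrict the sequences. The last factor $\w(\overline{[\tilde v^{i_p}_{(m_p+1)},\tilde f]}^\sharp)-k(\tilde v^{i_p}_{(m_p+1)}|\tilde f)\chi$ is nontrivial only when $\tilde v^{i_p}_{(m_p+1)}$ sits at $\fh$-weight $\geq\tfrac12$ (the pairing part forces the extreme case $(i_p,m_p)=(1,1)$ using $\tilde{f}\propto v_1^{(2)}$). Symmetrically, the first factor forces $v_{i_0}^{(m_0)}$ to lie in a narrow weight range so that $[v^{(4i-2)}_{2i-1},v_{i_0}^{(m_0)}]^\sharp$ is nonzero. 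Invoking Lemma \ref{eq:key lemma for brackets}, Lemma \ref{lem:key lemma for brackets-2} and Corollary \ref{cor:key in lemma A.12-1}, I will check that only those sequences whose first $\w$-factor is (proportional to) a lowest-weight generator $\w(\bar v^{(4j)}_{2j})$ and whose last $\w$-factor is (proportional to) a complementary generator $\w(\bar v^{(4i-4j)}_{2i-2j})$, for some $2\leq j\leq i-1$, contribute at the constant-in-$\Lambda$ level.

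Finally, I will gather the intermediate $(D+\chi)$-operators and scalar pairings. These either telescope away by the same cancellation mechanism used in the proofs of Lemmas \ref{lem:final lemma for Prop N=1--1} and \ref{lem:final lemma for Prop N=1--2} (pairing $(j,i-j)\leftrightarrow(i-j,j)$), or combine into overall constants $c_{j,i}\in\CC$ in front of $\w(\bar v^{(4j)}_{2j})\w(\bar v^{(4i-4j)}_{2i-2j})$, yielding the asserted formula.

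The main obstacle will be the combinatorial bookkeeping of all non-vanishing sequences, which is parallel in spirit to (but less delicate than) the enumerations performed in Lemmas \ref{lem:N=1 bracket for F in general}--\ref{lem:N=1 bracket for F in general-2}. The essential technical input is Corollary \ref{cor:key in lemma A.12-1}, which kills most $[\tilde v^i_{(m)},v_j^{(n)}]^\sharp$ brackets on parity grounds and thereby cuts the sum down to the pairs $(2j,2i-2j)$; once this is in place the remaining sequences pair up into telescoping families, leaving only the stated quadratic contributions.
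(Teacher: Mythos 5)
There is a genuine gap. Your plan applies Theorem \ref{thm:w bracket} with $a=\tilde{f}$, $b=v^{(4i-2)}_{2i-1}$, but the paper computes the transposed bracket $\{\w(\bar{v}^{(4i-2)}_{2i-1}){}_\Lambda\w(\bar{\tilde{f}})\}$ and only converts to the stated order by skewsymmetry at the very end. This is not a cosmetic choice: formula \eqref{eq:theorem, SUSY bracket} is asymmetric in $a$ and $b$, and putting $\tilde{f}$ in the $b$-slot makes the \emph{first} factor of every summand equal to $\w(\overline{[\tilde{f},v_{i_0}^{(m_0)}]}^\sharp)-k(\tilde{f}|v_{i_0}^{(m_0)})(D+\chi)$; since $\ad\tilde{f}$ lands in $\g^f$ only from the vectors $v_{2j}^{(4j-2)}$, $v_{2j-1}^{(4j-2)}$ (see \eqref{eq:action-f-tilde}), the sum collapses at once to $p=0$ terms, each a product of exactly two $\w$-factors, and Lemma \ref{eq:key lemma for brackets} is the only auxiliary input needed. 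In your direct order the first factor is $[v_{2i-1}^{(4i-2)},v_{i_0}^{(m_0)}]^\sharp$ and the last is $[\tilde{v}^{i_p}_{(m_p+1)},\tilde{f}]^\sharp$, and the set of surviving sequences is genuinely larger; the entire content of the lemma then lies in the "combinatorial bookkeeping" that you defer.

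Moreover, the one concrete classificatory claim you do make is false: you assert the last factor is nontrivial only when $\tilde{v}^{i_p}_{(m_p+1)}$ has $\fh$-weight $\geq\tfrac12$, but for example $\tilde{v}^{2}_{(2)}=\fh$ has weight $0$ and $[\fh,\tilde{f}]^\sharp=-\tfrac12\tilde{f}\neq 0$, so such sequences cannot be discarded on these grounds (the weight bound you state is correct only for the pairing part $k(\tilde{v}^{i_p}_{(m_p+1)}|\tilde{f})\chi$, not for the bracket part). Finally, the tools you name as "essential" — Corollary \ref{cor:key in lemma A.12-1} and the telescoping cancellations of Lemmas \ref{lem:final lemma for Prop N=1--1} and \ref{lem:final lemma for Prop N=1--2} — belong to the analysis of brackets against $\w(\bar{F})$ (they concern $[\tilde{v}^i_{(2i-1)},v_j^{(2j-1)}]^\sharp$ and $[\tilde{v}^i_{(2i)},v_j^{(2j)}]^\sharp$) and do not address the factors $[\tilde{v}^{i_p}_{(m_p+1)},\tilde{f}]^\sharp$ and $[v_{2i-1}^{(4i-2)},v_{i_0}^{(m_0)}]^\sharp$ that your route produces; in the paper's proof no cancellation between sequences is needed at all. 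To repair your argument you should either redo the enumeration honestly for your choice of $(a,b)$, or — more efficiently — transpose the arguments as the paper does and invoke skewsymmetry (which is harmless here since both entries are even in $\bar{\g}$ and the result is $\Lambda$-independent).
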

\begin{proof}
Using relation \eqref{eq:theorem, SUSY bracket} and Lemma \ref{eq:key lemma for brackets}, we have 
\begin{equation}
  \begin{aligned}
  \{ \w(\bar{v}^{(4i-2)}_{2i-1}) {}_\Lambda \w(\bar{\tilde{f}})\} & =      -\w(\bar{v}^{(4i)}_{2i})+\sum_{j=2}^{i-1}  \w(\overline{[\tilde{f}, v_{2j}^{(4j-2)}]})\w(\overline{[\tilde{v}_{(4j-1)}^{2j}, v^{(4i-2)}_{2i-1}]^\sharp})\\
  & =-\w(\bar{v}_{2i}^{(4i)})- \sum_{j=2}^{i-1}\w(\bar{v}^{(4j)}_{2j}) \big(\tilde{v}^{2i-2j}_{(4i-4j)}| [\tilde{v}^{2j}_{(4i-1)}, v^{(4i-2)}_{2i-1}] \big)v^{(4i-4j)}_{2i-2j}.
  \end{aligned}
\end{equation}
We let $c_{j,i}=-\big(\tilde{v}^{2i-2j}_{(4i-4j)}| [\tilde{v}^{2j}_{(4j)}, v^{(4i-2)}_{2i-1}] \big)$ and get the lemma by skewsymmetry.
\end{proof}

By Lemma \ref{lem:N=1 bracket tilde(f)}, 
we conclude that $J$ is a $N=2$ superconformal vector. Moreover, the set $W_{\osp(1|2)}^{N=2}=\{\w(\bar{v}^{(4i-2)}_{2i-1})|i=2,3,\cdots, n\}$ satisfies the second statement of Theorem \ref{thm:N=1 to N=2, sl(n+1|n)}.

\subsection{Proof of Theorem \ref{thm:N=2 conformal,N=0 W-alg}} \label{sec:Proof of theorems N=0}

We use Theorem \ref{thm: pva, N=1 pva relation with superconformal} and \ref{thm: N=1, N=2 pva relation with superconformal} to prove Theorem \ref{thm:N=2 conformal,N=0 W-alg}.

\begin{lemma} \label{lem:first lemma for N=1 scv in N=2 W}
  The following  relations hold:
  \begin{equation}\label{eq:first lemma for N=1 scv in N=2 W}
  \begin{split}
&   \{ \nu(f){}_\lambda \nu(f)\}= -2\nu(F)+\frac{1}{2}(\nu(U))^2-k^2 \lambda^2 , \\
& \{ \nu(\tilde{f}){}_\lambda \nu(\tilde{f})\}= 2\nu(F)-\frac{1}{2}(\nu(U))^2+k^2 \lambda^2.
  \end{split}
  \end{equation}
  Hence if we consider $G= \sqrt{-\frac{1}{k}}\nu (f)$ or $\sqrt{\frac{1}{k}}\nu (\tilde{f})$ then $G_{(0)}G= \frac{2}{k}\left(\nu(F)-\frac{1}{4}(\nu(U))^2\right)$, and $G$ is a conformal vector of $\WW^k(\g, F)$.
\end{lemma}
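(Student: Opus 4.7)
The plan is to compute both $\lambda$-brackets directly from the explicit formula of Theorem \ref{thm:nonSUSY w bracket}, applied with $a=b=f$ and $a=b=\tilde{f}$. Both elements lie in $\g^F\cap \g(-1/2)$ (for $\tilde f$ this follows from (sl-2), which makes $\{\fh,-E,-F,\tilde e,\tilde f\}$ into a second $\osp(1|2)$-triple, so $[F,\tilde f]=-[-F,\tilde f]=0$), giving $t_1=t_2=1/2$ in the notation of that theorem.

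The constant-in-$\lambda$ piece of each bracket has already been extracted in Proposition \ref{prop:N=0 to N=1,2, sl(n+1|n)}~(1), yielding respectively $-2\nu(F)+\tfrac{1}{2}(\nu(U))^{2}$ and $2\nu(F)-\tfrac{1}{2}(\nu(U))^{2}$. The naive $k(a|b)\lambda$ term vanishes, since the invariant form is even supersymmetric and both arguments are odd, so $(f|f)=-(f|f)=0$ and likewise $(\tilde f|\tilde f)=0$. Hence the task reduces to computing the coefficients of $\lambda$ and $\lambda^{2}$ coming from the summation in \eqref{eq:theorem, nonSUSY bracket}.

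To do so, I would enumerate the admissible chains $(i_0,m_0)\prec\cdots\prec(i_p,m_p)$. The constraints $-1\prec(i_0,m_0)$ and $(i_p,m_p)\prec 1/2$, together with the strict jump $j_{t-1}+1\leq j_t$, force every intermediate dual element $\tilde q^{i_t}_{(m_t)}$ into the narrow strata $\g(-1/2)\oplus\g(0)\oplus\g(1/2)$, a subspace built from (multiples of) $f,\tilde f,\fh,U,e,\tilde e$. Nonvanishing duality pairings then pin down the choices sharply, in the spirit of Lemma \ref{lem:N=2 tilde{f} and tilde{f}} in the SUSY case. A $\lambda^{n}$-contribution with $n\geq 1$ requires $n$ of the factors $k(\cdot|\cdot)(\partial+\lambda)$ or the terminal $k(\cdot|a)\lambda$ to be activated, so the intermediate $\nu$-factors reduce to scalars. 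Tracking the parities and the sign factors $(-1)^{i_0}$ and $(-1)^{p(q_{i_t}^{(m_t)})}$, the $\lambda^{1}$-terms cancel pairwise, while the only surviving $\lambda^{2}$-term comes from a single length-two chain of the form $\bigl(-k(b|x)\lambda\bigr)\cdot\nu(\mathrm{scalar})\cdot\bigl(-k(y|a)\lambda\bigr)$, producing $\mp k^{2}\lambda^{2}$.

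The main obstacle is the bookkeeping for the $\lambda^{1}$ cancellation, but the small number of eligible basis elements in $\g(-1/2)\oplus\g(0)\oplus\g(1/2)$ keeps the enumeration tractable. Once the two $\lambda$-brackets are in hand, substituting $G=\sqrt{-1/k}\,\nu(f)$ (resp.\ $G=\sqrt{1/k}\,\nu(\tilde f)$) gives $G_{(0)}G=\tfrac{2}{k}\bigl(\nu(F)-\tfrac{1}{4}(\nu(U))^{2}\bigr)$ by direct rescaling, and this coincides with the conformal vector supplied by Theorem \ref{thm:conformal in N=0 W}, once one notes that $\g^F\cap\g(0)$ is spanned by $U$ with the normalization that produces the coefficient $-1/4$. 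The relation $\{G{}_\lambda G\}=L+k\lambda^{2}$ that emerges is then exactly the first hypothesis of Theorem \ref{thm: pva, N=1 pva relation with superconformal} needed downstream.
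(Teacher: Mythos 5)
Your proposal follows essentially the same route as the paper: the paper's proof simply applies Theorem \ref{thm:nonSUSY w bracket} to obtain \eqref{eq:first lemma for N=1 scv in N=2 W} and then invokes Theorem \ref{thm:conformal in N=0 W} for the conformal-vector claim, exactly as you do (with the $\lambda^0$ terms already supplied by Proposition \ref{prop:N=0 to N=1,2, sl(n+1|n)}~(1)). Your added observations --- that $(f|f)=(\tilde f|\tilde f)=0$ by supersymmetry of the form and that $\g^F\cap\g(0)=\CC U$ with $(U|U)=-2$ forcing the coefficient $-\tfrac14$ --- are correct and only make explicit what the paper leaves implicit.
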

\begin{proof}
  Applying Theorem \ref{thm:nonSUSY w bracket}, we get the relations \eqref{eq:first lemma for N=1 scv in N=2 W}. 
  By Theorem \ref{thm:conformal in N=0 W}, the last assertion directly follows from equations \eqref{eq:first lemma for N=1 scv in N=2 W}.
\end{proof}

Let $G= \sqrt{-\frac{1}{k}}\nu (f)$ and  $L:=G_{(0)}G= \frac{2}{k}(\nu(F)-\frac{1}{4}(\nu(U))^2)$. Then, for $s=2i,2i-1$ and $i\neq 2$, the elements $\nu(v_i^{(s)})$ in $W_{\sll_2}^{N=0}$ and $G$ are $L$-primary by Theorem \ref{thm:conformal in N=0 W}. We also have $\{G_\lambda \nu(v_{i}^{(2i-1)})\}= G_{(0)} \nu(v_{i}^{(2i-1)})$ for $i\neq 2$
since 
\begin{equation} \label{eq: N=1 scv in N=2 W (1)}
  \{\nu(f){}_\lambda \nu(v_{i}^{(2i-1)})\}= \nu(v^{(2i)}_{i})-\sum_{j=2}^{i-1}(-1)^{j}\nu([f, v_{j}^{(2j-2)}])\nu([\tilde{v}^{j}_{(2j)}, v_{i}^{(2i-1)}]^{\natural}),
\end{equation}
and thus $\{G, L\}\cup \{\nu(v_{i}^{(2i-1)}), G_{(0)}\nu(v_{i}^{(2i-1)})|i=1,3,4,\cdots, 2n \}$ freely generates $\WW^k(\g,F)$ as a $\CC[\partial]$-algebra. For $D:=G_{(0)}$,
 by Lemma \ref{lem:first lemma for N=1 scv in N=2 W} and  Theorem \ref{thm: pva, N=1 pva relation with superconformal}, we conclude that (i) the element $G$ is a $N=1$ superconformal vector, (ii) $\nu(v_{i}^{(2i-1)})$ for $i\neq 2$ is $G$-primary, (iii) the set $\{\nu(v_{i}^{(2i-1)})|i=1,2,\cdots, 2n\}$  freely generates   $\WW^k(\g,F)$ as a $\CC[\nabla]$-algebra.

Similarly, we can show $\tilde{G}=\sqrt{\frac{1}{k}}\nu (\tilde{f})$ is another $N=1$ superconformal vector.
Since 
\begin{equation} \label{eq: N=1 scv in N=2 W (2)}
  \{\nu(\tilde{f}){}_\lambda \nu(v^{(4i'-2)}_{2i'-1})\}= \nu(v^{(4i')}_{2i'}),\quad  \{\nu(\tilde{f}){}_\lambda \nu(v_{2i-1}^{(4i-3)})\}= -\nu(v_{2i}^{(4i-1)}), 
\end{equation}
for $i=1,2, \cdots, n$ and $i'=2,3, \cdots,n$, 
the set \[\{\tilde{G}, L\}\cup \{\nu(v^{(4i'-2)}_{2i'-1}), \tilde{G}_{(0)}\nu(v^{(4i'-3)}_{2i'-1})\}_{i'=2}^n \cup \{\nu(v_{2i-1}^{(4i-3)}), \tilde{G}_{(0)}\nu(v_{2i-1}^{(4i-3)})\}_{i=1}^n\] freely generates $\WW^k(\g,F)$ as a $\CC[\partial]$-algebra. If we denote $\tilde{D}:=\tilde{G}_{(0)}$ then (i) $\tilde{G}$ is a $N=1$ superconformal vector (ii) $\nu(v^{(4i'-2)}_{2i'-1})$ and $\nu(v_{2i-1}^{(4i-3)})$ are $\tilde{G}$-primary (iii)
 $\{\tilde{G}\}\cup \{\nu(v^{(4i'-2)}_{2i'-1}), \nu(v_{2i-1}^{(4i-3)})|i'=2,3, \cdots, n \text{ and } \, i=1,2,\cdots, n \}$ freely generates $\WW^k(\g, F)$ as a $\CC[\tilde{\nabla}]$-algebra.
 
In the rest of this section, we show that $J:=-\sqrt{-1}\,\nu(U)$ is a $N=2$  superconformal vector of $\WW^k(\g,F)$. Let us first show that the odd derivations
$ D$ and $\tilde{D}$ 
give rise to $N=2$ SUSY structure on $\WW^k(\g,F)$. Observe that 
\begin{equation*}
\begin{split}
    &DJ= \sqrt{-\frac{1}{k}}\nu(f)_{(0)}\,J=\sqrt{\frac{1}{k}}\nu(\tilde{f})=\tilde{G}, \\
    &  \tilde{D}J= \sqrt{\frac{1}{k}}\nu(\tilde{f})_{(0)}\, J=-\sqrt{-\frac{1}{k}}\nu(f)=-G.
\end{split}
\end{equation*}
Now, using relations \eqref{eq: N=1 scv in N=2 W (1)} and \eqref{eq: N=1 scv in N=2 W (2)}, we get 
\begin{equation}
  \begin{aligned}
    & \big\{DJ{}_\lambda \nu(v_{2i-1}^{(4i-3)})\big\}= \sqrt{\frac{1}{k}}\nu(\tilde{f})_{(0)}\nu(v_{2i-1}^{(4i-3)})=\tilde{D}\,\nu(v_{2i-1}^{(4i-3)}),\\
    & \big\{\tilde{D}J{}_\lambda \nu(v_{2i-1}^{(4i-3)})\big\}=-\sqrt{-\frac{1}{k}}\nu(v_{2i-1}^{(4i-3)})=-D\, \nu(v_{2i-1}^{(4i-3)}).\\
  \end{aligned}
\end{equation}
and
\begin{equation}
D\tilde{D}J= G_{(0)}(-G)=\sqrt{-\frac{1}{k}} \nu(f)_{(0)}\bigg(-\sqrt{\frac{-1}{k}}\nu(f)\bigg)=-L.
\end{equation}
We can also check $\{J{}_\lambda J\}=2k\lambda$ and $\big\{J{}_\lambda\nu(v_{2i-1}^{(4i-3)})\big\}=0$. Therefore,
from Theorem \ref{thm:N=2 superconformal in N=0 PVA}, we deduce that $J$ is a $N=2$ superconformal vector and $\nu(v_{2i-1}^{(4i-3)})$ for $i=2,\cdots, n$ are $J$-primary. Moreover, relations \eqref{eq: N=1 scv in N=2 W (1)} and \eqref{eq: N=1 scv in N=2 W (2)} show that $\{J\}\cup \{\nu(v_{2i-1}^{(4i-3)})|i=2,\cdots, n\}$ freely generates $\WW^k(\g,F)$ as a $\CC[\bm{\nabla}]$-algebra.


\begin{thebibliography}{10}

\bibitem{Ademollo}
M.~Ademollo, L.~Brink, A.~D'Adda, R.~D'Auria, E.~Napolitano, S.~Sciuto, E.~{Del
  Giudice}, P.~{Di Vecchia}, S.~Ferrara, F.~Gliozzi, R.~Musto, and
  R.~Pettorino.
\newblock Supersymmetric strings and colour confinement.
\newblock {\em Physics Letters B}, 62(1):105--110, 1976.

\bibitem{BDK09}
A.~Barakat, A.~De~Sole, V.G.~Kac.
\newblock
Poisson vertex algebras in the theory of Hamiltonian equations. 
\newblock{\em
Jpn. J. Math.},
4:141--252, 2009.




\bibitem{Bar00}
K.~Barron.
\newblock{$N=1$  Neveu-Schwarz vertex operator superalgebras over Grassmann algebras and with odd formal variables}
\newblock{\em
China Higher Education Press (CHEP)}, Beijing:9--35, 2000.


\bibitem{BPZ}
A.~A. Belavin, A.~M. Polyakov, and A.~B. Zamolodchikov.
\newblock Infinite conformal symmetry in two-dimensional quantum field theory.
\newblock {\em Nuclear Phys. B}, 241(2):333--380, 1984.

\bibitem{BHS08}
D.~Ben-Zvi, R.~Heluani,  M.~Szczesny.
\newblock Supersymmetry of the chiral de Rham complex.
\newblock{\em Compos. Math.}, 144 (2):503--521, 2008.



\bibitem{BMN}
D.~Berenstein, J.~Maldacena, and H.~Nastase.
\newblock Strings in flat space and pp waves from {$\mathscr{N}=4$} super
  {Y}ang {M}ills.
\newblock {\em J. High Energy Phys.}, 2002(4):No. 13, 30, 2002.

\bibitem{BDKRSVV}
Z.~Bern, L.~J. Dixon, D.~A. Kosower, R.~Roiban, M.~Spradlin, C.~Vergu, and
  A.~Volovich.
\newblock Two-loop six-gluon maximally helicity violating amplitude in
  maximally supersymmetric {Y}ang-{M}ills theory.
\newblock {\em Phys. Rev. D}, 78(4):045007, 25, 2008.

\bibitem{BLN93} M.~Bershadsky, W.~Lerche, D.~Nemeschansky, N.~Warner. 
\newblock  Extended N=2 superconformal structure of gravity and W-gravity coupled to matter.
\newblock {\em Nucl. Phys. B} 401(1-2):304--347, 1993.

\bibitem{Borch}
R.~E. Borcherds.
\newblock Vertex algebras, {K}ac-{M}oody algebras, and the {M}onster.
\newblock {\em Proc. Nat. Acad. Sci. U.S.A.}, 83(10):3068--3071, 1986.

\bibitem{BS}
P.~Bouwknegt and K.~Schoutens.
\newblock {$\mathscr W$} symmetry in conformal field theory.
\newblock {\em Phys. Rep.}, 223(4):183--276, 1993.

\bibitem{CHSW}
P.~Candelas, G.~T. Horowitz, A.~Strominger, and E.~Witten.
\newblock Vacuum configurations for superstrings.
\newblock {\em Nuclear Phys. B}, 258(1):46--74, 1985.

\bibitem{CS}
S.~Carpentier and U.~R. Suh.
\newblock Supersymmetric bi-{H}amiltonian systems.
\newblock {\em Comm. Math. Phys.}, 382(1):317--350, 2021.

\bibitem{DG97} F.~Delduc, L.~Gallot,
    \newblock  KP and KdV hierarchies in extended superspace.
\newblock {\em Commun. Math. Phys.} 190(2):395--410, 1997. 

\bibitem{DRS92}
F.~Delduc, E.~Ragoucy, and P.~Sorba.
\newblock Super-{T}oda theories and {$W$}-algebras from superspace
  {W}ess-{Z}umino-{W}itten models.
\newblock {\em Comm. Math. Phys.}, 146(2):403--426, 1992.

\bibitem{delius}
G.~W. Delius.
\newblock The {$N=2$} super-{K}ac-{M}oody algebra and the {WZW}-model in
  {$(2,0)$} superspace.
\newblock {\em Internat. J. Modern Phys. A}, 5(24):4753--4767, 1990.

\bibitem{DHKS}
J.~M. Drummond, J.~Henn, G.~P. Korchemsky, and E.~Sokatchev.
\newblock Dual superconformal symmetry of scattering amplitudes in
  {${\mathscr{N}}=4$} super-{Y}ang-{M}ills theory.
\newblock {\em Nuclear Phys. B}, 828(1-2):317--374, 2010.

\bibitem{FOFR}
J.~M. Figueroa-O'Farrill and E.~Ramos.
\newblock Classical {${N}=2$} {$W$}-superalgebras and supersymmetric
  {G}el\textquotesingle fand-{D}ickey brackets.
\newblock {\em Nuclear Phys. B}, 368(2):361--376, 1992.

\bibitem{dico}
L.~Frappat, A.~Sciarrino, and P.~Sorba.
\newblock  Dictionary on {L}ie algebras and superalgebras.
\newblock Academic Press, Inc., San Diego, CA, 2000.
\newblock With 1 CD-ROM (Windows, Macintosh and UNIX).

\bibitem{FMS}
D.~Friedan, E.~Martinec, and S.~Shenker.
\newblock Conformal invariance, supersymmetry and string theory.
\newblock {\em Nuclear Phys. B}, 271(1):93--165, 1986.

\bibitem{gaiotto}
D.~Gaiotto and E.~Witten.
\newblock Supersymmetric boundary conditions in {$\mathscr{N}=4$} super
  {Y}ang-{M}ills theory.
\newblock {\em J. Stat. Phys.}, 135(5-6):789--855, 2009.

\bibitem{GS92} B.~Gato-Rivera, A.M.~Semikhatov, 
 \newblock  $d\leq1\bigcup d\geq25$ and constrained KP hierarchy from BRST invariance in the $c\neq3$ topological algebra,
 \newblock {\em Phys. Lett. B} 293(1-2):72--80, 1992.

\bibitem{GW}
D.~Gepner and E.~Witten.
\newblock String theory on group manifolds.
\newblock {\em Nuclear Phys. B}, 278(3):493--549, 1986.

\bibitem{HW}
A.~Hanany and E.~Witten.
\newblock Type {IIB} superstrings, {BPS} monopoles, and three-dimensional gauge
  dynamics.
\newblock {\em Nuclear Phys. B}, 492(1-2):152--190, 1997.

\bibitem{Hel07}
R.~Heluani.
\newblock
SUSY vertex algebras and supercurves.
\newblock{\em Comm. Math. Phys.}, 275(3):607–-658, 2007.


\bibitem{HK07}
R.~Heluani and V.~G. Kac.
\newblock Supersymmetric vertex algebras.
\newblock {\em Comm. Math. Phys.}, 271(1):103--178, 2007.

\bibitem{Hoyt12}
C.~Hoyt.
\newblock Good gradings of basic {L}ie superalgebras.
\newblock {\em Israel J. Math.}, 192(1):251--280, 2012.

\bibitem{IMY}
T.~Inami, Y.~Matsuo, and I.~Yamanaka.
\newblock Extended conformal algebras with {$N=1$} supersymmetry.
\newblock {\em Phys. Lett. B}, 215(4):701--705, 1988.

\bibitem{Kac98}
V.~G. Kac.
\newblock  Vertex algebras for beginners, volume~10 of {\em University
  Lecture Series}.
\newblock American Mathematical Society, Providence, RI, second edition, 1998.

\bibitem{KRW04}
V.~G. Kac, S.~Roan, and M.~Wakimoto.
\newblock Quantum reduction for affine superalgebras.
\newblock {\em Comm. Math. Phys.}, 241(2-3):307--342, 2003.

\bibitem{KZ}
V.~G. Knizhnik and A.~B. Zamolodchikov.
\newblock Current algebra and {W}ess-{Z}umino model in two dimensions.
\newblock {\em Nuclear Phys. B}, 247(1):83--103, 1984.

\bibitem{LM88} C. A.~Laberge, P.~Mathieu, 
 \newblock  $N=2$ superconformal slgebra and integrable $O(2)$ fermionic extensions of the Korteweg--de Vries equation.
 \newblock {\em Phys. Lett. B}, 215(4):718-722, 1988.

\bibitem{MadR}
J.~O. Madsen and E.~Ragoucy.
\newblock Quantum {H}amiltonian reduction in superspace formalism.
\newblock {\em Nuclear Phys. B}, 429(2):277--290, 1994.

\bibitem{MRS21}
A.~Molev, E.~Ragoucy, and U.~R. Suh.
\newblock Supersymmetric {$W$}-algebras.
\newblock {\em Lett. Math. Phys.}, 111(1):Paper No. 6, 25, 2021.

\bibitem{work2}
M.~Park.
\newblock Classical {W}-algebras associated to lie superalgebras.
\newblock {\em Master thesis in Seoul National University}.

\bibitem{Pope}
C.~N. Pope.
\newblock Review of {W} strings.
\newblock In {\em Proc. of Int. Symp. on Black Holes, Worm Holes, Membranes and
  Superstrings}. Woodlands TX, 1992.

\bibitem{RSS96}
E.~Ragoucy, A.~Sevrin, and P.~Sorba.
\newblock Strings from ${N}=2$ gauged {W}ess-{Z}umino-{W}itten models.
\newblock {\em Comm. Math. Phys.}, 181(1):91--129, 1996.

\bibitem{work}
E.~Ragoucy, A.~Song, and U.~R. Suh.
\newblock Work in progress.

\bibitem{RSS23}
E.~Ragoucy, A.~Song, and U.~R. Suh.
\newblock Generators of supersymmetric classical {$W$}-algebras.
\newblock {\em Comm. Math. Phys.}, 397(1):111--139, 2023.

\bibitem{RS90}
E.~Ragoucy and P.~Sorba.
\newblock Super-{K}ac-{M}oody algebras and the {$N=2$} superconformal case.
\newblock {\em Phys. Lett. B}, 245(3-4):465--470, 1990.

\bibitem{Suh20}
U.~R. Suh.
\newblock Structures of (supersymmetric) classical {W}-algebras.
\newblock {\em J. Math. Phys.}, 61(11):111701, 27, 2020.

\bibitem{Wak99}
M.~Wakimoto.
\newblock {\em Infinite-dimensional {L}ie algebras}, volume 195 of {\em
  Translations of Mathematical Monographs}.
\newblock American Mathematical Society, Providence, RI, 2001.
\newblock Translated from the 1999 Japanese original by Kenji Iohara, Iwanami
  Series in Modern Mathematics.

\bibitem{W84}
E.~Witten.
\newblock Nonabelian bosonization in two dimensions.
\newblock {\em Comm. Math. Phys.}, 92(4):455--472, 1984.

\bibitem{Za}
A.~B. Zamolodchikov.
\newblock Infinite extra symmetries in two-dimensional conformal quantum field
  theory.
\newblock {\em Teoret. Mat. Fiz.}, 65(3):347--359, 1985.




 







\end{thebibliography}
\end{document}